\documentclass{article}
\usepackage{amssymb,amsmath,graphicx,ucs,hyperref}
\usepackage[utf8x]{inputenc}

\usepackage{tikz}
\usepgflibrary{shapes}

% \fond{n}{s} : lettre s dans un polygone régulier à n côtés
\newcommand*\fond[2]{\tikz[baseline=(char.base)]{\node[regular polygon, regular polygon sides=#1,draw,inner sep=1pt] (char) {#2};}}

\newtheorem{proposition}{Proposition}
\newtheorem{theorem}{Theorem}
\newtheorem{lemma}{Lemma}
\newtheorem{open}{Open question}

\newenvironment{proof}{\noindent \emph{Proof. }}{\hfill \hbox{\rlap{$\sqcap$}$\sqcup$}\\}

\title{Compact packings of the plane\\with three sizes of discs\thanks{
    The work of O. S was supported within frameworks of the state task for ICP RAS 0082-2014-0001 (state registration AAAA-A17-117040610310-6).
    The work of Th. F and A. H was supported by the Partenariat Hubert Curien (PHC) Gundishapur.
    }}
\author{
Thomas Fernique\footnote{Universit\'e Paris 13, CNRS, Sorbonne Paris Cit\'e, UMR 7030, 93430 Villetaneuse, France.}
\and Amir Hashemi\footnote{Department of Mathematical Sciences, Isfahan University of Technology, Isfahan, Iran.}
\and Olga Sizova\footnote{Faculty of Mathematics, Higher School of Economics, 119048 Moscow, Russia.}~\footnote{Semenov Institute of Chemical Physics, 119991 Moscow, Russia.}
}

\date{}

\begin{document}
\maketitle

\begin{abstract}
  A compact packing is a set of non-overlapping discs where all the holes between discs are curvilinear triangles.
  There is only one compact packing by discs of size $1$.
  There are exactly $9$ values of $r$ which allow a compact packing by discs of sizes $1$ and $r$.
  We prove here that there are exactly $164$ pairs $(r,s)$ allowing a compact packing by discs of sizes $1$, $r$ and $s$.
\end{abstract}

%%%%%%%%%%%%%%%%%%%%%%
\section{Introduction}

A set of interior-disjoint discs is called a {\em packing}.
Packings are of special interest to model the structure of materials, {\em e.g.}, crystals or granular materials, and the goal in this context is to understand which typical or extremal properties have the packings (see, {\em e.g.}, \cite{LH93,OH11,HST12}).
In 1964, T\'oth coined the notion of {\em compact packing} \cite{FT64}: this is a packing whose {\em contact graph} (the graph which connects the center of mutually tangent discs) is triangulated.
Equivalently, all its holes are curvilinear triangles.

There is only one compact packing with with all the discs of the same size, called the {\em hexagonal compact packing}: the disc centers are located on the triangular grid.
In \cite{Ken06}, it is proven that there are exactly $9$ values of $r$ which allow a compact packing with discs of size (radius) $1$ and $r$.
Fig.~\ref{fig:2packings} depicts an example of compact packing for each case.
All these packings already appeared in \cite{FT64}, except $c_5$ which later appeared in \cite{LH93} and $c_2$ which was new at that time.

\begin{figure}[hbtp]
\centering
\begin{tabular}{lll}
  c1 (L)\hfill 1111r & c2 (S)\hfill 111rr & c3 (L)\hfill 11r1r\\
  \includegraphics[width=0.3\textwidth]{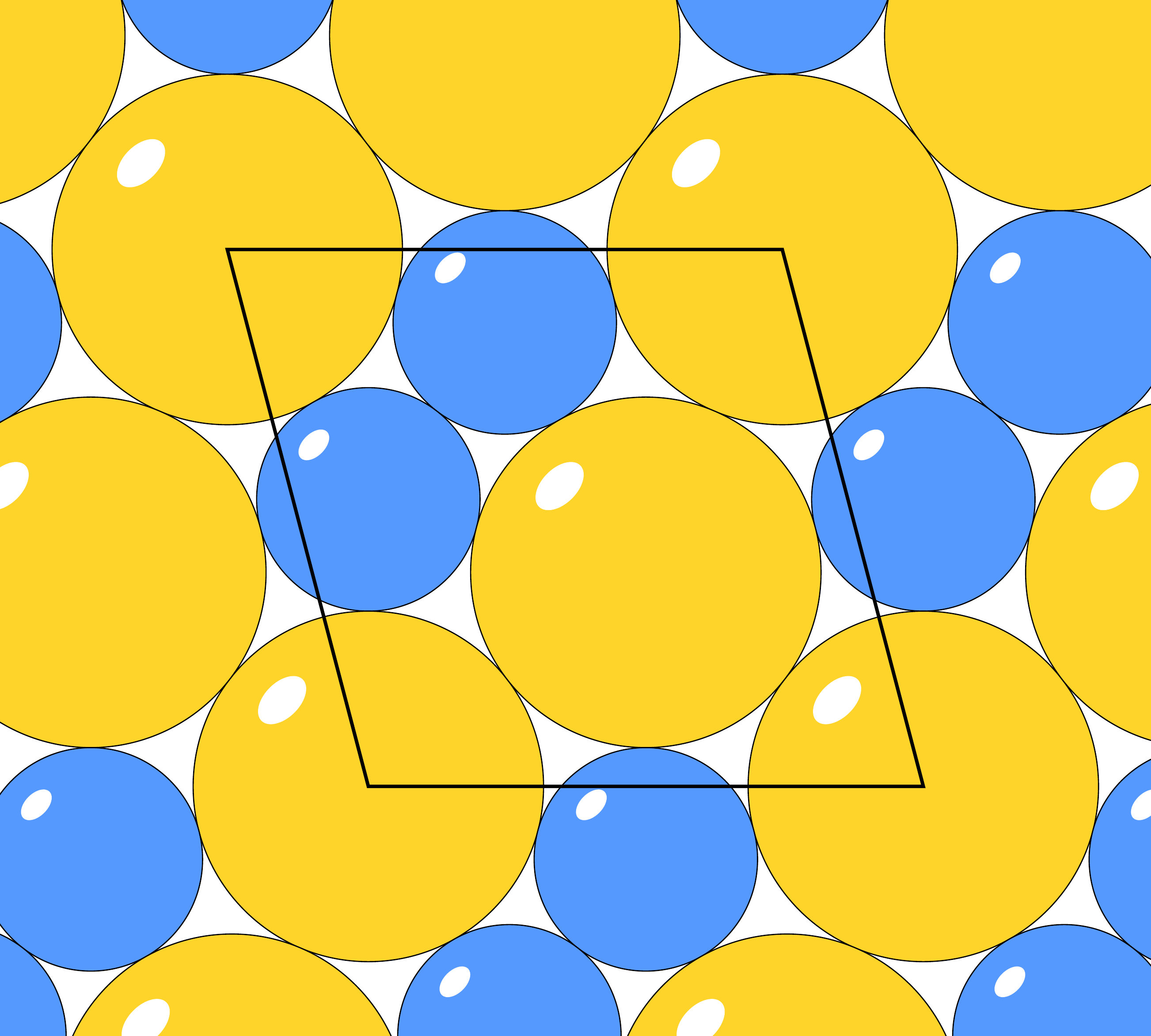} &
  \includegraphics[width=0.3\textwidth]{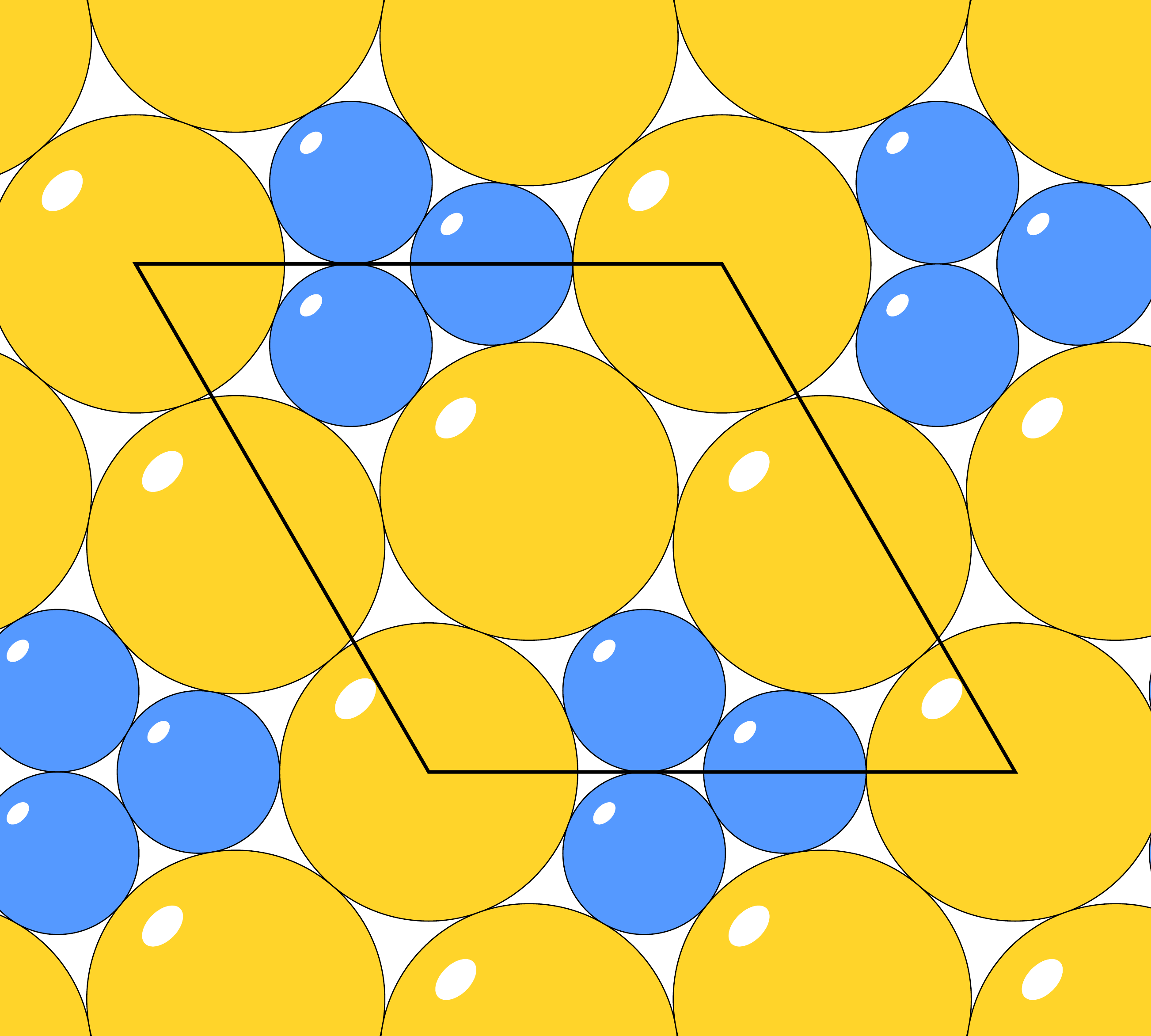} &
  \includegraphics[width=0.3\textwidth]{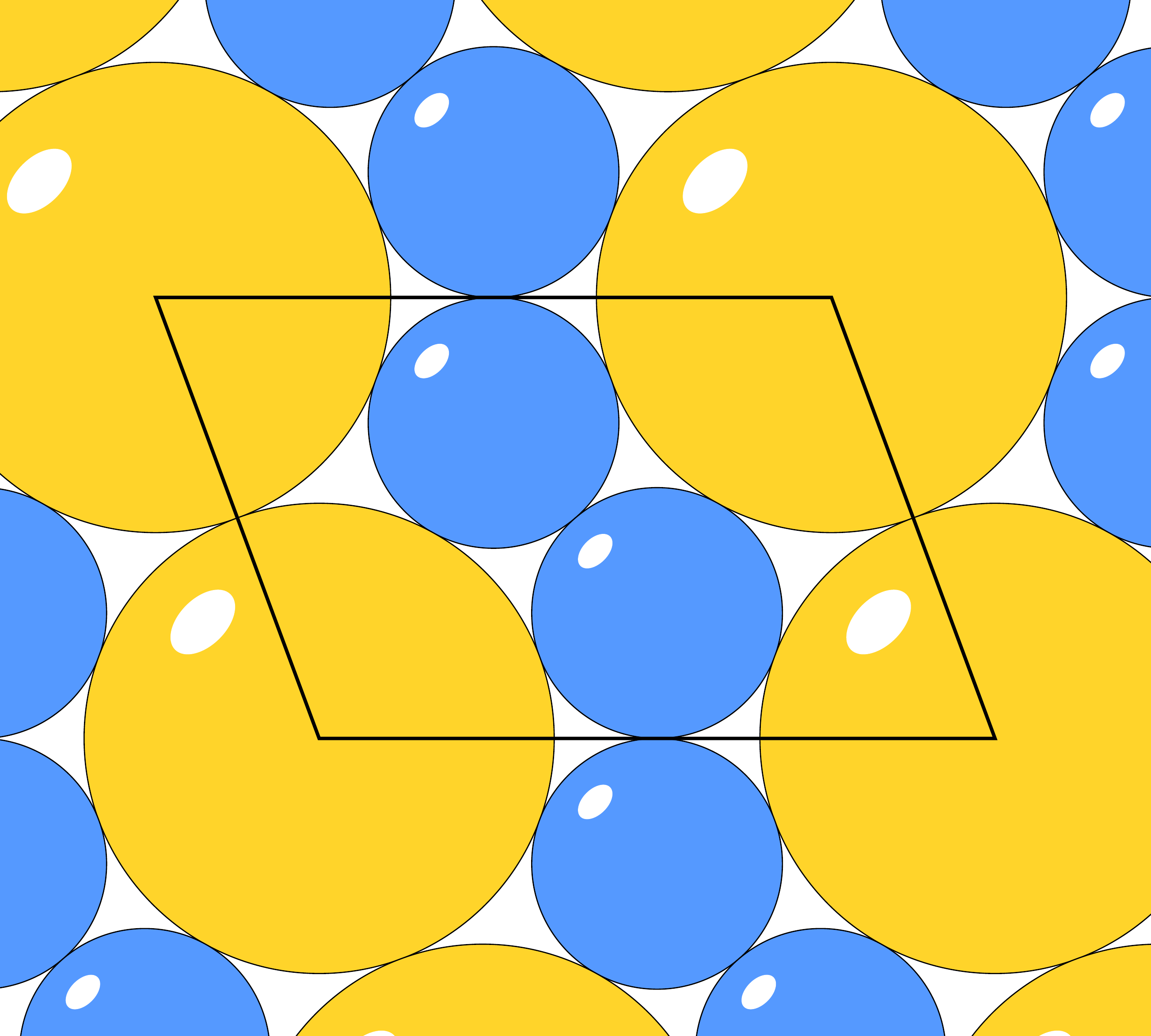}\\
  c4 (E)\hfill 1111 & c5 (H)\hfill 11rrr & c6 (H)\hfill 1r1rr\\
  \includegraphics[width=0.3\textwidth]{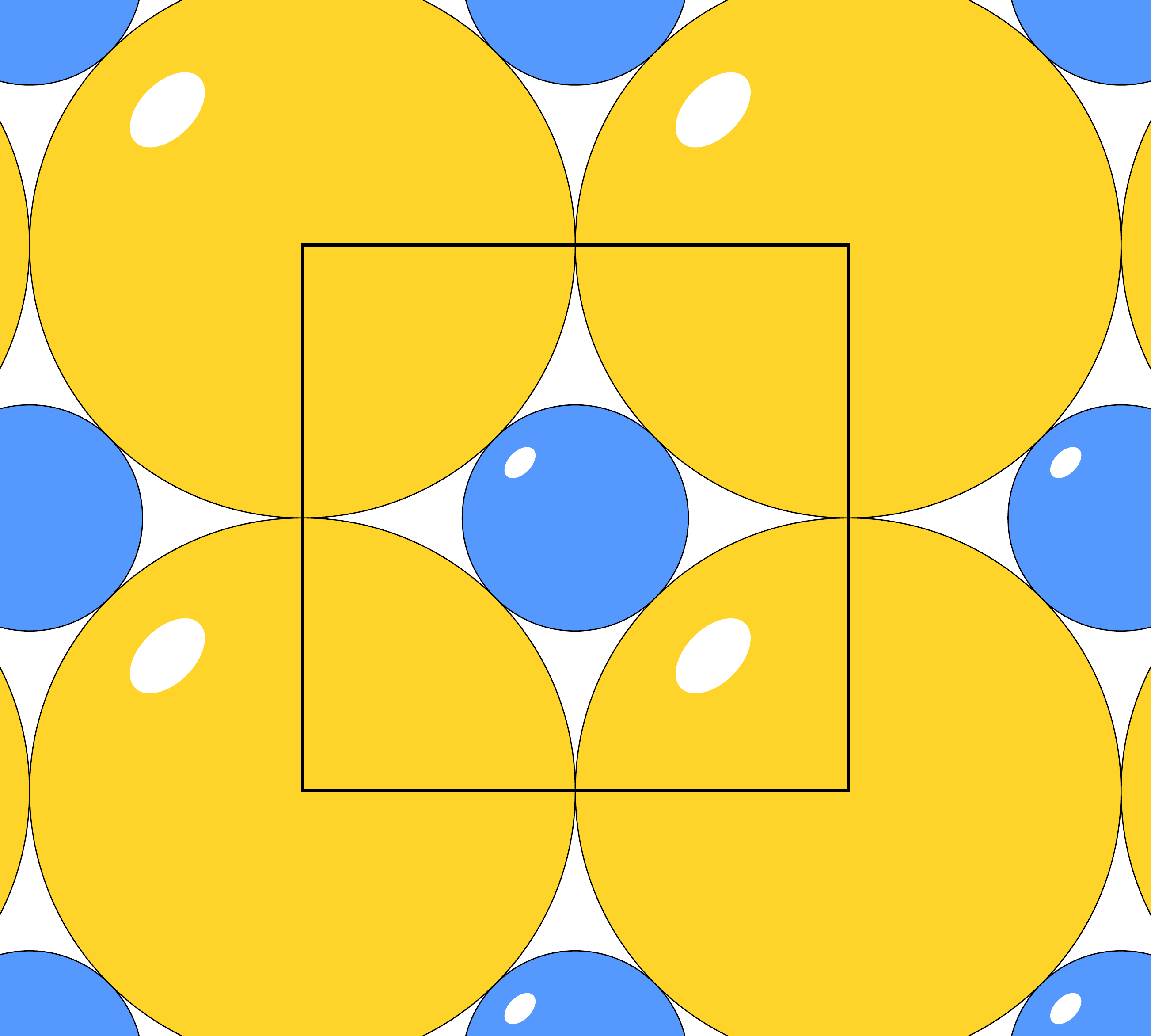} &
  \includegraphics[width=0.3\textwidth]{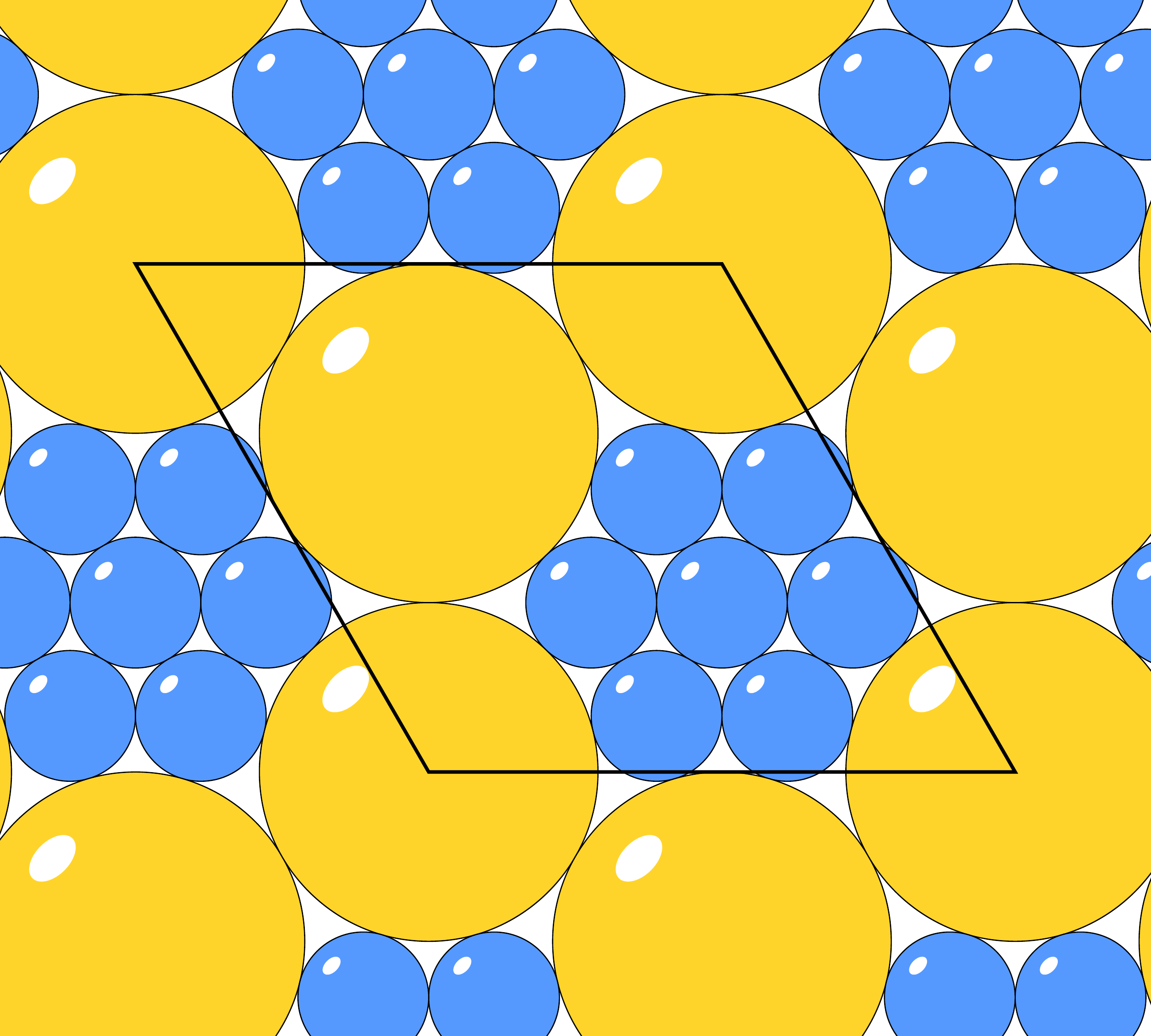} &
  \includegraphics[width=0.3\textwidth]{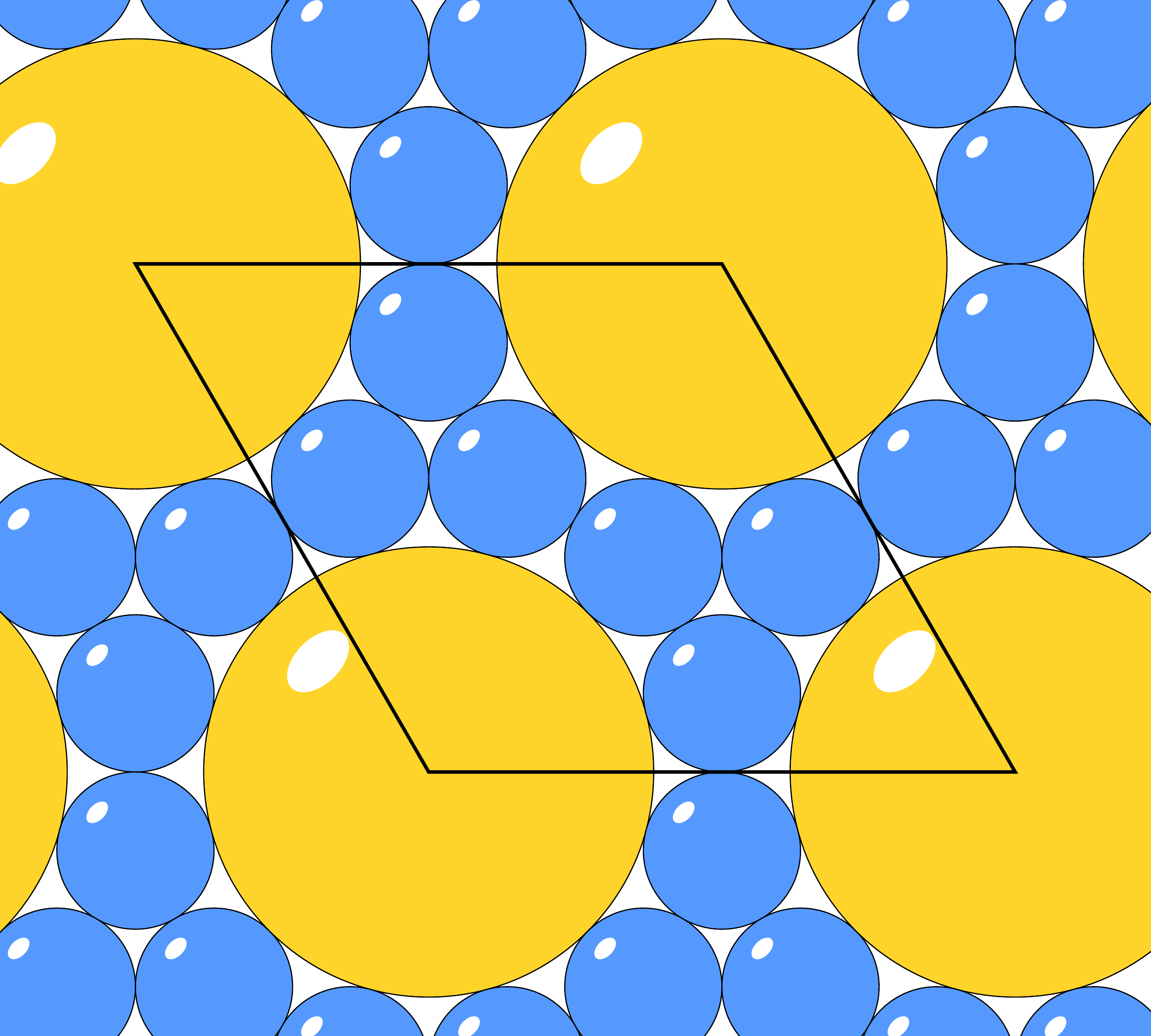}\\
  c7 (E)\hfill 111r & c8 (H)\hfill 111 & c9 (H)\hfill 11rr\\
  \includegraphics[width=0.3\textwidth]{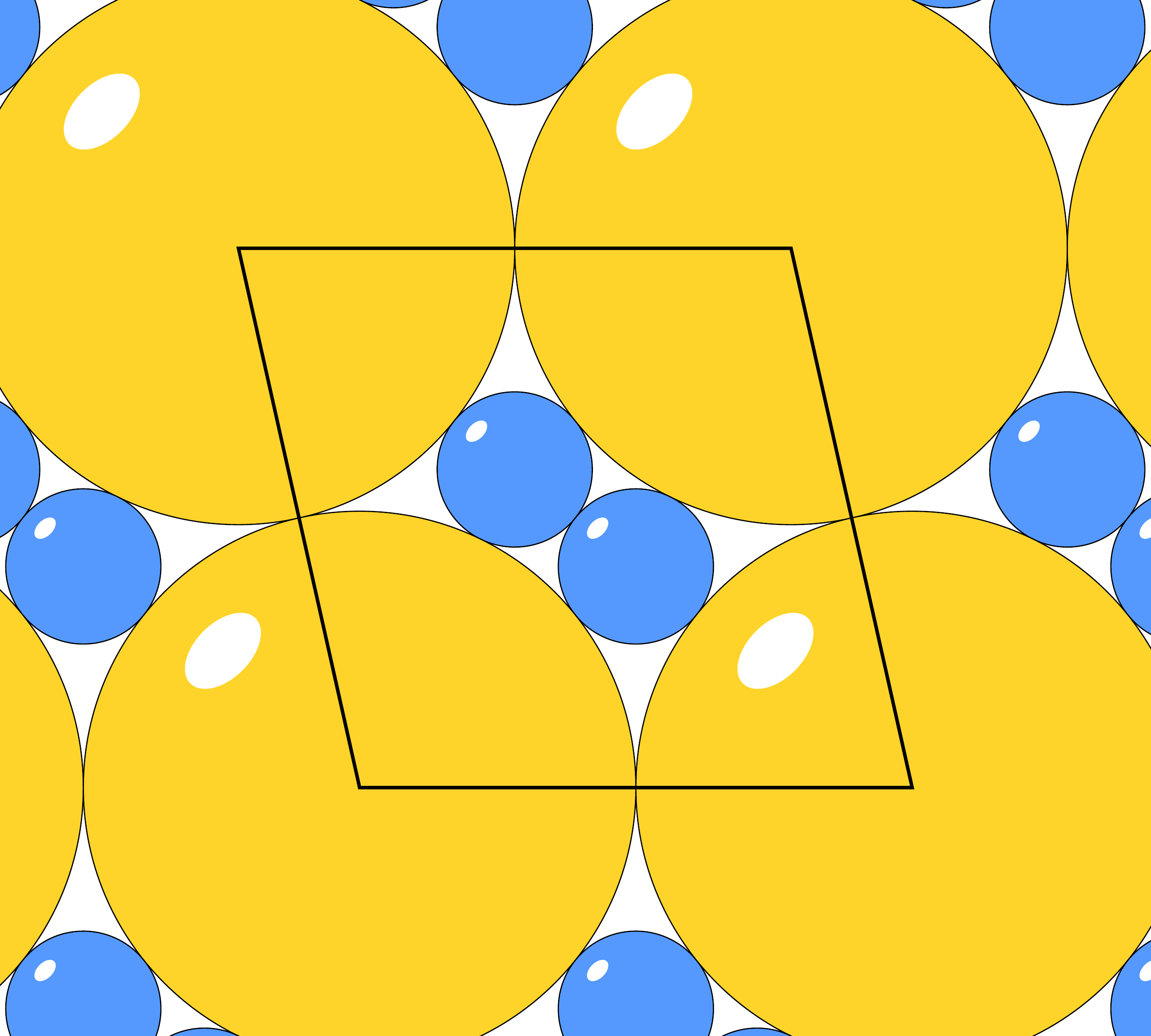} &
  \includegraphics[width=0.3\textwidth]{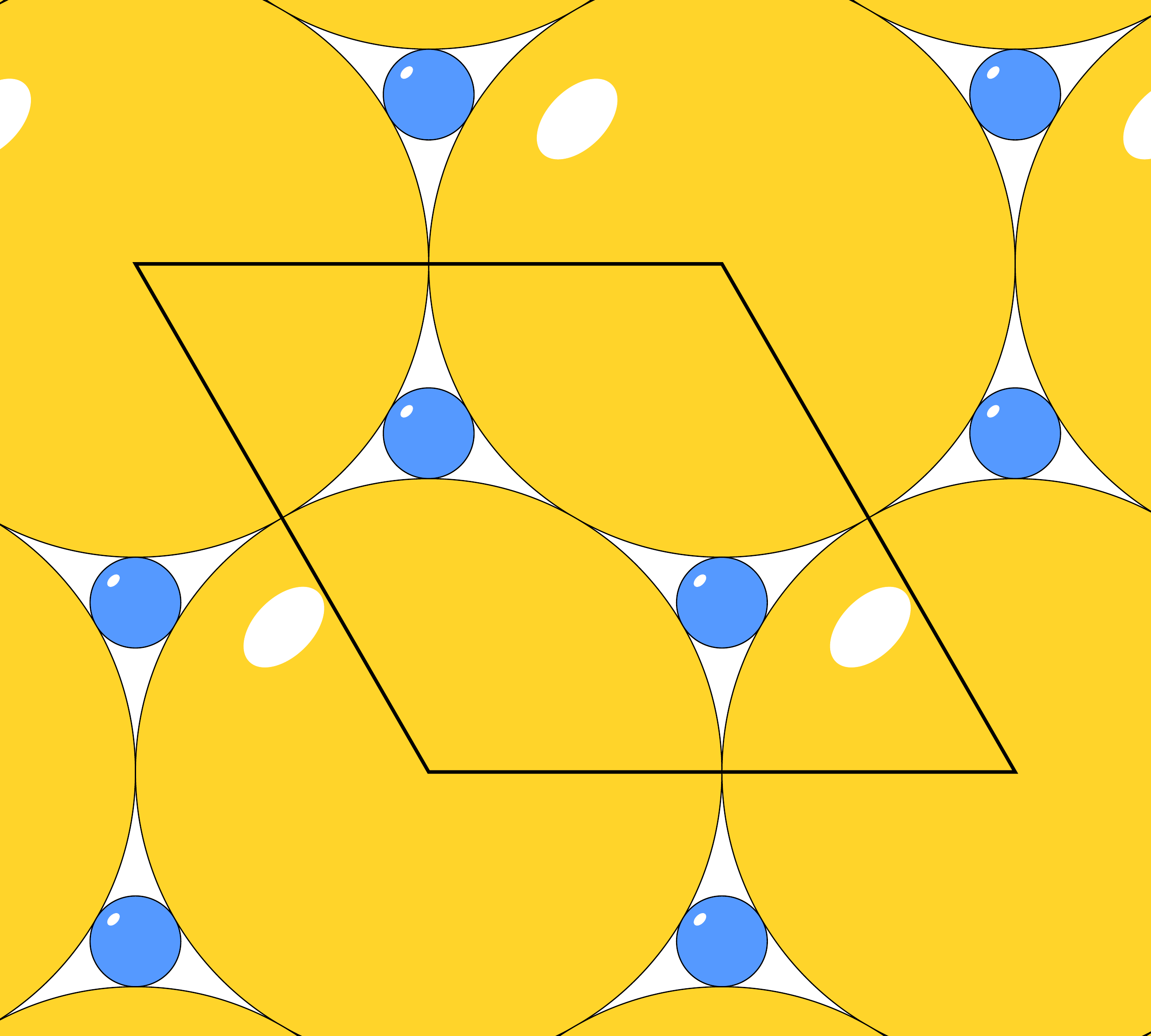} &
  \includegraphics[width=0.3\textwidth]{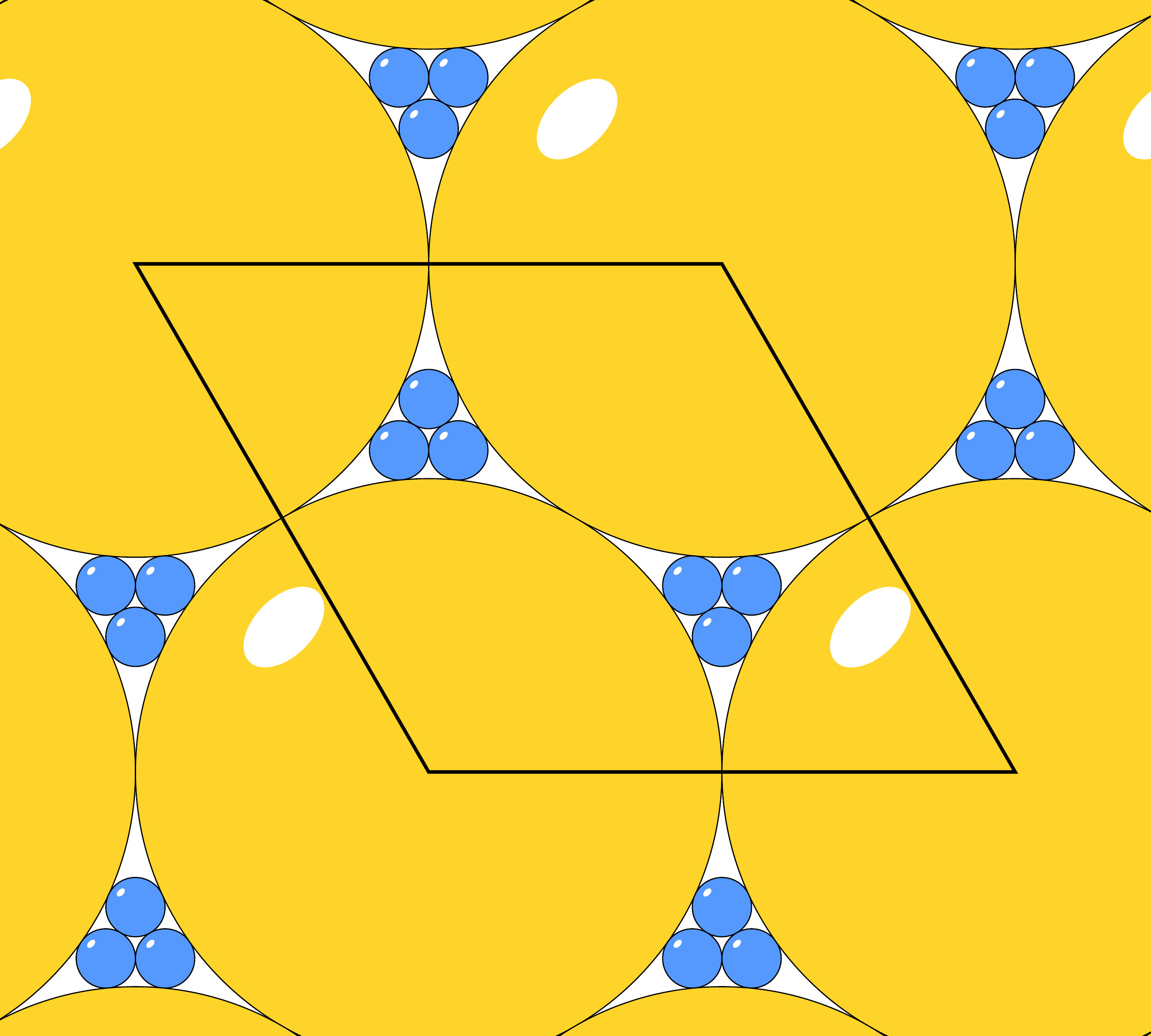}
\end{tabular}
\caption{
An example of compact packing for each of the $9$ possible values of $r<1$ which allow a compact packings by discs of sizes $1$ and $r$.
They are all periodic, with the parallelogram showing a fundamental domain.
The top-right word over $\{\textrm{1},\textrm{r}\}$ codes the {\em corona} of a small disc (see Sec.~\ref{sec:notations}).
A label in c1--c9 is assigned to each case (top-left), followed by a letter in brackets which refers to its {\em type} (see Appendix~\ref{sec:classification}).
%Discs are depicted as spheres for aesthetic reasons: the sphere centers have to be seen in the same plane, {\em i.e.}, contact points between spheres are on the equators.
}
\label{fig:2packings}
\end{figure}

Recently, it was proven in \cite{Mes20} that there are at most $11462$ pairs $(r,s)$ which allow a compact packing by discs of sizes $1$, $r$ and $s$.
The author provided several examples and suggested that a complete characterization could be beyond the actual capacity of computers.
We here overcome this limitation and we prove:

\begin{theorem}
\label{th:main}
There are exactly $164$ pairs $(r,s)$ which allow a compact packing by discs of sizes $1$, $r$ and $s$, with $0<s<r<1$.
\end{theorem}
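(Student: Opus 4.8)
The plan is to convert the geometric problem into a finite system of \emph{angle equations} and then to solve and certify it. If three pairwise tangent discs have radii $a,b,c$, their centres form a triangle with sides $a+b$, $a+c$, $b+c$, so the law of cosines gives the angle at the centre of the $a$-disc as
$$
\alpha_a(b,c)=\arccos\!\left(1-\frac{2bc}{(a+b)(a+c)}\right),
$$
a quantity strictly increasing in each of $b$ and $c$. In a compact packing every disc is enclosed by a cyclic \emph{corona} of tangent discs, with consecutive discs mutually tangent, and the angles they subtend at the central disc sum to $2\pi$. Each corona therefore yields one closing equation $\sum\alpha_a(\cdot,\cdot)=2\pi$, an analytic relation between the unknowns $r$ and $s$.

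First I would use monotonicity to tame the smallest disc. Since every radius is at least $s$, two consecutive neighbours of a size-$s$ disc subtend an angle at least $\alpha_s(s,s)=\pi/3$; as these angles sum to $2\pi$, the smallest disc has at most six neighbours. Hence there are only finitely many combinatorial coronas of the smallest disc — cyclic words over $\{1,r,s\}$ of length at most six — and each admissible one contributes a closing equation $E(r,s)=0$. One equation only cuts out a curve, so to isolate the pairs $(r,s)$ I would intersect it with a second, independent closing equation (coming from a disc of size $r$ or $1$). Candidate generation then reduces to solving finitely many systems of two equations in $(r,s)$; after substituting half-angle tangents the relations become polynomial, so the real solutions with $0<s<r<1$ can be extracted by elimination (resultants or Gr\"obner bases) and isolated numerically. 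For each surviving pair one then exhibits an explicit, typically periodic, compact packing as in Fig.~\ref{fig:2packings}, mirroring the two-size analysis of \cite{Ken06}.

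The main obstacle is finiteness of this search. Unlike the smallest disc, a disc of size $r$ or $1$ may have an arbitrarily long corona, since a size-$s$ neighbour subtends an angle $\alpha_a(s,s)\to 0$ as $s\to 0$; the naive supply of second equations is therefore infinite, and the enumeration need not terminate. The crux is to prove an a priori lower bound on $s$ over all admissible three-size packings — equivalently, a uniform bound on corona length — so that only finitely many coronas of each size survive and the candidate list is provably finite. I expect this termination step, together with the certified algebra, to carry the real difficulty: the equations are transcendental before rationalisation, and one must use exact or interval arithmetic both to recover the genuine roots and to prove that the many \emph{near}-solutions thrown up by the numerics are not exact solutions.

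Completeness must finally be argued in both directions. Soundness is immediate once each listed pair is realised by an actual packing. For exhaustiveness one observes that any compact packing with three sizes contains discs of sizes $s$ and $r$ whose coronas lie in the enumerated families, so its pair $(r,s)$ satisfies one of the finitely many solved systems and hence appears among the candidates; the value $164$ is then obtained after discarding degenerate, out-of-range, or coincident solutions.
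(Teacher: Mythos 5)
Your framework (corona closing equations, at most six neighbours for the smallest disc, polynomialisation and elimination, then realisation by periodic packings) is exactly the paper's, but there are two genuine gaps. The first is the step you yourself flag as ``the crux'': a uniform lower bound on the size of the smallest disc relative to the others, which is what makes the set of $r$- and $1$-coronas finite and the enumeration terminate. You state that this bound is needed and that you ``expect'' it to carry the difficulty, but you give no argument for it, and nothing in your setup yields it: monotonicity of the angles only bounds the corona of the \emph{smallest} disc. The paper proves this as Lemma~\ref{lem:ratio_minimal} by a short but essential deflation argument: any compact packing with all three sizes contains an s-corona other than ssssss; replacing every letter 1 by r in its coding again gives a combinatorially valid s-corona (a fact checked column-by-column in Tab.~\ref{tab:s_coronas}); deflating the 1-discs into r-discs shrinks the corona and hence the central disc, so the packing's ratio $\tfrac{s}{r}$ is at least the ratio determined by some s-corona made of r- and s-discs only. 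Since there are just $10$ such coronas, $\tfrac{s}{r}\geq 5-2\sqrt{6}\simeq 0.101$ uniformly. Without this (or an equivalent) lemma your candidate list is not provably finite and the proof does not get off the ground.

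The second gap is in the exhaustiveness count. Soundness plus ``the pair $(r,s)$ of any packing appears among the candidates'' does not give exactly $164$: there are candidate pairs for which the polynomial systems have genuine solutions, all the local coronas exist, and yet no compact packing of the whole plane exists. These cannot be dismissed as ``degenerate, out-of-range, or coincident solutions''; deciding whether a set of tiles tiles the plane is undecidable in general, so each such case needs an ad hoc argument. The paper spends Sections~\ref{sec:two_phases}--\ref{sec:one_small} on precisely this: the ``large separated'' analysis with the critical-disc Lemmas~\ref{lem:biphase} and \ref{lem:biphase2}, the Proposition excluding coexistence of the s-coronas 1srrs and 1s1ss, and Lemmas~\ref{lem:1pc_impossible_1} and \ref{lem:1pc_impossible_2}, which together eliminate several dozen locally consistent candidates. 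A related omission, more practical than logical, is that solving all corona pairs naively is computationally out of reach (the paper reports that a single resultant exhausted its machine's memory); the paper's split into large separated / two s-coronas / one s-corona packings, with the covering condition cutting $16805$ pairs down to $803$, is what makes the certified computation actually finish.
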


All the values $r$ and $s$ are algebraic and their minimal polynomials (which can be quite complicated) are given in the supplementary materials (as well as numerical approximations).
Fig.~\ref{fig:3packings} depicts an example of compact packing for $9$ of these $164$ cases.
The full list is in Appendix~\ref{sec:examples}.
In each case we found a {\em periodic} compact packing, so that it suffices to give its fundamental domain (as done in Fig.~\ref{fig:2packings} and \ref{fig:3packings}).
It is of course difficult to convince oneself with the naked eye that the depicted packings are {\em really} compact.
However, we shall see that their {\em combinatorics} is sufficient to ensure that they are indeed compact.
Note also that, in many cases, more compact packings than the only one depicted are possible (sometimes much more - as for the surprising number $83$ in Fig.~\ref{fig:3packings}).
This is discussed in Appendix~\ref{sec:classification}.

\begin{figure}[hbtp]
\centering
\begin{tabular}{lll}
  3 (E)\hfill 1111 / 11r1r & 27 (E)\hfill 11r / 1s1s1s1s & 33 (L)\hfill 1rr / 11srs1srs\\
  \includegraphics[width=0.3\textwidth]{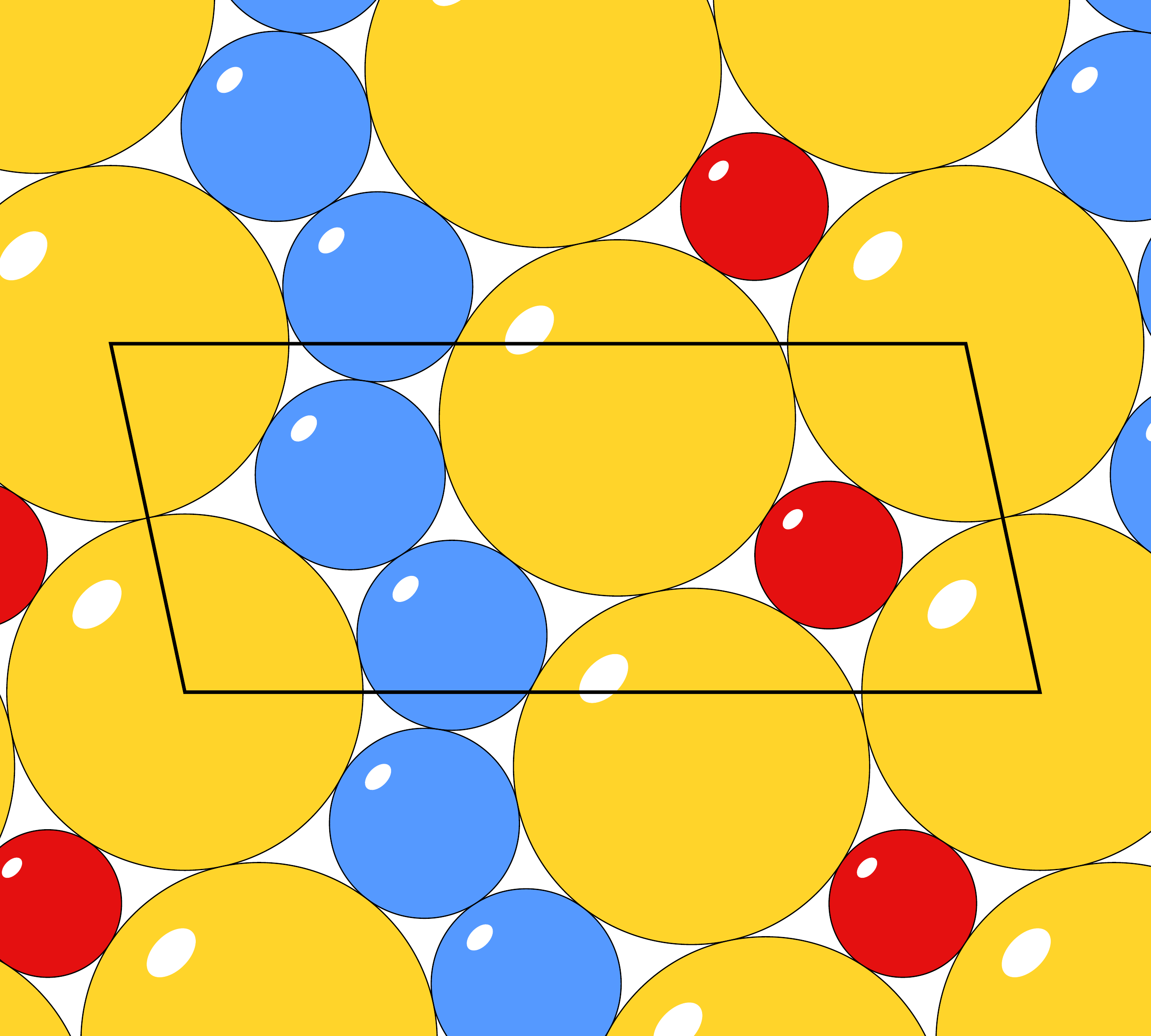} &
  \includegraphics[width=0.3\textwidth]{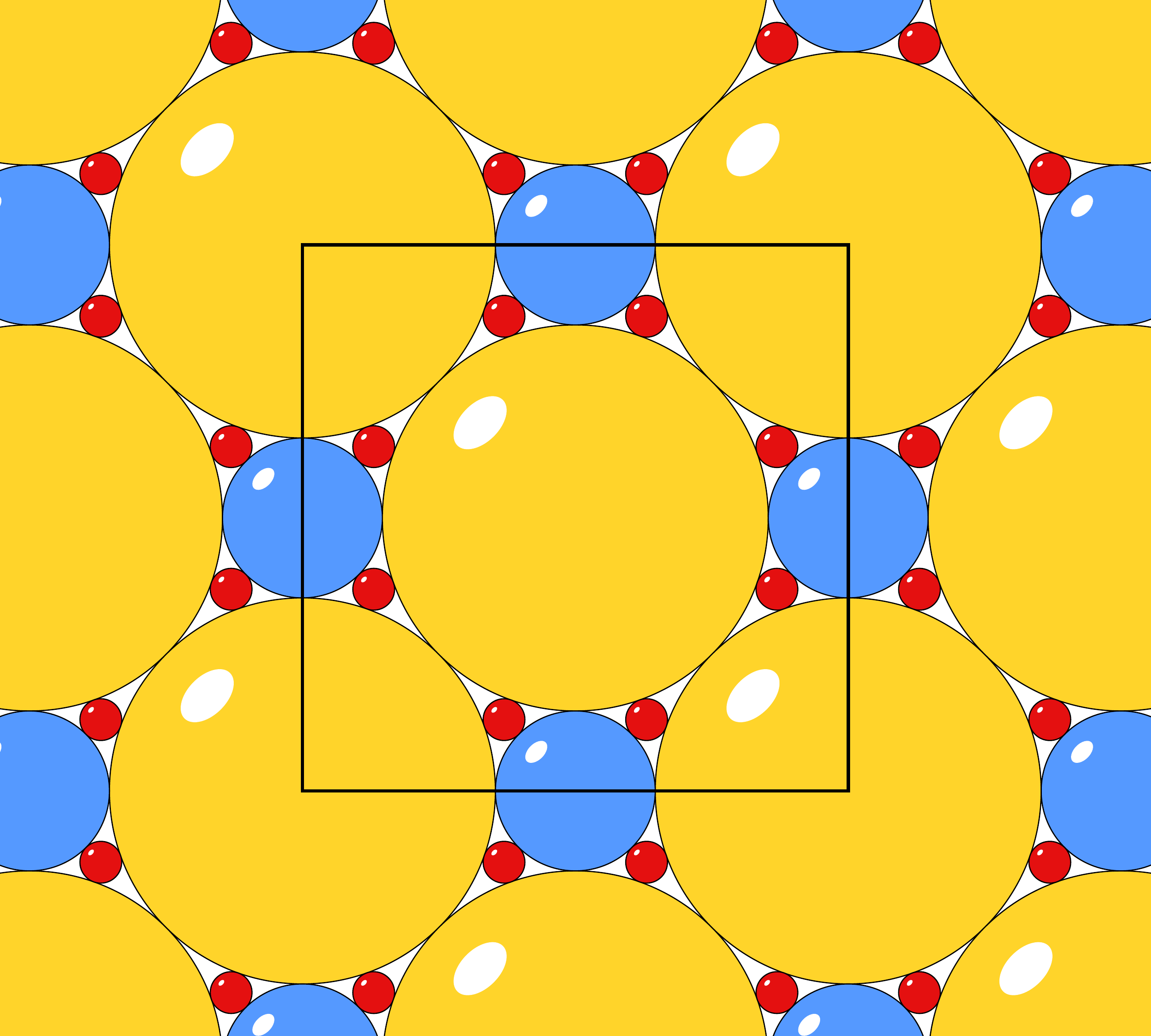} &
  \includegraphics[width=0.3\textwidth]{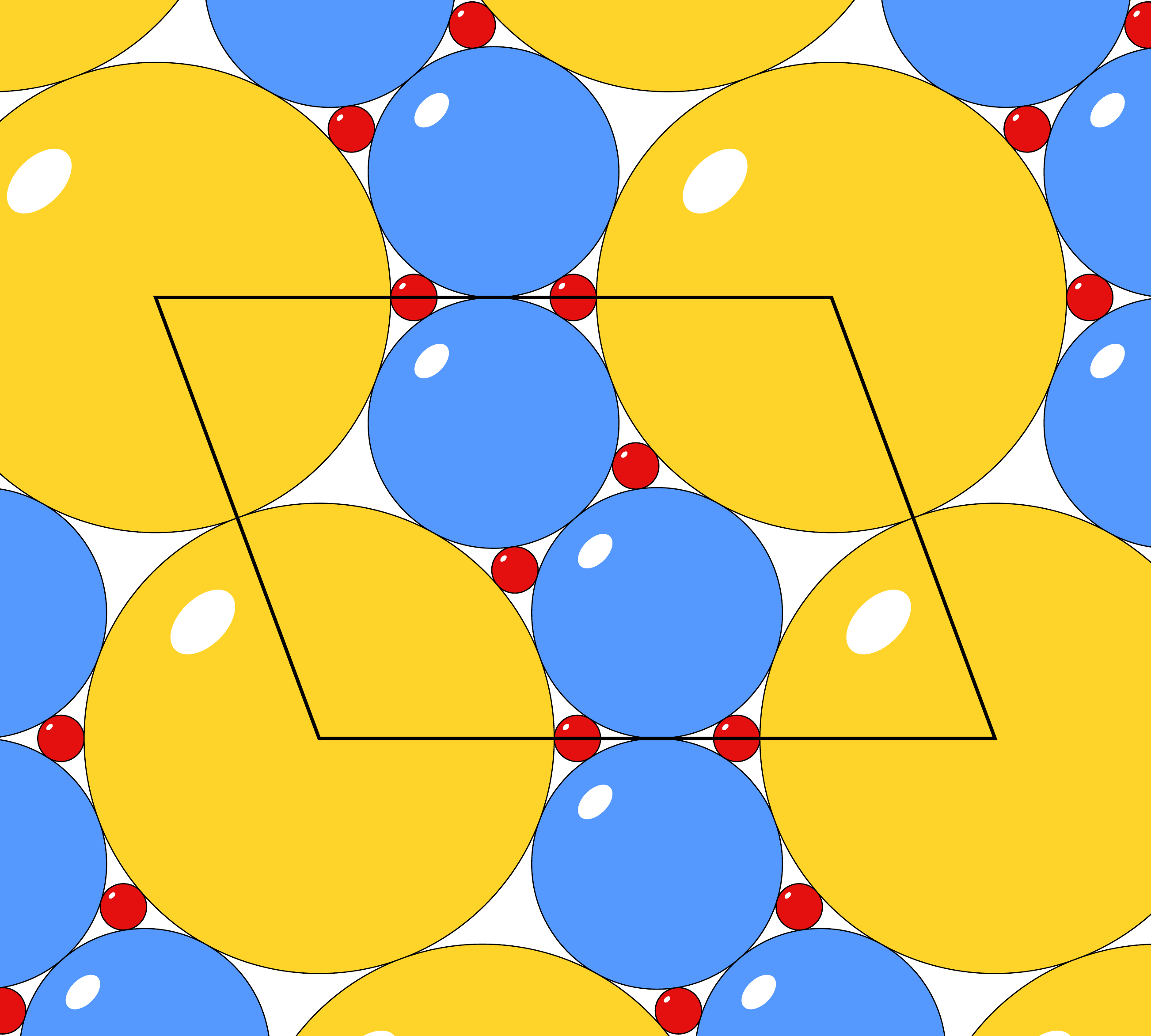}\\
  51 (L)\hfill 111rr / 1rrrrs & 53 (H)\hfill 11r1r / 1r1s1s & 83 (E)\hfill 1r1r / 11r1s\\ 
  \includegraphics[width=0.3\textwidth]{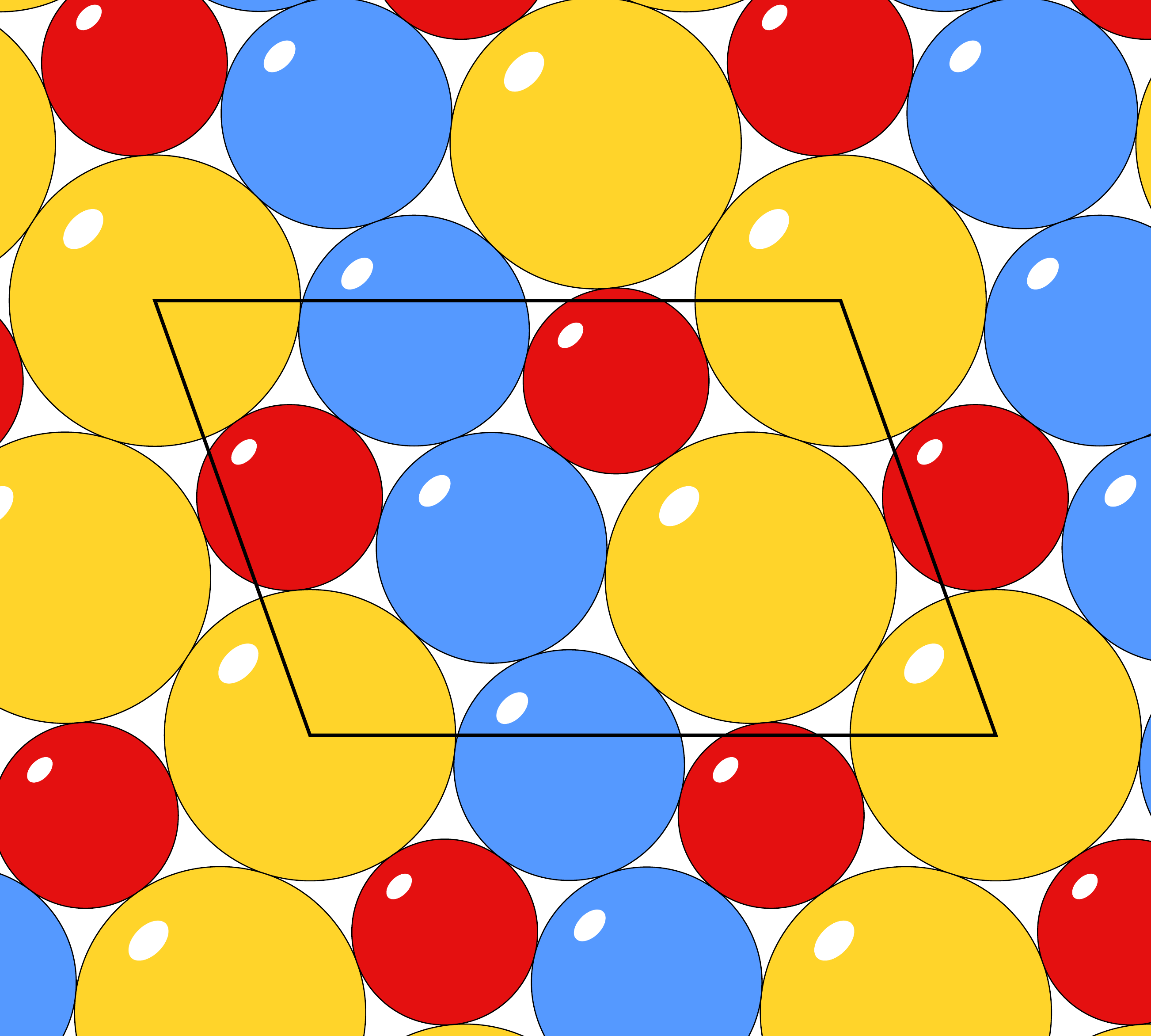} &
  \includegraphics[width=0.3\textwidth]{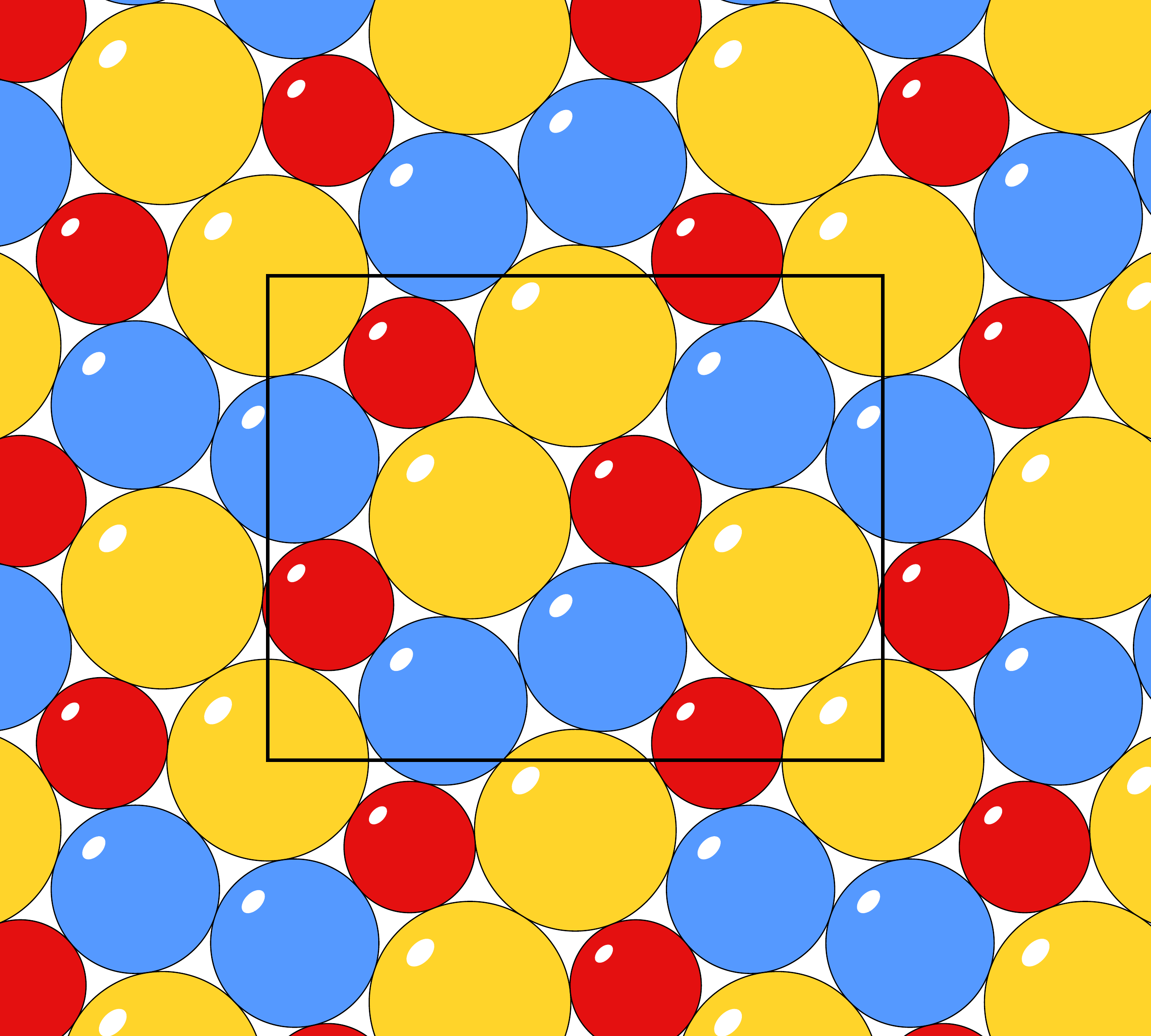} &
  \includegraphics[width=0.3\textwidth]{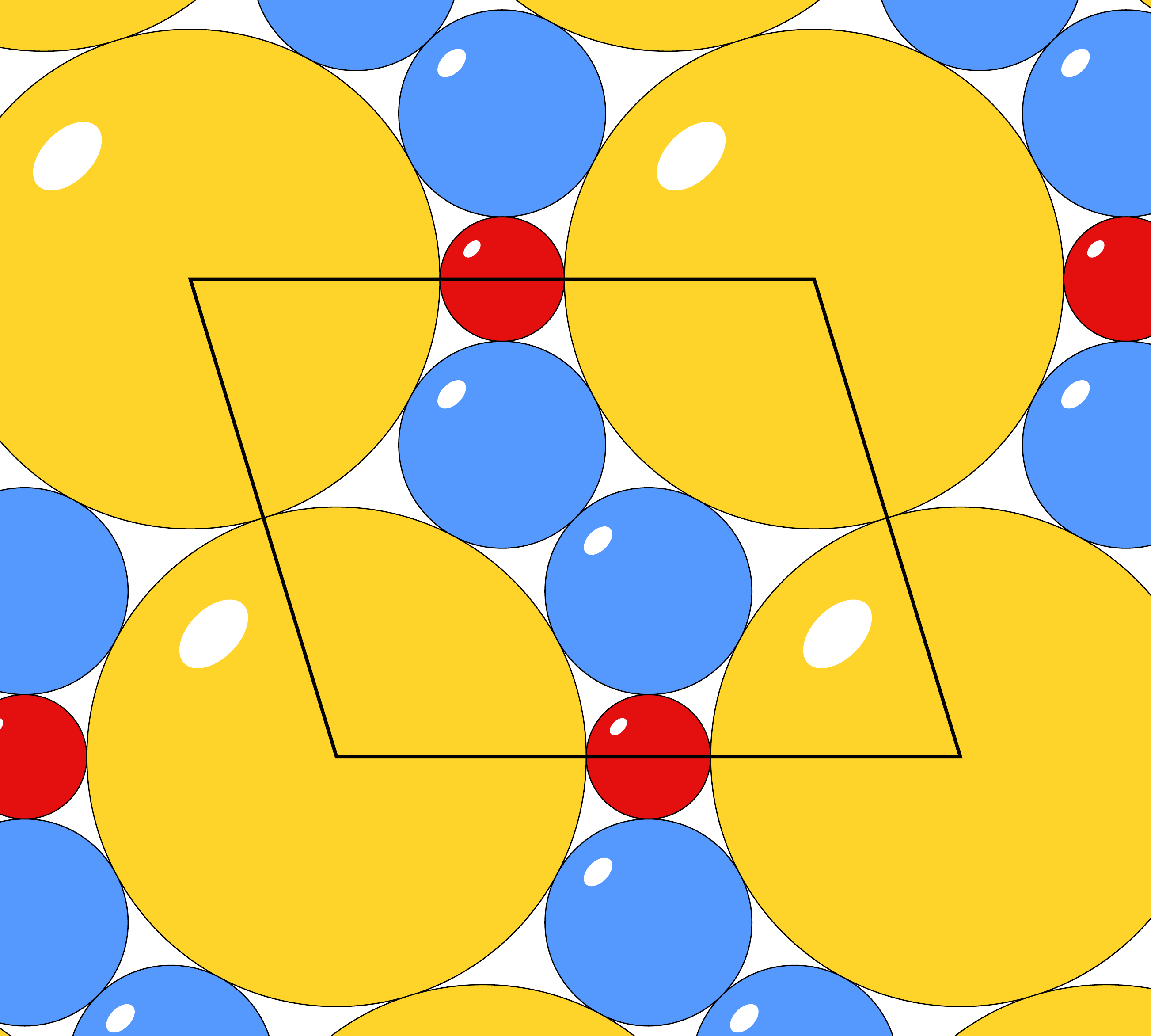}\\
  93 (H)\hfill 1r1rr / 1s1srs & 107 (H)\hfill 1r1s / 1s1s1s & 152 (S)\hfill 1rssr / 1s1sss \\
  \includegraphics[width=0.3\textwidth]{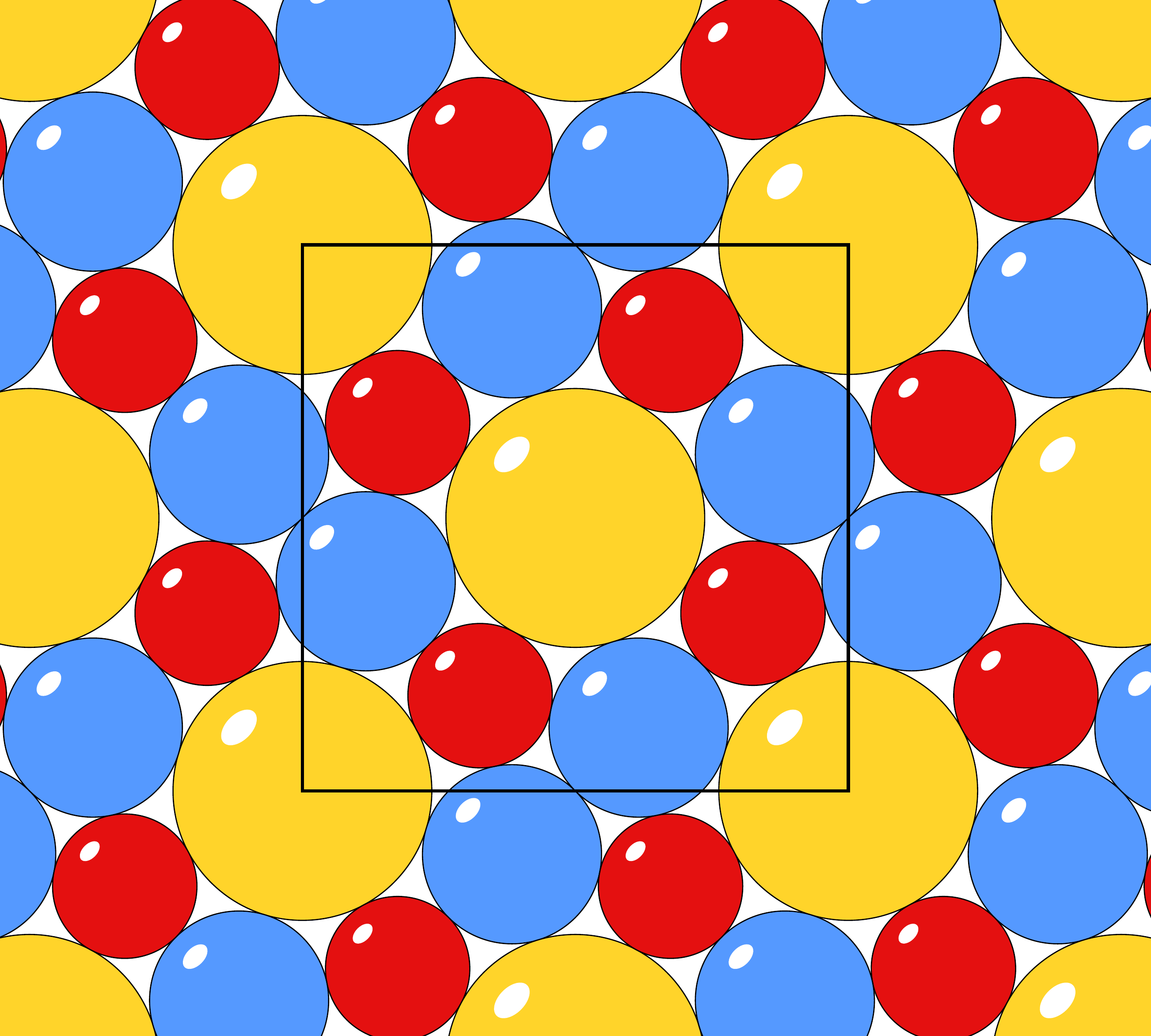} &
  \includegraphics[width=0.3\textwidth]{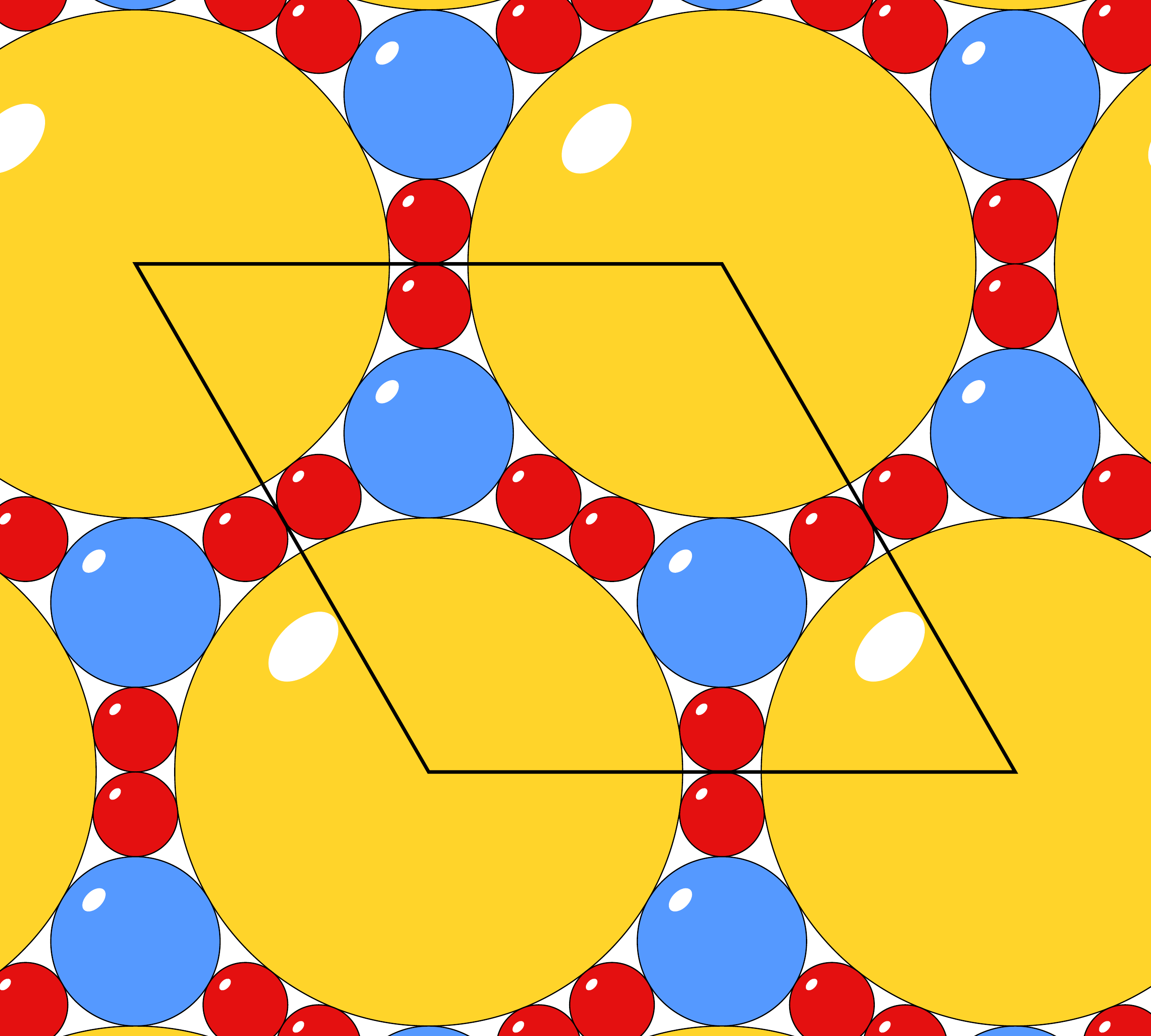} &    
  \includegraphics[width=0.3\textwidth]{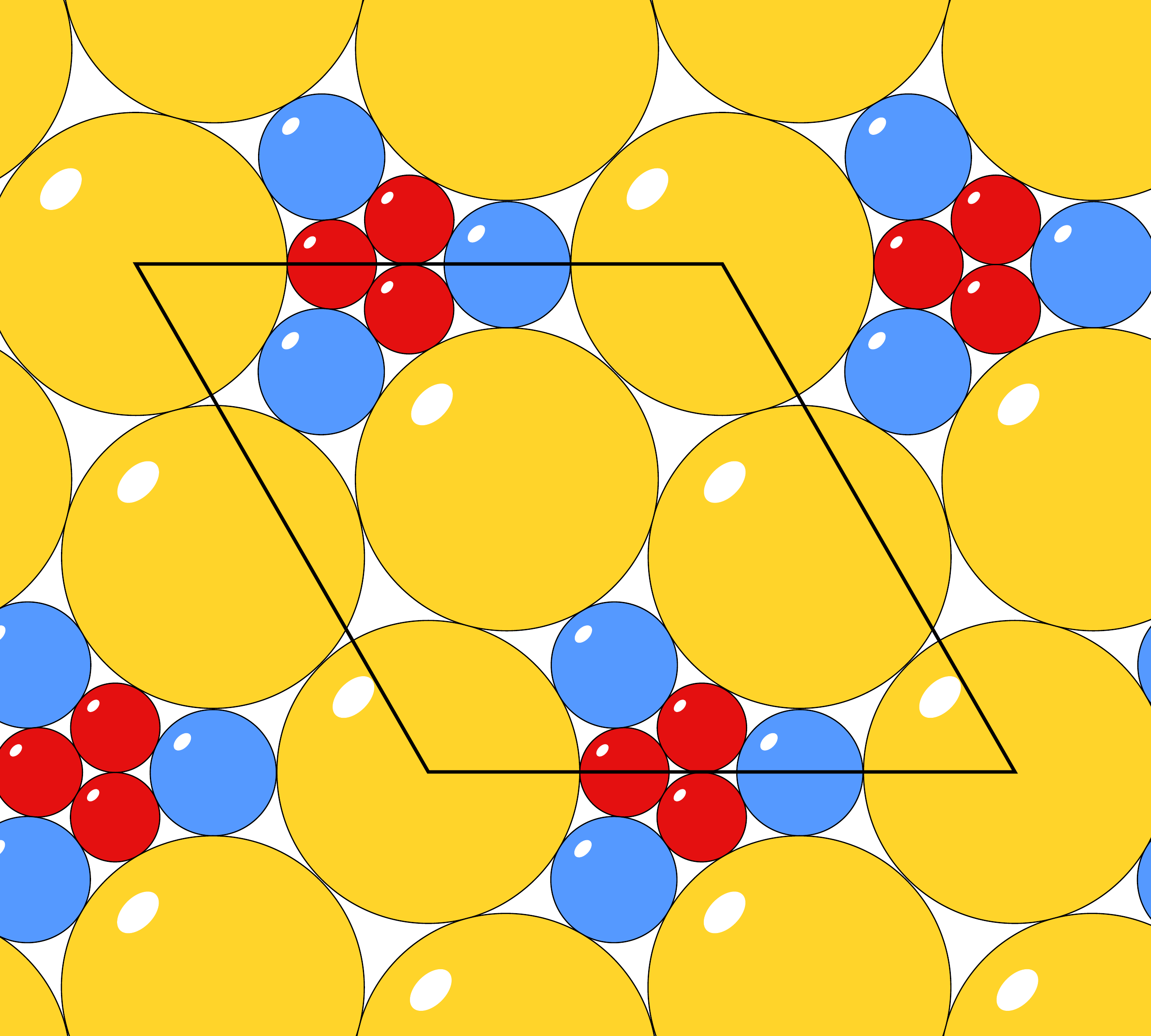}\\
\end{tabular}
\caption{
Some compact packings with three sizes of discs.
They are all periodic, with the parallelogram showing a fundamental domain.
The top-left number refers to numbers in App.~\ref{sec:examples} and the two top-right words over $\{\textrm{1},\textrm{r},\textrm{s}\}$ codes the {\em coronas} of both a small and a medium disc (see Sec.~\ref{sec:notations}).
The letter in brackets refers to the {\em type} of the compact packing (see App.~\ref{sec:classification}).
}
\label{fig:3packings}
\end{figure}

Many compact packings with three sizes of discs can be found by hand.
Those on the first line in Fig.~\ref{fig:3packings}, for example, are easily derived from compact packings with two sizes of discs.
Some others were already known.
Number $107$ appears in \cite{FT64} (p.~187).
Numbers $99$, $104$ and $143$--$146$ appear in \cite{Mes20}.
Number $51$ can be found on a street pavement in Weggis, Switzerland (Rigiblickstrasse~$1$).
The challenge is to find them all.
As we shall see, this is mathematically rather simple, but the profusion of cases and the complexity of the calculations make it a computational challenge.

Before we continue, let us give a motivation to study compact packings.
A central problem in packing theory is to find the maximal density over a set $\mathcal{P}$ of packings, defined by
$$
\delta(\mathcal{P}):=\sup_{P\in\mathcal{P}}\limsup_{k\to\infty}\frac{\textrm{area of the $k\times k$ square covered by $P$}}{k^2}.
$$
The maximal density over packings with only one size of disc has been proven to be $\pi/\sqrt{12}\simeq 0.9069$, attained for the unique possible compact packing (the hexagonal one) \cite{FT43}.
With two sizes of discs, the maximal density turns out to be known only for sizes which allow a compact packing.
Namely, all the compact packings in Fig.~\ref{fig:2packings} have been proven to maximize the density among the packings with the same sizes of discs \cite{Hep00,Hep03,Ken04}, except $c_5$ and $c_9$ which are still in the running.
Compact packings thus seem to be good candidates to provably maximize the density.
So, what about the $164$ new cases with three discs?
Some can be easily ruled out.
Number $33$, for example (depicted top-right in Fig.~\ref{fig:3packings}), has small discs which can be inserted in the holes between two large and a medium discs: this increases the density but yields a non-compact packing.
We shall therefore consider only so-called {\em saturated} packings, {\em i.e.}, such that no further discs can be added.
Does it suffice?

\begin{open}
Assume that a finite set of different discs allow a saturated compact packing.
Consider all the packings by these discs: is the maximal density achieved for a compact packing?
\end{open}

The densest compact packings in cases $1$--$18$ turn out to have only two sizes of discs, so that we do not get anything new.
Cases $24$, $29$--$34$ and $37$--$44$ are ruled out because the densest compact packings are not saturated.
There are still $130$ candidates left\ldots
%Number $53$, for example (depicted in Fig.~\ref{fig:3packings}) seems promising: it improves a lower bound guessed in \cite{FT64}, namely the largest $q<1$ such that there exists a packing with discs of sizes in the interval $[q,1]$ with a density greater than $\pi/\sqrt{12}$ (the optimal density with only one size of discs).
%This bound has been slightly improved in \cite{CP19}, reducing the gap with the upper bound proven in \cite{Bli69}.

The compact packings depicted in Appendix~\ref{sec:examples} are actually the densest among the possible compact packings, except in cases $1$--$18$ where the densest compact packings have only two discs (this can be checked with the help of App.~\ref{sec:classification}).
As already mentioned, they are all periodic.
On the one hand, it is a chance because it makes it easy to describe them.
On the other hand, it is a bit disappointing because one of our main goals to extend \cite{Ken06} from two to three sizes of discs was to find a densest aperiodic packing.
This could indeed have been put forward as a rather simple explanation of aperiodic structure of materials known as {\em quasicrystals}.
Maybe a few extra disk sizes would be enough?

\begin{open}
Is there a finite set of different discs which allow compact packings and such that the densest one is aperiodic?
\end{open}

To finish with density of compact packings, let us say a word about higher dimensional packings, namely sphere packings.
The notion of density is easily generalized in $\mathbb{R}^n$.
Then, a sphere packing in $\mathbb{R}^n$ is said to be compact if its contact graph is a homogeneous simplicial complex of dimension $n$.
This coincides with the previous notion for $n=2$.
For $n=3$, it means that it can be seen as a tiling by tetrahedra which can intersect only on a full face, a full edge or a point.
The case of packings with only one size of sphere has been extensively studied (see, {\em e.g.}, \cite{CS99}).
The maximal density is known in dimension $3$ \cite{Hal05}, $8$ \cite{Via17} and $24$ \cite{CKM17}.
In dimension $8$ and $24$, the maximal density turns out to be achieved by a compact packing!
This is not the case in dimension $3$, but there is no compact packing in this case (regular tetrahedra do not tile the space).
Compact packings are thus still good candidates to provably maximize the density in higher dimensions.
In particular, what about the compact packings with two sizes of spheres studied in \cite{Fer18}?

The paper is organized as follows.
Section~\ref{sec:notations} sets out some notations and introduces the important notion of {\em corona}.
In Section~\ref{sec:strategy}, the overall strategy is presented.
As we shall see, the proof heavily relies on computer, and we shall explain it in details how in order to make it reproducible.
All the computations were made with the open-source software SageMath \cite{sage} on our modest laptop, an Intel Core i5-7300U with $4$ cores at $2.60$GHz and $15,6$ Go RAM.
Sections~\ref{sec:small} through \ref{sec:one_small} are dedicated to the proof itself (more details are given in Section~\ref{sec:strategy}).
An example of compact packing for each possible sizes is given in Appendix~\ref{sec:examples}, while the variety of compact packings which is possible for each of these sizes is discussed in Appendix~\ref{sec:classification}.
Last, Appendix~\ref{sec:code} provides indications on the code used on computer to find possible sizes, which can be found in supplementary materials.

%%%%%%%%%%%%%%%%%%%%%%
\section{Notations}
\label{sec:notations}

The large disc is assumed to have size, $1$.
The sizes of the medium and small discs are denoted by $r$ and $s$, $0<s<r<1$.
We also call {\em x-disc} a disc of size $x$.
In a compact packing, the {\em corona} of a disc is the set of discs it is tangent to.
We shall consider coronas up to isometries.
A corona is said to be {\em small}, {\em medium} or {\em large} depending whether the surrounded disc is small, medium or large.
We shall also call them s-, r- and 1-coronas.
The {\em coding} of a corona is a word over the alphabet $\{\textrm{1},\textrm{r},\textrm{s}\}$: each letter corresponds to a disc and gives its size, with two letter being neighbor if and only if the corresponding discs are tangent.
We may use an exponent when there are many consecutive letters in the coding, for example 11rr$\textrm{s}^{12}$ denotes the corona 11rrssssssssssss ($12$ consecutive s-discs).
Any circular permutation or reversal of this word corresponds to the same corona: we shall usually use the lexicographically smallest coding.
Given pairwise tangent discs of size $x$, $y$ and $z$, $\widehat{xyz}$ denotes the non-oriented angle between the segments which connect the center of the disc of size $y$ to the two other centers.
Figure~\ref{fig:notations} illustrates this.

\begin{figure}
  \centering
  \includegraphics[width=0.4\textwidth]{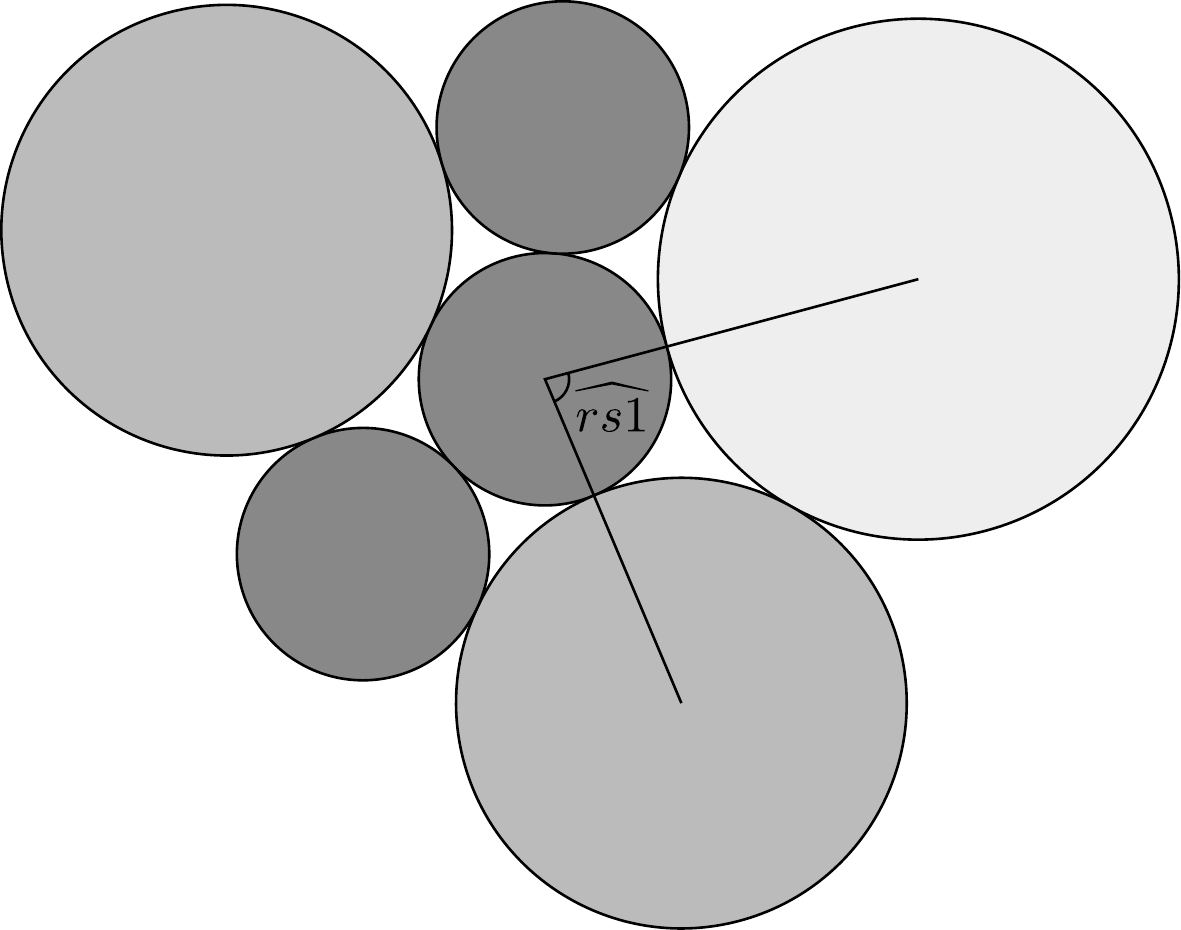}
  \caption{The s-corona 1rsrs and the angle $\widehat{rs1}$.}
  \label{fig:notations}
\end{figure}

%%%%%%%%%%%%%%%%%%%%%%
\section{Overall strategy}
\label{sec:strategy}

We here sketch the overall strategy to prove Theorem~\ref{th:main}.%, the difficulties and how we overcome them.

\paragraph{Coronas (Sections~\ref{sec:small} and \ref{sec:medium}).}
First, independently of the values of $r$ and $s$, we find a finite set of {\em candidates coronas} containing at least all the s- and r-coronas which can appear in a compact packing by three sizes of discs.
This is easy for s-coronas because an s-disc is surrounded by at most $6$ discs (with equality only if the $6$ discs are small).
But since an r-disc can be surrounded by arbitrarily many s-discs for $s$ small enough, there are infinitely many r-coronas.
However, we are interested not in any r-corona but in those which do appear in a compact packing by three sizes of discs.
Such a packing also has an s-corona which shall yield a lower bound on the ratio $\tfrac{s}{r}$ (Lemma~\ref{lem:ratio_minimal}), hence on the maximal number of s-discs in an r-corona and therefore on the possible r-coronas.
%These coronas are only {\em candidates}: some may be compatible with no values of $r$ and $s$.

\paragraph{Equations (Sections~\ref{sec:polynomial} and \ref{sec:medium}).}
Then, we associate with each candidate corona an algebraic equation satisfied by any values of $r$ and $s$ which allow a compact packing containing this corona (if any).
The basic idea is that the sum of the angles between the center of consecutive discs in the corona and the center of the surrounded disc is $2\pi$.
For example, the s-corona 1rsrs (Fig.~\ref{fig:notations}) yields the equation $\widehat{1sr}+\widehat{rss}+\widehat{ssr}+\widehat{rss}+\widehat{ss1}=2\pi$.
This equation can then be transformed into an algebraic equation by taking the cosine of both sides and using some trigonometry and algebra (at the cost of some parasitic solutions that must be eliminated afterward).
%Some solutions of the latter algebraic equation, however, may not be solution of the former equation on angle (in particular we work modulo $2\pi$, but there are other issues): we shall eventually fix this.

\paragraph{Computations (Sections~\ref{sec:two_phases}, \ref{sec:two_smalls} and \ref{sec:one_small}).}
We can now solve the systems of algebraic equations associated with each pair of candidate s- and r-coronas.
We need to perform exact solving because we want to be sure that discs are really tangent when they should be.
Exact solving of systems of algebraic equations is not a trivial task, but it is a classic of computer algebra and various ingenious algorithms do exist.
Once $r$ and $s$ are known, we can compute all the coronas they indeed allow, {\em i.e.}, the "local structure" of packings.

\paragraph{Packings (Sections~\ref{sec:two_phases}, \ref{sec:two_smalls}, \ref{sec:one_small} and Appendix~\ref{sec:examples}).}
Last, we have to determine which sets of coronas indeed allow a compact packing of the whole plane.
Determining whether three discs allow a compact packing of the plane amounts to determine whether the $10$ different triangles which connect the centers of three mutually tangent discs can {\em tile} the plane, {\em i.e.}, whether isometric copies of theses triangles can cover the whole plane such that the intersection of any two triangles is either empty, or a vertex, or a complete edge.
This could be hard since the general issue of whether a given finite set of polygons can tile the plane is undecidable \cite{Rob71}.
This has to do with the existence of finite sets of tiles which can tile the plane but only aperiodically (as few as two tiles suffice \cite{Pen78}).
However, all the cases of Theorem~\ref{th:main} actually allow a {\em periodic} compact packing with the three sizes of discs that we managed to find by hand (it is a purely combinatorial issue once the set of possible coronas is known).

\paragraph{Computational issues.}
The polynomial systems associated with pairs  of candidates s- and r-coronas can be quite hard to solve.
For example, the pair 11rrs / 11rr$\textrm{s}^{12}$ yields two bivariate polynomials of degree $28$ and $416$, with the latter being $1.4$Mo when written to a plain text file.
Of course, since it depends on the computer and algorithm used to solve them, it is a subjective issue.
The approach in \cite{Mes20} yields similar polynomial systems, but the attempt to solve them by computing Gröbner basis with the open-source software Singular \cite{singular} succeeded only on a few systems.
%, and the author suggested that ''computing all the elements exactly seems to be infeasible on contemporary consumer hardware'' (at least with such a method).
In a private communication, Bruno Salvy showed us that Gröbner basis could be computed more efficiently with the algorithm \cite{Fau10}.
Still, some polynomial systems seem out of reach (moreover, the source code of this latter algorithm is non available).
With SageMath \cite{sage}, all the computations we eventually perform take less than one hour on our laptop.
For each pair $P$ and $Q$ of integer polynomials in $r$ and $s$ associated with two coronas, we proceed as follows.
\begin{enumerate}
\item
We use the {\em hidden variable} method (see, {\em e.g.}, Sec. 3.5 of \cite{CLO05} or Sec. 9.2.4 of \cite{sage2}) to compute a set which contains the roots of $P$ and $Q$.
Recall that the {\em resultant} of two univariate polynomials is a scalar which is equal to zero iff the two polynomials have a common root.
If we see $P$ and $Q$ as polynomials in $s$ with coefficients in $\mathbb{Z}[r]$ ($r$ is the "hidden variable"), then their resultant is a polynomial in $r$ which has a root $r_0$ iff $P(r_0,s)$ and $Q(r_0,s)$ have a common root $s$.
Computing the roots of this resultant thus yields the first entries of all the solution $(r,s)$ of $P=Q=0$.
We similarly get the second entries by exchanging the roles of $r$ and $s$.
The cartesian product of these two sets, with the condition $0<s<r<1$, yields the wanted set.
Wrong pairs must then be {\em filtered}.
\item
We first filter using {\em interval arithmetic}, that is, numerical computations with exact error bounds.
This allows to be sure to never reject a true solution.
Namely, we check the equation on the s- and r-coronas (the original non-algebraic equation on angles) and the existence of at least one 1-corona (similar equation on angles).
We also search for all the coronas compatible with the values of $r$ and $s$.
\item
We then filter {\em exactly}.
This is the more time-consuming part.
Actually, with the default $53$-bits precision of SageMath, all the pairs which successfuly passed the interval arithmetic filtering also passed this one.
We check all the coronas found by the previous filtering.
This time, we have to check the equations associated with coronas because no exact computation can be performed on the non-algebraic angles.
The way we proceed shall be detailed after we formalize how an algebraic equation is associated with a corona (Section~\ref{sec:polynomial}).
\end{enumerate}

\paragraph{Combinatorial issues.}
The computational tactic discussed above does not suffice to solve all the polynomial systems.
For example, computing the resultant for the polynomials associated with the pair 11rrs / 11rr$\textrm{s}^{12}$ exhausted the memory of our laptop.
Moreover, the number of candidates pairs of coronas (hence of polynomial systems) to check is huge.
Indeed, in Sections~\ref{sec:small} and \ref{sec:medium}, trying to bound this number as sharply as possible (with no more algebra than in \cite{Ken06}), we get $16805$ pairs.
Actually, a key factor to prove Theorem~\ref{th:main} is to split the compact packings in three combinatorial classes:
\begin{enumerate}
\item
The packings where no small and medium disc are adjacent (Section~\ref{sec:two_phases}).
These packings are said to be {\em large separated} (the large discs form the phase interface).
The problem primarily reduces to the already solved problem of finding compact packings with two sizes of discs.
This class yields $18$ cases of Theorem~\ref{th:main} (numbers 1--18 in Appendix~\ref{sec:examples}).
\item
The packings with two different s-coronas other than ssssss (Section~\ref{sec:two_smalls}).
This allows to use two s-coronas instead of an s- and an r-coronas with minor changes in the strategy.
The point is that there are much less s-coronas than r-coronas, and the associated equations are much simpler (mainly because there are fewer discs in an s-corona).
This class yields only one case of Theorem~\ref{th:main} (number 19 in Appendix~\ref{sec:examples}).
\item
The packings with only one s-corona other than ssssss (Section~\ref{sec:one_small}).
Here we have to consider the $16805$ pairs of s- and r-coronas and their complicated associated equations.
However, the hypothesis that there is only one s-corona (besides ssssss, which is always possible) greatly helps.
Indeed, it yields a strong (but simple) constraint on the two neighboring discs of each s-disc in a r-corona (namely, if the sequence xsy appears in the r-corona, then xry must appear in the s-corona).
Moreover, as a rule of thumb, r-coronas associated with complicated equations contain many s-discs, thus many such constraints to satisfy: this often (more than nine time out of ten) rules out a pair without any computation!
This class yields $145$ cases of Theorem~\ref{th:main} (numbers 20--164 in Appendix~\ref{sec:examples}).
\end{enumerate}

\paragraph{Other optimizations.}
Many other optimizations could be designed.
We actually introduced many ones that we later removed, because there is a trade-off between the time required to check the computations and the details required to explain the optimizations.
The above strategy, with its computational and combinatorial optimizations, allows to check all the cases in less than one hour of computation on our laptop.
The code is in Python, using powerful SageMath functions: it has only slightly more than 700 lines, comments included (see Appendix~\ref{sec:code}).

%\newpage
%%%%%%%%%%%%%%%%%%%%%%
\section{Small coronas}
\label{sec:small}

An s-corona contains at most $6$ discs, with equality only if there are only s-discs.
There are thus finitely many different s-coronas.
We want here to be as precise as possible in order to reduce the number of cases to be considered afterward.
Let us define for $\vec{k}=(k_1,\ldots,k_6)\in\mathbb{N}^6$ the function
$$
S_{\vec{k}}(r,s):=k_1\widehat{1s1}+k_2\widehat{1sr}+k_3\widehat{1ss}+k_4\widehat{rsr}+k_5\widehat{rss}+k_6\widehat{sss}.
$$
This function counts the angles to pass from disc to disc in an $s$-corona.
To each s-corona corresponds a vector $\vec{k}$, called its {\em angle vector}, which satisfies
\begin{equation}\label{eq:s_corona}
S_{\vec{k}}(r,s)=2\pi.
\end{equation}
We have to find the possible values $\vec{k}$ for which the equation $S_{\vec{k}}(r,s)=2\pi$ admits a solution $0<s<r<1$.
The angles which occur in $S_{\vec{k}}(r,s)$ decrease with $s$, except $\widehat{sss}$, and increase with $r$.
This yields the following inequalities, strict except for the s-corona ssssss:
$$
S_{\vec{k}}(r,s)\leq \lim_{r\to 1\atop s\to 0}S_{\vec{k}}(r,s)=k_1\pi+k_2\pi+k_3\frac{\pi}{2}+k_4\pi+k_5\frac{\pi}{2}+k_6\frac{\pi}{3},
$$
$$
S_{\vec{k}}(r,s)\geq \inf_r\lim_{s\to r}S_{\vec{k}}(r,s)=\lim_{r\to 1\atop s\to 1}S_{\vec{k}}(r,s)=k_1\frac{\pi}{3}+k_2\frac{\pi}{3}+k_3\frac{\pi}{3}+k_4\frac{\pi}{3}+k_5\frac{\pi}{3}+k_6\frac{\pi}{3}.
$$
The existence of $(r,s)$ such that $S_{\vec{k}}(r,s)=2\pi$ thus yields the inequalities
$$
k_1+k_2+k_3+k_4+k_5+k_6< 6<3k_1+3k_2+\frac{3}{2}k_3+3k_4+\frac{3}{2}k_5+k_6,
$$
except for the s-corona ssssss ($k_1=\ldots=k_5=0$ and $k_6=6$).
An exhaustive search on computer yields $383$ possible values for $\vec{k}$.
\begin{figure}
  \centering
  \includegraphics[width=0.3\textwidth]{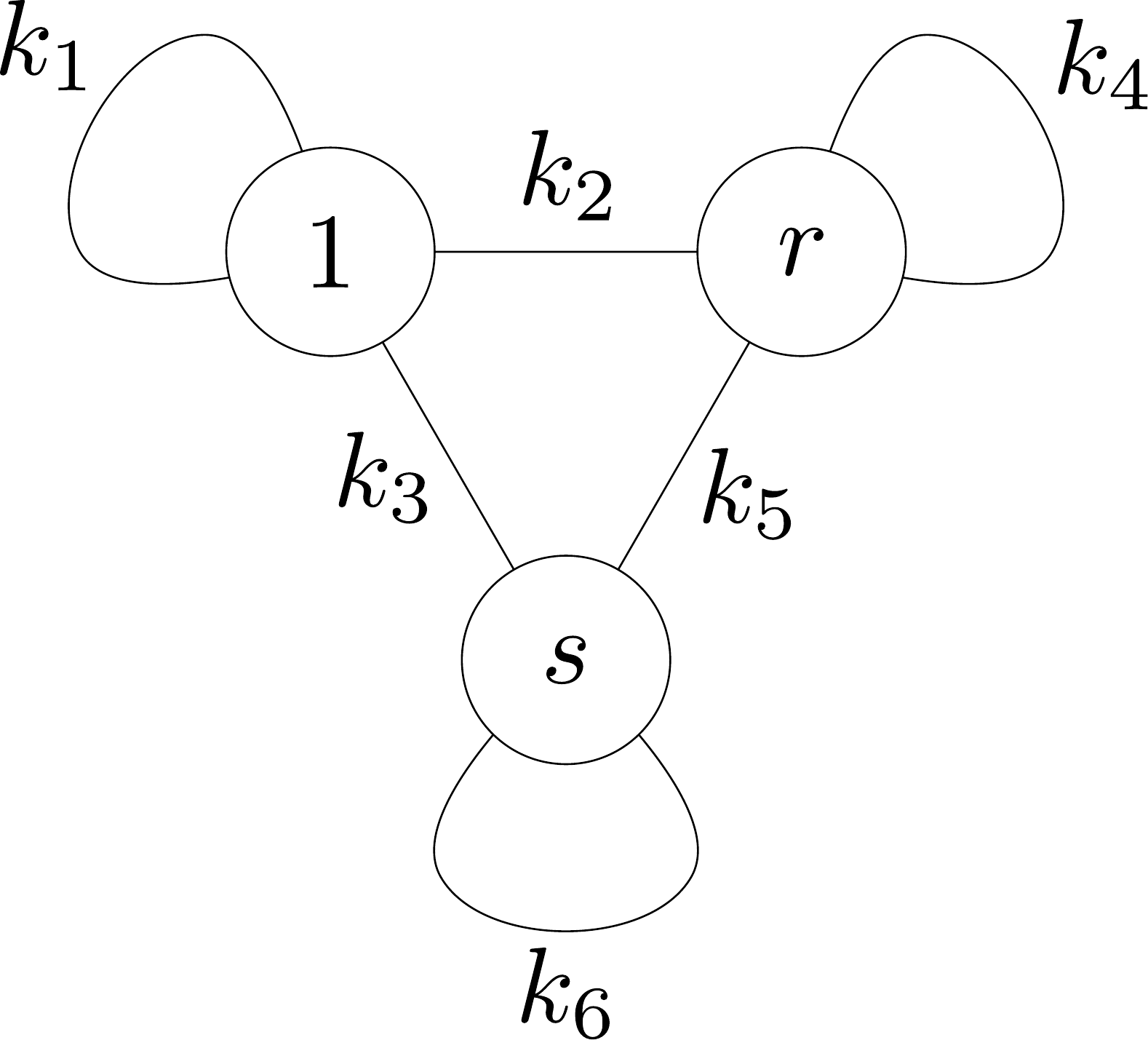}
  \caption{The cycles of this graph encode coronas associated with $\vec{k}$.}
  \label{fig:couronnable}
\end{figure}
For each of them, one shall check that there indeed exists a coding over $\{\textrm{1},\textrm{r},\textrm{s}\}$ with this angle vector\footnote{For example, $\vec{k}=(0,3,0,0,0,0)$ corresponds to three angles $\widehat{1sr}$ around an $s$-disc: this is combinatorially impossible.}.
This is the case if the graph depicted in Fig.~\ref{fig:couronnable} contains a cycle such that $\vec{k}$ counts the number of times each edge appear in this cycle.
If this cycle contains a loop, then either it contains only this loop or this loop must be accessible from another vertex.
This yields the conditions:
\begin{eqnarray}
(k_1=0) \lor (k_2\neq 0 \lor k_3\neq 0) \lor (k_2=k_3=k_4=k_5=k_6=0)\\
(k_4=0) \lor (k_2\neq 0 \lor k_5\neq 0) \lor (k_1=k_2=k_3=k_5=k_6=0)\\
(k_6=0) \lor (k_3\neq 0 \lor k_5\neq 0) \lor (k_1=k_2=k_3=k_4=k_5=0)
\end{eqnarray}
The cycle must do $k_0:=\min(k_2,k_3,k_5)$ round trips around the three vertices, and possibly some round trips between two vertices which use an even number of times the edge which connects these vertices.
This yields the conditions:
\begin{equation}
  k_2-k_0\in2\mathbb{N},\qquad
  k_3-k_0\in2\mathbb{N},\qquad
  k_5-k_0\in2\mathbb{N}.
\end{equation}
All the previous conditions eventually lead to $56$ possible angle vectors.
Tab.~\ref{tab:s_coronas} gives, for each, a coding of the corresponding corona\footnote{An angle vector could correspond to several coronas. For example, $(0,2,2,0,2,0)$ corresponds to 1rsr1s, 1r1srs, 1rs1sr and 1rs1rs. But this does not happen here.}.

\begin{table}[hbtp]
  \centering
  \begin{tabular}{cccccccccc}
    rrrrr & rrrrs & rrrss & rrsrs & rrrr & rrsss & rsrss & rrrs & rrr & rrss\\
    11111 & 1111s & 111ss & 11s1s & 1111 & 11sss & 1s1ss & 111s & 111 & 11ss\\
    1111r & 111rs & 11rss & 11srs & 111r & 1rsss & 1srss & 11rs & 11r & 1rss\\
    111rr & 11r1s & 1r1ss & 1rs1s & 11rr &  &  & 1r1s & 1rr & \\
    11r1r & 11rrs & 1rrss & 1rsrs & 1r1r &  &  & 1rrs &  & \\
    11rrr & 1r1rs & r1rss & rrs1s & 1rrr &  &  & r1rs &  & \\
    1r1rr & 1rr1s &  &  &  &  &  &  &  & \\
    1rrrr & 1rrrs &  &  &  &  &  &  &  & \\
    & r11rs &  &  &  &  &  &  &  & \\
    & r1rrs &  &  &  &  &  &  &  & \\
  \end{tabular}
  \caption{
    The $55$ possible s-coronas, besides ssssss.
    Those on the first line have no $1$-disc and those on the second line no r-disc.
    In each column, the codings are all equal if we replace each 1 by an r (this is used in Lemma~\ref{lem:ratio_minimal}).}
  \label{tab:s_coronas}
\end{table}

%%%%%%%%%%%%%%%%%%%%%%
\section{Polynomial associated with a small corona}
\label{sec:polynomial}

We here associate with an s-corona a polynomial equation in $r$ and $s$ such that any pair $(r,s)$ for which the corona can be geometrically represented corresponds to a solution of this equation.
First take the cosines of both sides of equation \ref{eq:s_corona} and fully expand the left-hand side.
Substract $\cos(2\pi)=1$ to both sides.
This yields a polynomial equation in cosines and sines of the angles occuring in $S_{\vec{k}}$.
The cosine law then allows to replace each cosine by a rational fraction in $r$ and $s$.
Namely, the cosines of $\widehat{1s1}$, $\widehat{1sr}$, $\widehat{1ss}$, $\widehat{rsr}$, $\widehat{rss}$ and $\widehat{sss}$ are respectively replaced by
$$
1-\frac{2}{(1+s)^2},\quad
1-\frac{2r}{(r+s)(1+s)},\quad
\frac{s}{1+s},\quad
1-\frac{2r^2}{(r+s)^2},\quad
\frac{s}{r+s},\quad
\frac{1}{2}.
$$
The squares of the sines follow (same order):
$$
\frac{4s(s+2)}{(s + 1)^4},\quad
\frac{4rs(r + s + 1)}{(s + 1)^2(r + s)^2},\quad
\frac{2s + 1}{(s + 1)^2},\quad
\frac{4s(2r + s)r^2}{(r + s)^4},\quad
\frac{r(r + 2s)}{(r + s)^2},\quad
\frac{3}{4}.
$$
We express sines (same order) using auxiliary variables $X_1$ through $X_6$:
$$
\frac{2X_1}{(s + 1)^2},\quad
\frac{2X_2}{(s + 1)(r + s)},\quad
\frac{X_3}{s + 1},\quad
\frac{2rX_4}{(r + s)^2},\quad
\frac{X_5}{r + s},\quad
\frac{X_6}{2},
$$
where the squares of the auxiliary variables $X_1,\ldots,X_6$ are (same order):
$$
s(s+2),\quad
rs(r+s+1),\quad
2s+1,\quad
s(2r+s),\quad
r(r + 2s),\quad
3.
$$
This yields a polynomial system in $r$, $s$ and the $X_i$'s.

One can use the above expression of $X_i^2$ as a polynomial in $r$ and $s$ to remove any power $k\geq 2$ of $X_i$.
We then successively eliminate each $X_i$ as follows.
We write the equation $AX_i+B=0$, where $A$ and $B$ do not contain any occurence of $X_i$.
We then multiply both sides by $AX_i-B$, so that no solution is lost.
%The parasitic solutions which may be introduced shall be eventually filtered (recall Section~\ref{sec:strategy}).
This yields the new equation $A^2X_i^2-B^2=0$, where $X_i^2$ can be replaced by its expression as a polynomial in $r$ and $s$.
In short:
$$
AX_i+B=0~\rightarrow~ A^2X_i^2-B^2=0.
$$
Doing this successively for each $X_i$ eventually yields the wanted polynomial equation in $r$ and $s$.
We simplify it by removing multiplicities of factors as well as factors which have clearly no roots $0<s<r<1$.

\begin{table}[hbtp]
  \centering
  \begin{tabular}{lll}
  11111 & $5s^4+20s^3+10s^2-20s+1$ & $0.701$ \\
  1111s & $s^4 - 10s^2 - 8s + 9$ & $0.637$ \\
  111ss & $s^8 - 8s^7 - 44s^6 - 232s^5 - 482s^4 - 24s^3 + 388s^2 - 120s + 9$ & $0.545$ \\
  11s1s & $8s^3 + 3s^2 - 2s - 1$ & $0.533$ \\
  1111 & $s^2 + 2s - 1$ & $0.414$ \\
  11sss & $9s^4 - 12s^3 - 26s^2 - 12s + 9$ & $0.386$ \\
  1s1ss & $s^4 - 28s^3 - 10s^2 + 4s + 1$ & $0.349$ \\
  111s & $2s^2 + 3s - 1$ & $0.280$ \\
  111 & $3s^2 + 6s - 1$ & $0.154$ \\
  11ss & $s^2 - 10s + 1$ & $0.101$ 
  \end{tabular}
  \caption{s-coronas without r-discs, associated polynomials and root $s\in(0,1)$.}
  \label{tab:petites_couronnes_deux_disques}
\end{table}

For the $10$ s-coronas without r-disc, we get a polynomial in $s$ (Tab.~\ref{tab:petites_couronnes_deux_disques}).
Each of the $10$ s-coronas without large disc yields the same polynomial as the corona where r has been replaced by 1, with the variable $\tfrac{s}{r}$ instead of $s$.
The $35$ reamining s-coronas yield an explicit polynomial which can be rather complex (Tab.~\ref{tab:degre_petites_couronnes} gives the degrees).
For example, the s-corona 11rs yields:
$$
r^2s^4 - 2r^2s^3 - 2rs^4 - 23r^2s^2 - 28rs^3 + s^4 - 24r^2s - 58rs^2 - 2s^3 + 16r^2 - 8rs + s^2.
$$

\begin{table}[hbtp]
  \centering
  \begin{tabular}{lr|lr|lr|lr|lr}
1r1r	&	$2$	&	11rr	&	$4$	&	1rrsr	&	$6$	&	1rsss	&	$8$	&	111rr	&	$12$	\\
1r1s	&	$2$	&	1rss	&	$4$	&	1111r	&	$7$	&	1srrs	&	$8$	&	11rrr	&	$12$	\\
1rsr	&	$2$	&	11r1s	&	$6$	&	11r1r	&	$7$	&	1srss	&	$8$	&	111rs	&	$18$	\\
111r	&	$3$	&	11rs	&	$6$	&	1rs1s	&	$7$	&	1r1rr	&	$10$	&	1rrrs	&	$18$	\\
11r	&	$3$	&	11rsr	&	$6$	&	11srs	&	$8$	&	1rrrr	&	$10$	&	11rss	&	$24$	\\
1rr	&	$3$	&	1rr1s	&	$6$	&	1r1ss	&	$8$	&	1rsrs	&	$10$	&	1rrss	&	$24$	\\
1rrr	&	$3$	&	1rrs	&	$6$	&	1rssr	&	$8$	&	1r1rs	&	$11$	&	11rrs	&	$28$	\\
  \end{tabular}
  \caption{Degree of the polynomial in $r$ and $s$ associated with each s-corona.}
  \label{tab:degre_petites_couronnes}
\end{table}

If discs of sizes $s$, $r$ and $1$ are compatible with a corona, then $(r,s)$ is a solution of the equation associated with this corona.
But some solutions of this equation may not correspond to a corona: they are parasitic solutions.
The first reason is that taking the cosines of the equations on angles can lead to solutions which are true only modulo $2\pi$.
Such cases are however easily detected just by a numerical computation.
The second reason is that, to eliminate the $X_i$'s, we multiply by factors $AX_i-B$ which introduce new solutions.
To detect whether a given pair $(r,s)$ comes from these multiplying factors, we check by interval arithmetic whether $AX_i-B$ could be equal to zero, and only if it does (it is very rare), then we substitute the exact values of $r$ and $s$ in the initial equation in $r$, $s$ and the $X_i$'s (this is more time-consuming).

%%%%%%%%%%%%%%%%%%%%%%
\section{Medium coronas and associated polynomials}
\label{sec:medium}

Since $s$ can be arbitrarily smaller than $r$, there can be infinitely many s-discs in an r-corona.
Let us however see that this cannot happen in a compact packing.

\begin{lemma}
\label{lem:ratio_minimal}
The ratio $\tfrac{s}{r}$ is uniformly bounded from below in compact packings with three sizes of discs.
\end{lemma}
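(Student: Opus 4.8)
The plan is to reduce the three-size question to the two-size coronas already analysed, via the substitution "replace every 1-disc by an r-disc'', and then to combine the monotonicity of the angles with the strict inequality $6<3k_1+3k_2+\tfrac32 k_3+3k_4+\tfrac32 k_5+k_6$ already established in Section~\ref{sec:small}.

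First I would argue that a compact packing using all three sizes must contain an s-disc whose corona is \emph{not} ssssss. The contact graph of a compact packing of the plane is a connected triangulation; if every s-disc were tangent only to s-discs, the s-discs would form a union of connected components having no edge to any larger disc, contradicting connectedness since 1- and r-discs are present. Hence some s-disc is tangent to a 1- or r-disc, so its corona is one of the $55$ codings of Table~\ref{tab:s_coronas}, with angle vector $\vec{k}=(k_1,\ldots,k_6)$ satisfying $S_{\vec{k}}(r,s)=2\pi$ and $k_1+k_2+k_3+k_4+k_5>0$.

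Next I would pass to the angle vector $\vec{k}'=(0,0,0,\,k_1+k_2+k_4,\,k_3+k_5,\,k_6)$ obtained by replacing each 1 by an r in the coding (this is precisely the coding on the first line of the same column of Table~\ref{tab:s_coronas}). Since every angle increases with the size of the surrounding discs, we have $\widehat{1s1},\widehat{1sr}\geq\widehat{rsr}$ and $\widehat{1ss}\geq\widehat{rss}$, whence
\[
2\pi=S_{\vec{k}}(r,s)\;\geq\;S_{\vec{k}'}(r,s).
\]
Now $S_{\vec{k}'}$ involves only r- and s-discs, so by scaling it depends only on $t:=s/r$; write $g(t):=S_{\vec{k}'}(r,s)$. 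As $\widehat{rsr}$ and $\widehat{rss}$ strictly decrease with $t$ and $k_1+\cdots+k_5>0$, the map $g$ is continuous and strictly decreasing on $(0,1)$. Its limit as $t\to0^+$ equals $(k_1+k_2+k_4)\pi+(k_3+k_5)\tfrac{\pi}{2}+k_6\tfrac{\pi}{3}$, which is $\tfrac{\pi}{3}$ times the quantity already shown in Section~\ref{sec:small} to exceed $6$; hence $g(0^+)>2\pi$. Since moreover $g(1^-)=(k_1+\cdots+k_6)\tfrac{\pi}{3}<2\pi$, there is a unique $t^\ast(\vec{k})\in(0,1)$ with $g(t^\ast)=2\pi$.

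Combining $g(t)\leq2\pi=g(t^\ast)$ with the strict monotonicity of $g$ gives $s/r=t\geq t^\ast(\vec{k})$. As there are only finitely many admissible angle vectors, the ratio is bounded below by $\min_{\vec{k}}t^\ast(\vec{k})>0$. The only genuine inequality used is $g(0^+)>2\pi$, and this is exactly the upper bound already proved in Section~\ref{sec:small}; thus the real content is the existence of a nontrivial s-corona and the $1\!\to\!\textrm{r}$ reduction, after which the estimate is essentially free. Concretely, $t^\ast(\vec{k})$ is the root in $(0,1)$ of the two-size polynomial attached to the r-replaced corona (Table~\ref{tab:petites_couronnes_deux_disques} read with $s/r$ in place of $s$); the smallest of these, coming from the corona rrss, is the root $5-2\sqrt{6}\approx0.101$ of $s^2-10s+1$, so that $s/r\geq5-2\sqrt{6}$.
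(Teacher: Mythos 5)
Your proof is correct, and it shares the paper's skeleton: observe that a packing with all three sizes must contain an s-corona other than ssssss, replace every 1 in its coding by an r, and reduce to the finitely many two-size ratios, ending at the same bound $5-2\sqrt{6}$. The difference lies in how the key inequality $s/r\geq t^*$ is justified. The paper argues geometrically: by the column structure of Table~\ref{tab:s_coronas}, the r-replaced coding is again a valid s-corona, and ``deflating'' the 1-discs into r-discs shrinks the perimeter of the corona, hence the inscribed disc, so the original ratio dominates the known two-size ratio of the deflated corona (Table~\ref{tab:petites_couronnes_deux_disques}). You argue analytically: term-by-term angle monotonicity gives $S_{\vec{k}'}(r,s)\leq S_{\vec{k}}(r,s)=2\pi$, and then strict monotonicity of $S_{\vec{k}'}$ in $t=s/r$, combined with the two boundary estimates already proved in Section~\ref{sec:small} (which give the values as $t\to 0^+$ and $t\to 1^-$ on either side of $2\pi$), pins down a unique root $t^*$ and forces $t\geq t^*$. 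Your route has a genuine advantage: it never needs the combinatorial fact that the r-replaced coding is realizable as a corona, and it turns the paper's informal deflation claim (``the perimeter decreases, whence the inscribed disc too'') into a rigorous monotonicity-plus-intermediate-value argument, reusing inequalities the paper has already established. What the paper's version buys in exchange is brevity and an immediate identification of the bound with the two-disc values of Table~\ref{tab:petites_couronnes_deux_disques}; in your version that identification is a pleasant afterthought rather than a load-bearing step.
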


\begin{proof}
Consider a compact packings with three sizes of discs.
It contains an s-disc and not only s-discs, thus an s-corona other than ssssss.
By replacing each 1 by an r in the coding of this s-corona, Table~\ref{tab:s_coronas} shows that we get the coding of a new s-corona.
In this new corona, the ratio $\tfrac{s}{r}$ is smaller.
Indeed, the 1-discs have been ''deflated'' into r-discs, so that the perimeter of the corona decreased, whence the size of the surrounded small disc too.
But there is at most $10$ possible ratios $\tfrac{s}{r}$ for an s-corona without large discs: they correspond to the values computed in Tab.~\ref{tab:petites_couronnes_deux_disques} for compact packing with two sizes of discs (the smallest is $5-2\sqrt{6}\simeq 0.101$).
\end{proof}

This lemma ensures that the number of s-discs in an r-corona is uniformly bounded in compact packings.
There is thus only finitely many different r-coronas in compact packings.
To find them all, we proceed similarly as for s-coronas.
We define for $0<s<r<1$ and $\vec{l}=(l_1,\ldots,l_6)\in\mathbb{N}^6$ the function
$$
M_{\vec{l}}(r,s):=l_1\widehat{1r1}+l_2\widehat{1rr}+l_3\widehat{1rs}+l_4\widehat{rrr}+l_5\widehat{rrs}+l_6\widehat{srs}.
$$
We have to find the possible values $\vec{l}$ for which the equation $M_{\vec{l}}(r,s)=2\pi$ admits a solution $0<s<r<1$.
Actually, since the solution should correspond to an r-corona which occur in a packing, we can assume that it satisfies $\tfrac{s}{r}\geq \alpha$, where $\alpha$ is the lower bound on $\tfrac{s}{r}$ given by the s-coronas which occur in this packing.
The angles which occur in $M_{\vec{l}}(r,s)$ decrease with $r$ (except $\widehat{rrr}$) and increase with $s$.
This yields the following inequalities, strict except for the r-corona rrrrrr:
$$
M_{\vec{l}}(r,s)\leq  \sup_r\lim_{s\to r} M_{\vec{l}}(r,s)=l_1\pi+l_2\frac{\pi}{2}+l_3\frac{\pi}{2}+l_4\frac{\pi}{3}+l_5\frac{\pi}{3}+l_6\frac{\pi}{3},
$$
$$
M_{\vec{l}}(r,s)\geq \inf_r\lim_{\tfrac{s}{r}\to\alpha}M_{\vec{l}}(r,s)=l_1\frac{\pi}{3}+l_2\frac{\pi}{3}+l_3u_\alpha+l_4\frac{\pi}{3}+l_5u_\alpha+l_6v_\alpha,
$$
where $\widehat{1rs}$ has been bounded from below by $\widehat{rrs}$ for any $r$, and the limits $u_\alpha$ and $v_\alpha$ of $\widehat{rrs}$ and $\widehat{srs}$ when $\tfrac{s}{r}\to\alpha$ are obtained via the cosine law:
$$
u_\alpha:=\arccos\left(\frac{1}{1+\alpha}\right)
\quad\textrm{and}\quad
v_\alpha:=\arccos\left(1-\frac{2\alpha^2}{(1+\alpha)^2}\right).
$$
The existence of $(r,s)$ such that $M_{\vec{l}}(r,s)=2\pi$ thus yields the inequalities
$$
l_1+l_2+l_4+\frac{3}{\pi}(l_3u_\alpha+l_5u_\alpha+l_6v_\alpha)< 6< 3l_1+\frac{3}{2}l_2+\frac{3}{2}l_3+l_4+l_5+l_6,
$$
except for the r-corona rrrrrr ($l_1=\ldots=l_5=0$ and $l_6=6$).
We also impose
$$
l_1+l_2+l_4+\frac{1}{2}(l_3+l_5)<6,
$$
which says that an r-corona (other than rrrrrr) contains at most $5$ r- or 1-discs.
An exhaustive search on computer, which also checks whether there indeed exists a coding for each possible $\vec{l}$, eventualy yields the possible r-coronas for each value of $\alpha$.
Tab.~\ref{tab:r_coronas} gives the numbers of such coronas.

\begin{table}[hbtp]
  \centering
  \begin{tabular}{c|c|c|c|c|c|c|c|c|c}
    rrrrr & rrrrs & rrrss & rrsrs & rrrr & rrsss & rsrss & rrrs & rrr & rrss\\
    $0.701$ & $0.637$ & $0.545$ & $0.533$ & $0.414$ & $0.386$ & $0.349$ & $0.280$ & $0.154$ & $0.101$\\
    $84$ & $94$ & $130$ & $143$ & $197$ & $241$ & $272$ & $386$ & $889$ & $1654$
  \end{tabular}
  \caption{
  Each s-corona whose 1-discs have been deflated in r-discs (first line) yields a lower bound $\alpha$ on $\tfrac{s}{r}$ in any compact packing which contains it (second line), and thus an upper bound on the number of possible r-coronas in this packing (third line).
  }
  \label{tab:r_coronas}
\end{table}

Since any r-corona for some lower bound $\alpha$ also appears for a smaller $\alpha$, there are at most $1654$ different r-coronas.
Now, each s-corona in the $k$-th column of Table~\ref{tab:s_coronas} and each r-corona in the same $k$-th column of Table~\ref{tab:r_coronas} form a pair which could appear in the same compact packing.
The total number of pairs, namely $16805$, is thus the element-wise product of the vectors
$$
(8,10,6,6,6,3,3,6,4,3)
\quad\textrm{and}\quad
(84,94,130,143,197,241,272,386,889,1654).
$$
A polynomial in $r$ and $s$ is associated with each medium corona as done for the small coronas in Sec.~\ref{sec:polynomial}.
The cosine law yields the cosines of $\widehat{1r1}$, $\widehat{1rr}$, $\widehat{1rs}$, $\widehat{rrr}$, $\widehat{rrs}$ and $\widehat{srs}$ (in this order):
$$
1-\frac{2}{(1+r)^2},\quad
\frac{r}{1+r},\quad
1-\frac{2s}{(r+s)(1+r)},\quad
\frac{1}{2},\quad
\frac{r}{r+s},\quad
1-\frac{2s^2}{(r+s)^2}.
$$
The squares of the sines follow (same order):
$$
\frac{4r(r + 2)}{(r + 1)^4},\quad
\frac{2r + 1}{(r + 1)^2},\quad
\frac{4rs(r + s + 1)}{(r + 1)^2(r + s)^2},\quad
\frac{3}{4},\quad
\frac{s(2r + s)}{(r + s)^2},\quad
\frac{4r(r + 2s)s^2}{(r + s)^4}.
$$
We express sines (same order) using auxiliary variables:
$$
\frac{2X_7}{(r + 1)^2},\quad
\frac{X_8}{r + 1},\quad
\frac{2X_2}{(r + 1)(r + s)},\quad
\frac{X_6}{2},\quad
\frac{X_4}{r + s},\quad
\frac{2sX_5}{(r + s)^2},
$$
where $X_1,\ldots,X_6$ are defined in Sec.~\ref{sec:polynomial} and $X_7$ et $X_8$ have respective squares
$$
r(r+2),\quad
2r+1.
$$
The $X_i$'s elimination eventually yields a bivariate polynomial in $r$ and $s$.
These polynomials are usually larger than those associated with s-coronas (because there are generally more discs - up to $33$ - in a medium corona than in a small one).
The degree can be as high as $416$ (for the s-corona 11rr$\textrm{s}^{12}$) and the mean degree is $57.88$ (standard deviation $50.16$).
Computing them all for the $1654$ r-coronas takes 2h21min on our laptop and yields a $35$Mo file (this is actually not necessary, as we shall see).

%%%%%%%%%%%%%%%%%%%%%%
\section{Large separated}
\label{sec:two_phases}

A compact packing by discs of sizes $1$, $r$ and $s$ is said to be {\em large separated} if it does not contain any adjacent s- and r-discs, {\em i.e.}, there is always a large disc between a small disc and a medium one.
Any s-corona has only s- and 1-discs: this yields one of the $10$ equations in $s$ given by Tab.~\ref{tab:petites_couronnes_deux_disques}.
Any r-corona has only r- and 1-discs: this also yields one of these $10$ equations, up to the replacement of each $s$ by an $r$.
With $s<r$, this yields $\binom{10}{2}=45$ possible pairs $(r,s)$.

For each of these $45$ pairs, we compute all the possible coronas, first by interval arithmetic and then exactly, as explained Section~\ref{sec:strategy}.
This takes around 4min on our laptop.
We find a periodic packing with all the sizes of discs for $18$ of these $45$ cases (numbered $1$ to $18$ in Appendix~\ref{sec:examples}).
They turn out to be exactly those with a 1-corona which contains both an s- and an r-disc and can act as a pivotal point to connect s- and r-discs of the packing.

What about the $27$ other cases?
One could imagine packings in which s- and r-discs are separated by more 1-discs.
The point of the following lemma is to show that this would need a very special 1-corona, namely one with an r-disc and three consecutive 1-discs:

\begin{lemma}
  \label{lem:biphase}
  Assume that disc sizes are such that none of the following is possible:
  \begin{enumerate}
  \item an s-corona with an r-disc or conversely;
  \item a 1-corona with both an s-disc and an r-disc;
  \item a 1-corona with an r-disc and three consecutive 1-discs.
  \end{enumerate}
  Then, no compact packing with all the sizes of discs is possible
\end{lemma}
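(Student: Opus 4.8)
The plan is to argue by contradiction. Suppose the disc sizes satisfy the three impossibility hypotheses, yet a compact packing using all three sizes exists. Since the packing is compact, its contact graph triangulates the plane, so the plane is tiled by curvilinear triangles whose vertices are centres of mutually tangent discs. I would classify each triangle by the sizes at its three vertices: let $T_s$ be the triangles having at least one $s$-vertex, $T_r$ those having at least one $r$-vertex, and $T_1$ those all of whose vertices are large. Hypothesis~1 (no $s$-disc is tangent to an $r$-disc) forces every triangle into exactly one of $T_s,T_r,T_1$, since a triangle cannot have both an $s$- and an $r$-vertex. Because all three sizes occur, $T_s$ and $T_r$ are both nonempty.

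The first structural step is to show that no triangle of $T_s$ is edge-adjacent to a triangle of $T_r$. If they shared an edge $e$, its two endpoints would be common vertices, hence both large (again by Hypothesis~1), while the two third vertices would be an $s$-disc and an $r$-disc; then each endpoint of $e$ is a large disc tangent to both an $s$- and an $r$-disc, contradicting Hypothesis~2. Thus $T_s$ and $T_r$ can meet only across the buffer $T_1$.

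The crux is the following claim, which is where Hypothesis~3 enters: \emph{any $T_1$-triangle that is edge-adjacent to a $T_r$-triangle is in fact surrounded by $T_r$-triangles across all three of its edges.} To prove it I would trace coronas. Let $\Delta=XYZ$ be a $T_1$-triangle sharing edge $XY$ with an $r$-triangle; that $r$-triangle must be of the form $XYW$ with $W$ an $r$-disc, so both $X$ and $Y$ are large discs tangent to $W$, i.e.\ their $1$-coronas contain an $r$-disc. Hypothesis~3 then forbids three consecutive large discs in the corona of $X$ (and of $Y$). Reading the corona of $X$, the neighbours $W,Y,Z$ appear consecutively with $Y,Z$ both large; the neighbour of $Z$ on the side away from $Y$ cannot be large (else three consecutive large discs) and cannot be small (Hypothesis~2, as $X$ already touches the $r$-disc $W$), hence it is an $r$-disc, exhibiting $ZX$ as an edge of an $r$-triangle. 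The same argument at $Y$ shows $YZ$ borders an $r$-triangle, proving the claim. This corona-tracing is the main obstacle: one must track carefully the cyclic order of neighbours around each vertex and correctly invoke Hypotheses~2 and~3 to rule out the large and small alternatives.

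With the claim in hand the contradiction is immediate. Set $\mathcal{R}=T_r\cup\{\Delta\in T_1:\Delta\text{ borders a }T_r\text{-triangle}\}$. Every $r$-triangle has all its neighbours in $T_r\cup T_1$ (never $T_s$, by the first step), and any bordering $T_1$-triangle lies in $\mathcal{R}$ by definition; moreover every $T_1$-triangle of $\mathcal{R}$ has, by the claim, all three neighbours in $T_r$. Hence no triangle of $\mathcal{R}$ is edge-adjacent to a triangle outside $\mathcal{R}$. Since the triangles of a triangulation of the plane form a connected adjacency (dual) graph and $\mathcal{R}$ is nonempty, $\mathcal{R}$ must consist of \emph{all} triangles. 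But $T_s$ is nonempty and disjoint from $\mathcal{R}$, a contradiction. Therefore no compact packing with all three sizes of discs is possible.
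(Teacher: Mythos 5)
Your proof is correct. It differs from the paper's in its global bookkeeping, though the local mechanism is identical. The paper argues on the vertices of the contact graph rather than on its faces: it calls a 1-disc \emph{critical} when it touches an r-disc, and shows that any 1-disc $D'$ tangent to a critical 1-disc $D$ is itself critical, by examining the two common neighbours of $D$ and $D'$ — they cannot both be large (with $D'$ they would form three consecutive 1-discs in the corona of $D$, which already contains an r-disc, against Hypothesis~3), and neither can be small (Hypothesis~2), so one is an r-disc. This is exactly your corona-tracing step, phrased for the edge $DD'$ instead of for a triangle. The paper then concludes along a shortest path joining a closest pair of r- and s-discs in the contact graph: by minimality of the pair all interior vertices of the path are 1-discs, the first is critical, hence all of them are, and the last one is a critical disc touching the s-disc, violating Hypothesis~2. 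Your version replaces this shortest-path argument by the partition of the triangles into $T_s$, $T_r$, $T_1$, the non-adjacency of $T_s$ and $T_r$, and connectedness of the dual graph, the set $\mathcal{R}$ being closed under edge-adjacency. Both routes are complete; the paper's is a little shorter and needs no face partition, while yours avoids any appeal to distance minimality and makes the propagation statement purely local-to-global via connectivity, which is arguably cleaner to verify.
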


\begin{proof}
Let us call {\em critical} a 1-disc which has a neighboring r-disc.
We claim that any neighboring 1-disc $D'$ of a critical disc $D$ is critical itself.
Indeed, consider the two discs which are neighbor of both $D$ and $D'$.
They cannot be both 1-discs, otherwise together with $D'$ they would form three consecutive 1-disc in the corona of $D$, which the third hypothesis rules out.
None of them can be an s-disc because $D$ would have an s-disc and an r-disc in its corona, which the second hypothesis rules out.
Thus, at least one is an r-disc: $D'$ is critical as claimed.

Now, assume that such a packing exists and get a contradiction.
Consider a pair of closest r- and s-discs (according the graph distance in the contact graph of the packing).
The first hypothesis ensures that they are not neighbor.
Consider a shortest path between them.
It has only 1-discs.
Walking from the r-disc to the s-disc along this path, the first disc is critical by definition, and all the following ones also according to the above claim.
But the last one has a neighboring s-disc: this contradicts the second hypothesis.
\end{proof}

\begin{figure}[hbtp]
  \includegraphics[width=\textwidth]{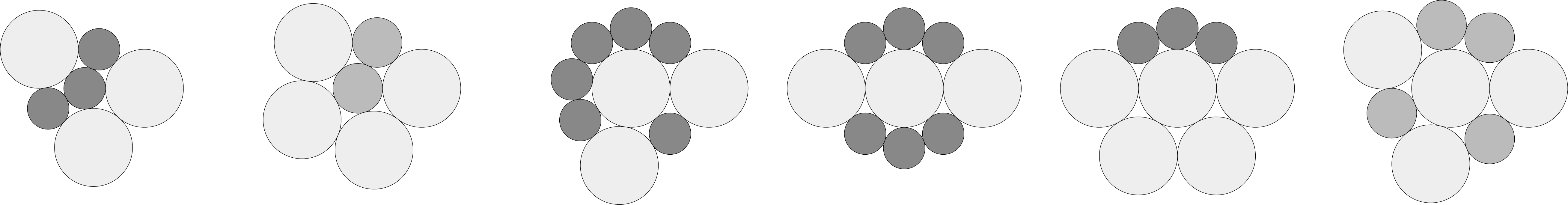}
  \caption{A case (given by all the possible coronas) ruled out by Lemma~\ref{lem:biphase}.}
  \label{fig:biphase_impossible_1}
\end{figure}

Lemma~\ref{lem:biphase} rules out $20$ of the $27$ remaining cases.
Actually, Lemma~\ref{lem:biphase} is purely combinatorial and still holds under any permutation of the disc types.
In particular, exchanging types s and r rules out $13$ of these $27$ remaining cases: $9$ are also ruled out by the original lemma, while $4$ are newly ruled out.
On the whole it rules out $20+4$ of the $27$ cases, so that there are still three remaining cases, depicted in Fig.~\ref{fig:biphase_impossible_2}.
They satisfy the two first hypothesis of Lemma~\ref{lem:biphase} but not the third one.
The following lemma rules them out:

\begin{figure}[hbtp]
  \includegraphics[width=\textwidth]{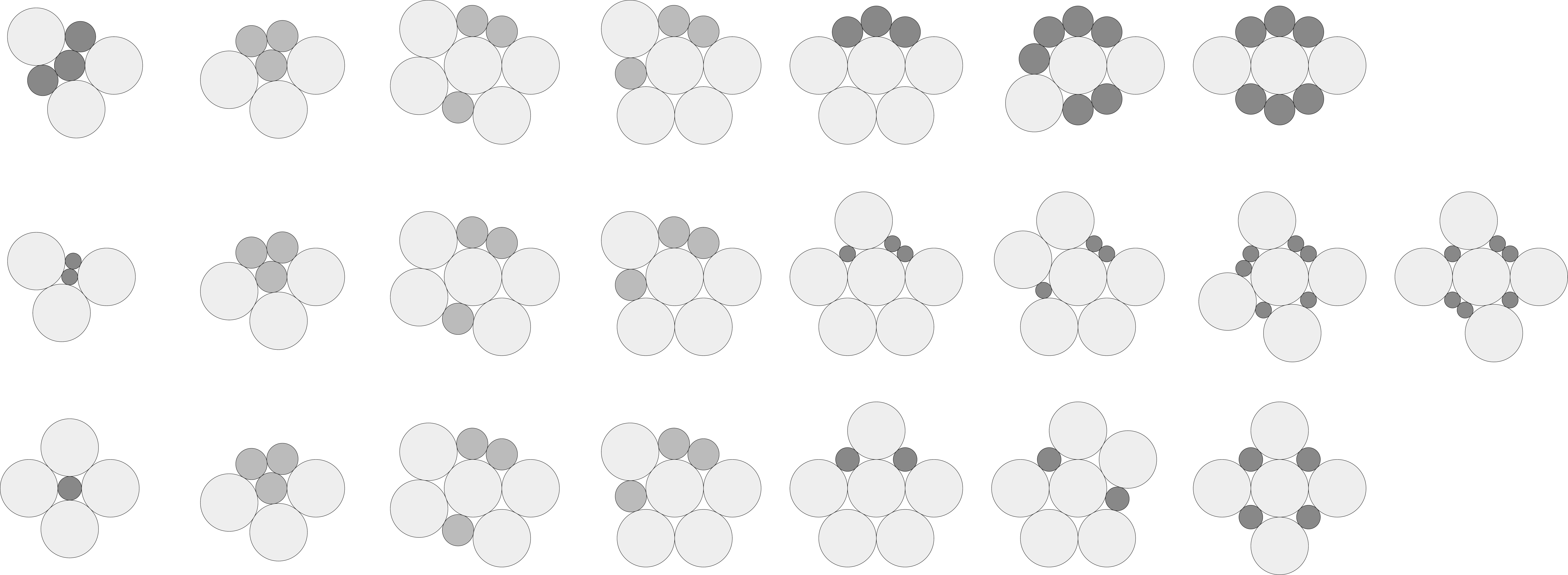}
  \caption{Three cases (one per line, given by all the possible coronas) which do not allow any compact packing of the plane with all the three sizes of discs.}
  \label{fig:biphase_impossible_2}
\end{figure}

\begin{lemma}
  \label{lem:biphase2}
  None of the case depicted in Figure~\ref{fig:biphase_impossible_2} allow a compact packing of the plane with all the three sizes of discs.
\end{lemma}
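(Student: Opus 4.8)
Lemma~\ref{lem:biphase2} concerns three specific cases, each given by a finite list of admissible coronas (the full set of coronas compatible with the corresponding fixed values of $r$ and $s$). Because the disc sizes are already pinned down in each case, the question is no longer analytic but purely combinatorial: can isometric copies of the triangles connecting centers of mutually tangent discs tile the plane, using only the local gluing rules dictated by the admissible coronas? By the hypotheses inherited from Lemma~\ref{lem:biphase}, each of these three cases satisfies conditions 1 and 2 (no adjacent $s$- and $r$-discs, and no 1-corona with both an $s$- and an $r$-disc) but fails condition 3. So each admits some 1-corona containing an $r$-disc together with three consecutive 1-discs, and the proof must extract a contradiction from the interplay between such special 1-coronas and the constraint that small and medium discs are separated by large ones.

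\medskip
\noindent\textbf{The plan.} My plan is to argue by a local propagation (or "forced neighborhood") analysis, case by case over the three configurations of Figure~\ref{fig:biphase_impossible_2}. First I would revisit the claim from the proof of Lemma~\ref{lem:biphase}: a 1-disc adjacent to an $r$-disc is \emph{critical}, and criticality propagates to neighboring 1-discs \emph{unless} blocked by a 1-corona realizing three consecutive 1-discs next to an $r$-disc. The point of the third hypothesis there was precisely to forbid such a block; here it can occur, so I must instead examine what that special 1-corona forces on its own neighbors. Concretely, I would take the unique (or each) 1-corona permitted in the given case that contains the pattern $\mathrm{r}111$, fix one such disc, and read off from the admissible-corona list what the two $r$-discs flanking the triple of 1-discs, and each of those 1-discs, must themselves be surrounded by. Since the corona list is finite and small, each disc has only a handful of possible local environments, and I would push this forcing outward one ring at a time.

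\medskip
\noindent\textbf{Executing the forcing.} The key mechanism is that placing one disc with a fixed corona leaves only a few compatible coronas for each neighbor, because two tangent discs impose a shared angle and a shared pair of flanking discs. I would encode each case as a finite constraint-satisfaction problem on the contact graph: vertices carry disc types, and each closed neighborhood of a vertex must match one of the admissible coronas for that type. Then I would show the constraints are globally unsatisfiable. Two contradictions are natural to aim for: either the forced structure drives an $s$-disc and an $r$-disc to become adjacent (violating hypothesis 1) or forces some 1-disc to have both an $s$- and an $r$-neighbor (violating hypothesis 2), or else the angles around some vertex cannot sum to $2\pi$ with any admissible corona once the flanking discs are forced. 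In practice I expect the separation constraint to be the lever: the special $\mathrm{r}111$ 1-corona creates a "wall" of 1-discs carrying an $r$-disc, and tracing the boundary of the medium phase around such a wall should eventually demand a corona that is not on the admissible list.

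\medskip
\noindent\textbf{The main obstacle.} The hard part will be the bookkeeping of the forcing: unlike Lemma~\ref{lem:biphase}, where the three hypotheses gave a clean one-step propagation, here I must track a genuinely two-dimensional local neighborhood and verify that no admissible corona allows the forced pattern to close up consistently. The risk is that a case does not collapse after a single ring of forcing and requires following the propagation through several rings, or that one of the three cases needs a different contradiction than the other two. I would therefore treat each of the three lines of Figure~\ref{fig:biphase_impossible_2} separately, and for each I would identify the critical vertex whose corona is overconstrained; since the admissible-corona lists are finite and explicit, this analysis is guaranteed to terminate, even if it is the most delicate combinatorial step in this section.
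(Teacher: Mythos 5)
Your framing is correct (the three cases satisfy hypotheses 1 and 2 of Lemma~\ref{lem:biphase} but not hypothesis 3, and what remains is purely combinatorial corona-forcing), and ring-by-ring forcing from the corona lists is indeed the engine the paper uses. But there is a genuine gap in where you anchor the forcing. You propose to start from the special 1-corona containing the pattern r111 and push outward until some disc is left with no admissible corona. That propagation can never reach a contradiction: each of these three cases admits perfectly valid compact packings that contain this corona, namely packings using only the 1- and r-discs (which is exactly why the lemma only claims impossibility of packings with \emph{all three} sizes). Your forcing would simply reconstruct larger and larger patches of such a two-size packing; no s-disc is ever forced to appear, so neither of your target violations (an s-disc adjacent to an r-disc, or a 1-corona seeing both) can arise from local propagation around the r111 pattern alone.

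The missing idea is global: one must assume a packing containing all three sizes and connect the two phases. The paper keeps the notion of \emph{critical} 1-disc (one with an r-neighbor) from Lemma~\ref{lem:biphase} and proves, by precisely the kind of corona forcing you describe but anchored differently, the claim that if a \emph{non-critical} 1-disc has a critical neighbor, then \emph{all} of its neighbors are critical: the critical neighbor's corona is forced to be 111rr1r (the only admissible one fitting three consecutive 1-discs and an r-disc), and this propagates r-discs all the way around the non-critical disc. With this strengthened claim, the shortest-path argument of Lemma~\ref{lem:biphase} runs again: on a shortest path of 1-discs from an r-disc to a closest s-disc, the first disc is critical, non-critical discs can only occur between critical ones, hence the last disc is critical --- yet it also has an s-neighbor, contradicting hypothesis 2. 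Your ring-by-ring analysis is the right tool for proving the local claim, but without embedding it in this path argument (i.e., without using the hypothetical s-disc at all) it cannot conclude; adding that step essentially turns your sketch into the paper's proof.
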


\begin{proof}
We keep the notion of critical disc introduced in the proof of Lemma~\ref{lem:biphase}.
Here, only the first and second hypothesis of Lemma~\ref{lem:biphase} is fulfilled because each case has the corona 111rr1r (fourth position on each line in Fig.~\ref{fig:biphase_impossible_2}).
A 1-disc  to a critical disc is no more necessarily critical itself.
However, we claim that if a 1-disc has a critical neighbor, then all its neighbors are critical discs (see below).
The proof of Lemma~\ref{lem:biphase} then goes the same way.
On a path of 1-discs going from an r-disc to an s-disc, the first 1-disc is critical (the initial r-disc is a neighbor) and any non-critical one (if any) appears between two critical ones.
In particular, the last 1-disc is critical (the final s-disc is a neighbor), which contradicts the second hypothesis of Lemma~\ref{lem:biphase} and proves Lemma~\ref{lem:biphase2}.
  
Let us prove the above claim (disc labels refer to Fig.~\ref{fig:biphase_impossible_3}).
Consider a 1-disc {\bf a} which is not critical but has a critical neighbor {\bf b}.
The neighbors {\bf c} and {\bf d} of both {\bf a} and {\bf b} are neither s-disc because of the second hypothesis of Lemma~\ref{lem:biphase}, nor r-disc because {\bf a} is not critical.
They are thus 1-discs.
Only one of the coronas depicted in Fig.~\ref{fig:biphase_impossible_2} is possible for {\bf b}: 111rr1r.
This corona ensures that the neighbor of {\bf b} and {\bf c} is an r-disc, as well as the neighbor of {\bf b} and {\bf d}.
In particular, both {\bf c} and {\bf d} are critical, and the second hypothesis of Lemma~\ref{lem:biphase} ensures that they do not have a neighboring s-disc.
The neighbor {\bf e} of {\bf c} and {\bf a} is thus a 1-disc, as well as the neighbor {\bf f} of {\bf d} and {\bf a}.
The corona of {\bf a} now contains $5$ 1-discs: according to Fig.~\ref{fig:biphase_impossible_2}, a sixth 1-disc {\bf g} completes this corona.
The three consecutive neighbors {\bf e}, {\bf a} and {\bf b} of {\bf c} ensure that its corona is 111rr1r, so that the neighbor of {\bf c} and {\bf e} is an r-disc.
The same holds for each disc of the corona of {\bf a}: the neighbor (other than {\bf a}) of two consecutive discs of the corona is an r-disc.
This proves that all the discs of the corona of {\bf a} are critical.
\end{proof}

\begin{figure}[hbtp]
  \centering
  \includegraphics[width=\textwidth]{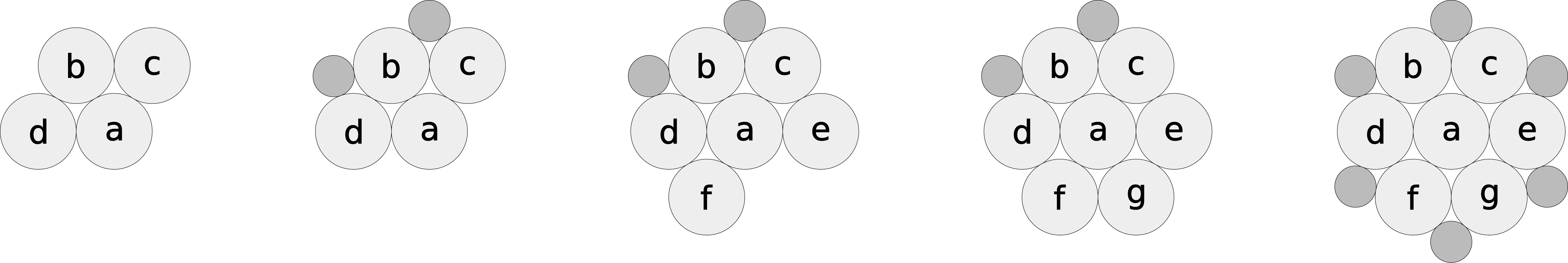}
  \caption{In the three cases whose coronas are depicted in Fig.~\ref{fig:biphase_impossible_2}, any $1$-disc of a (hypothetical) packing is either critical or surrounded only by critical discs.}
  \label{fig:biphase_impossible_3}
\end{figure}

%%%%%%%%%%%%%%%%%%%%%%
\section{Two small coronas}
\label{sec:two_smalls}

% type
We consider compact packings for pairs $(r,s)$ which allow at least two different s-coronas other than ssssss.
We want to find $(r,s)$ by solving the polynomial system associated with these two s-coronas.
Since there are only $55$ s-coronas, there are at most $\binom{55}{2}=1485$ such systems.
Moreover, the equations associated with s-coronas are much simpler.

% restriction triviale
If two different s-coronas contain only s- and 1-discs, then they characterize two different values of $s$.
Similarly, if they contain only s- and r-discs, then they characterize two different values of $\tfrac{s}{r}$.
We thus rule out these pairs.
We get $1395$ systems of two polynomial equations in $r$ and $s$.
We then follow the computational tactic explained Section~\ref{sec:strategy}.

The hidden variable method fails on the three pairs rrs / 11rss, rrss / 1111r and 1rrss / 1111r.
The computation of the algebraic roots of the resultants indeed raises (various) exceptions.
This could be overcome, but we just compute roots with interval arithmetic instead.
Indeed, this yields $179$ pairs $(r,s)$ of intervals which are all ruled out by the interval arithmetic filtering.

The hidden variable method works on the $1392$ other pairs and yields $13239$ pairs $(r,s)$, with $0<s<r<1$.
Arithmetic filtering rules out all but $37$ of them.
The exact filtering validate all these pairs and yields all the possible coronas in each case.
We find a periodic packing for only one of these $37$ cases, namely the pair of s-coronas 1srrs / 1s1ss (number $19$ in Appendix~\ref{sec:examples}).
Its coronas are depicted in Fig.~\ref{fig:2pc_possible_1}.

\begin{figure}[hbtp]
  \centering
  \includegraphics[width=\textwidth]{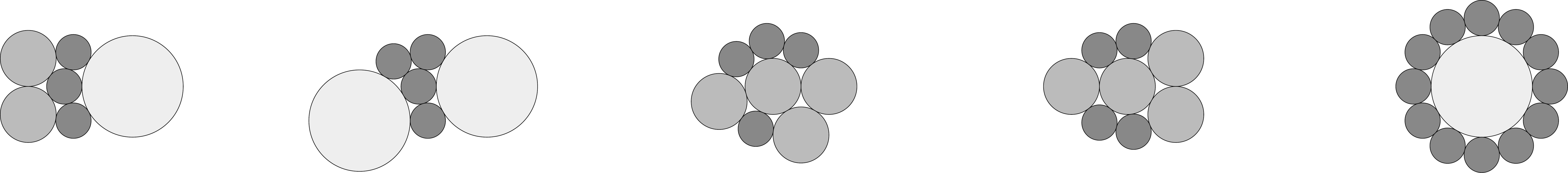}
  \caption{The only case which allows two small coronas and a valid packing.}
  \label{fig:2pc_possible_1}
\end{figure}

Actually, the case 1srrs / 1s1ss is the only one with an s-corona which contains both an r-disc and a 1-disc and can act as a pivotal point to connect r- and 1-discs of the packing.
In the $36$ other cases, r- and 1-discs are never in contact: only s-separated packings are possible.
The situation is thus very similar to Section~\ref{sec:two_phases} and Lemma~\ref{lem:biphase} again helps.
Precisely, the permutation (s,r,1) on the disc types in Lemma~\ref{lem:biphase} rules out $32$ of these $36$ cases and the permutation (s,1,r) rules out the four last ones (as well as $28$ already ruled out).
%Fig.~\ref{fig:2pc_impossible} gives an example of a ruled out case.

%\begin{figure}[hbtp]
%  \centering
%  \includegraphics[width=\textwidth]{2pc_impossible.pdf}
%  \caption{A case ruled out by Lemma~\ref{lem:biphase}.}
%  \label{fig:2pc_impossible}
%\end{figure}

Actually, even if the only remaining case allows the two small coronas 1srrs and 1s1ss, only the first one can appear in a valid packing:

\begin{proposition}
  There is no compact packing with three sizes of discs which contains both the small coronas 1srrs and 1s1ss.
\end{proposition}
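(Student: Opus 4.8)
The claim is that the two s-coronas 1srrs and 1s1ss cannot coexist in the same compact packing. The first step is to pin down exactly what their coexistence would force: since both coronas are realizable, the pair $(r,s)$ is a common root of the two polynomials associated with these coronas (by the construction of Section~\ref{sec:polynomial}), so $r$ and $s$ take definite algebraic values. The plan is therefore to argue \emph{combinatorially} rather than numerically, exactly as was done for the large-separated cases in Lemmas~\ref{lem:biphase} and~\ref{lem:biphase2}: I would show that these specific values of $r$ and $s$ admit only a restricted list of s-, r- and 1-coronas, and that the coexistence of 1srrs and 1s1ss is incompatible with any way of locally completing the packing around the shared discs.

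\emph{Key steps, in order.} First I would compute (by the exact filtering procedure of Section~\ref{sec:strategy}) the \emph{complete} list of coronas compatible with this particular $(r,s)$ — not just the two s-coronas of interest, but all s-, r- and 1-coronas the values allow. Second, I would focus on the corona 1s1ss: it contains two consecutive s-discs and also a 1-disc, so walking around it one encounters an angle $\widehat{ss1}$ and angles involving the r-disc in 1srrs. The idea is that the two coronas impose \emph{conflicting} constraints on the immediate neighborhood of a shared s-s or s-1 contact. Concretely, I would examine each contact edge appearing in 1s1ss and ask which r- or 1-corona must sit on the far side of it; the list from the first step should show that the forced neighbor-corona around an s-disc of type 1s1ss is incompatible with the one forced by 1srrs. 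This is the same ``propagate the forced neighbor'' technique used in the proof of Lemma~\ref{lem:biphase2}, where the corona 111rr1r was shown to force its neighbors to be critical.

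\emph{The main obstacle.} The hard part is the bookkeeping: verifying that \emph{no} consistent assignment of coronas to the discs surrounding a 1s1ss-disc can also be globally extended to include a 1srrs-disc somewhere. Unlike the large-separated arguments, here both coronas are small coronas sharing the small-disc phase, so there is no clean ``interface'' (the large discs played that role in Section~\ref{sec:two_phases}); the contradiction must come from a purely local incompatibility of the angle constraints at a shared vertex. I expect the cleanest route is to identify a single contact configuration — an s-disc together with two of its neighbors — that the corona 1s1ss requires but that the geometry forced by the presence of a 1srrs-corona elsewhere forbids, thereby contradicting the assumption that both coronas occur in one packing. If a direct local clash is not immediate, the fallback is to show that the two coronas force two different admissible values among the restricted list, contradicting the fact that a single packing fixes $(r,s)$ uniquely.
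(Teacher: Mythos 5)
Your first step---fixing $(r,s)$ as the common root of the two corona polynomials and computing the complete list of compatible coronas---is exactly how the paper's proof begins: at these values the only s-coronas are 1srrs, 1s1ss and ssssss, and (crucially) the only 1-corona is $\textrm{s}^{12}$. But the contradiction mechanism you expect would fail. There is no ``local clash at a shared vertex'': the two coronas need not share any disc (a 1srrs corona and a 1s1ss corona could a priori sit arbitrarily far apart in the packing), and, worse, the neighborhood of a 1s1ss corona completes perfectly consistently---it extends to a genuine compact packing of the whole plane, namely 1-discs centered on a triangular grid, each surrounded by twelve s-discs. So no finite configuration around a 1s1ss disc is contradictory by itself, and hunting for an s-disc together with two neighbors that ``1s1ss requires but 1srrs forbids'' cannot succeed.

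The idea you are missing is that the contradiction is global, obtained by self-propagation of 1s1ss. Since the only admissible 1-corona is $\textrm{s}^{12}$, each of the two 1-discs in a 1s1ss corona must be surrounded entirely by small discs; this creates small discs whose coronas contain the factor s1ss, and among the three admissible s-coronas only 1s1ss contains this factor, so each such disc must itself have corona 1s1ss; that produces new 1-discs surrounded by small discs, and so on indefinitely. The packing is thus forced to be the two-size triangular-grid packing, in which no medium disc can ever appear---contradicting the assumption that the packing uses all three sizes. (So the paper in fact proves something stronger than a pairwise clash: 1s1ss cannot occur at all in a three-size packing at these values.) Your fallback is also not viable: $(r,s)$ is by construction a common root of both corona equations, so both coronas are geometrically realizable at the same values, and there are no ``two different admissible values'' from which to derive a contradiction.
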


\begin{proof}
  There is no other small corona (besides ssssss) compatible with the values of $r$ and $s$ characterized by these two coronas, and the only compatible large corona is $\textrm{s}^{12}$.
  Assume that a small corona 1s1ss appears (Fig.~\ref{fig:2pc_possible_2}, left).
  Each of the two large discs of this corona must be surrounded by small discs (Fig.~\ref{fig:2pc_possible_2}, center).
  A small disc with a factor s1ss in its corona (one of them is pointed in Fig.~\ref{fig:2pc_possible_2}, left) must have a corona 1s1ss.
  This yields a third large disc, also surrounded by small discs (Fig.~\ref{fig:2pc_possible_2}, right).
  This argument can be repeated on each small disc with factor s1ss in its corona (pointed in Fig.~\ref{fig:2pc_possible_2}, right), forbidding any medium disc to ever appear.
\end{proof}

\begin{figure}[hbtp]
  \centering
  \includegraphics[width=\textwidth]{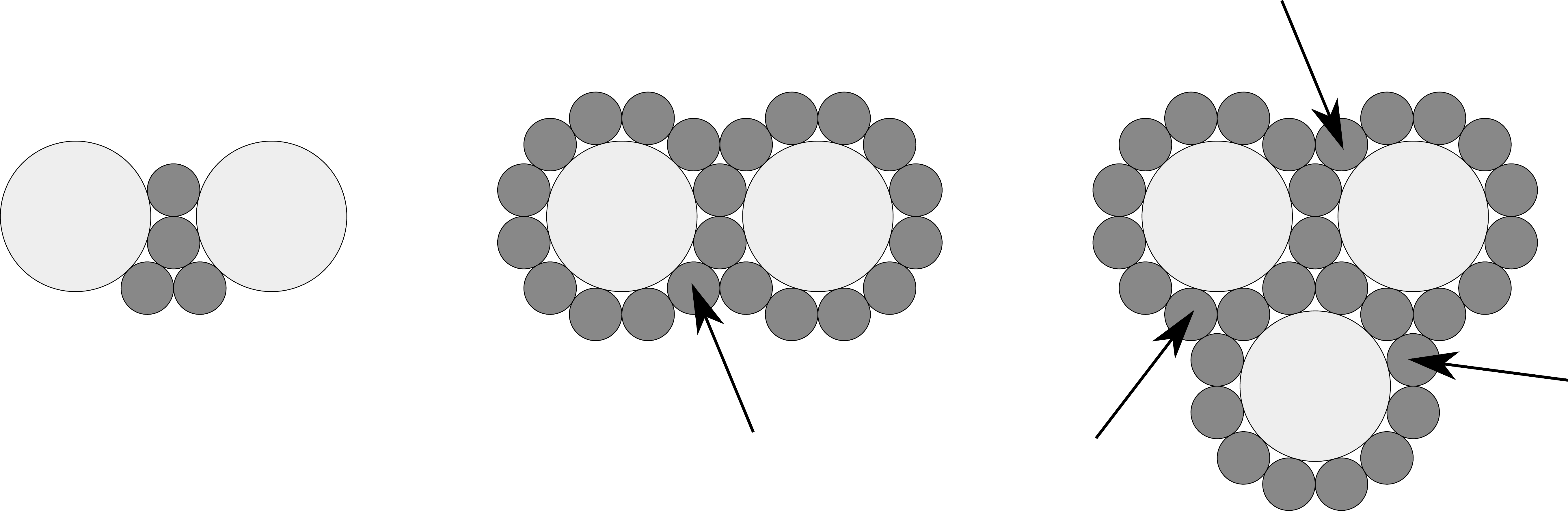}
  \caption{
  The small coronas 1srrs and 1s1ss are possible, but if the latter appears then the packing is made of large discs surrounded by small ones and centered on a triangular grid: it does not contain any medium disc.}
  \label{fig:2pc_possible_2}
\end{figure}

%%%%%%%%%%%%%%%%%%%%%%
\section{One small corona}
\label{sec:one_small}

% type
The remaining class is the main one.
The packings contain adjacent s- and r-discs (otherwise they are large separated) and allow only one s-corona (besides ssssss).
In particular, any such packing has an r-corona which contains an s-disc.
We shall here always assume that in the pair of s- and r-corona used to get the equations in $r$ and $s$, the r-corona contains an s-disc.
This indeed yields a strong combinatorial constraint.
% example
Consider, for example, the pair 11rrs / 11rr$\textrm{s}^{12}$ (computing the resultants for the associated equations exhausted the memory of our laptop, as mentioned in Section~\ref{sec:strategy}).
Whenever, in the r-corona, there is an s-disc between an x-disc and an y-disc, the corona of this s-disc must contain an r-disc between an x-disc and an y-disc.
Here, this shows that the coding of the s-corona must contain the factors rrs, srs and sr1.
However, neither srs nor sr1 appear in 11rrs.
This pair can thus be ruled out without any further computation.

% angle vectors and coronas
Recall that an angle vector can generally have different codings (not for s-coronas, however).
In the above example,  11rr$\textrm{s}^{12}$ is the unique coding of $(1,1,1,1,1,11)$.
But the angle vector $(0, 0, 4, 0, 6, 10)$, for example, admits $1022$ different codings.
To rule a case out, each of these codings must be checked.
% cover/precover
Formally, a small angle vector $\vec{k}$ is said to {\em cover} a medium angle vector $\vec{l}$ if there exists a coding of $\vec{l}$ such that, for any factor xsy of this coding, the (unique) coding of $\vec{k}$ contains xry.
One also says that $\vec{k}$ {\em pre-covers} $\vec{l}$ if there exists a coding of $\vec{l}$ such that, for any factor xs of this coding, the coding of $\vec{k}$ contains xr.
This latter condition is weaker but it can be directly checked on the angle vector (which precisely counts the factors of length $2$), hence faster.

% nb couples initial
Consider the $16805$ candidates pairs small/medium angle vectors.
Keeping only those where the s-corona contains an r-disc and the r-corona an s-disc reduces to $12265$ pairs.
Checking the pre-covering condition reduces to $2889$ pairs.
Checking then the covering condition reduces to $803$ pairs.
These $803$ pairs correspond to $192$ different medium coronas out of $1654$ initially, and the associated equations are generally much simpler\footnote{
Computing the $192$ polynomials takes $2$min on our laptop and yields a $256$Ko file.
The mean degree is $14$, the maximum one $80$ for 11rrsrss.
This has to be compared with the statistics provided for all the $1654$ polynomials at the end of Section~\ref{sec:medium}.}.

% résultants : error cases
The hidden variable method fails on $23$ of these $803$ pairs.
For $8$ of them, the computation of the algebraic roots of the resultants raises (various) exceptions.
Again, we compute the roots with interval arithmetic and filter them with interval arithmetic.
Only one pair remains: 1rr1s / 11rrs.
Lemma~\ref{lem:1pc_impossible_2} (page~\pageref{lem:1pc_impossible_2}) rules out this pair with a combinatorial argument.
The $15$ other pairs fail because the two resultants are zero, that is, there is a continuum of possibles pairs $(r,s)$.
We rule them out as follows.
In $13$ out of these $15$ cases, neither the s- nor the r-corona does contain a 1-disc.
In order to allow a packing with three sizes of discs, there must be an s-corona or an r-corona with a 1-disc.
If there is such a corona with an s-disc and an r-disc adjacent, then this case appears in the initial list of $16805$ candidates pairs and is handled elsewhere.
Otherwise, there is either an s-corona with only 1- and s-discs, or an r-corona with only 1- and r-discs.
In the former case, the corona appears in Tab.~\ref{tab:petites_couronnes_deux_disques} and characterizes $s$.
The same holds in the latter case, with $r$ instead of $s$.
In both cases, and for each of these $10$ new possible coronas, we consider the polynomial system formed by the equation associated to the new corona and the two ones associated with the initial pair of corona.
We find (using Gröbner basis) that none of these systems does have a solution.
These 13 cases are thus ruled out.
The two remaining cases are 1rr / 1r1srs and 11r / 111s1s.
The first case is not possible: the s-corona 1rr tells us that the s-discs must be in the interstices between one 1-disc and two r-discs, but then the r-corona 1r1srs should be an r-corona 1r1r with s-discs in the interstices, but the r-corona 1r1r is forbidden ($r$ should be arbitrarily small).
In the second case, the s-corona 11r tells us that the s-discs must be in the interstices between two 1-discs and one r-disc, but then the r-corona 111s1s is not completed because one could add s-discs between the central r-discs and two consecutives 1-discs, yielding the r-corona 1s1s1s1s.
This is thus a subcase of 11r/1s1s1s1s, which is handled elsewhere.
All the cases where the hidden variable method fails are thus ruled out.

% résultants : working cases
The hidden variable method works on the $780$ other pairs and yields $56968$ algebraic pairs $(r,s)$, with $0<s<r<1$.
Arithmetic filtering rules out all but $202$ of them.
The exact filtering rules out $27$ of these pairs.
One is the pair 1srrs / rrsrsss which turns out to also allow the s-corona 1s1ss: this is actually the only one case which allows a packing Section~\ref{sec:two_smalls} (number $19$ in Appendix~\ref{sec:examples}).
The $26$ other ones are detected as ''duplicates'' once all the possible coronas are computed, that is, there is another pair which yields the same values of $(r,s)$ and the same set of possible coronas.
The $175$ other cases are validated and all the possible coronas computed.
Computing exactly all the large coronas is the most time-consuming part of all the paper ($30$min on our laptop, {\em i.e.}, half of the total computation time).

We find a periodic packing for $145$ of these $175$ cases (numbers $20-164$ in Appendix~\ref{sec:examples}).
We shall rule out the other cases by two combinatorial lemmas.
The first one rules out $24$ cases:

\begin{lemma}
  \label{lem:1pc_impossible_1}
  If a compact packing contains an s-corona 1rss, 11rss, 1rrss or 1srss, then it must contain another s-corona (other than ssssss).
\end{lemma}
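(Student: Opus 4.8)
The plan is to argue by contradiction. Suppose a compact packing has for its only s-coronas ssssss and one fixed corona $C$ among 1rss, 11rss, 1rrss, 1srss, and let us force a third s-corona. The unifying starting point is that each of these four codings contains the factor rss1: around the central small disc $D$ there are four \emph{consecutive} tangent neighbours $R,S_1,S_2,L$ of sizes $r,s,s,1$, with $R$–$S_1$, $S_1$–$S_2$, $S_2$–$L$ all tangent. Recall that in a compact packing two discs of a corona are tangent if and only if they are consecutive; hence $S_1$ and $L$ are \emph{not} tangent (they are separated by $S_2$), and likewise $S_2$ is not tangent to $R$ while $D$ is. Thus $S_1$ is a small disc tangent to the medium disc $R$ and to the two small discs $D$ and $S_2$, but tangent to no large disc among the neighbours of $D$.

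First I would pin down the corona of $S_1$. Since $S_1$ touches a medium disc, its corona is not ssssss, so under our assumption it equals $C$. Each of the four codings contains a large disc, so $S_1$ must be tangent to some large disc $\Lambda$. Reading off $C$, the medium and small neighbours of $S_1$ are forced to be $R$ and $\{D,S_2\}$, and the consecutive-tangency structure of $C$ pins down which small neighbour of $S_1$ is tangent to $\Lambda$: it is the one \emph{not} adjacent to $R$, namely $S_2$ (since $D$ is tangent to $R$ but $S_2$ is not). The clean contradiction is the case $\Lambda=L$: then $S_1$ would be tangent to $L$, contradicting that $S_1$ and $L$ are non-consecutive, hence non-tangent, in the corona of $D$. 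This branch, together with the reduction to $S_1$, is routine.

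The difficulty concentrates in the remaining subcase $\Lambda\neq L$, in which $R$ and $S_2$ would each touch two distinct large discs. I would resolve it by one further propagation step in the spirit of the critical-disc argument of Lemma~\ref{lem:biphase}: tracking the forced tangencies around $R$ and $S_1$ produces an explicit local patch (a medium disc carrying the factor 1ss1), and I would show this patch forces yet another small disc whose corona can be neither $C$ nor ssssss. The codings 1rrss and 1srss need extra care here, because $S_1$ then has, respectively, two medium neighbours or a third small neighbour, so the identification of $R$ and of the small disc touching $\Lambda$ requires a short separate check; this is where I expect the real work to lie. As a fallback, should the combinatorial propagation prove delicate, the last step can be replaced by a direct interval-arithmetic verification that the angle equation of the forced local patch has no solution at the specific algebraic value of $(r,s)$ attached to $C$, exactly as in the exact filtering of Section~\ref{sec:strategy}.
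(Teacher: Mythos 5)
Your setup, the extraction of the factor rss1, the forcing of the corona of $S_1$ to equal $C$, the location of $\Lambda$ (tangent to the small neighbour of $S_1$ not adjacent to $R$, namely $S_2$), and the exclusion of $\Lambda=L$ are all correct, and this is essentially the paper's route: the paper likewise pins down the corona of a small neighbour of the central disc and reads off a forbidden factor (1ss1 in the first case, rssr in the other three) in the corona of a third small disc. The genuine gap is that you stop exactly where the proof closes. In the case $\Lambda\neq L$ you announce ``one further propagation step'' that would force ``yet another small disc'' with a bad corona, declare that this is where the real work lies, and offer a numerical fallback; none of this is carried out, and none of it is needed. At that point $S_2$ is tangent to the four pairwise distinct discs $L$, $D$, $S_1$, $\Lambda$, and the tangencies $L$--$D$, $D$--$S_1$, $S_1$--$\Lambda$ force these four discs to occur consecutively in the corona of $S_2$ (in a compact packing two tangent neighbours of a disc are consecutive in its corona --- a fact you already use). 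Hence the corona of $S_2$ contains the factor 1ss1, which is not a cyclic factor, even up to reversal, of 1rss, 11rss, 1rrss or 1srss, and certainly not of ssssss: $S_2$ itself is the required other s-corona, with no new disc and no computation. Note also that your detour through ``a medium disc carrying the factor 1ss1'' cannot close the argument in any case: the contradiction hypothesis restricts only the s-coronas, so a forbidden factor is significant only around a small disc.

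The interval-arithmetic fallback should be dropped for a second reason: it does not prove the statement. The lemma is purely combinatorial (the paper stresses that its proof ``does not rely on the value of $r$ and $s$''), and a single s-corona imposes one polynomial equation on two unknowns, hence a whole curve of admissible pairs $(r,s)$ --- there is no ``specific algebraic value of $(r,s)$ attached to $C$''. A numerical check at isolated points could at best validate the particular instances where the lemma is invoked in Section~\ref{sec:one_small}, where $(r,s)$ is pinned down by an additional medium-corona equation, but it cannot establish the lemma as stated.
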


\begin{proof}
  The proof does not rely on the value of $r$ and $s$. %(these cases can thus be ruled out before the above filtering).
  The four cases are similar and depicted in Fig.~\ref{fig:1pc_impossible_1}.
  Polygons around letters link the text and the figure.
  Consider an \fond{4}{s}-disc and an \fond{5}{s}-disc in its corona.
  The \fond{5}{s}-corona is determined and yields in the corona of the \fond{6}{s}-disc a factor (1ss1 in the first case, rssr in the other ones) which appears neither in these four \fond{4}{s}-coronas nor in ssssss.
%  This enforces another corona in the packing.
\end{proof}

\begin{figure}[hbtp]
  \centering
  \includegraphics[width=\textwidth]{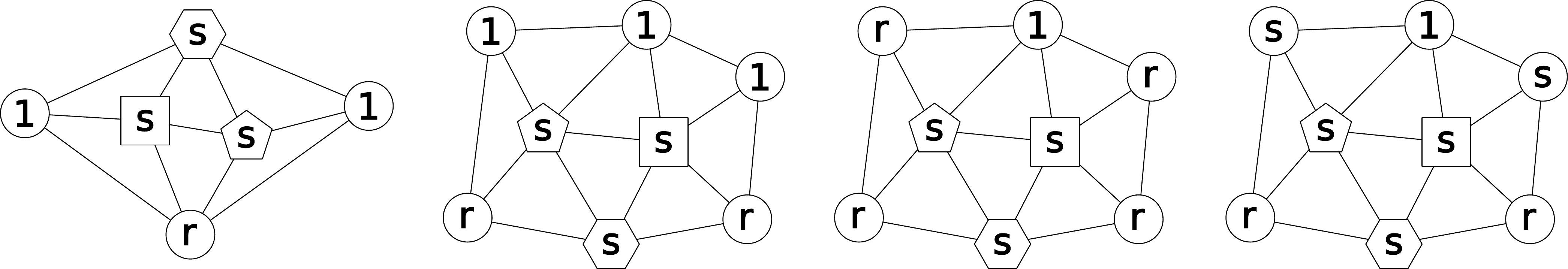}
  \caption{Four s-coronas which ensure that there must be another s-corona.}
  \label{fig:1pc_impossible_1}
\end{figure}

The second lemma rules out the $6$ last pairs, as well as the pair 1rr1s/11rrs, for which no exact filtering was performed (because the hidden variable method yields only interval for $r$ and $s$):

\begin{lemma}
  \label{lem:1pc_impossible_2}
  The small/medium coronas 1rsrs/1rr1ss, 11rr/11rrs, 1rr1s/11rrs, rrrrr/1rsrsr, rrrrs/11rssr, rrrss/11rssr and rrrs/11rssr do not allow a packing with all the three sizes of discs (without any other small corona, except ssssss).
\end{lemma}

\begin{proof}
  We check the $7$ cases one by one, using the values of $r$ and $s$ to determine (with a computer) all the possible coronas and then relying on a short combinatorial argument (illustrated next to it).
  
\noindent
\begin{minipage}{0.73\textwidth}
  {\bf 1rsrs/1rr1ss.}
  The values of $r$ and $s$ allow no other r-corona.
  In the \fond{4}{s}-corona, the \fond{4}{r}-disc has three neighbor s-discs.
  This is incompatible with the r-corona 1rr1ss.
\end{minipage}
\hfill
\begin{minipage}{0.17\textwidth}
\includegraphics[width=\textwidth]{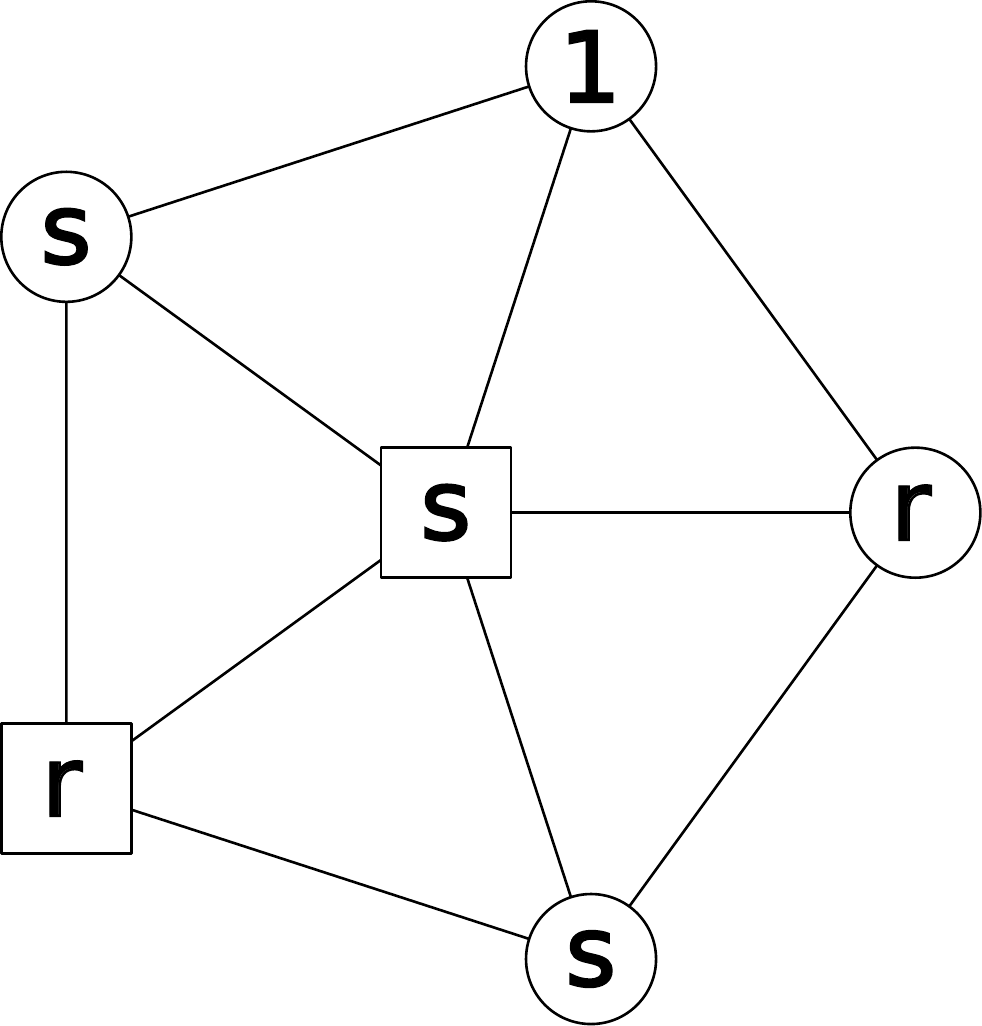}
\end{minipage}
\bigskip
\noindent
\begin{minipage}{0.24\textwidth}
\includegraphics[width=\textwidth]{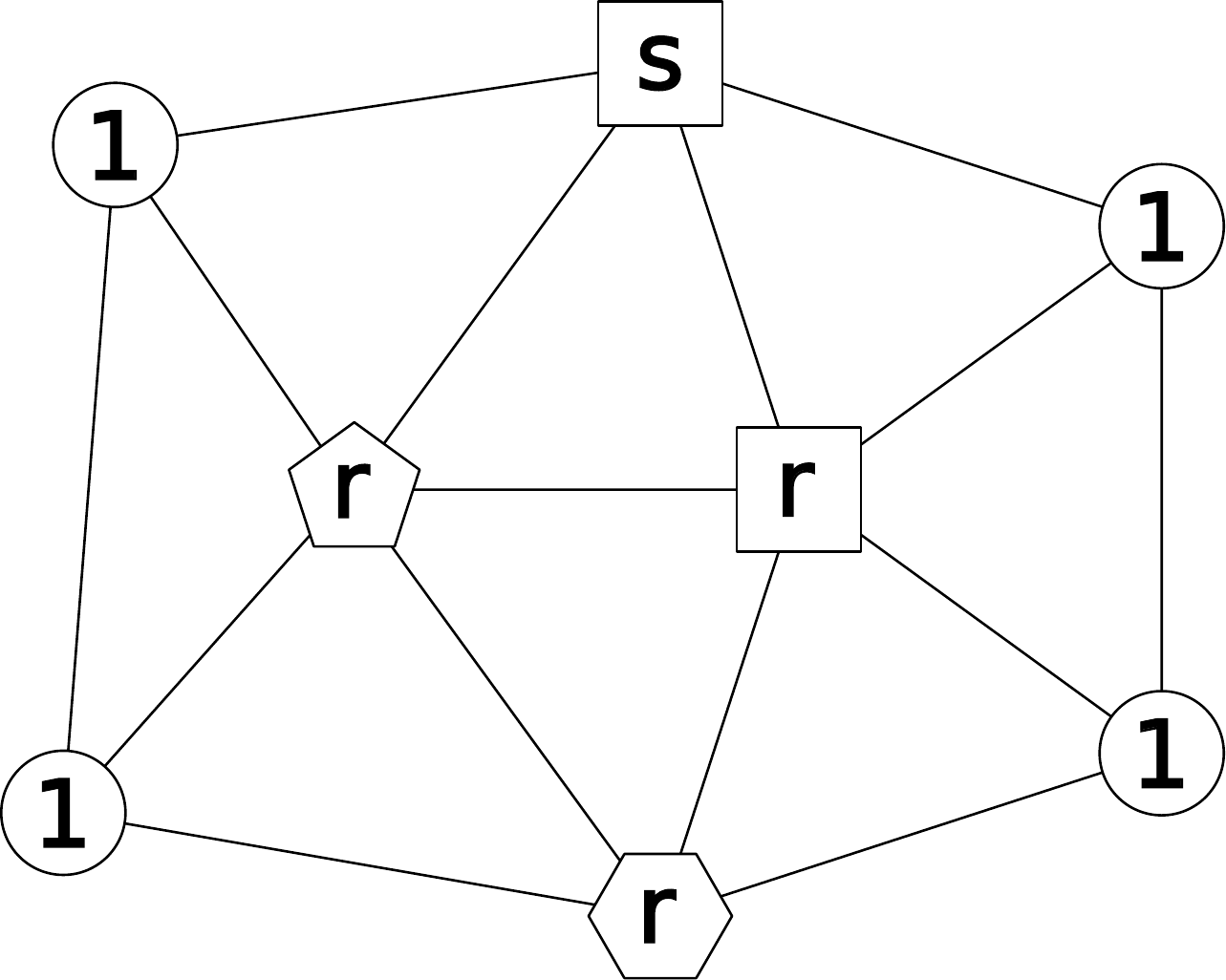}
\end{minipage}
\hfill
\begin{minipage}{0.66\textwidth}
  {\bf 11rr/11rrs.}
  The values of $r$ and $s$ allow no other r-corona.
  In the corona of the \fond{4}{r}-disc, the corona of the \fond{5}{r}-disc enforces the factor 1rr1 in the corona of the \fond{6}{r}-disc.
  This is incompatible with the r-coronna 11rrs.
\end{minipage}
\bigskip
\noindent
\begin{minipage}{0.68\textwidth}
  {\bf 1rr1s/11rrs.}
  The values of $r$ and $s$ allow no other r-corona.
  In the corona of the \fond{4}{r}-disc, the corona of the \fond{5}{r}-disc enforces the factor srrs in the corona of the \fond{6}{r}-disc.
  This is incompatible with the r-corona 11rrs.
\end{minipage}
\hfill
\begin{minipage}{0.22\textwidth}
\includegraphics[width=\textwidth]{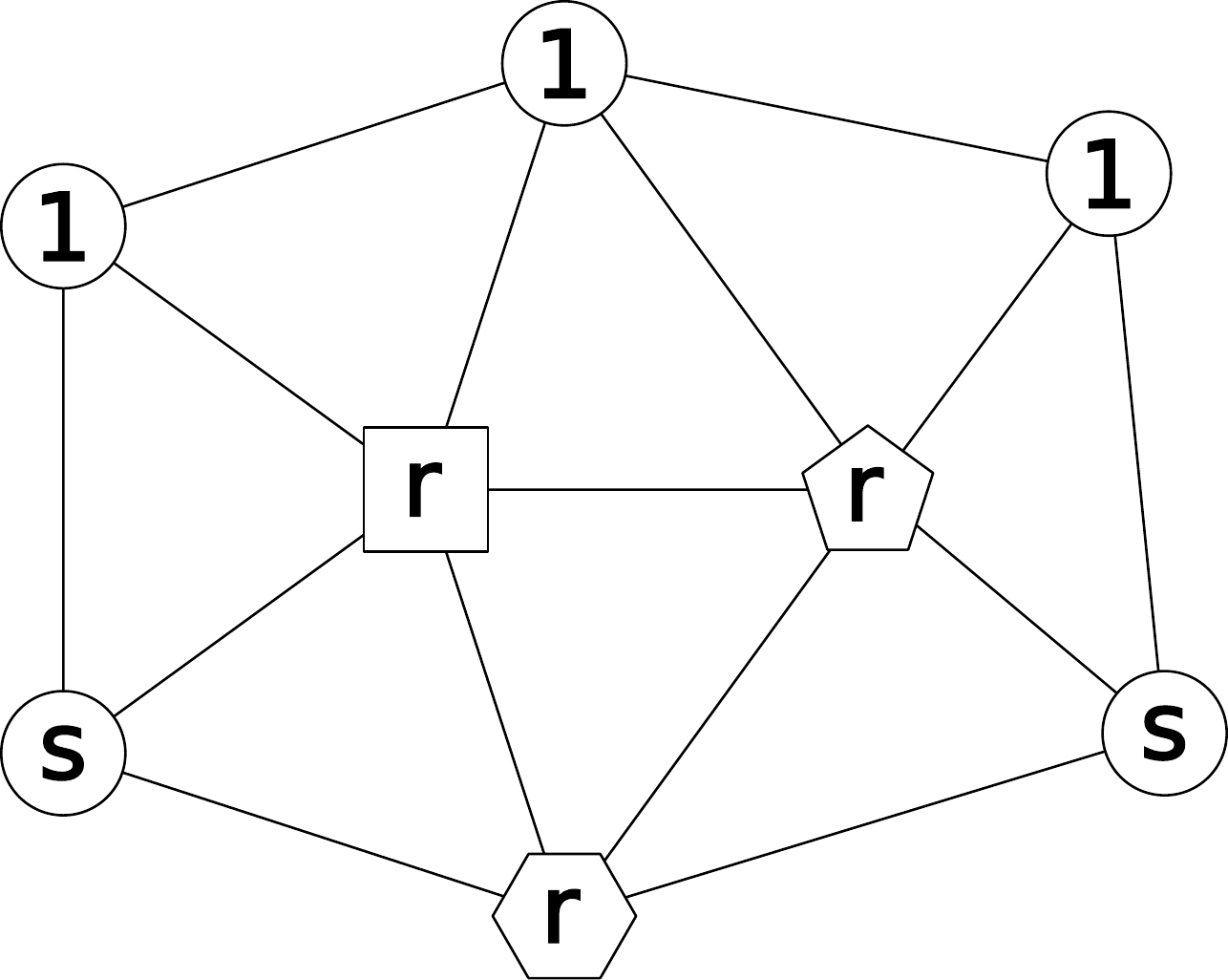}
\end{minipage}
\bigskip
\noindent
\begin{minipage}{0.45\textwidth}
\includegraphics[width=\textwidth]{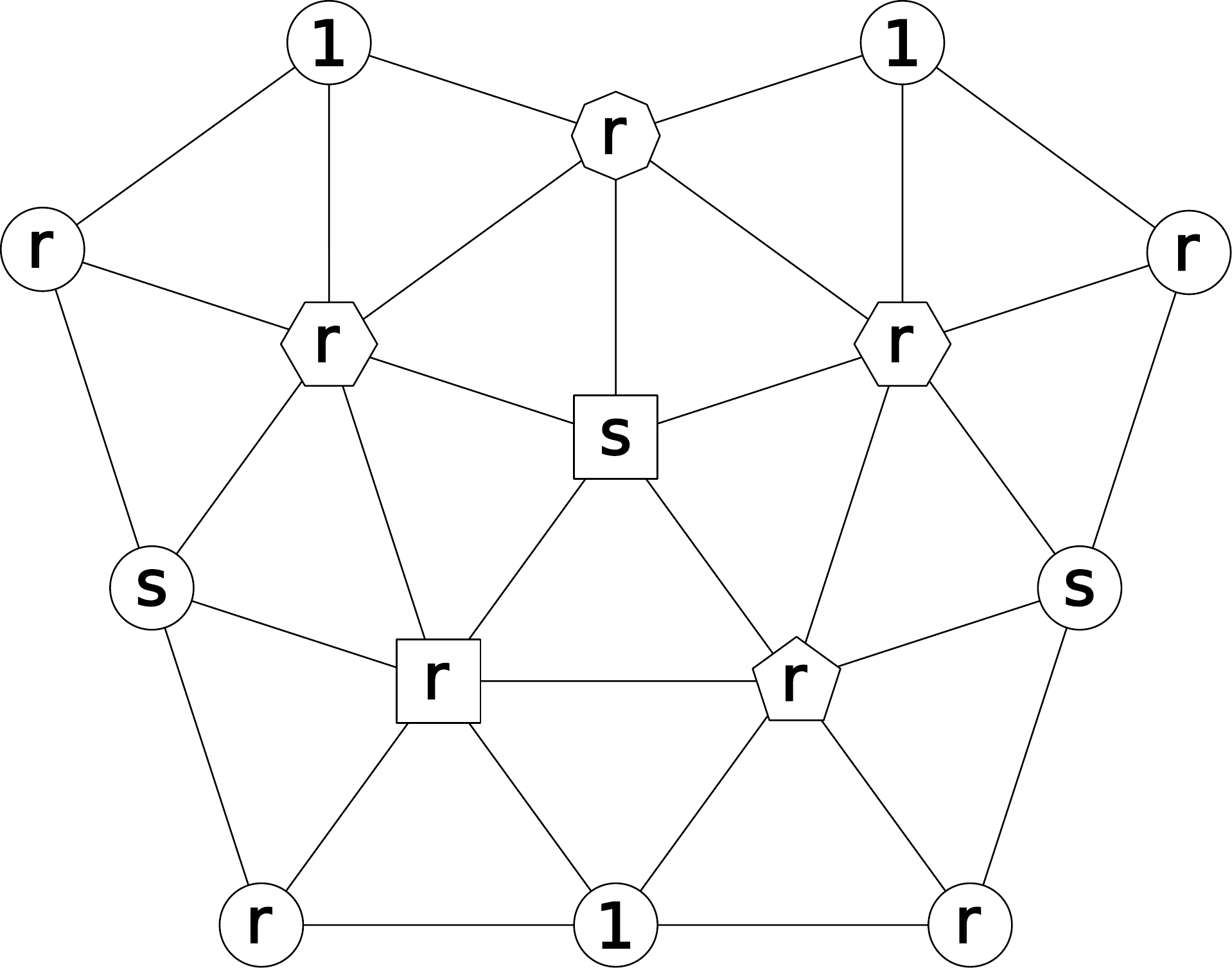}
\end{minipage}
\hfill
\begin{minipage}{0.45\textwidth}
  {\bf rrrrr/1rsrsr.}
  The values of $r$ and $s$ allow no other r-corona.
  In the \fond{4}{s}-corona, there two symmetric ways to draw the corona of the \fond{4}{r}-disc.
  Once it is done, the corona of the neighboring \fond{5}{r}-disc is determined.
  The coronas of the two \fond{6}{r}-discs are also determined.
  This enforces two 1-discs in the \fond{8}{r}-corona.
  This is incompatible with the r-corona 1rsrsr.
\end{minipage}
\bigskip
\noindent
\begin{minipage}{0.7\textwidth}
  {\bf rrrrs/11rssr.}
  The values of $r$ and $s$ also allow the r-coronas rsrsrss and 1111r, as well as the 1-corona 1r1r1rr.
  Exchanging types r and 1 in Lemma~\ref{lem:biphase} ensures that any packing with three sizes of discs must contain the r-corona 11rssr.
  Consider an \fond{4}{r}-corona 11rssr.
  The corona of the \fond{5}{r}-disc must be 11rssr and its position is determined.
  This enforces a factor srrs in the \fond{4}{s}-corona.
  This is incompatible with the s-corona rrrrs.
\end{minipage}
\hfill
\begin{minipage}{0.2\textwidth}
\includegraphics[width=\textwidth]{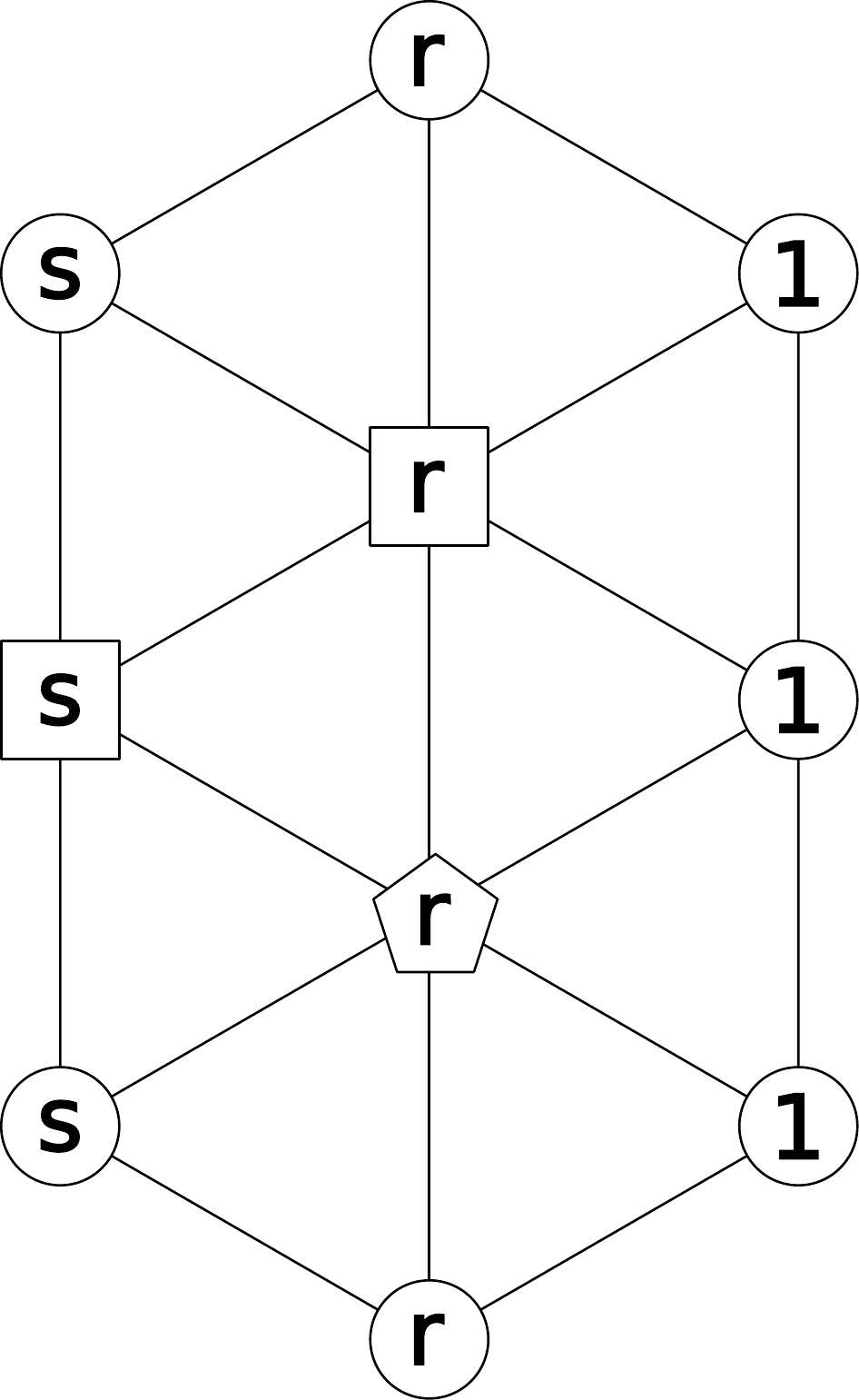}
\end{minipage}
\bigskip
\noindent
\begin{minipage}{\textwidth}
  {\bf rrrss/11rssr.}
  The values of $r$ and $s$ also allow the r-coronas rrsrrss, rrrsrss and 111rr, as well as the 1-coronas 11r11rr and 111r1rr.
  The argument works exactly as for the previous case and forces an impossible factor srrs in the s-corona.
\end{minipage}
\bigskip
\noindent
\begin{minipage}{0.3\textwidth}
\includegraphics[width=\textwidth]{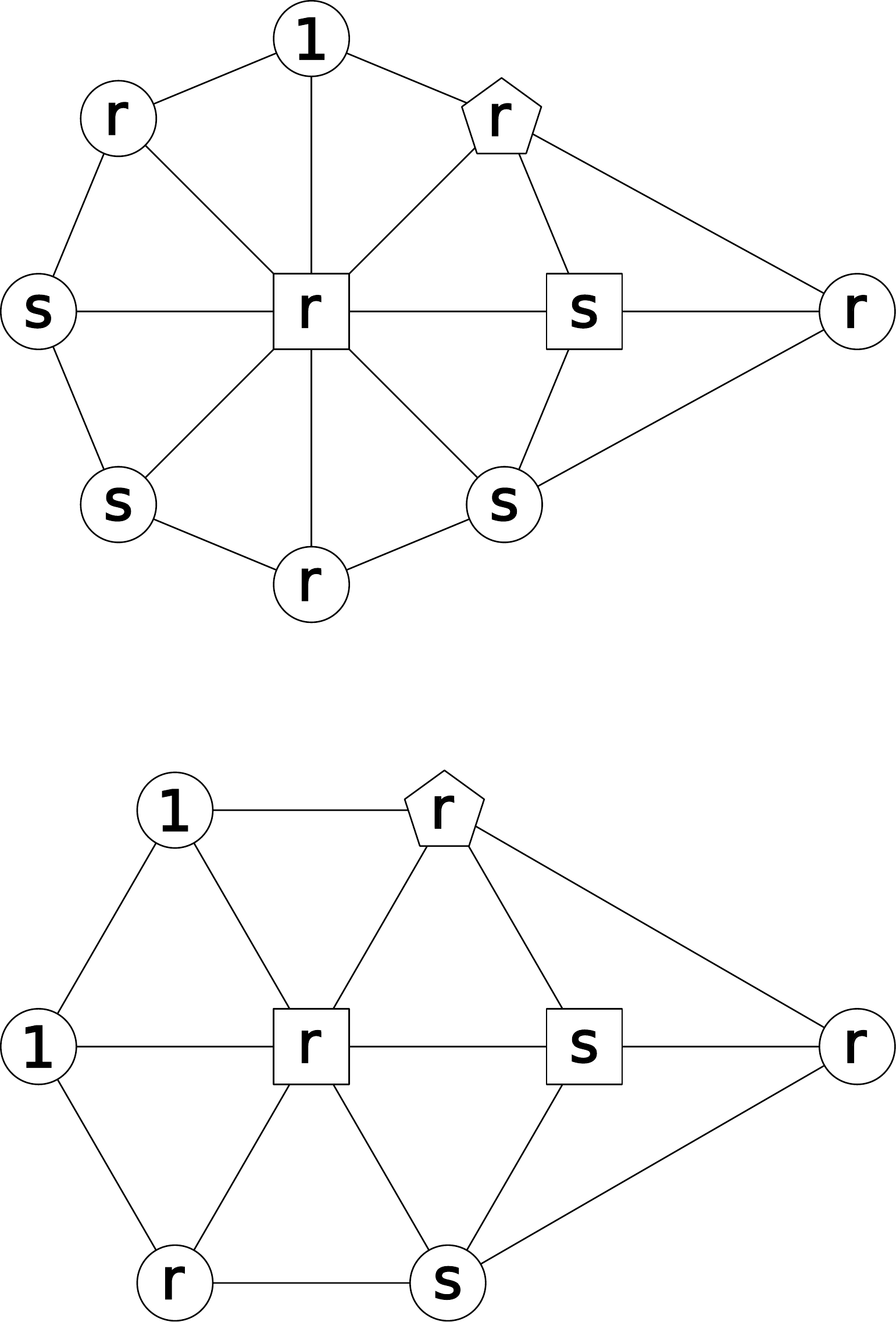}
\end{minipage}
\hfill
\begin{minipage}{0.6\textwidth}
  {\bf rrrs/11rssr.}
  The values of $r$ and $s$ also allow the r-coronas rsrsrsrsss, rsrsrssrss, rsrssrsrss, rrrrsrss, rrrsrrss, rrrsrss, 1rsrsssr, 1rssrssr and 111r.
  The r-coronas 1rsrsssr and rsrsrsrsss are actually impossible because the s-corona rrrs forbid three consecutive s-discs in any r-corona.
  Exchanging types r and 1 in Lemma~\ref{lem:biphase} ensures that any packing with three sizes of discs must contain an r-corona 1rssrssr or 11rssr.
  In both \fond{4}{r}-coronas, the \fond{4}{s}-corona rrrs yields a factor 1rsr in the \fond{5}{r}-corona.
  This is incompatible with the allowed r-coronas (1rsrsssr has been discarded).
\end{minipage}

\end{proof}

%%%%%%%%%%%%%%%%%%%%%%
\appendix
\section{Compact packings}
\label{sec:examples}

Figure~\ref{fig:repartition} acts as a map, showing the distribution of the $164$ cases of Theorem~\ref{th:main}.
A periodic compact packing is then depicted for each case.
%, with a fundamental domain depicted (black parallelogram).
%The numbers (top-left of each picture) correspond to those in Figure~\ref{fig:repartition}.
The letter in brackets refers to the {\em type} of the compact packing, see Appendix~\ref{sec:classification}.
The codings of a small and a medium corona from which the values of $r$ and $s$ can be computed is given top-right of each picture.
Numbers $1$--$18$ are large separated packings, number  $19$ is the unique which admits two small coronas and numbers $20$--$164$ are those which admit a unique small corona (besides ssssss, as usual).
%The scaling is chosen so that the fundamental domain occupies a quarter of the picture.
In some cases, the small discs are very small and barely visible (numbers $32$, $37$--$40$, $41$--$44$): the small corona however indicates where they are.

\begin{figure}[hbtp]
\centering
\includegraphics[width=\textwidth]{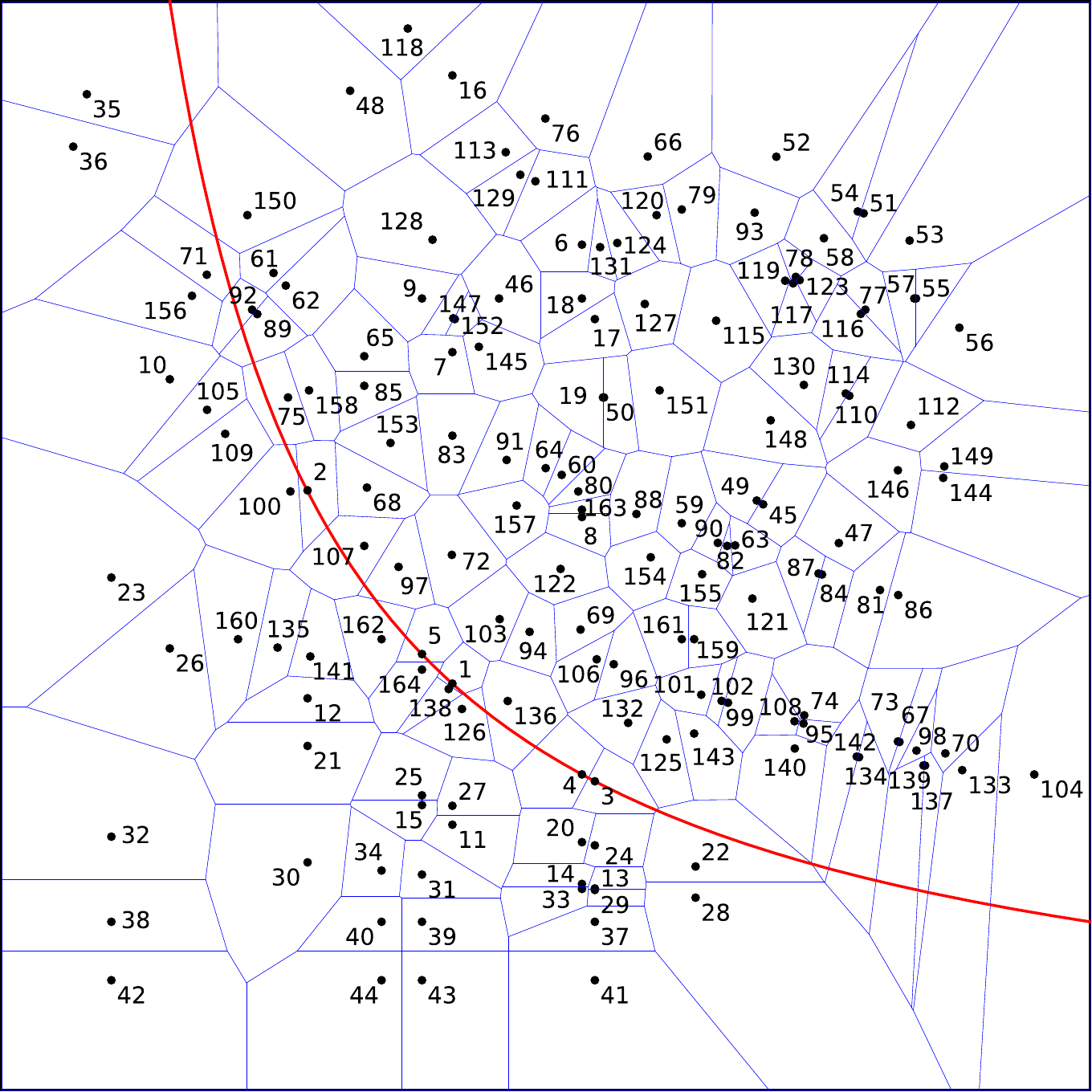}
\caption{
The $164$ pairs $(r,s)$, with abscissa $r$ and ordinate $\tfrac{s}{r}$.
Those below the hyperbola are such that an s-disc fits in the hole between three 1-discs (there are often derived from two disc packings).
Voronoï cells just aim to give an idea of how close are two pairs.
}
\label{fig:repartition}
\end{figure}

\noindent
\begin{tabular}{lll}
  1 (E)\hfill 111 / 1111 & 2 (E)\hfill 111 / 111r & 3 (S)\hfill 111 / 111rr\\
  \includegraphics[width=0.3\textwidth]{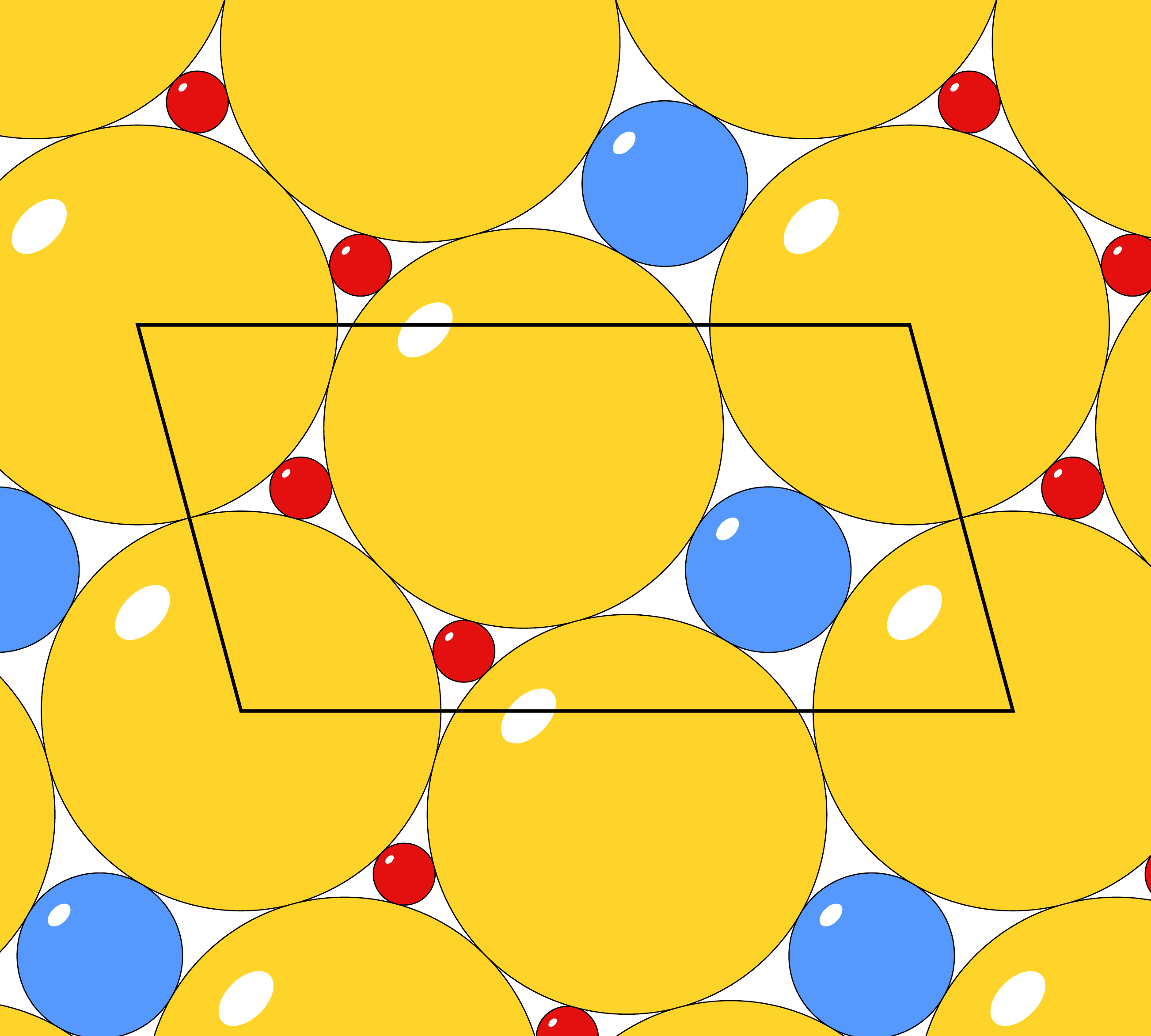} &
  \includegraphics[width=0.3\textwidth]{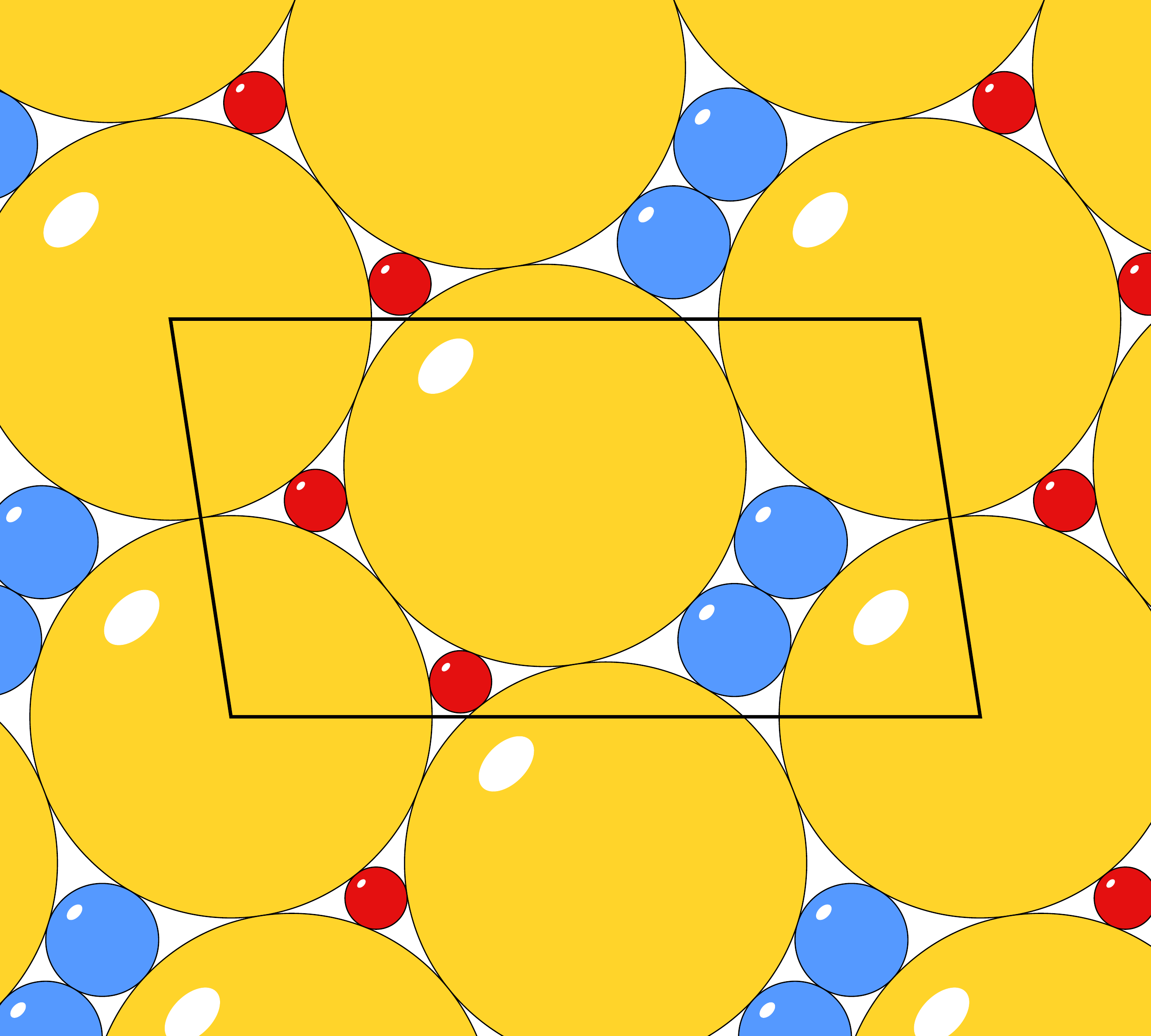} &
  \includegraphics[width=0.3\textwidth]{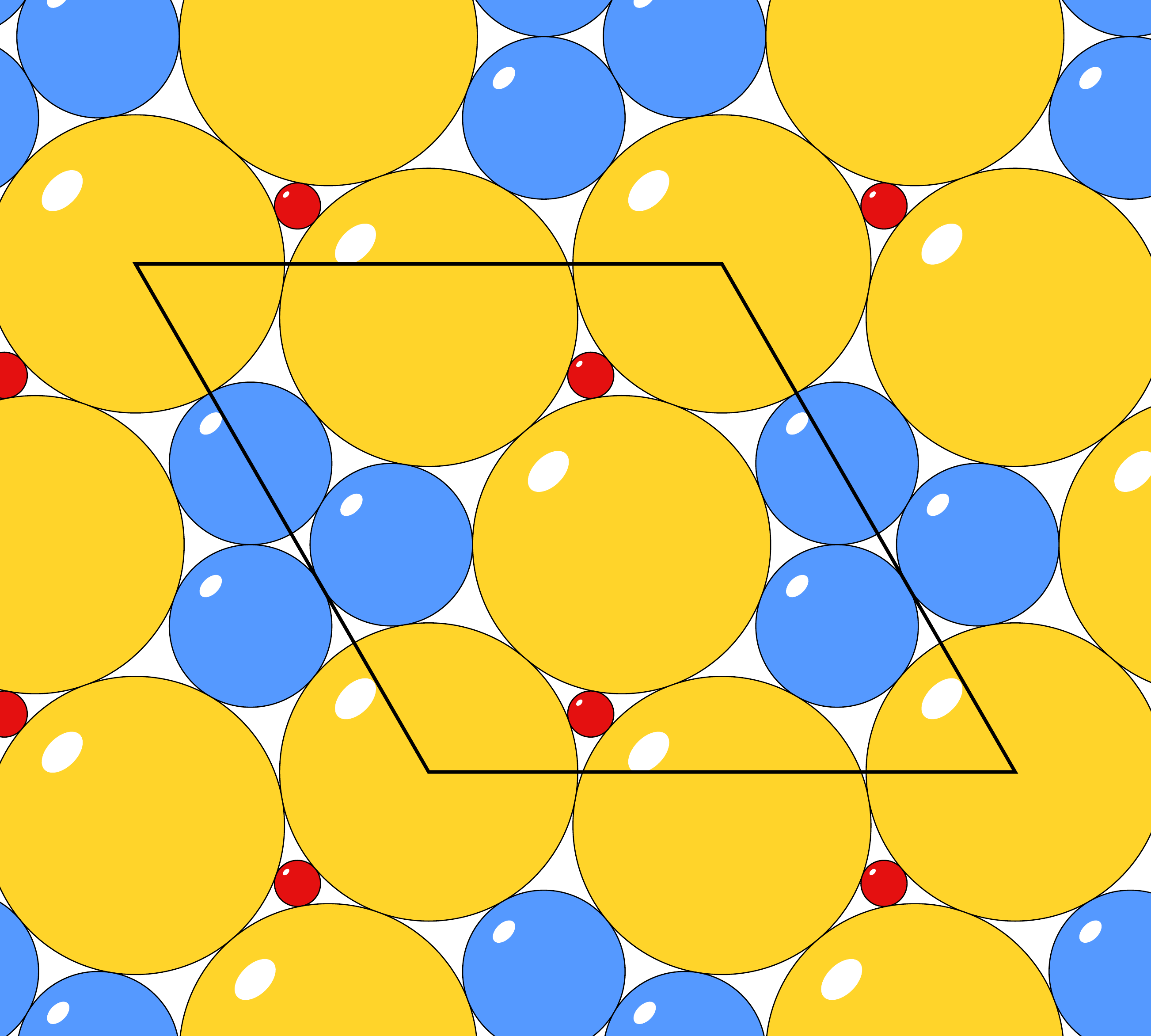}
\end{tabular}
\noindent
\begin{tabular}{lll}
  4 (L)\hfill 111 / 11r1r & 5 (H)\hfill 111 / 11rrr & 6 (E)\hfill 1111 / 11r1r\\
  \includegraphics[width=0.3\textwidth]{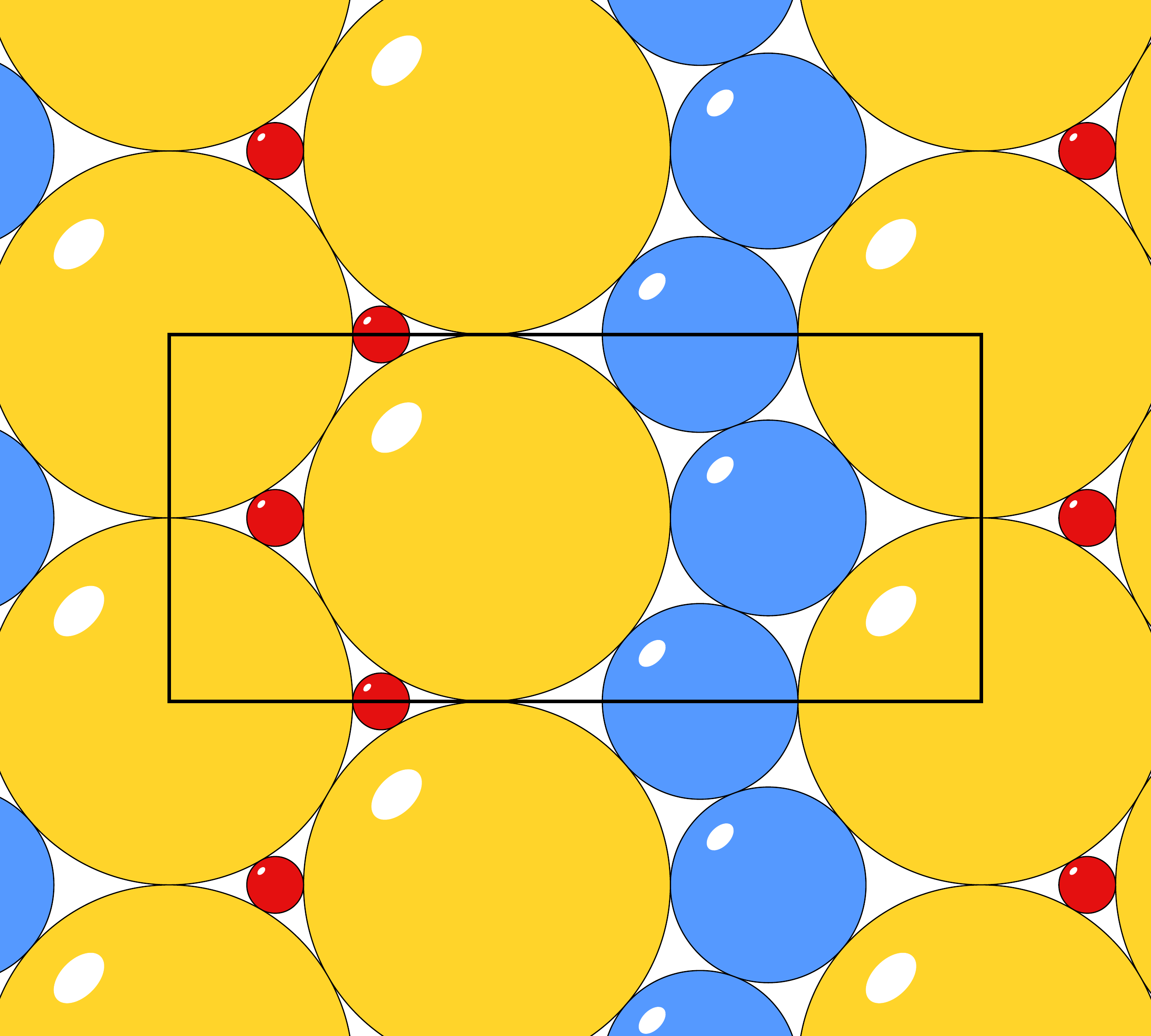} &
  \includegraphics[width=0.3\textwidth]{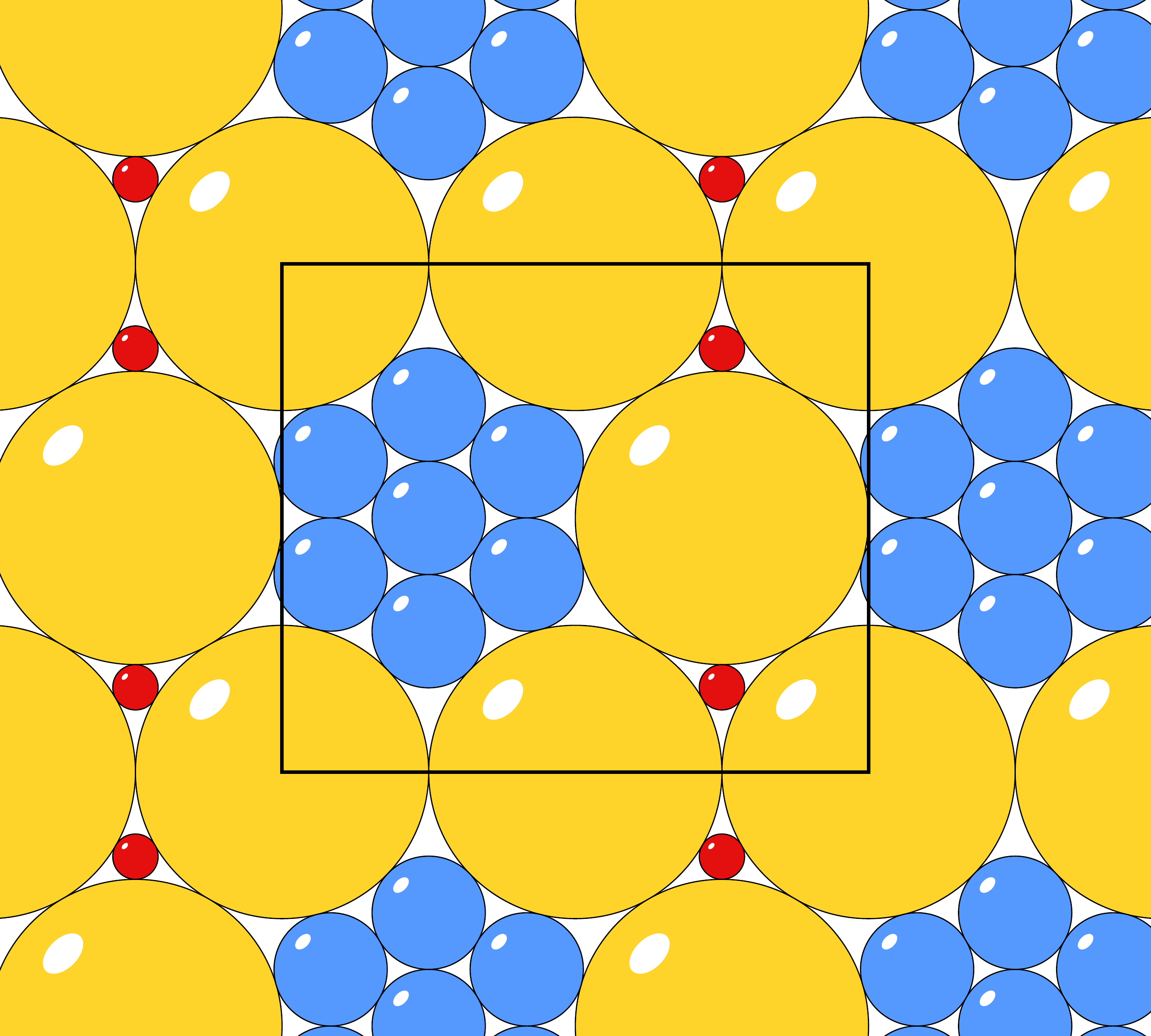} &
  \includegraphics[width=0.3\textwidth]{packing_1111_11r1r.pdf}
\end{tabular}
\noindent
\begin{tabular}{lll}
  7 (E)\hfill 111s / 1111 & 8 (E)\hfill 111s / 11r1r & 9 (E)\hfill 111s / 11rrr\\
  \includegraphics[width=0.3\textwidth]{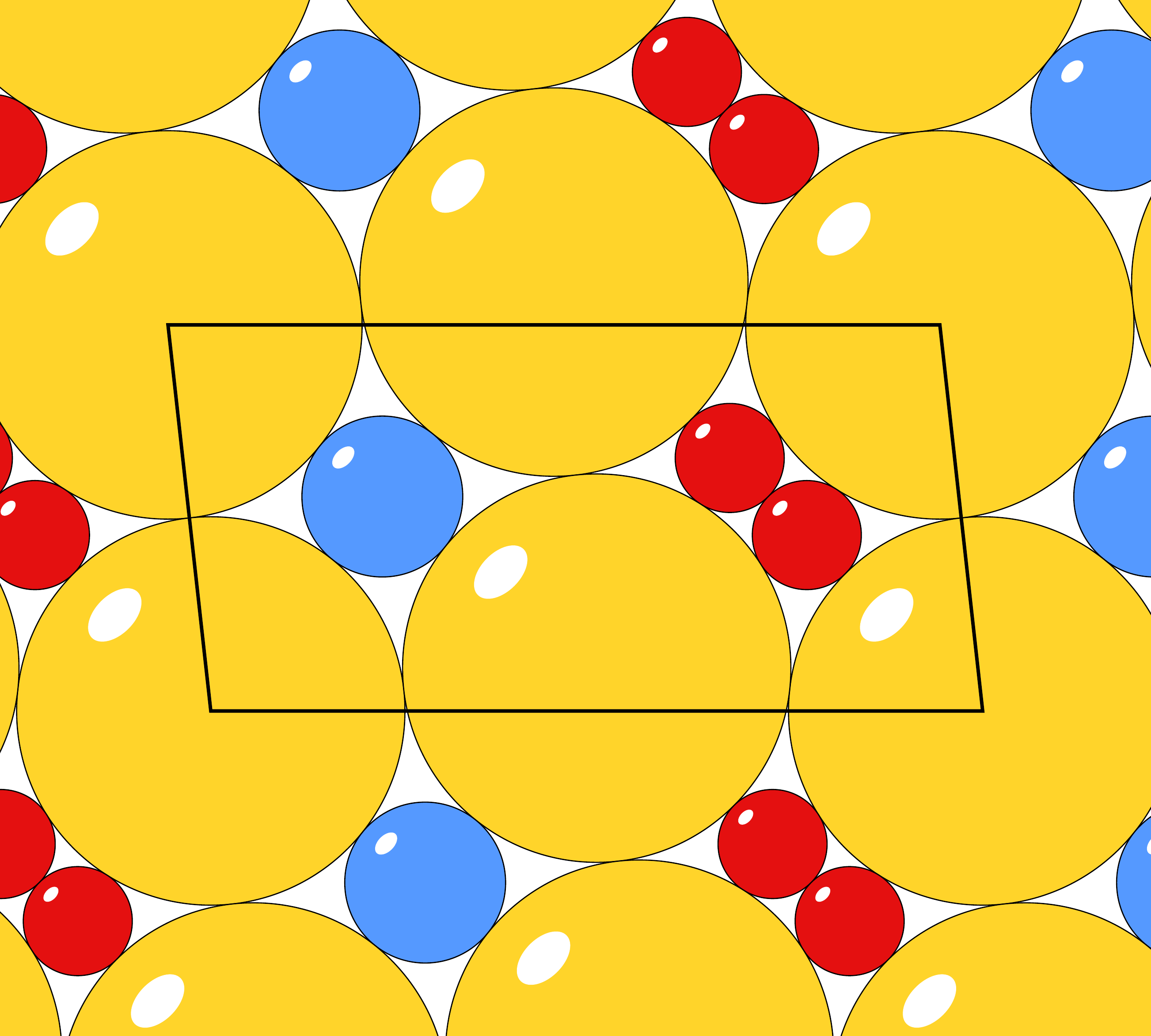} &
  \includegraphics[width=0.3\textwidth]{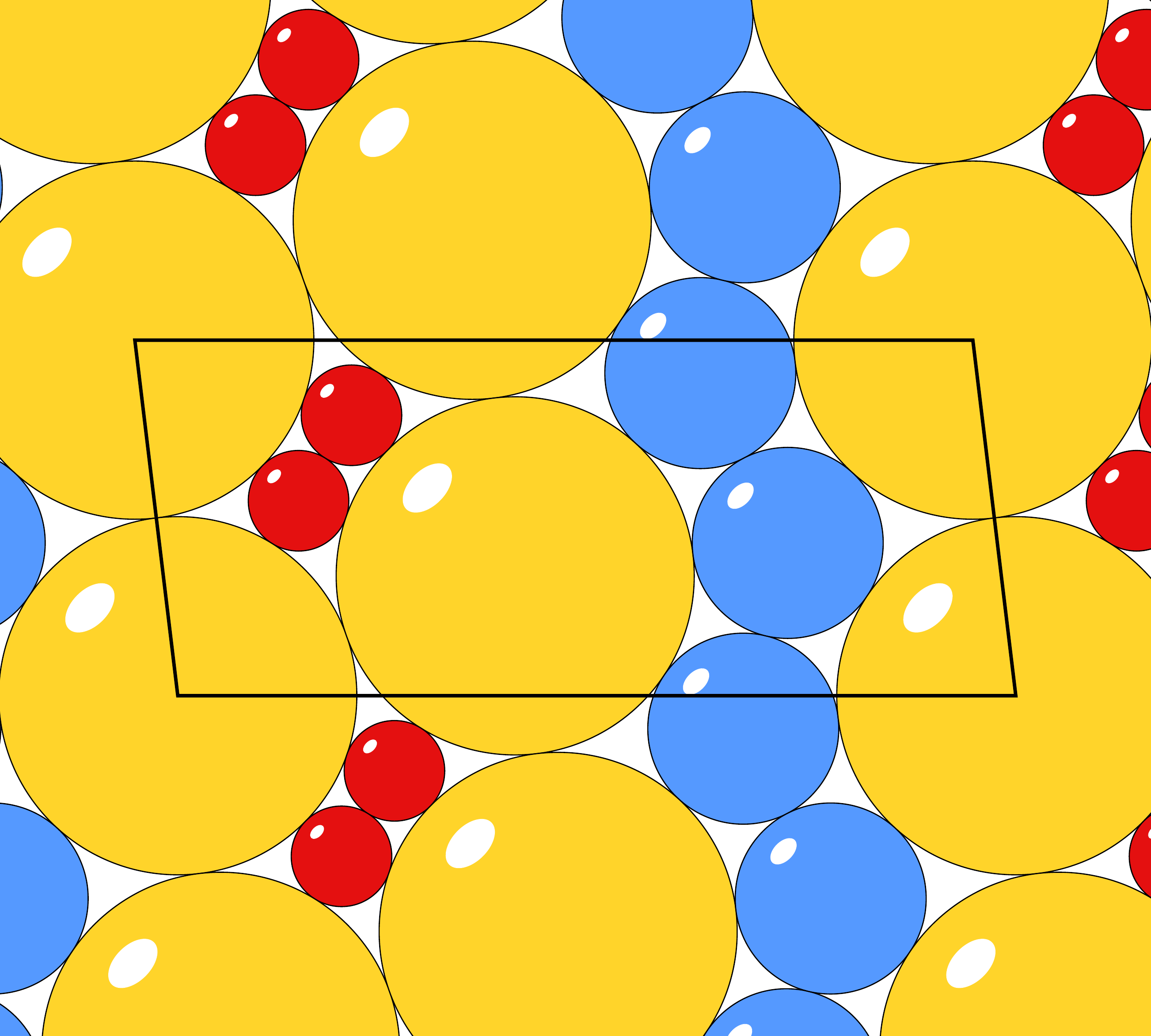} &
  \includegraphics[width=0.3\textwidth]{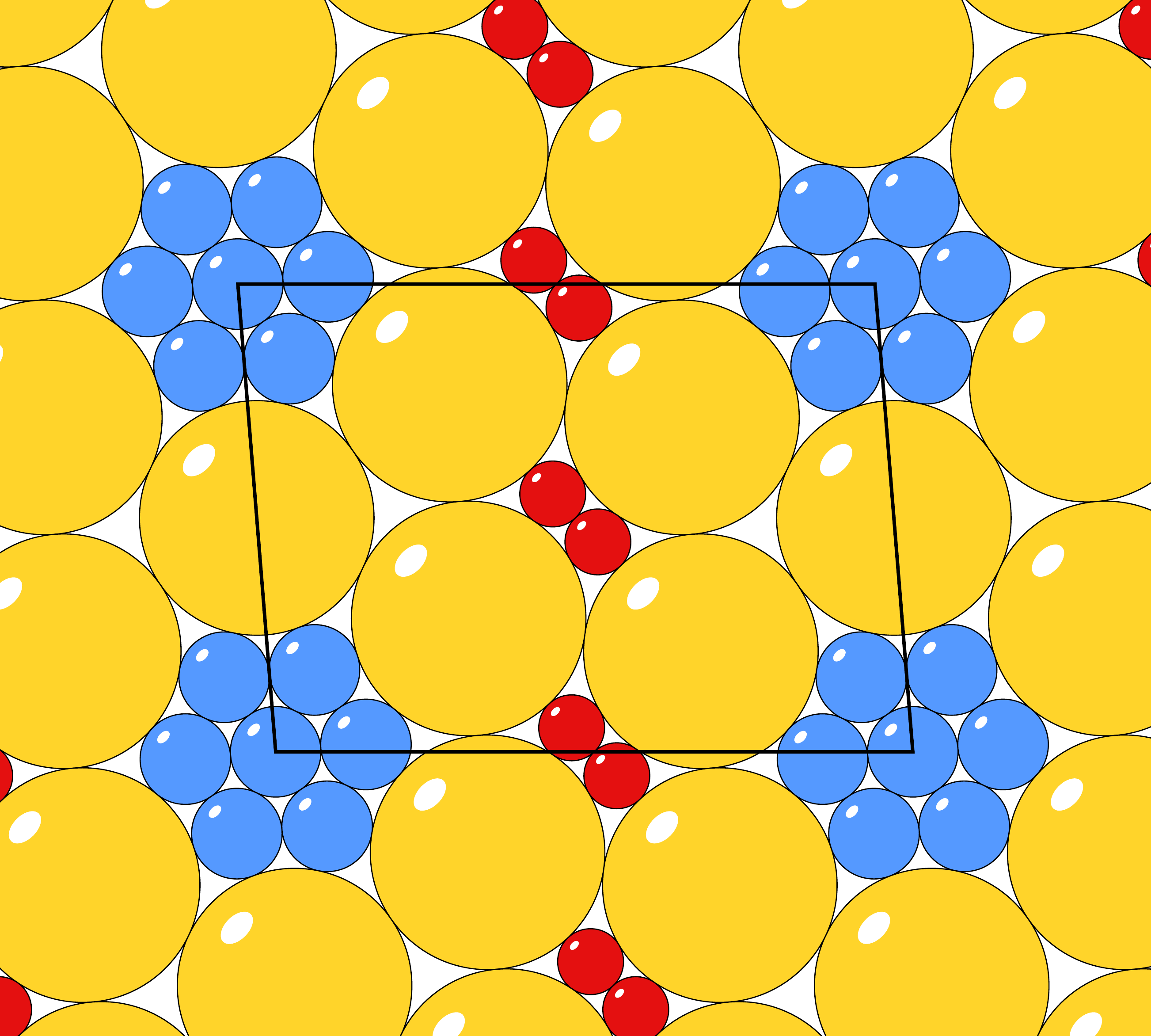}
\end{tabular}
\noindent
\begin{tabular}{lll}
  10 (H)\hfill 11ss / 111 & 11 (E)\hfill 11ss / 1111 & 12 (E)\hfill 11ss / 111r\\
  \includegraphics[width=0.3\textwidth]{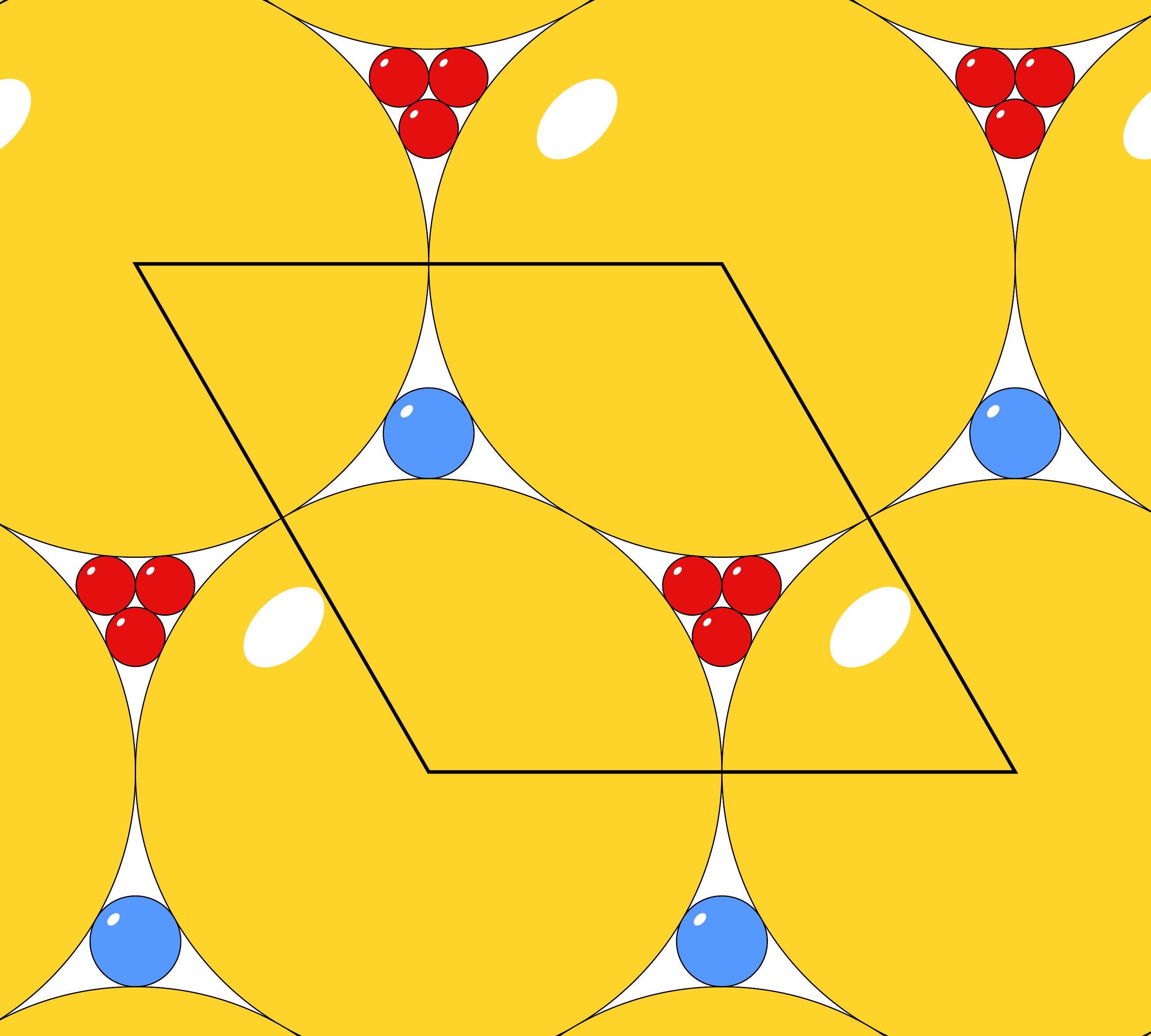} &
  \includegraphics[width=0.3\textwidth]{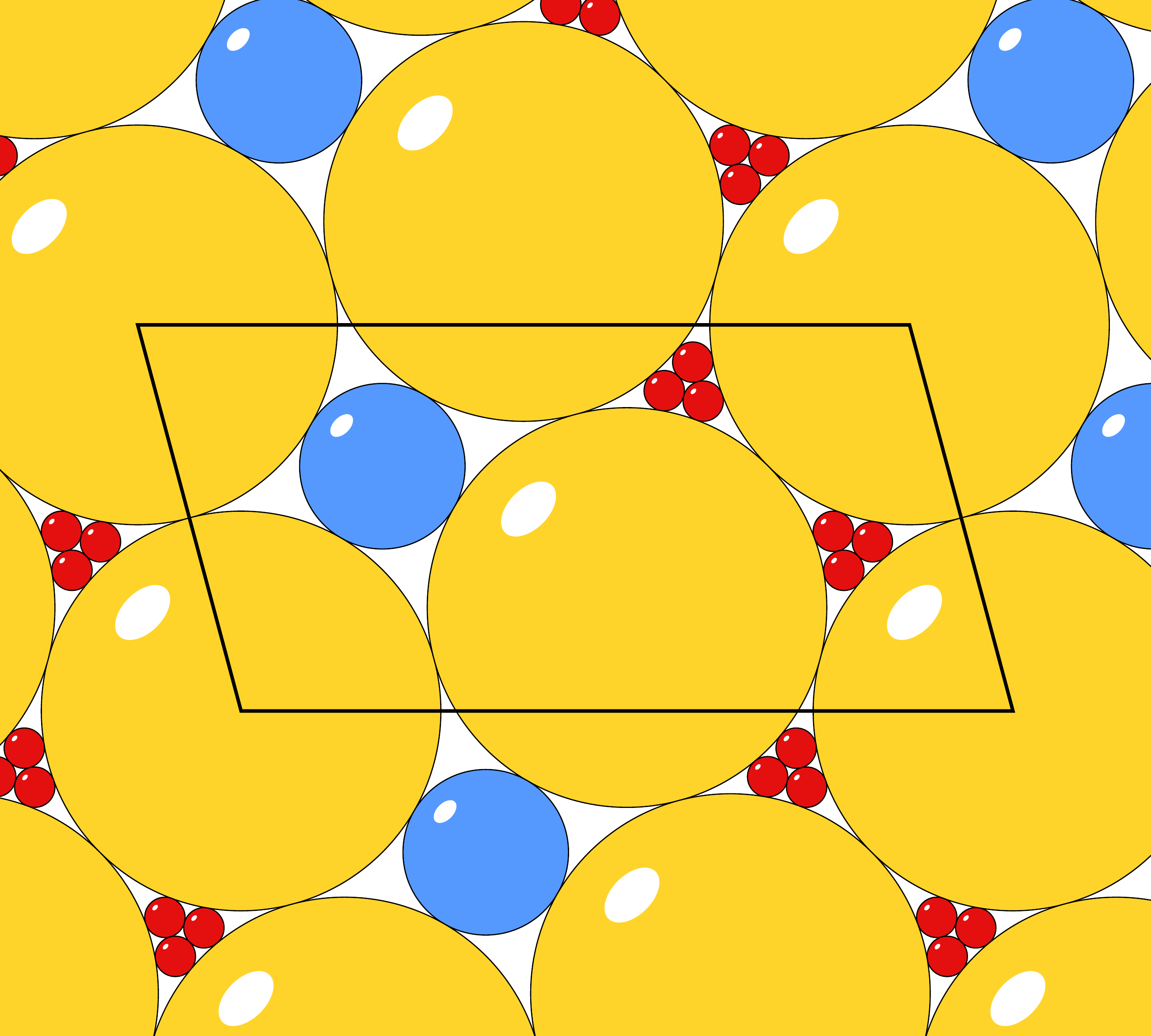} &
  \includegraphics[width=0.3\textwidth]{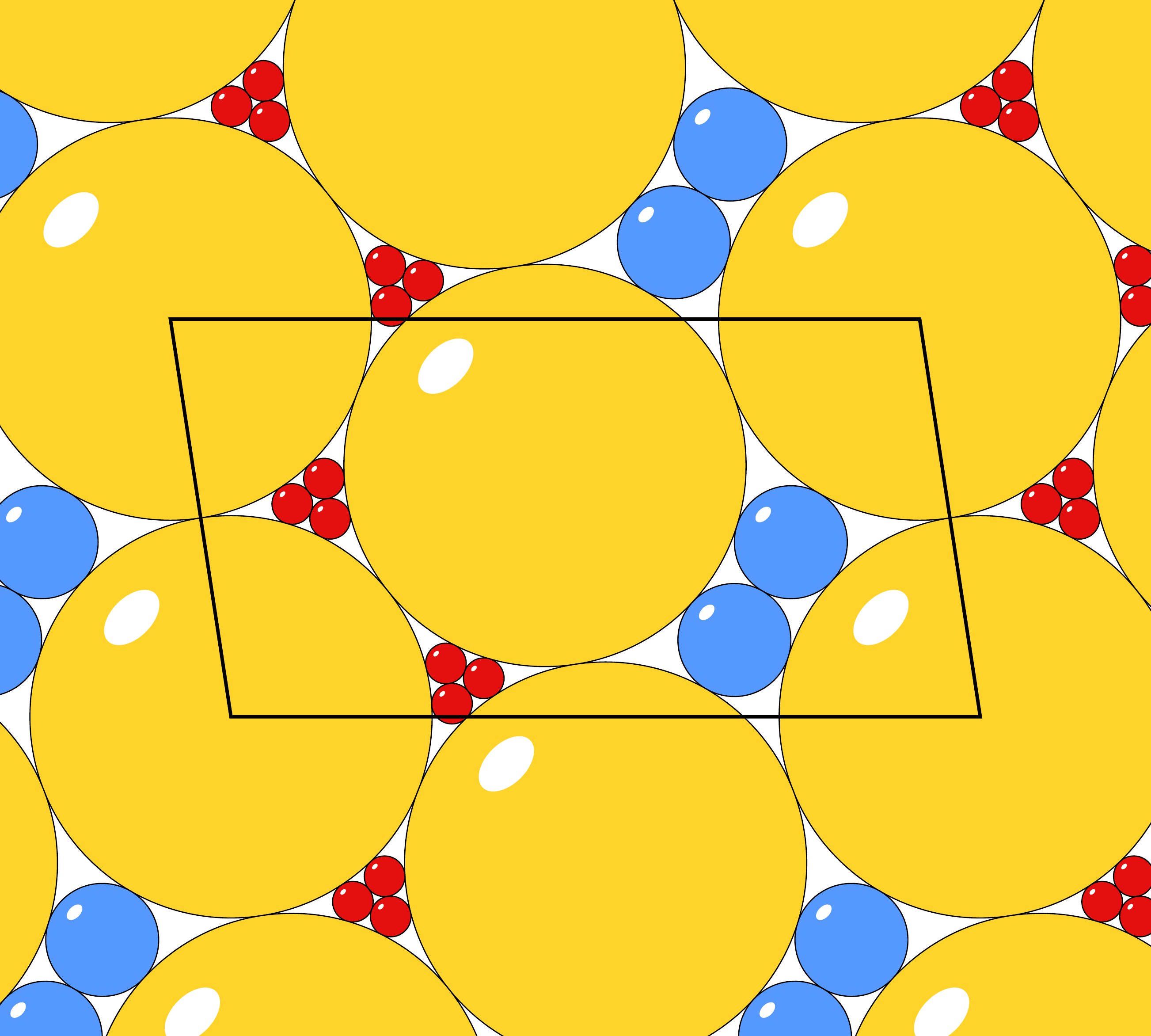}
\end{tabular}
\noindent
\begin{tabular}{lll}
  13 (S)\hfill 11ss / 111rr & 14 (L)\hfill 11ss / 11r1r & 15 (H)\hfill 11ss / 11rrr\\
  \includegraphics[width=0.3\textwidth]{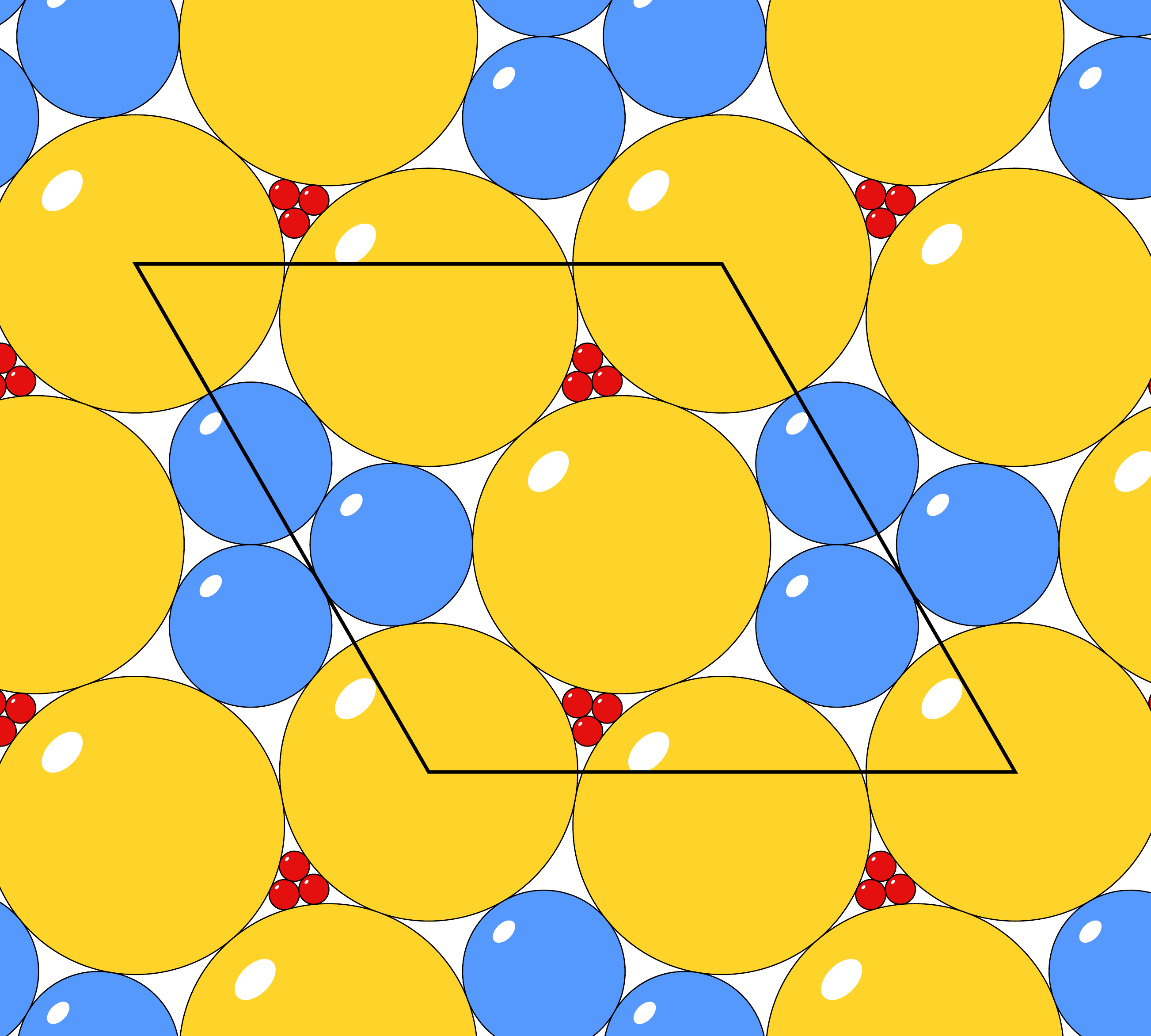} &
  \includegraphics[width=0.3\textwidth]{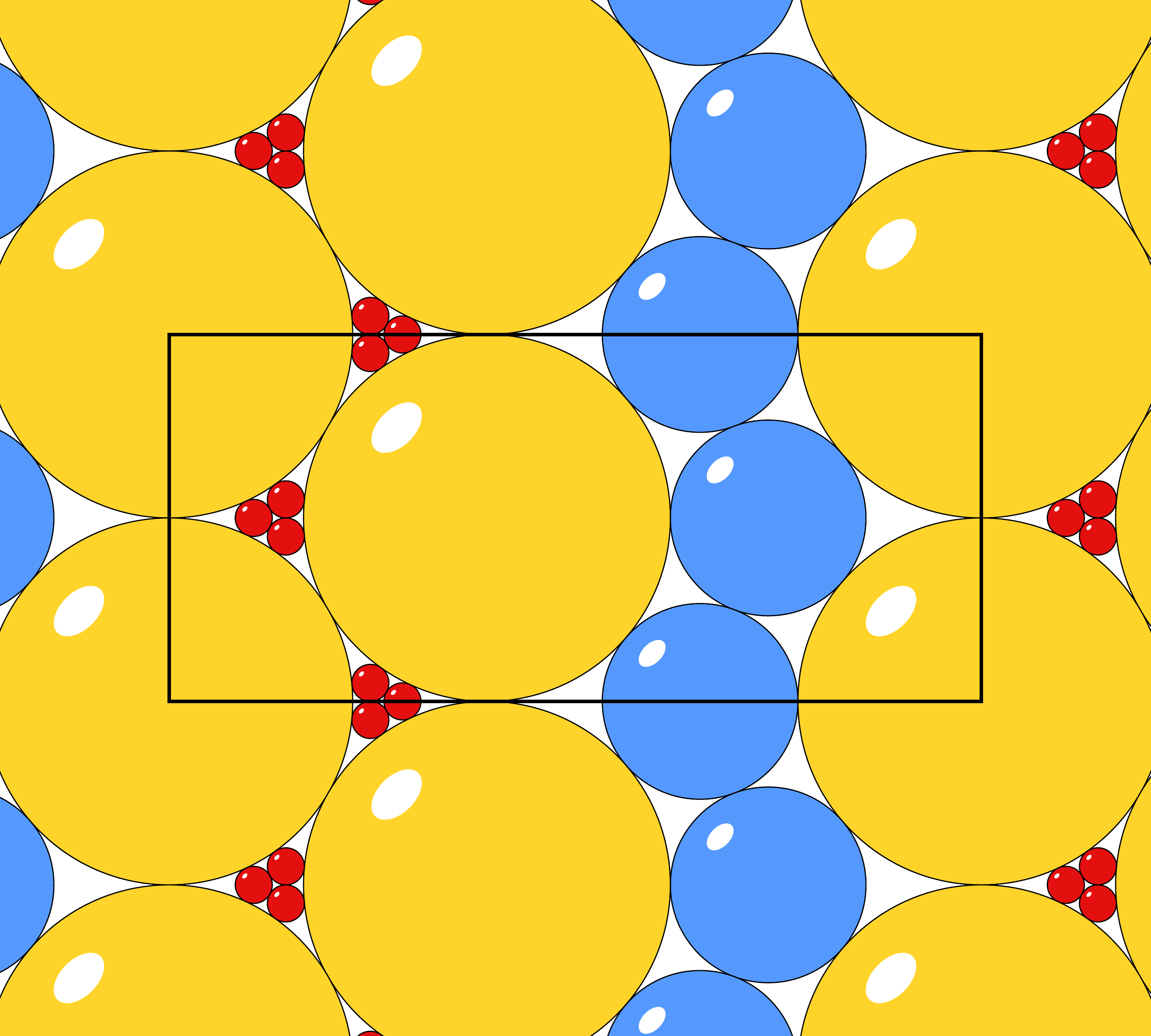} &
  \includegraphics[width=0.3\textwidth]{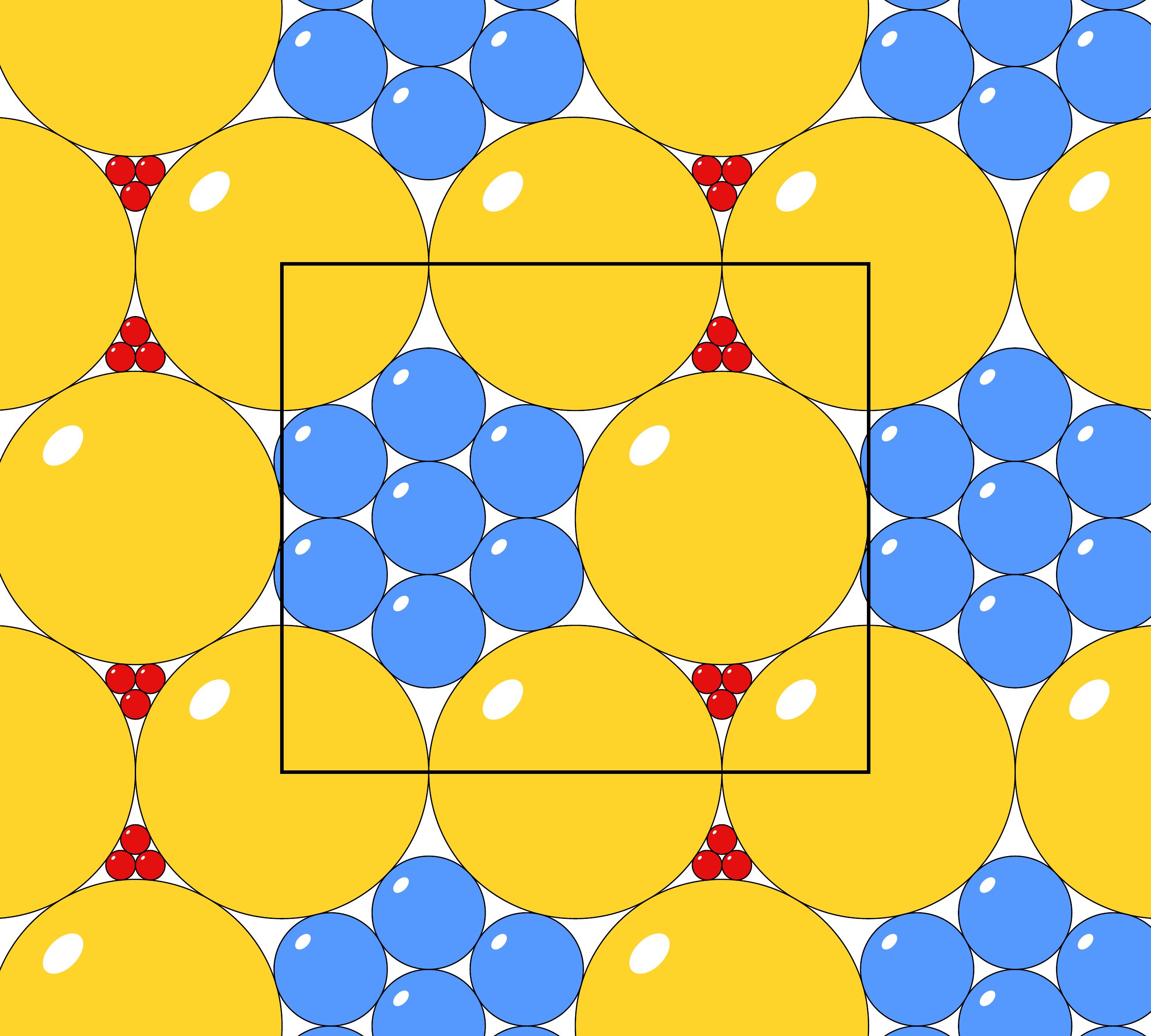}
\end{tabular}
\noindent
\begin{tabular}{lll}
  16 (E)\hfill 11sss / 1111 & 17 (S)\hfill 11sss / 111rr & 18 (L)\hfill 11sss / 11r1r\\
  \includegraphics[width=0.3\textwidth]{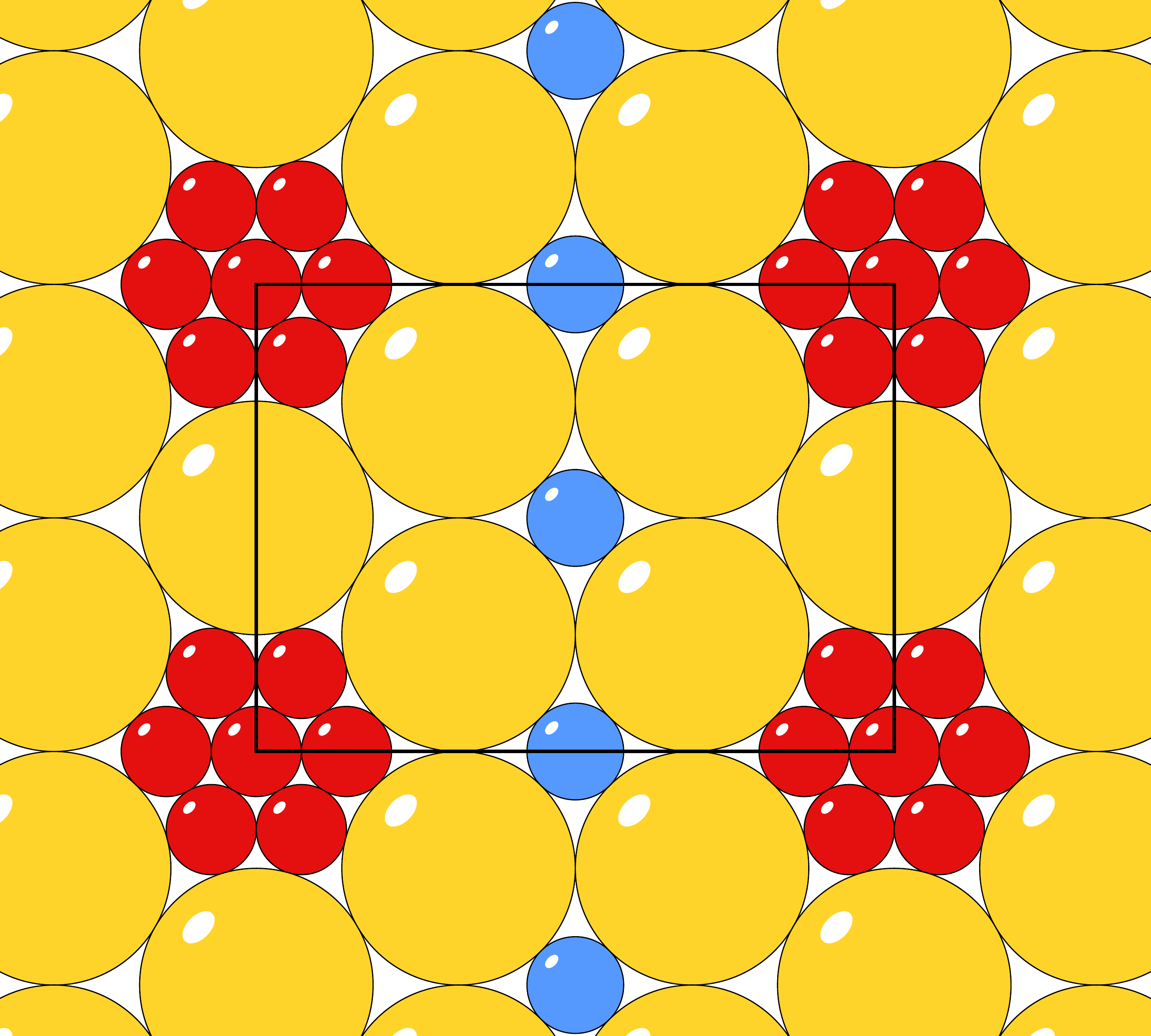} &
  \includegraphics[width=0.3\textwidth]{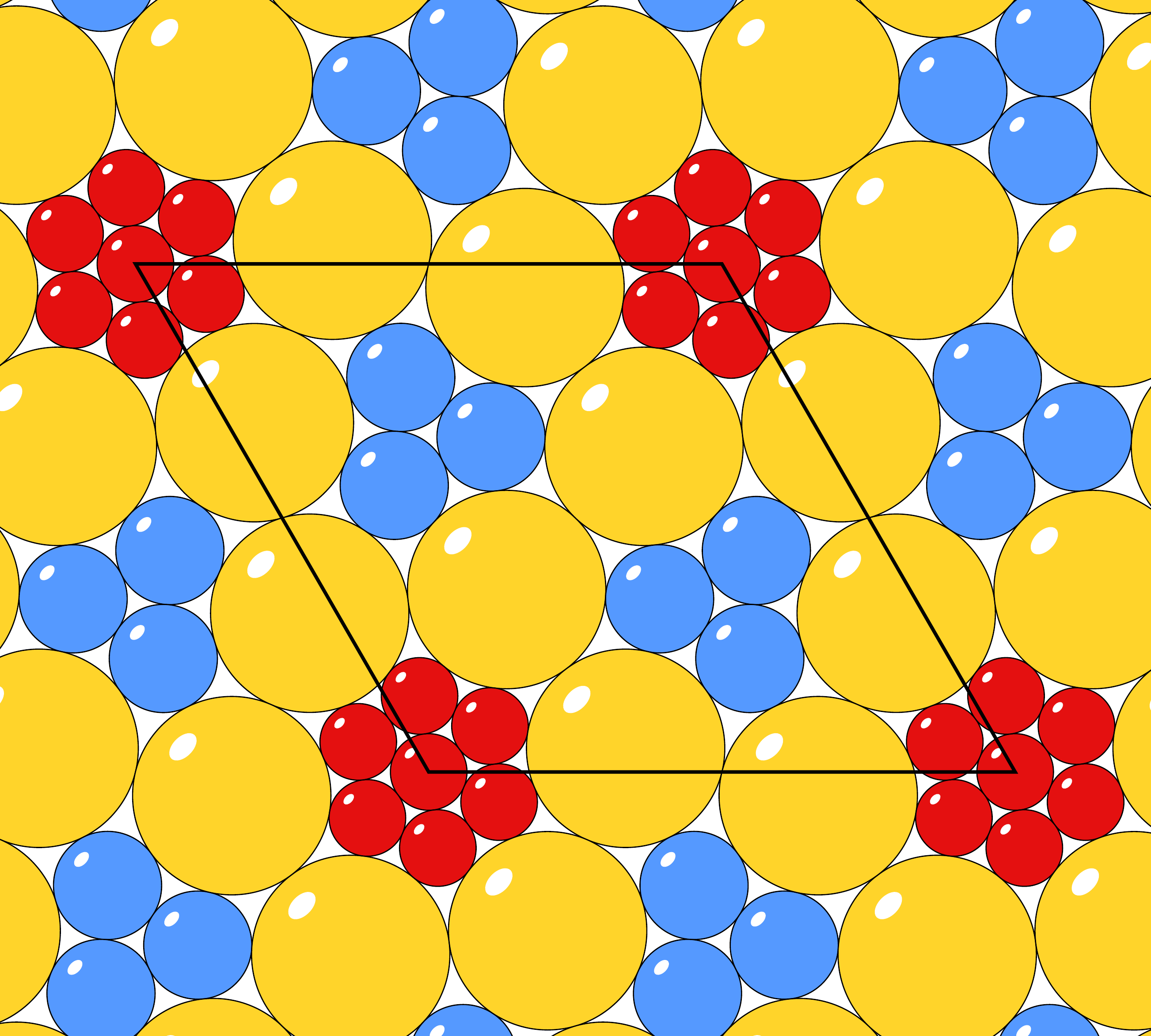} &
  \includegraphics[width=0.3\textwidth]{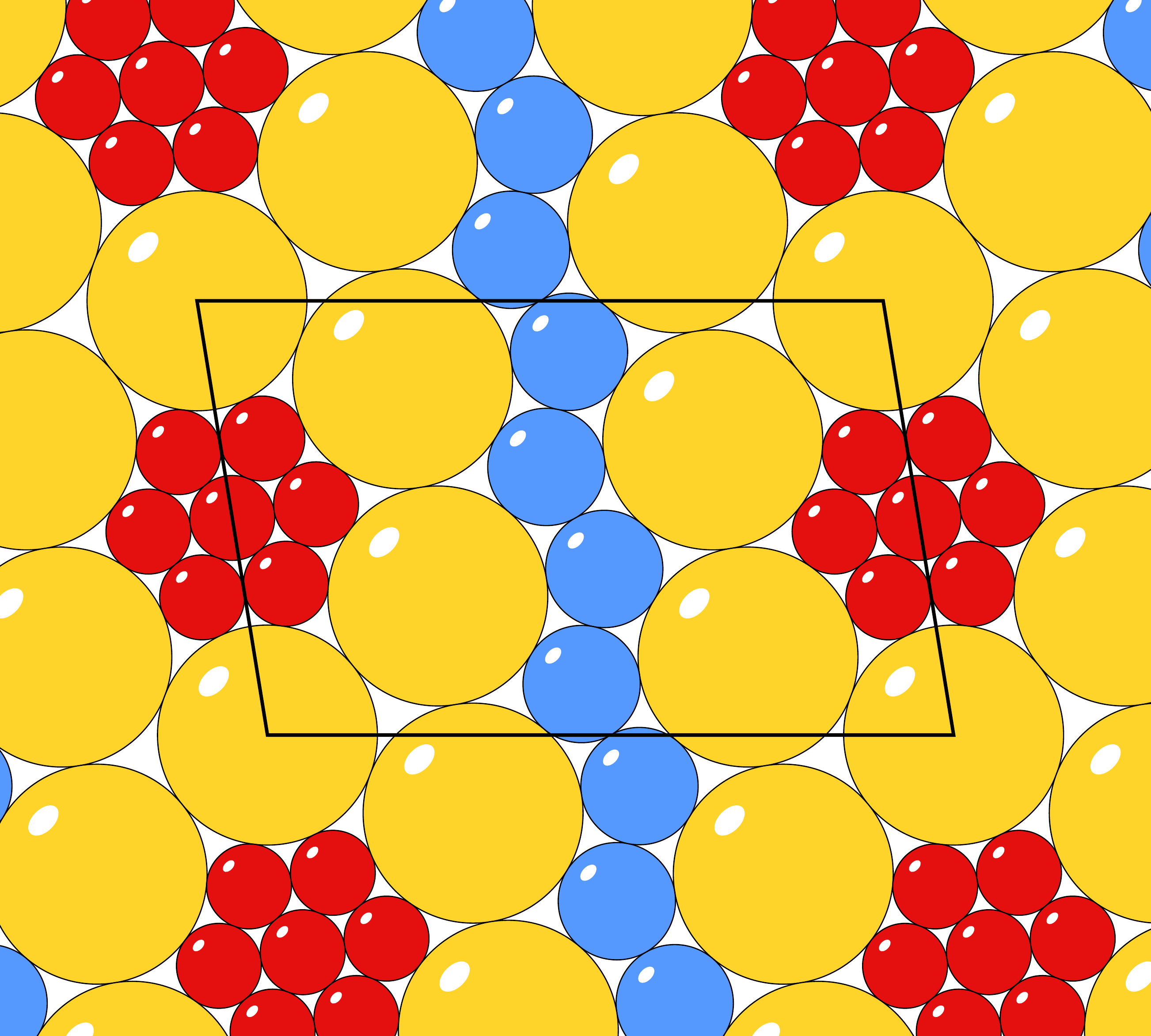}
\end{tabular}
\noindent
\begin{tabular}{lll}
  19 (H)\hfill 1srrs / rrssrss & 20 (L)\hfill 11r / 1r1s1r & 21 (E)\hfill 11r / 1r1s1s\\
  \includegraphics[width=0.3\textwidth]{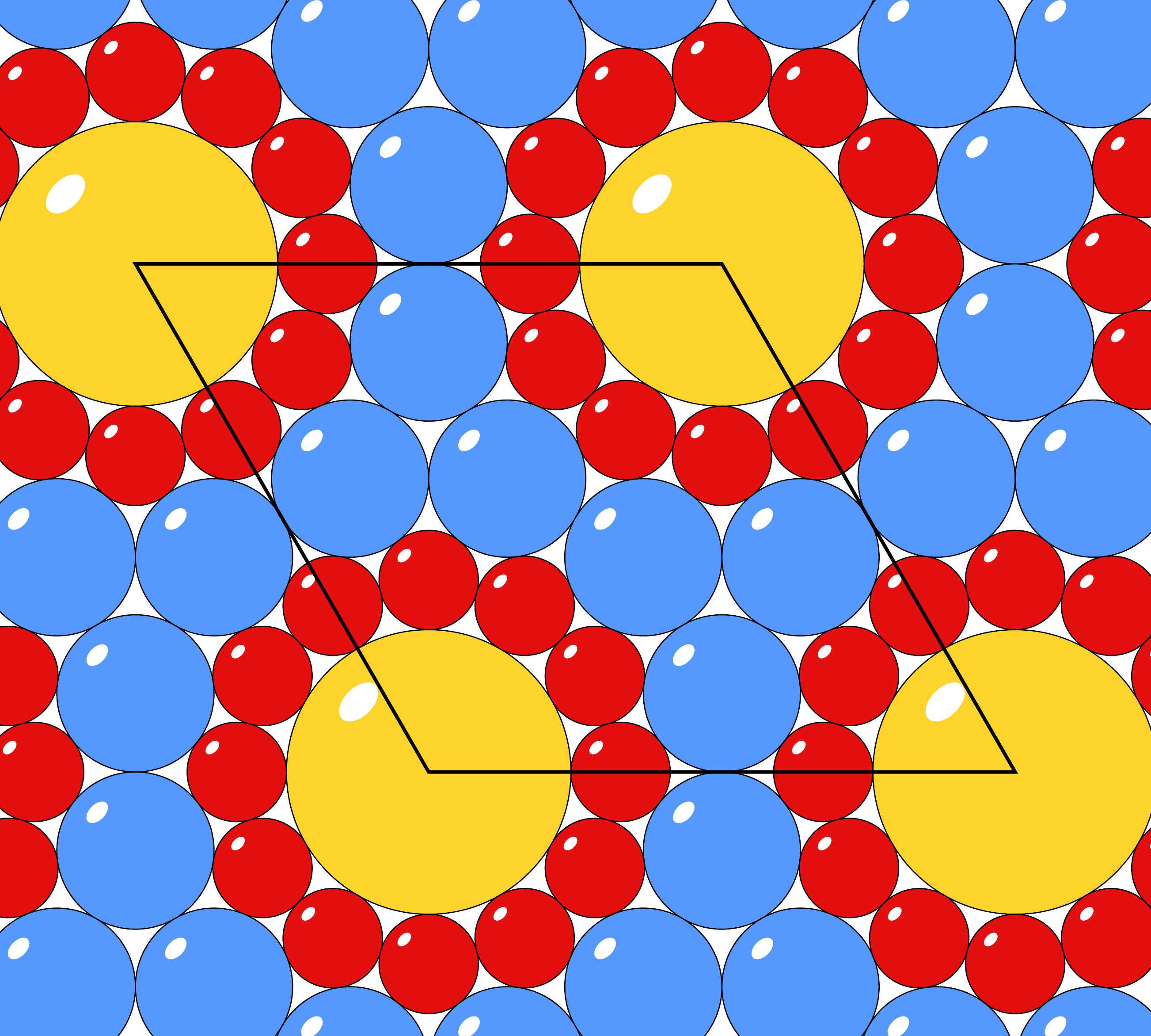} &
  \includegraphics[width=0.3\textwidth]{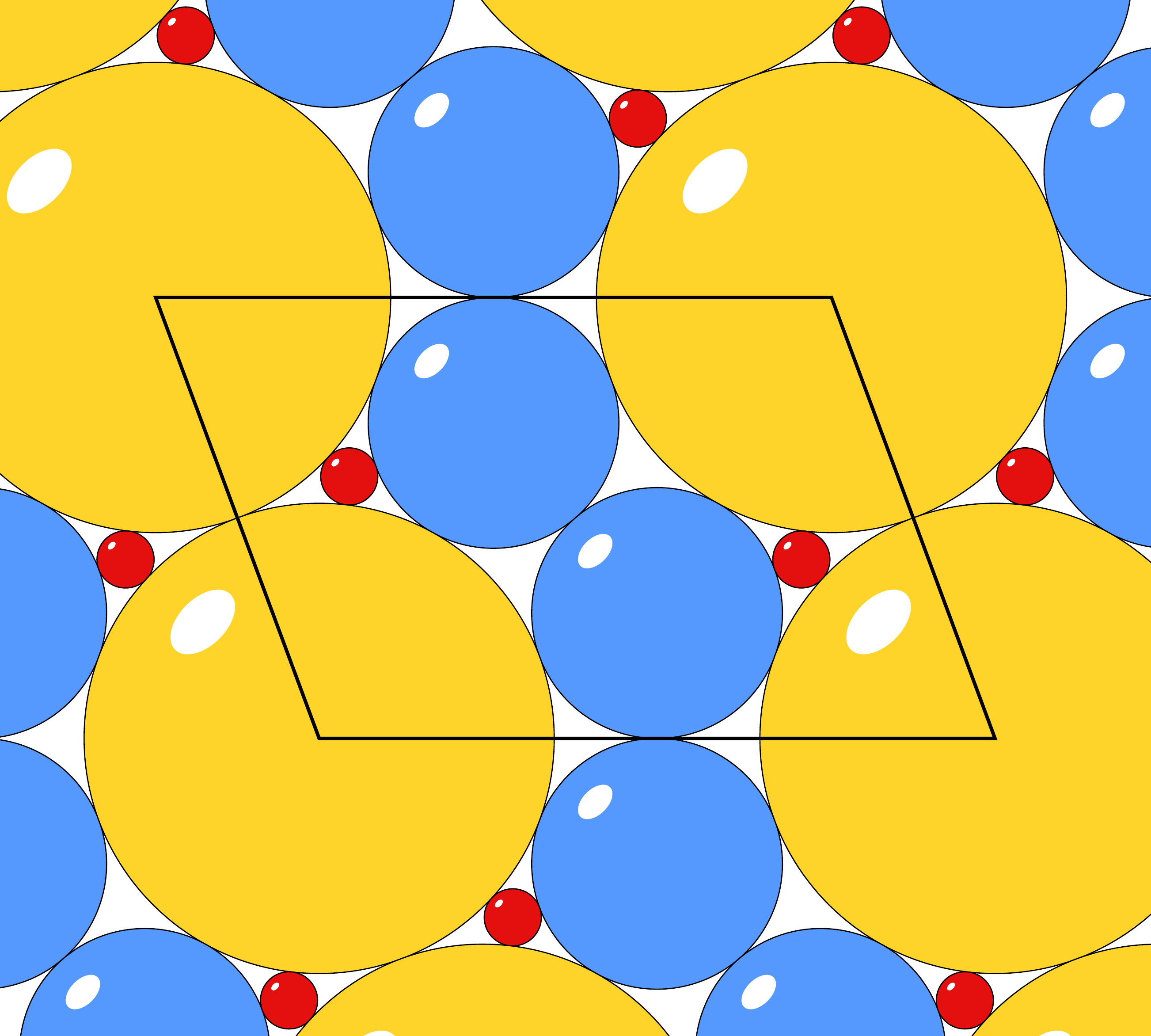} &
  \includegraphics[width=0.3\textwidth]{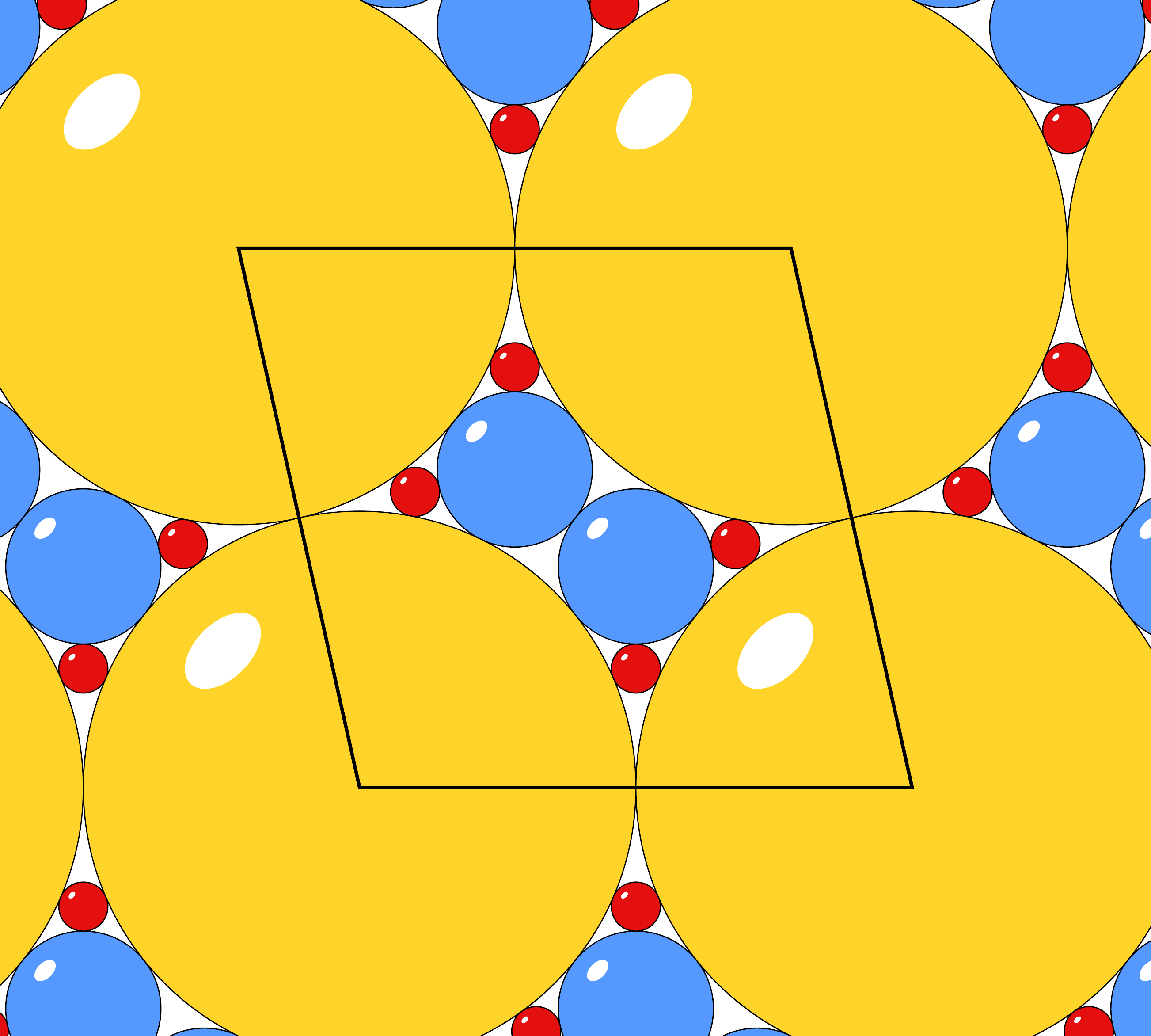}
\end{tabular}
\noindent
\begin{tabular}{lll}
  22 (L)\hfill 11r / 1r1s1s1s & 23 (H)\hfill 11r / 1rr1s & 24 (S)\hfill 11r / 1rr1s1s\\
  \includegraphics[width=0.3\textwidth]{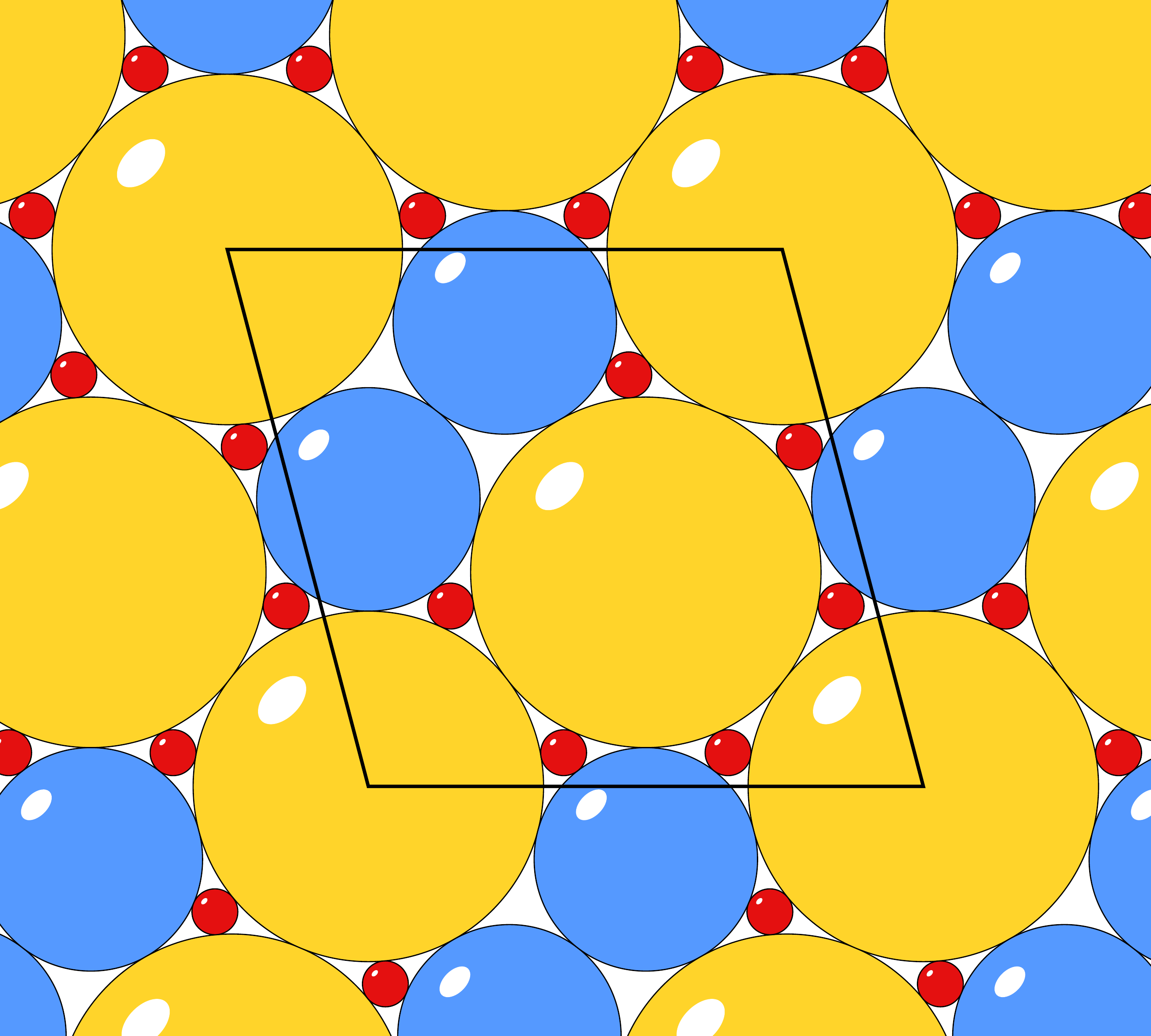} &
  \includegraphics[width=0.3\textwidth]{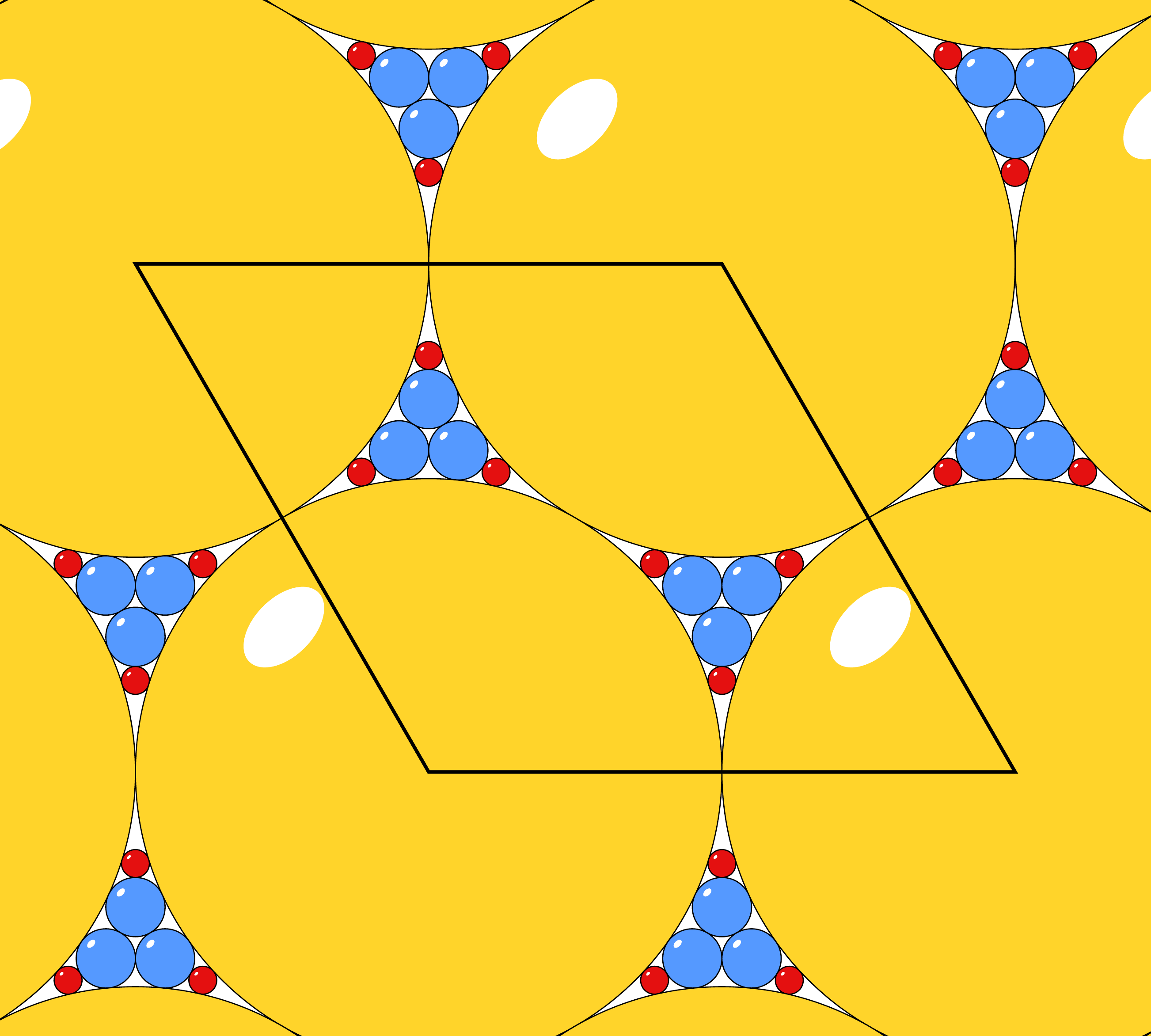} &
  \includegraphics[width=0.3\textwidth]{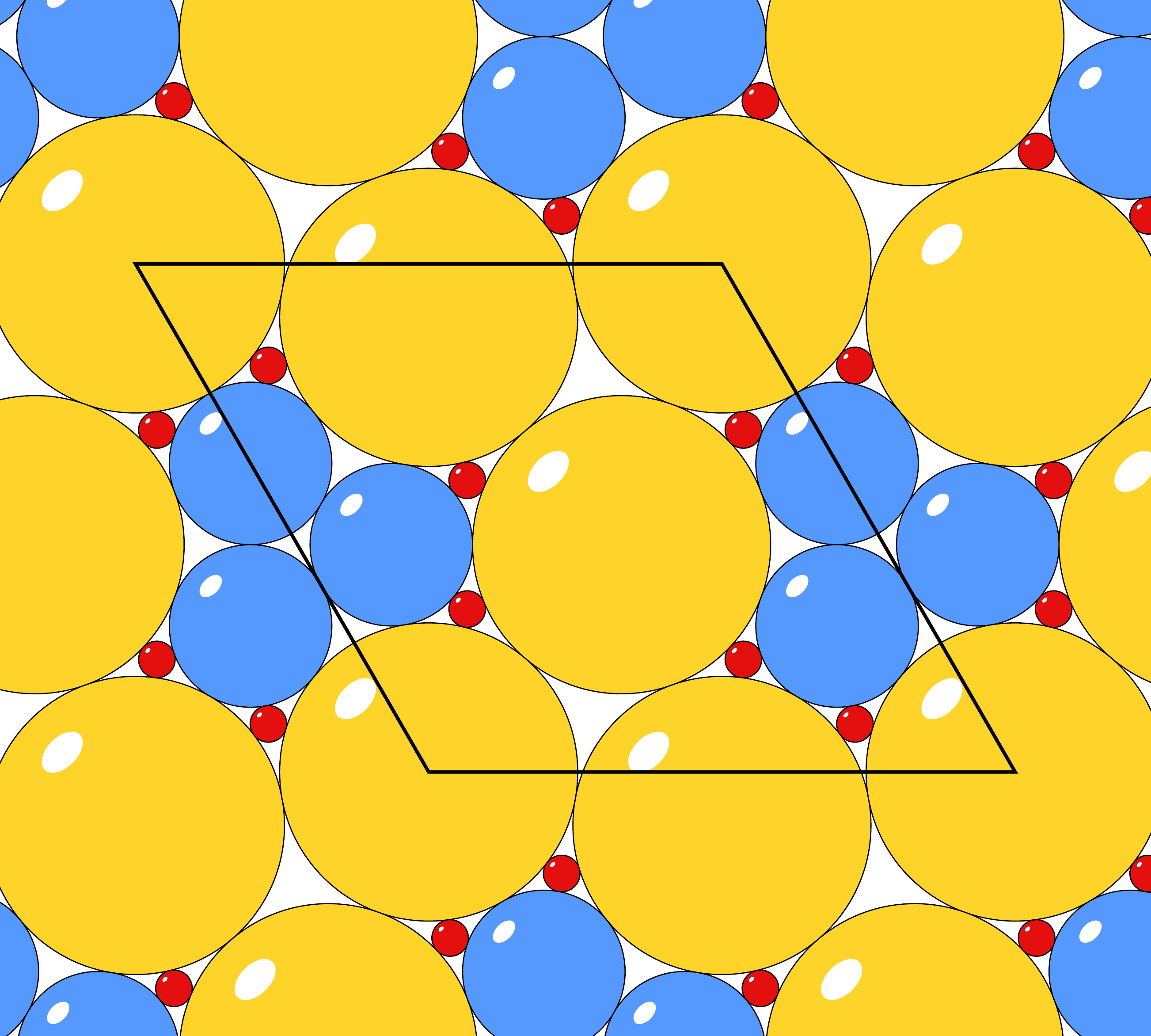}
\end{tabular}
\noindent
\begin{tabular}{lll}
  25 (H)\hfill 11r / 1rrr1s & 26 (H)\hfill 11r / 1s1s1s & 27 (E)\hfill 11r / 1s1s1s1s\\
  \includegraphics[width=0.3\textwidth]{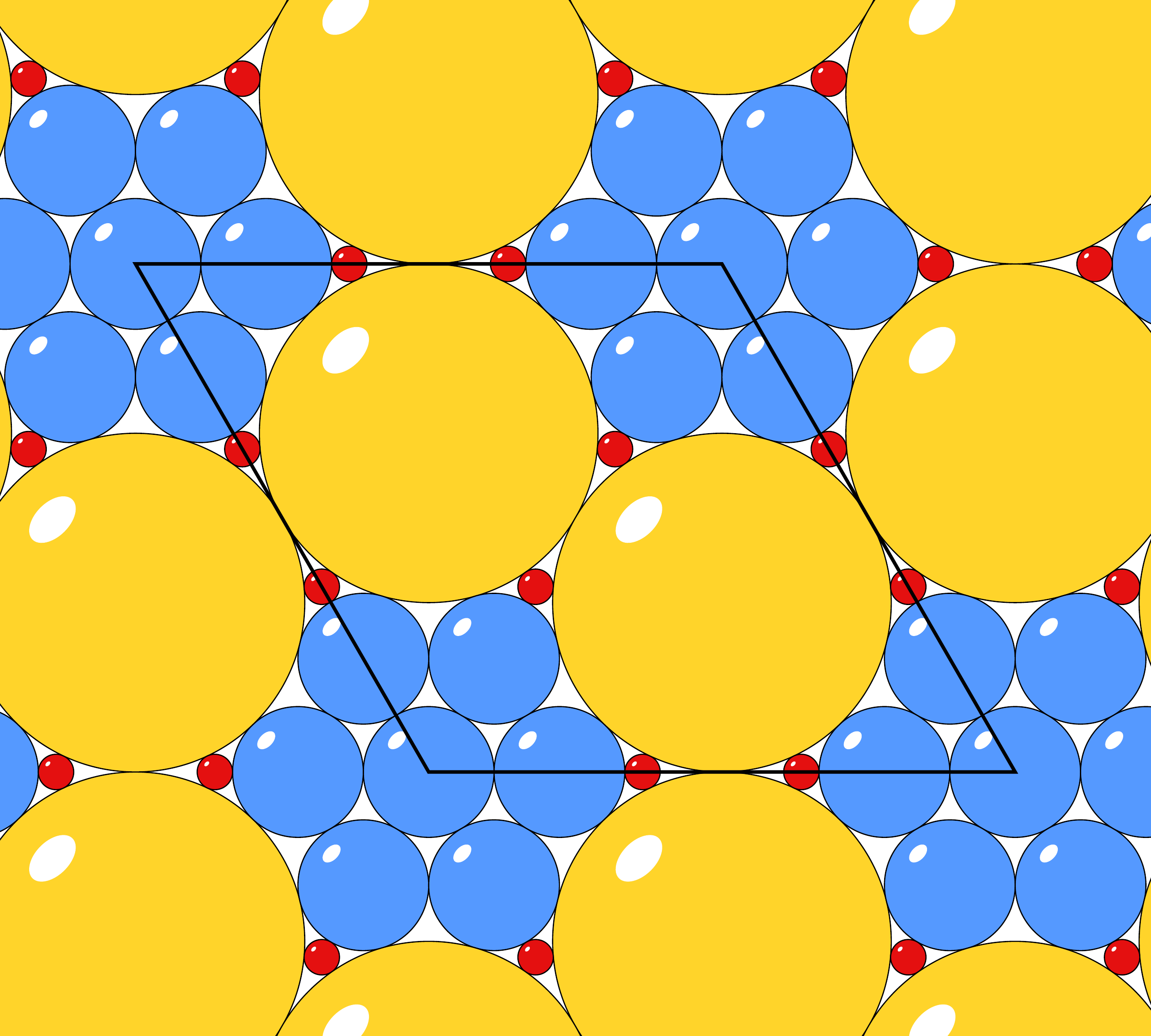} &
  \includegraphics[width=0.3\textwidth]{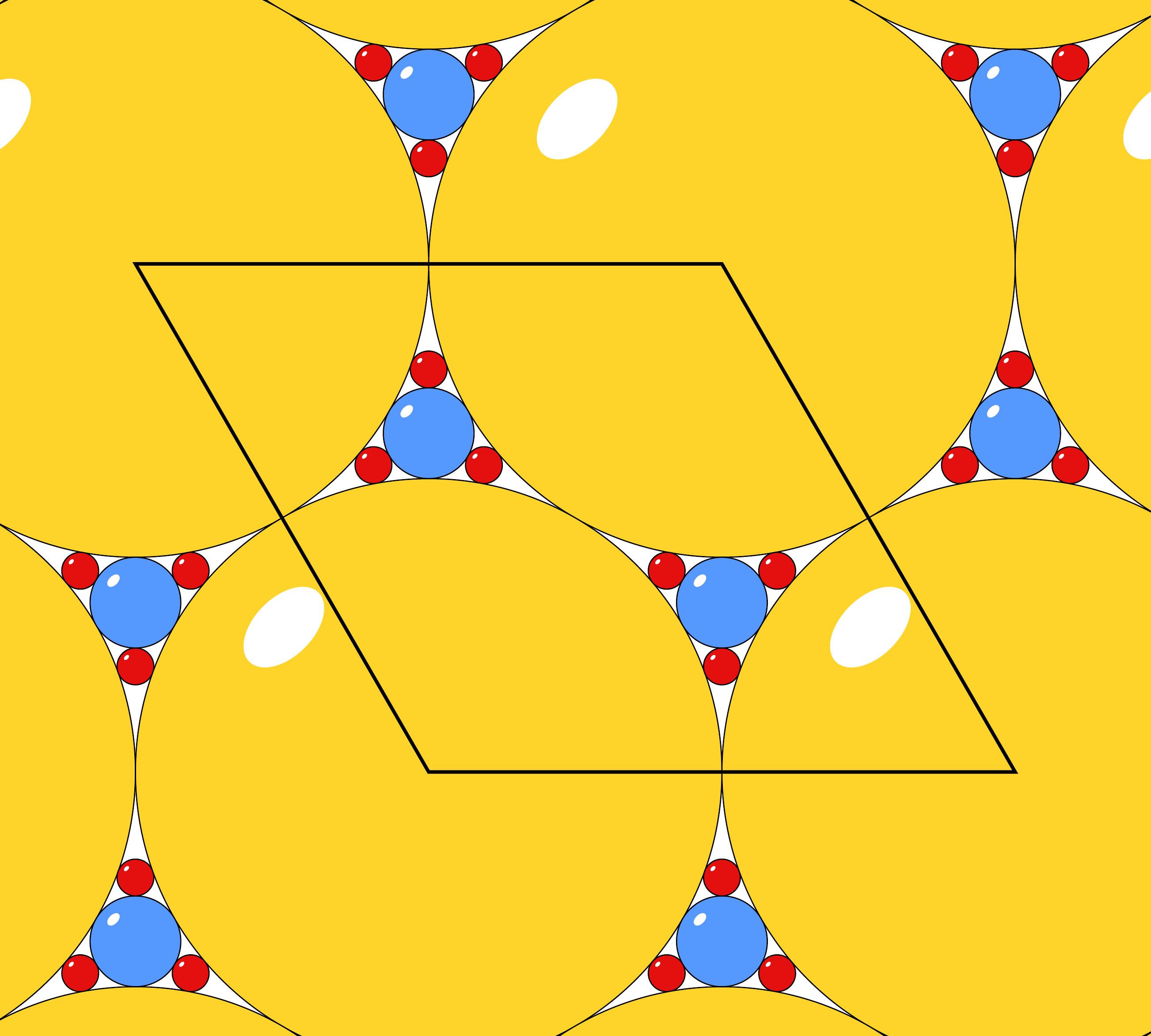} &
  \includegraphics[width=0.3\textwidth]{packing_11r_1s1s1s1s.pdf}
\end{tabular}
\noindent
\begin{tabular}{lll}
  28 (L)\hfill 1rr / 1111srs & 29 (S)\hfill 1rr / 111srrs & 30 (E)\hfill 1rr / 111srs\\
  \includegraphics[width=0.3\textwidth]{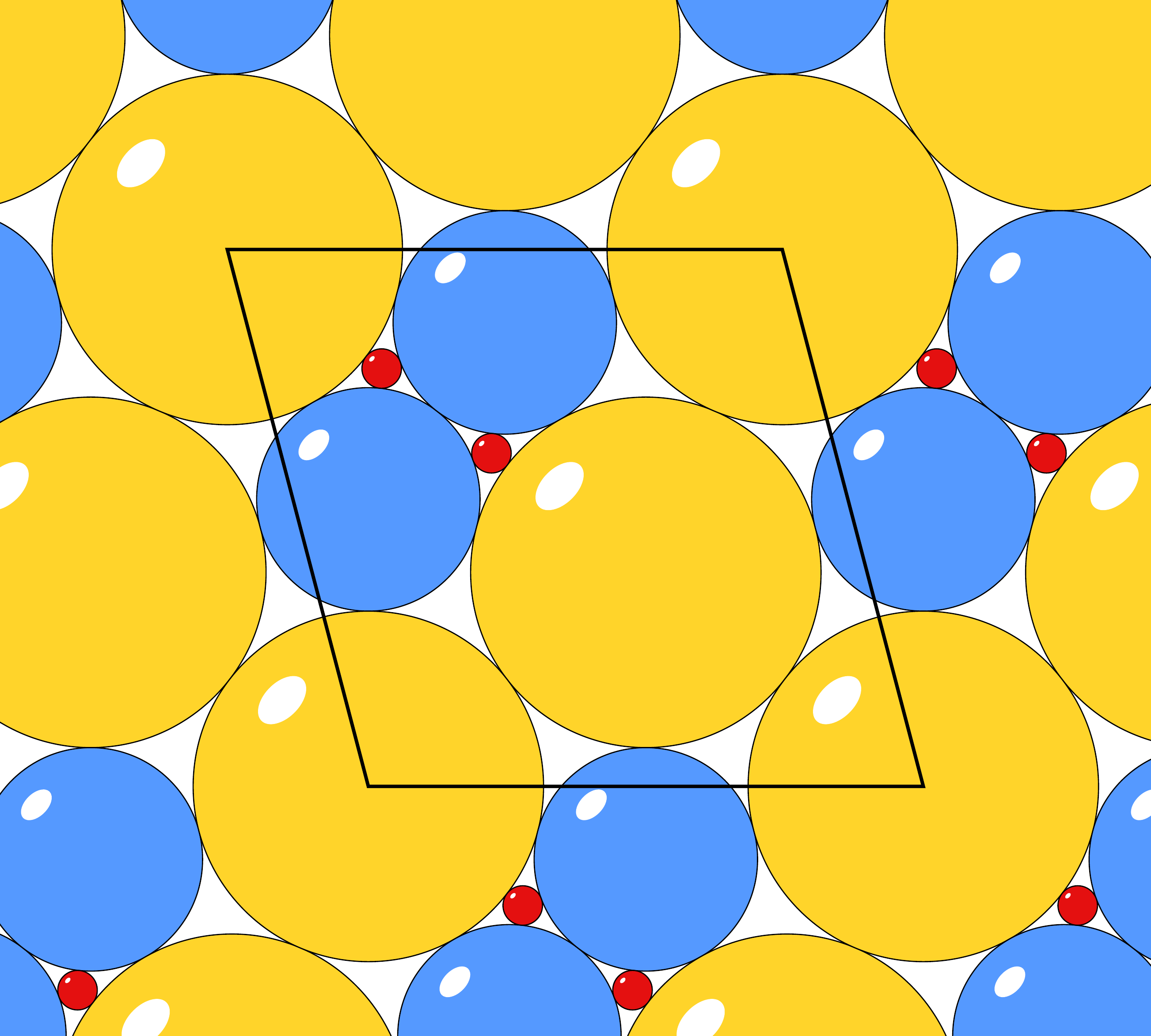} &
  \includegraphics[width=0.3\textwidth]{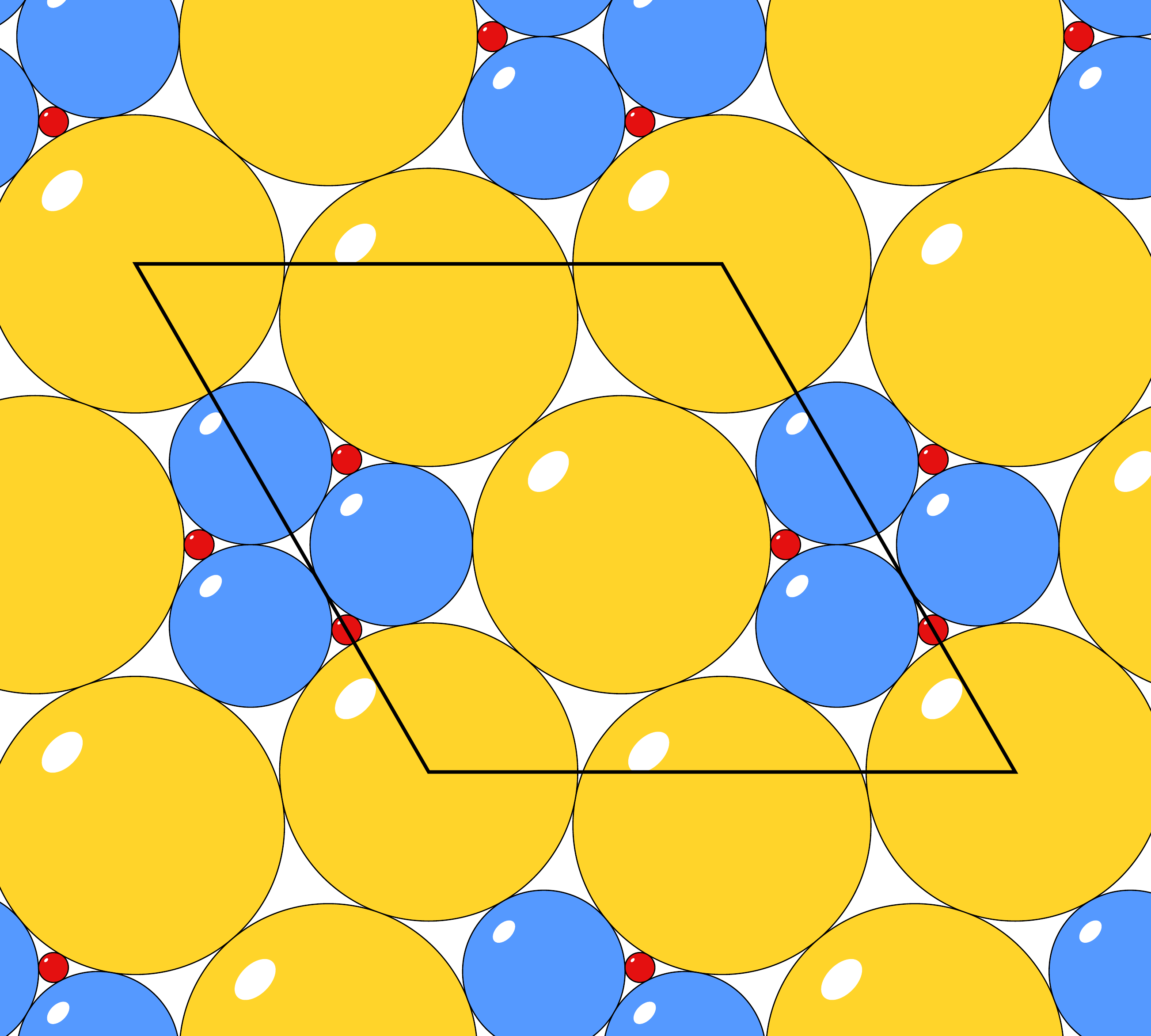} &
  \includegraphics[width=0.3\textwidth]{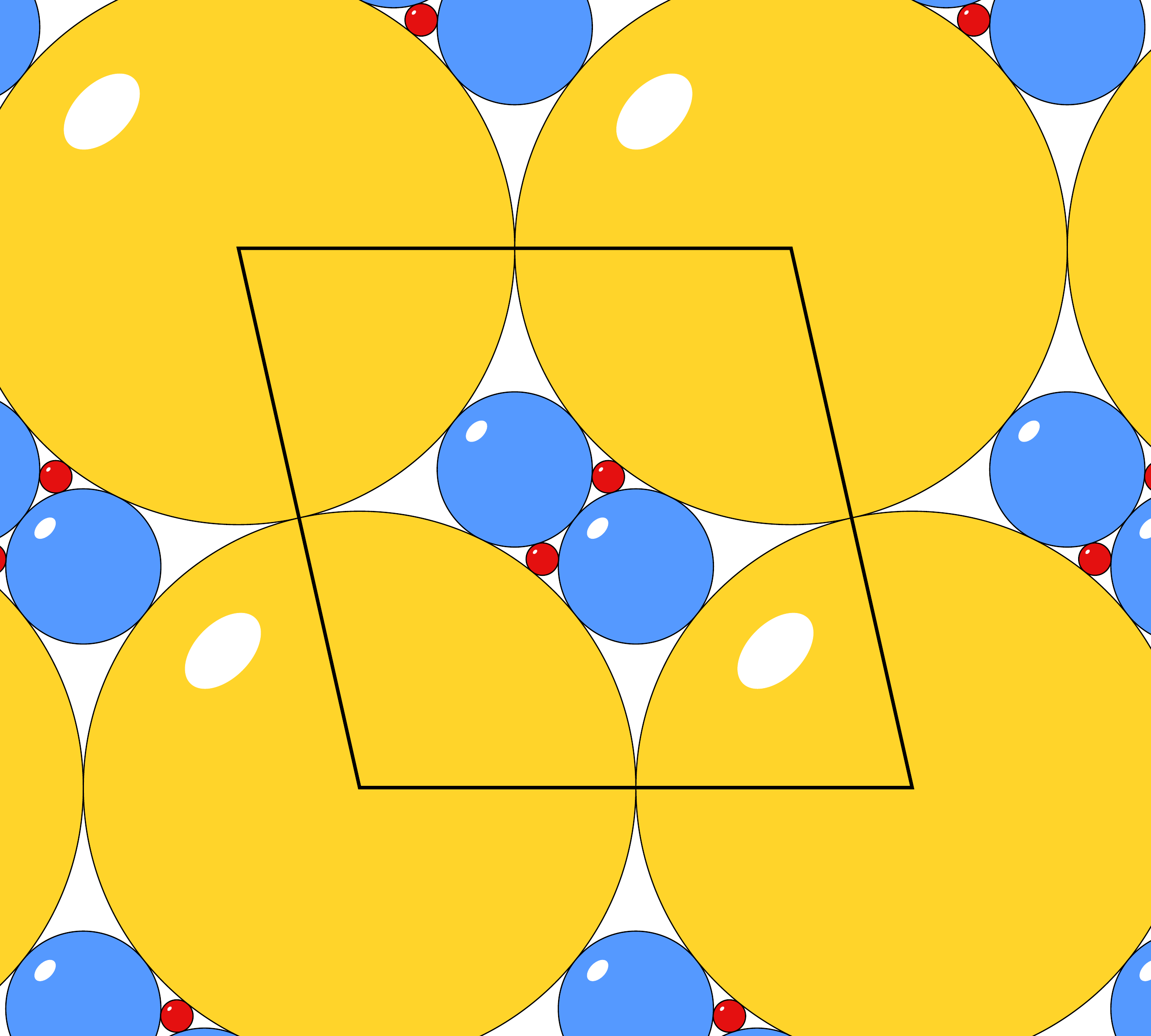}
\end{tabular}
\noindent
\begin{tabular}{lll}
  31 (H)\hfill 1rr / 11srrrs & 32 (H)\hfill 1rr / 11srrs & 33 (L)\hfill 1rr / 11srs1srs\\
  \includegraphics[width=0.3\textwidth]{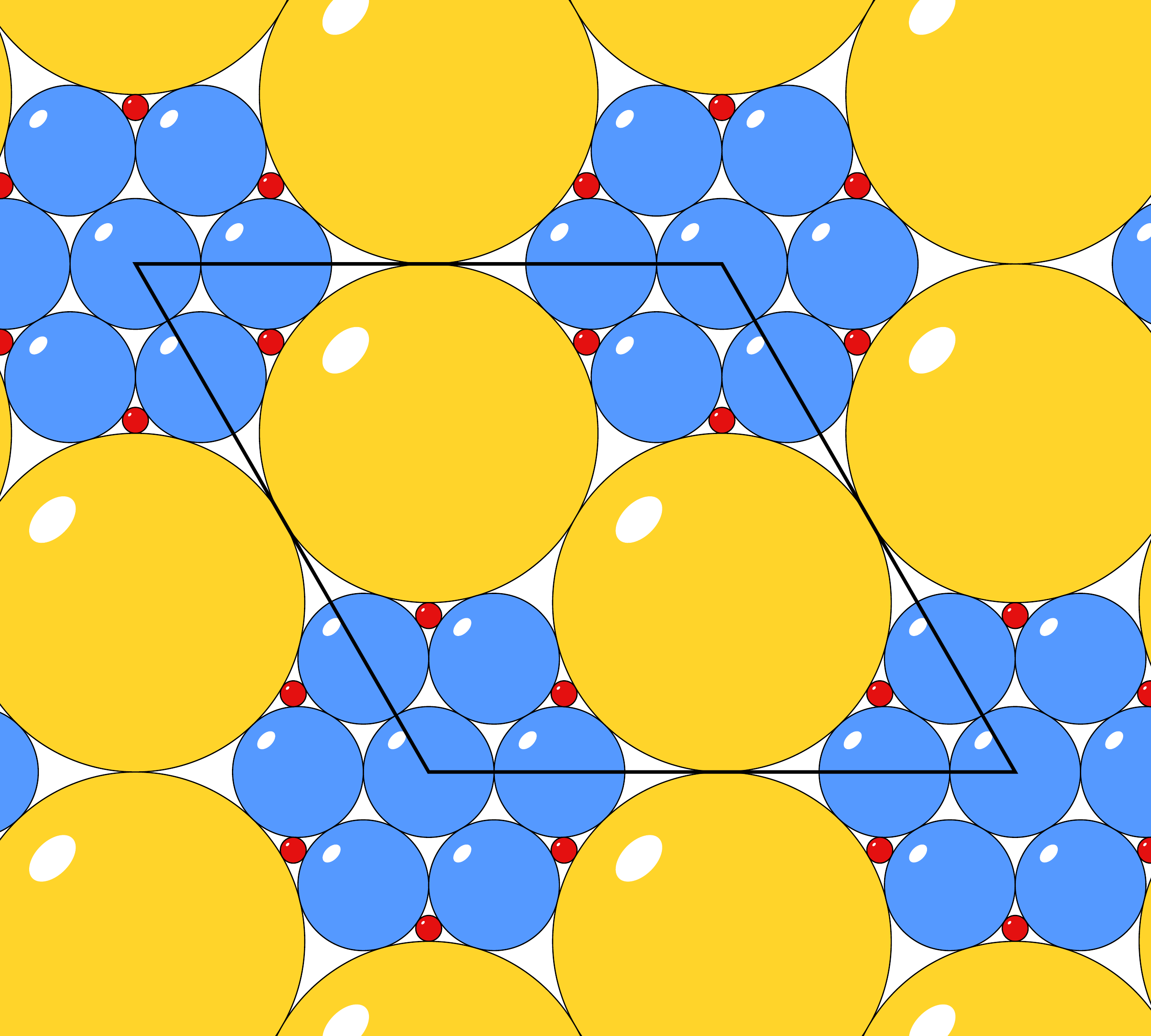} &
  \includegraphics[width=0.3\textwidth]{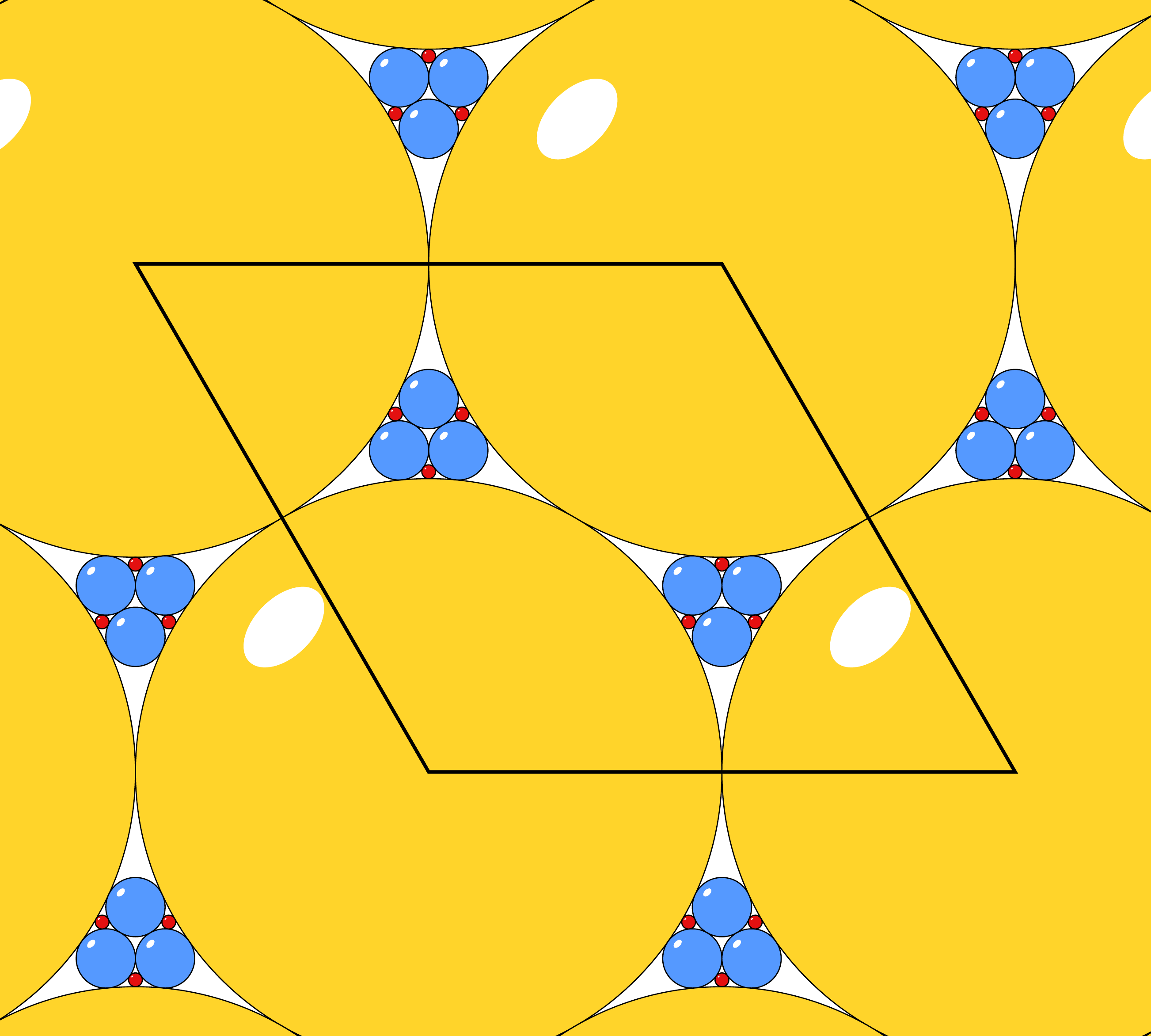} &
  \includegraphics[width=0.3\textwidth]{packing_1rr_11srs1srs.pdf}
\end{tabular}
\noindent
\begin{tabular}{lll}
  34 (H)\hfill 1rr / 1srrs1srs & 35 (H)\hfill 1rssr / 11ss & 36 (H)\hfill 1rsss / 11ss\\
  \includegraphics[width=0.3\textwidth]{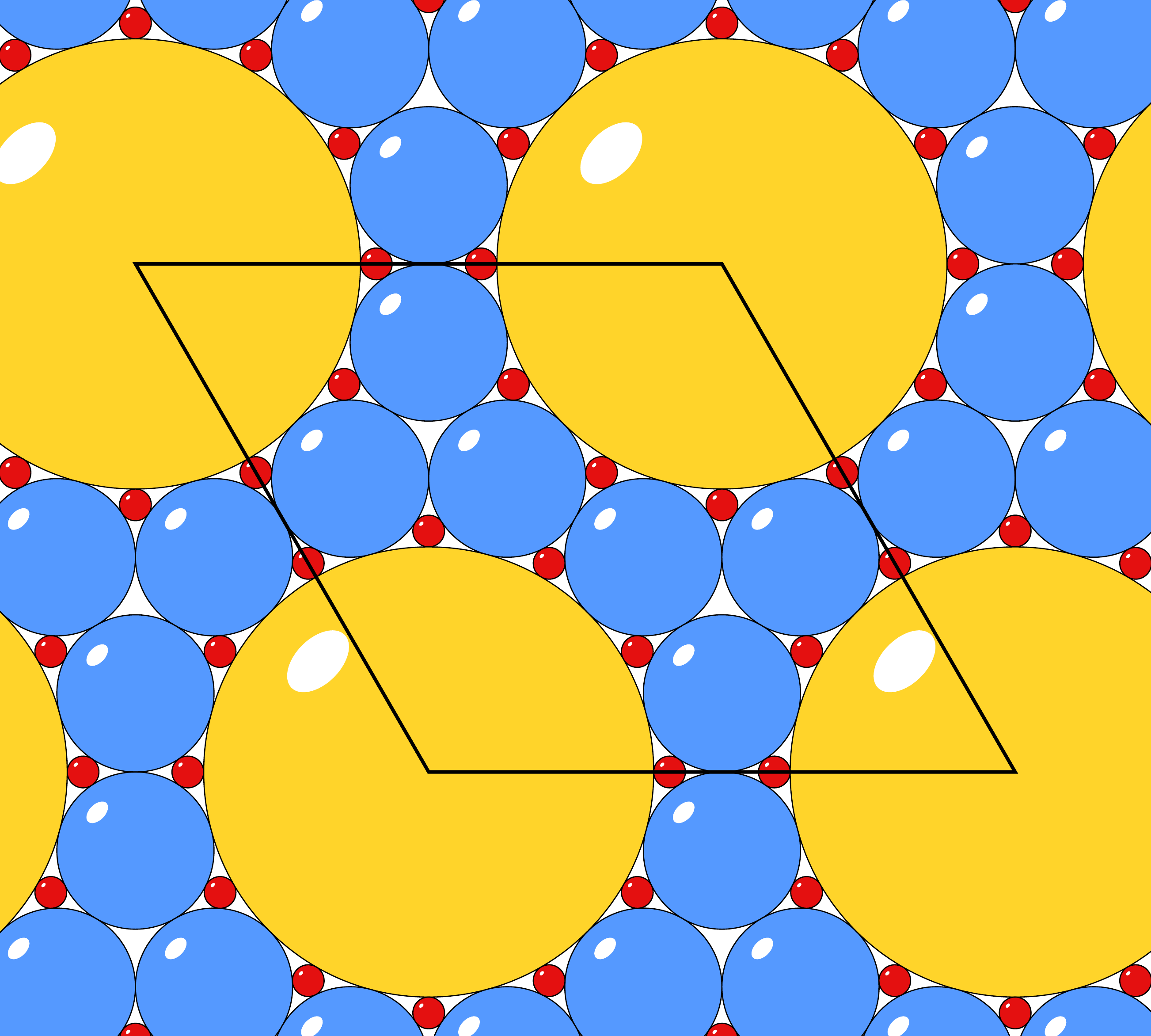} &
  \includegraphics[width=0.3\textwidth]{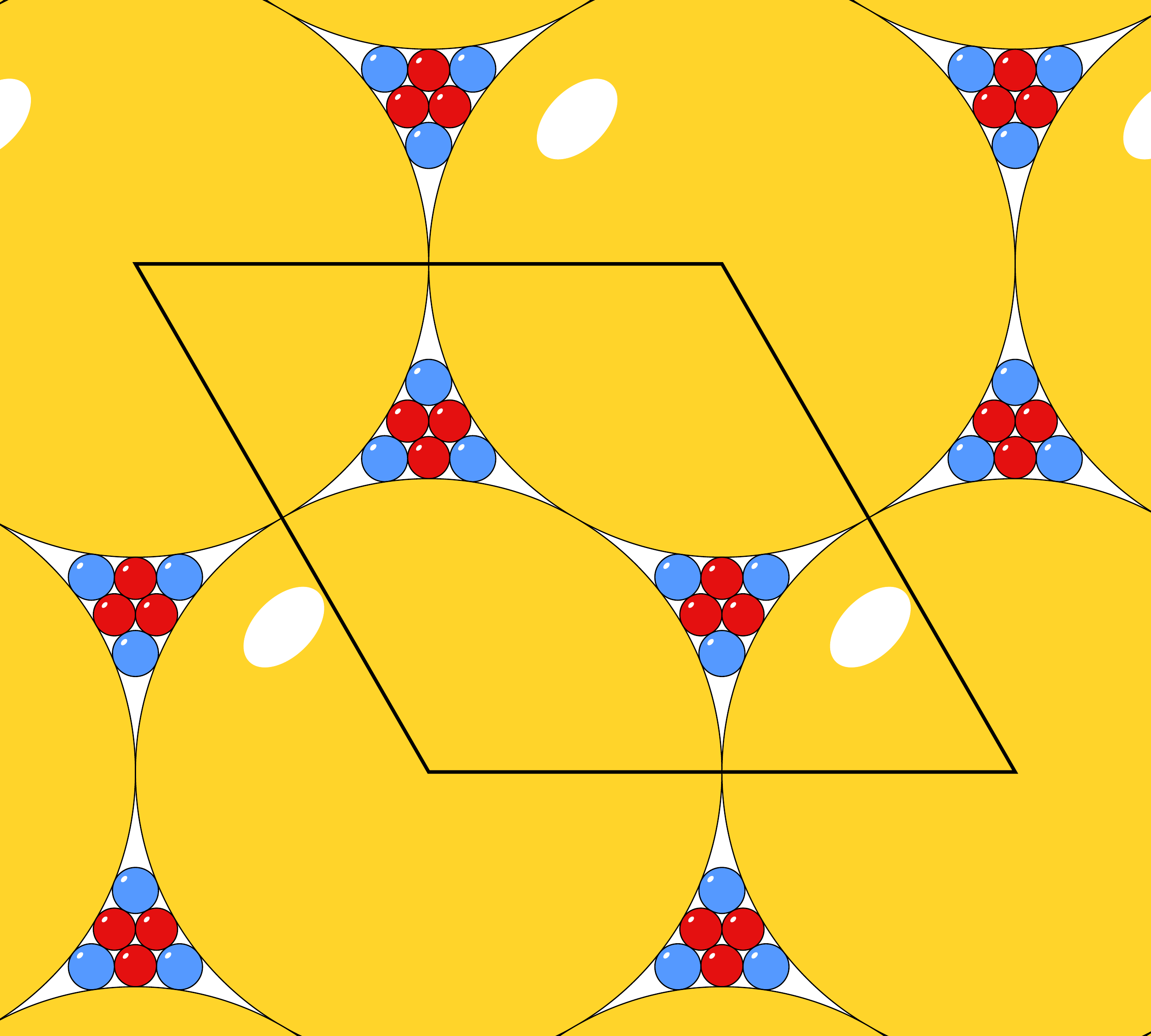} &
  \includegraphics[width=0.3\textwidth]{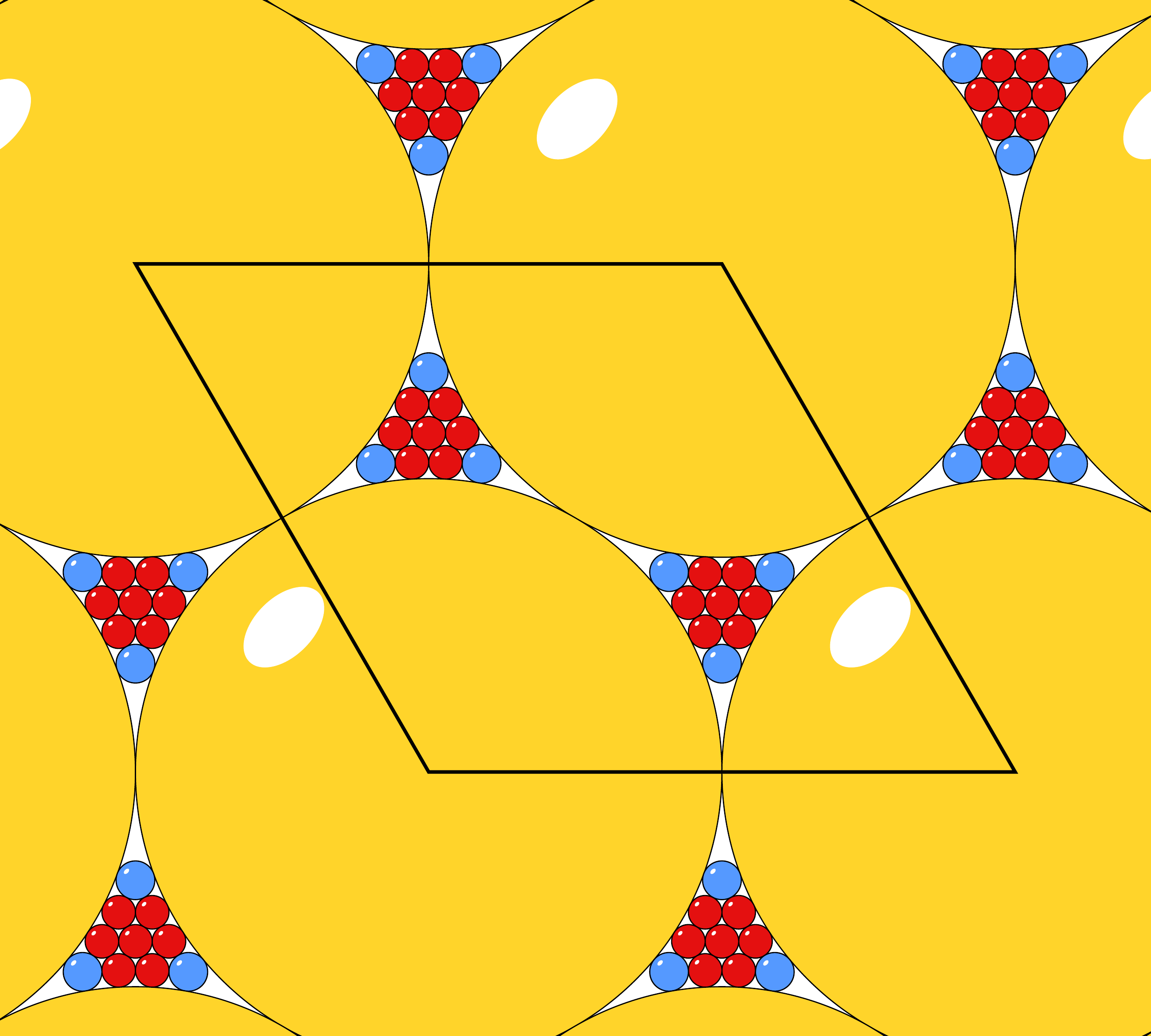}
\end{tabular}
\noindent
\begin{tabular}{lll}
  37 (S)\hfill rrr / 111rsr & 38 (H)\hfill rrr / 11rsr & 39 (H)\hfill rrr / 11rsrsr\\
  \includegraphics[width=0.3\textwidth]{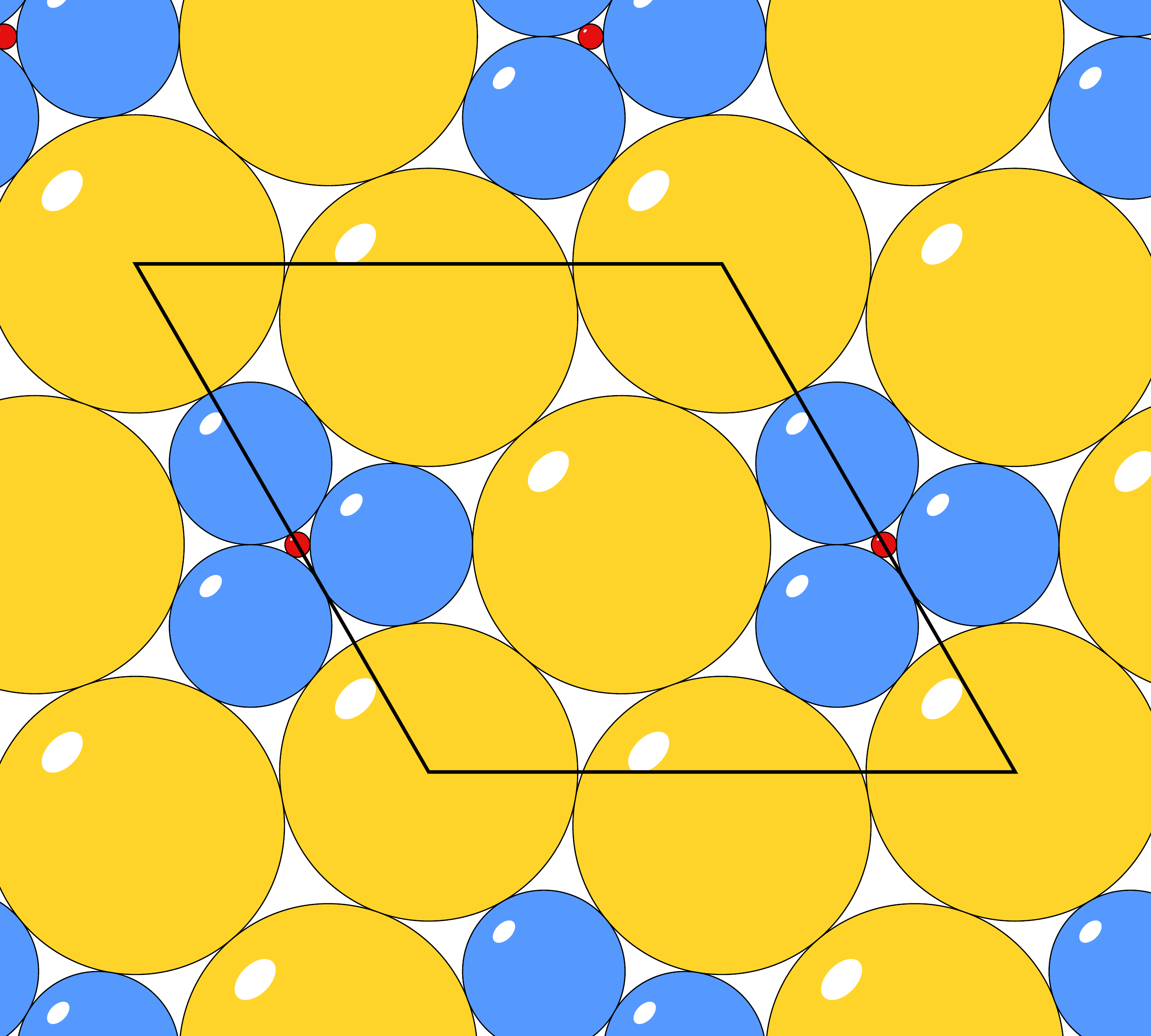} &
  \includegraphics[width=0.3\textwidth]{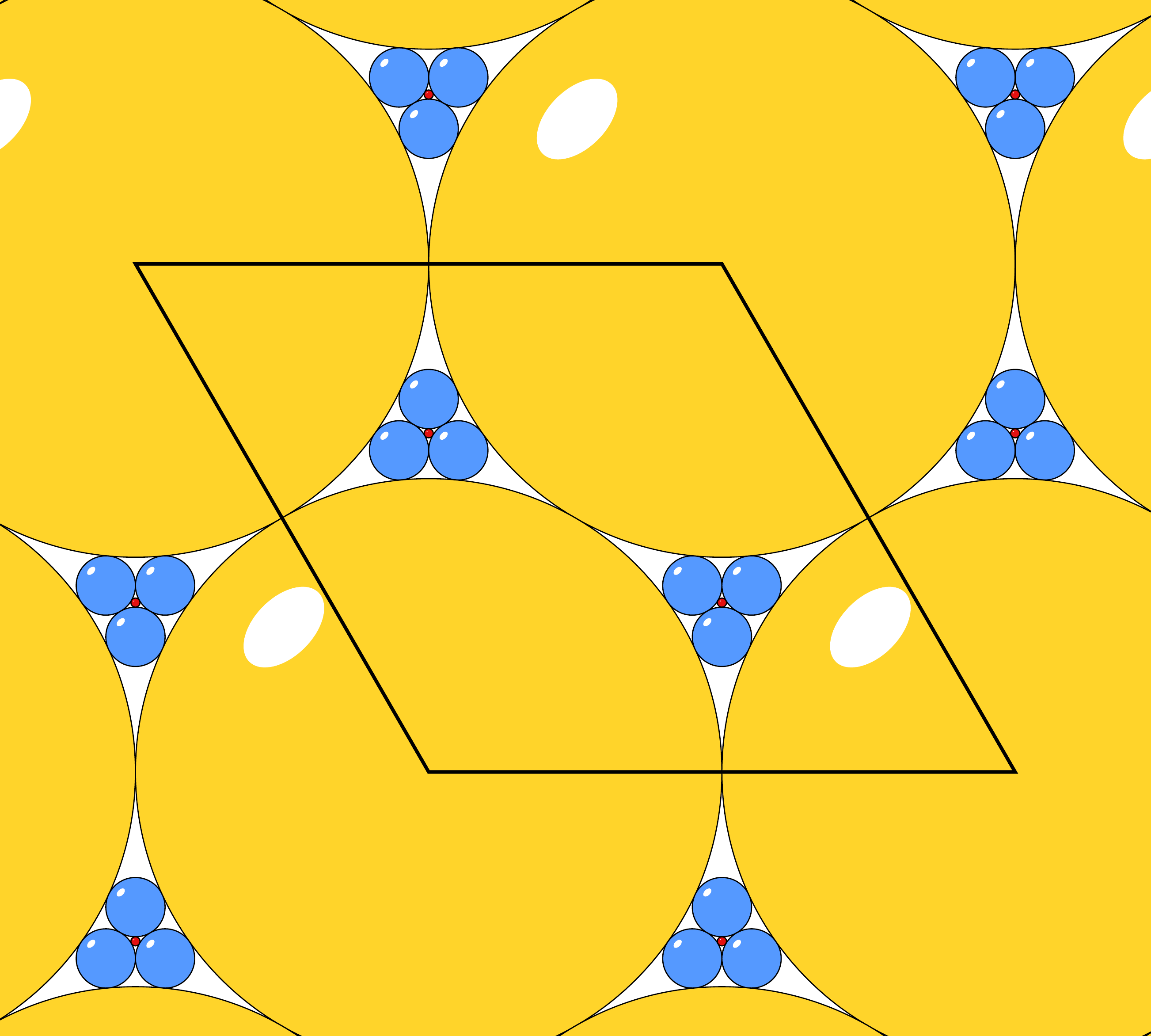} &
  \includegraphics[width=0.3\textwidth]{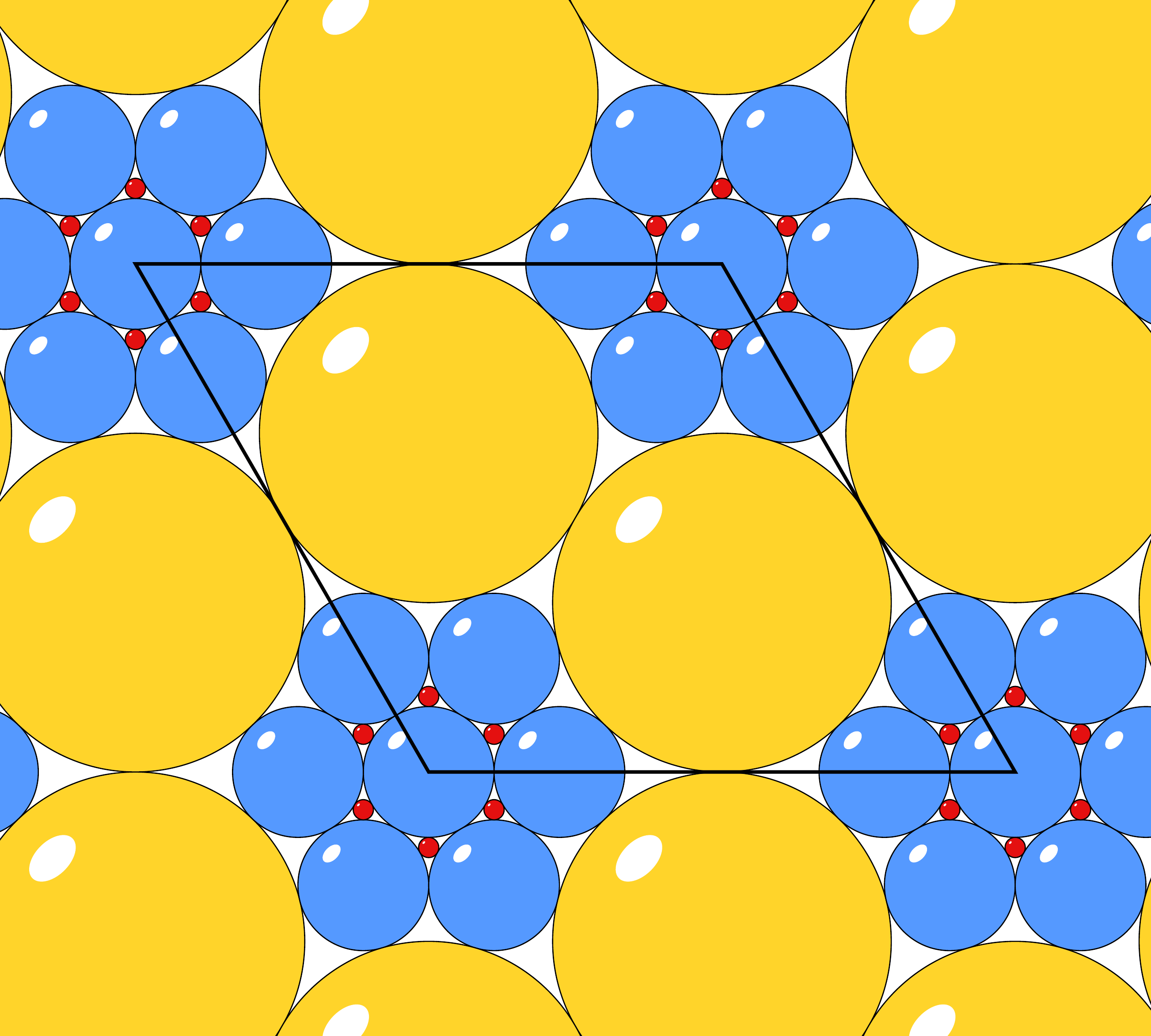}
\end{tabular}
\noindent
\begin{tabular}{lll}
  40 (H)\hfill rrr / 1r1rsr & 41 (S)\hfill rrss / 111rssr & 42 (H)\hfill rrss / 11rssr\\
  \includegraphics[width=0.3\textwidth]{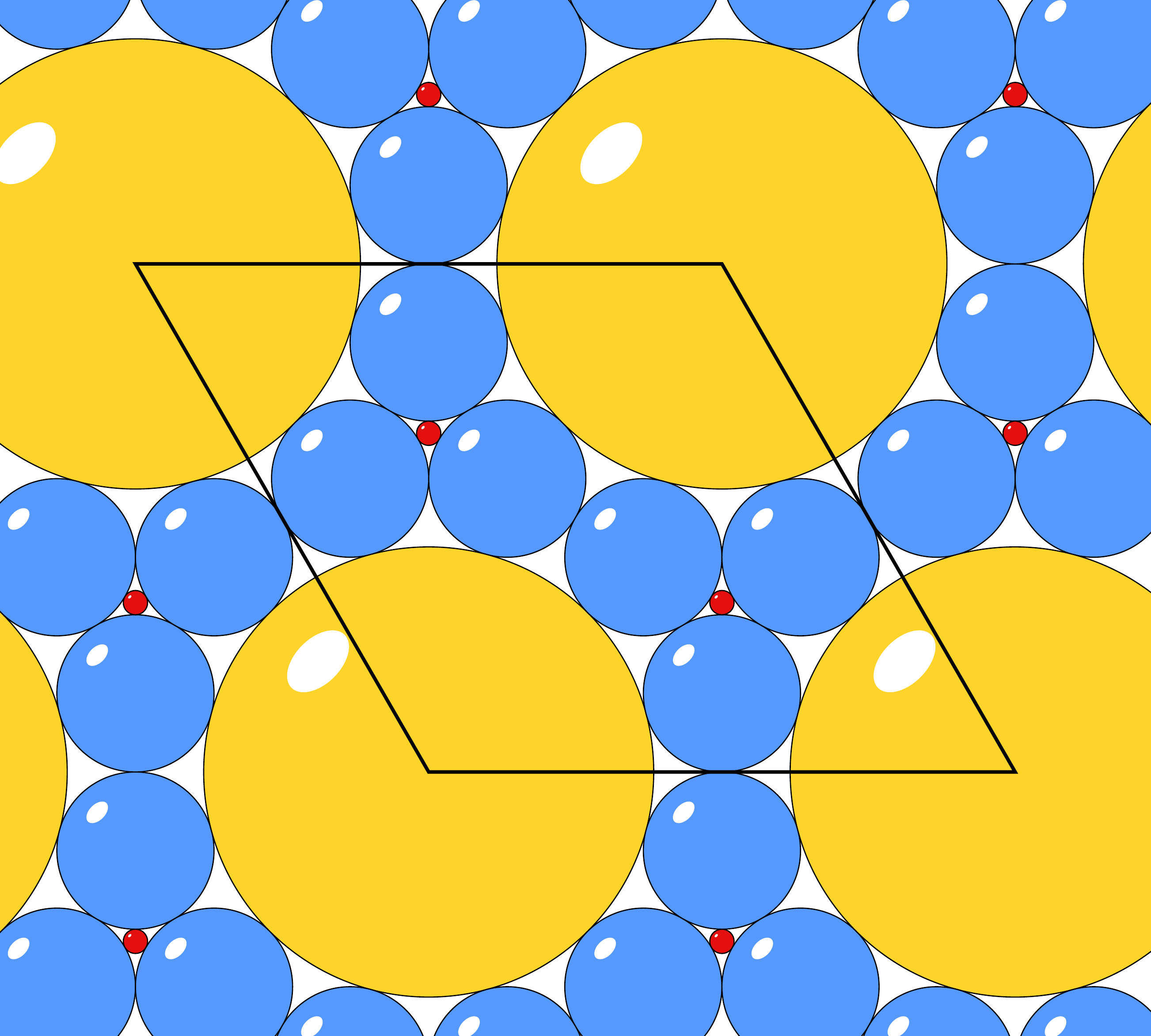} &
  \includegraphics[width=0.3\textwidth]{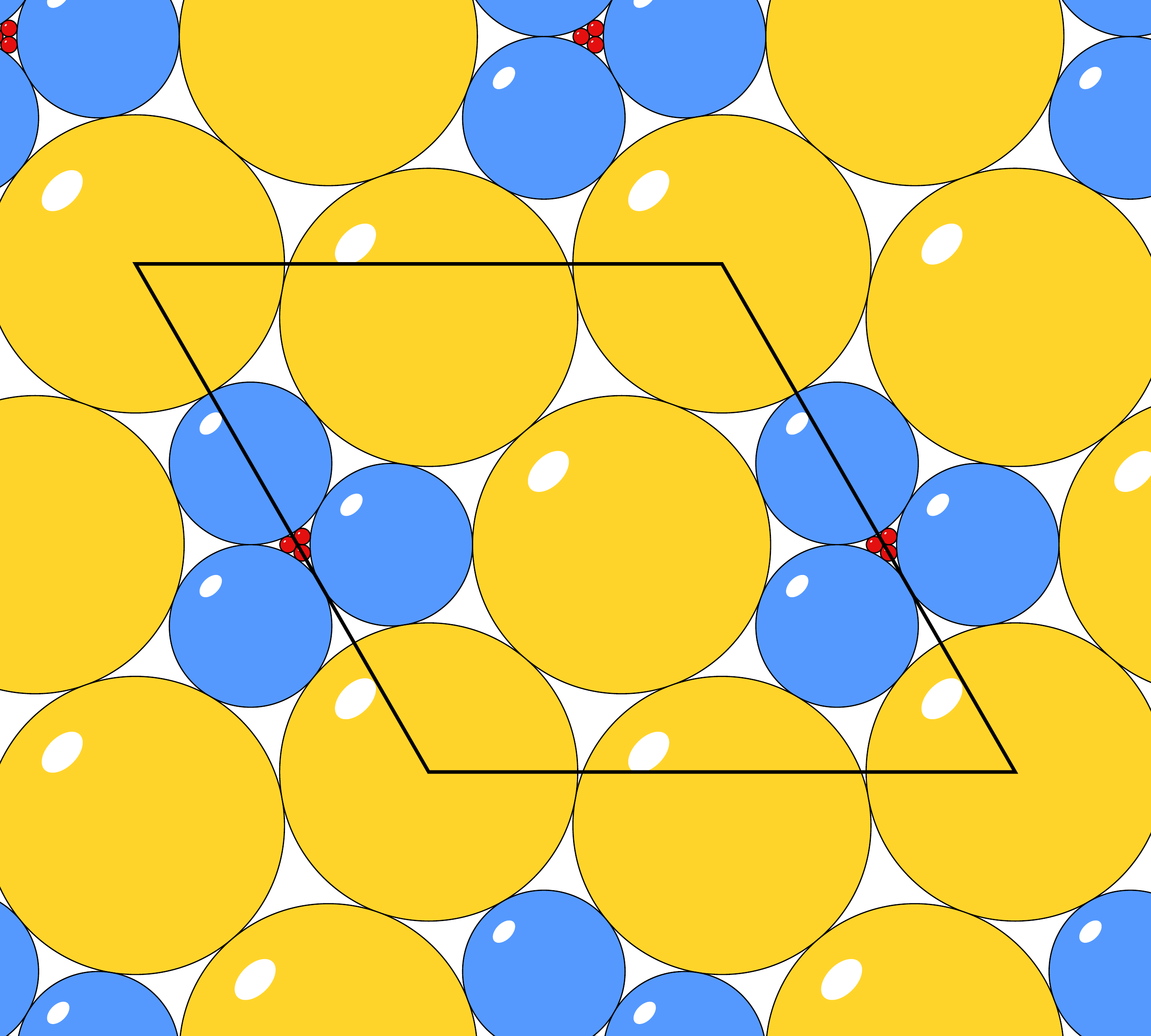} &
  \includegraphics[width=0.3\textwidth]{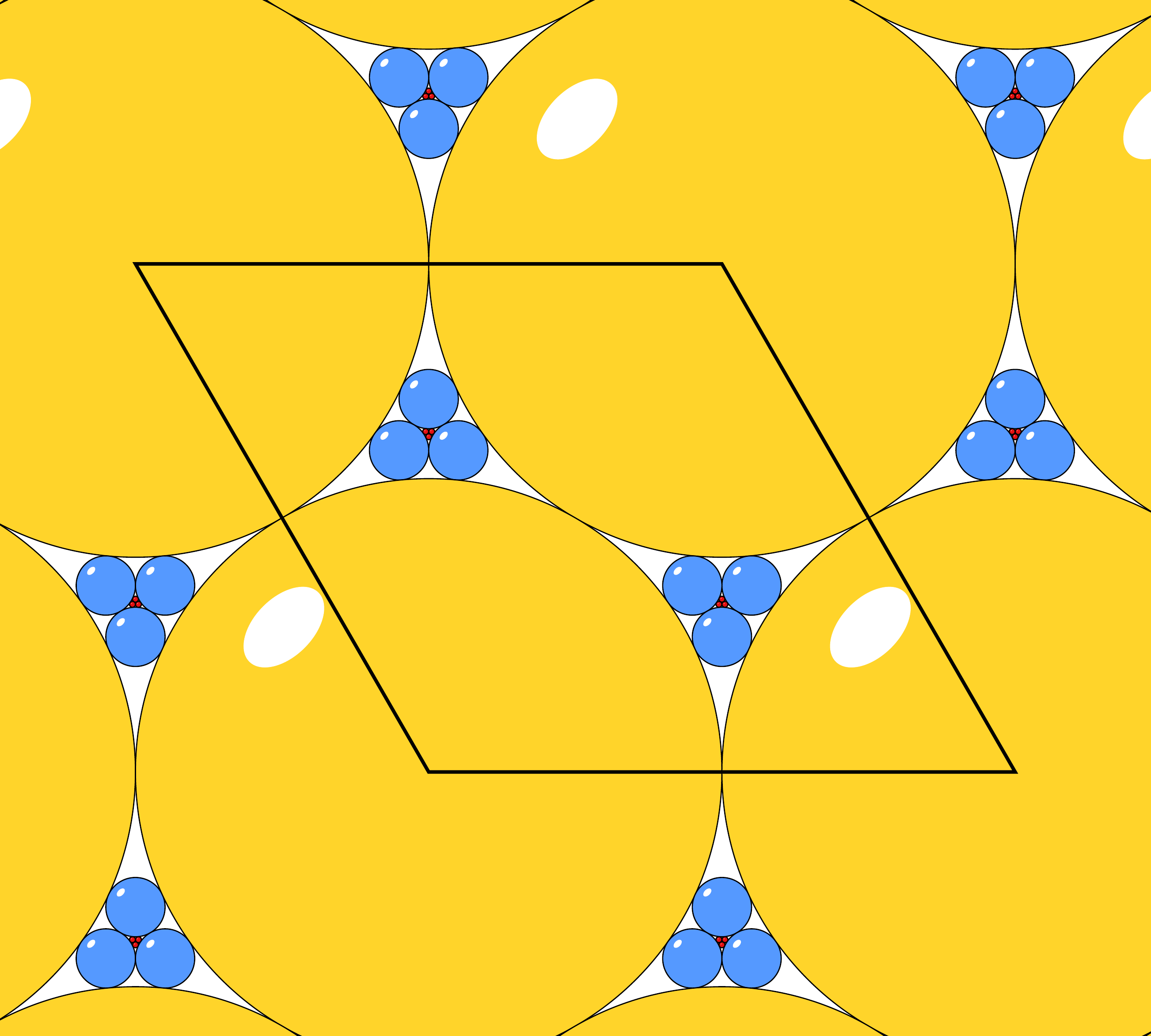}
\end{tabular}
\noindent
\begin{tabular}{lll}
  43 (H)\hfill rrss / 11rssrssr & 44 (H)\hfill rrss / 1r1rssr & 45 (L)\hfill 111r / 111s1s\\
  \includegraphics[width=0.3\textwidth]{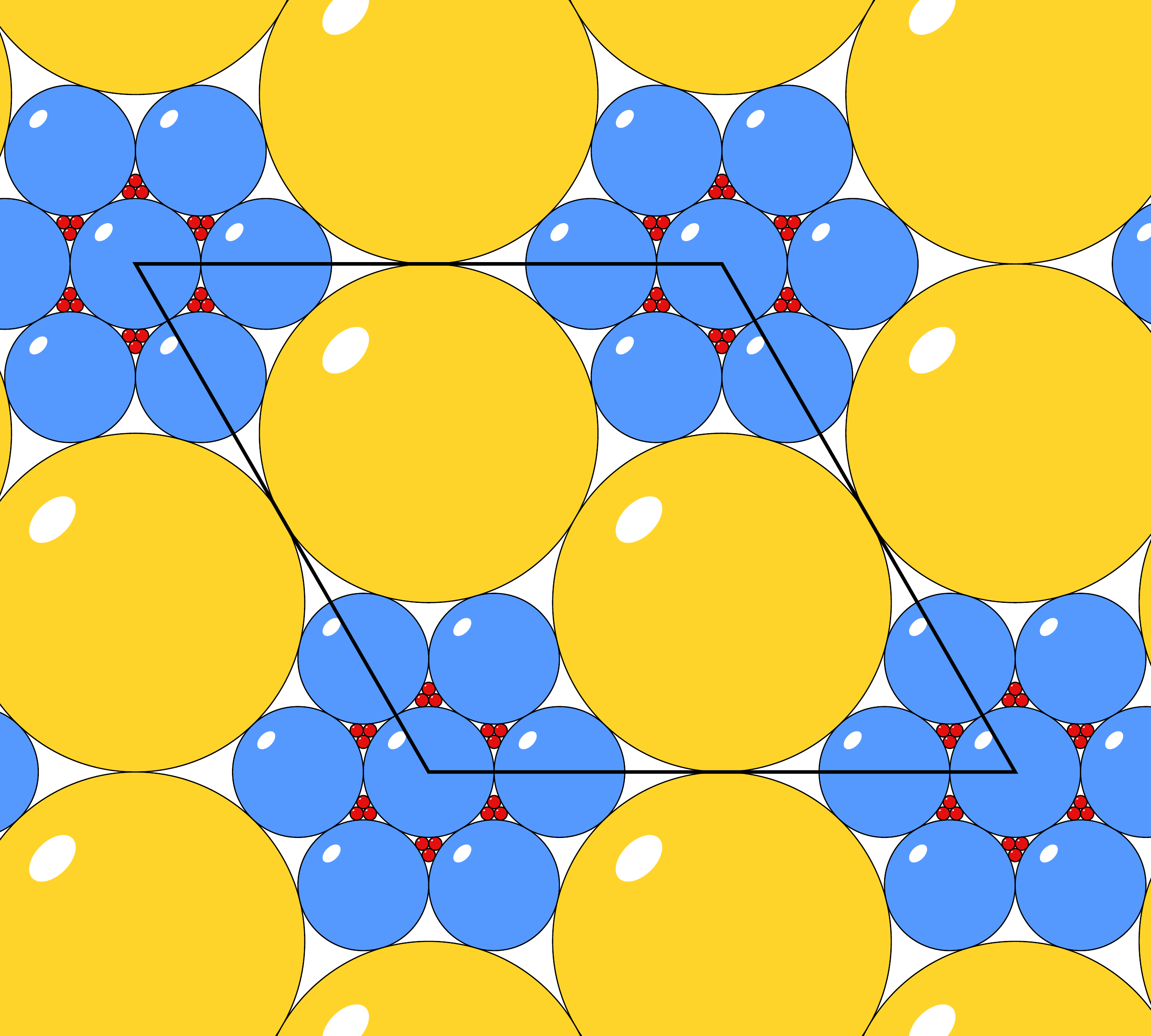} &
  \includegraphics[width=0.3\textwidth]{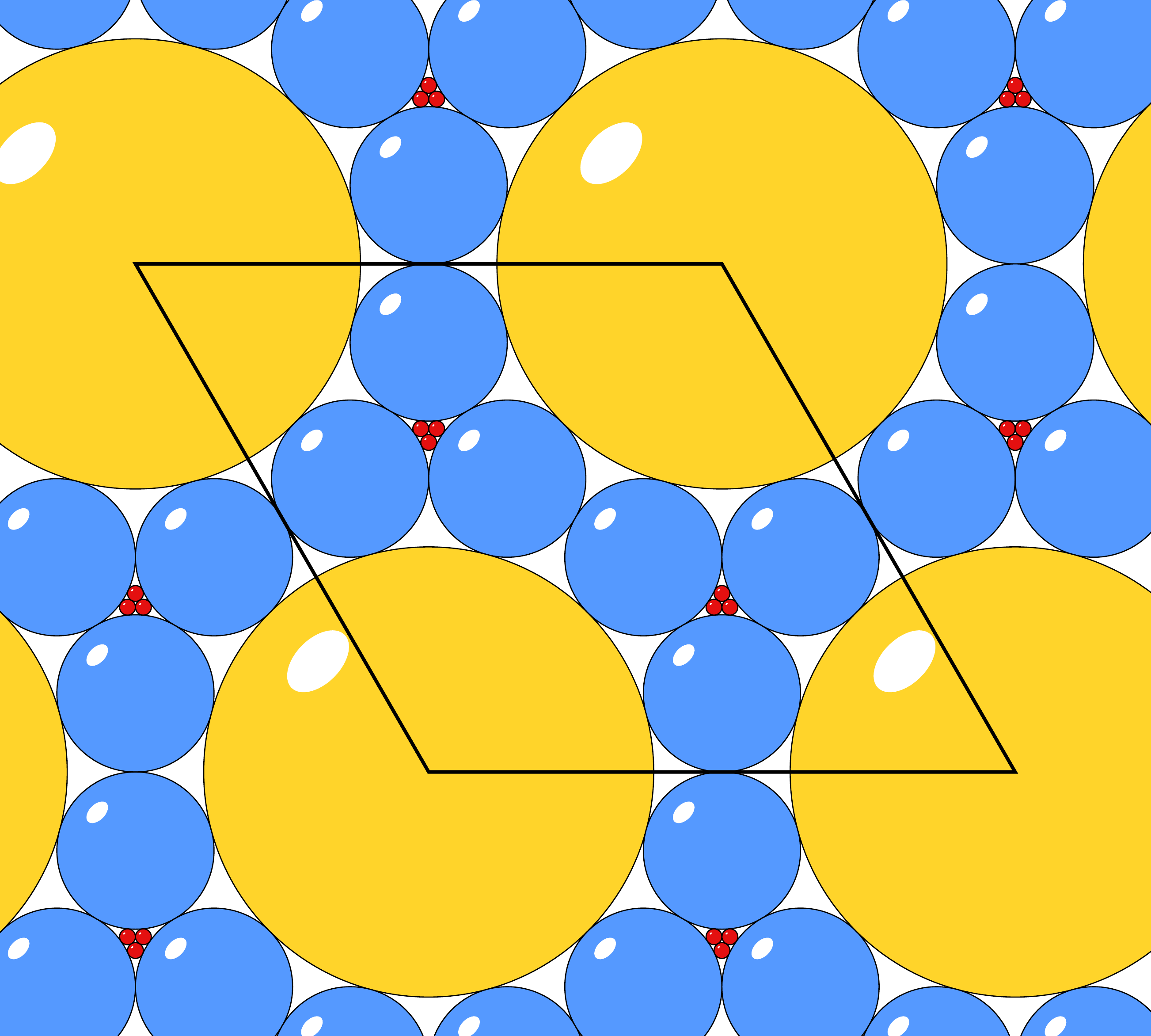} &
  \includegraphics[width=0.3\textwidth]{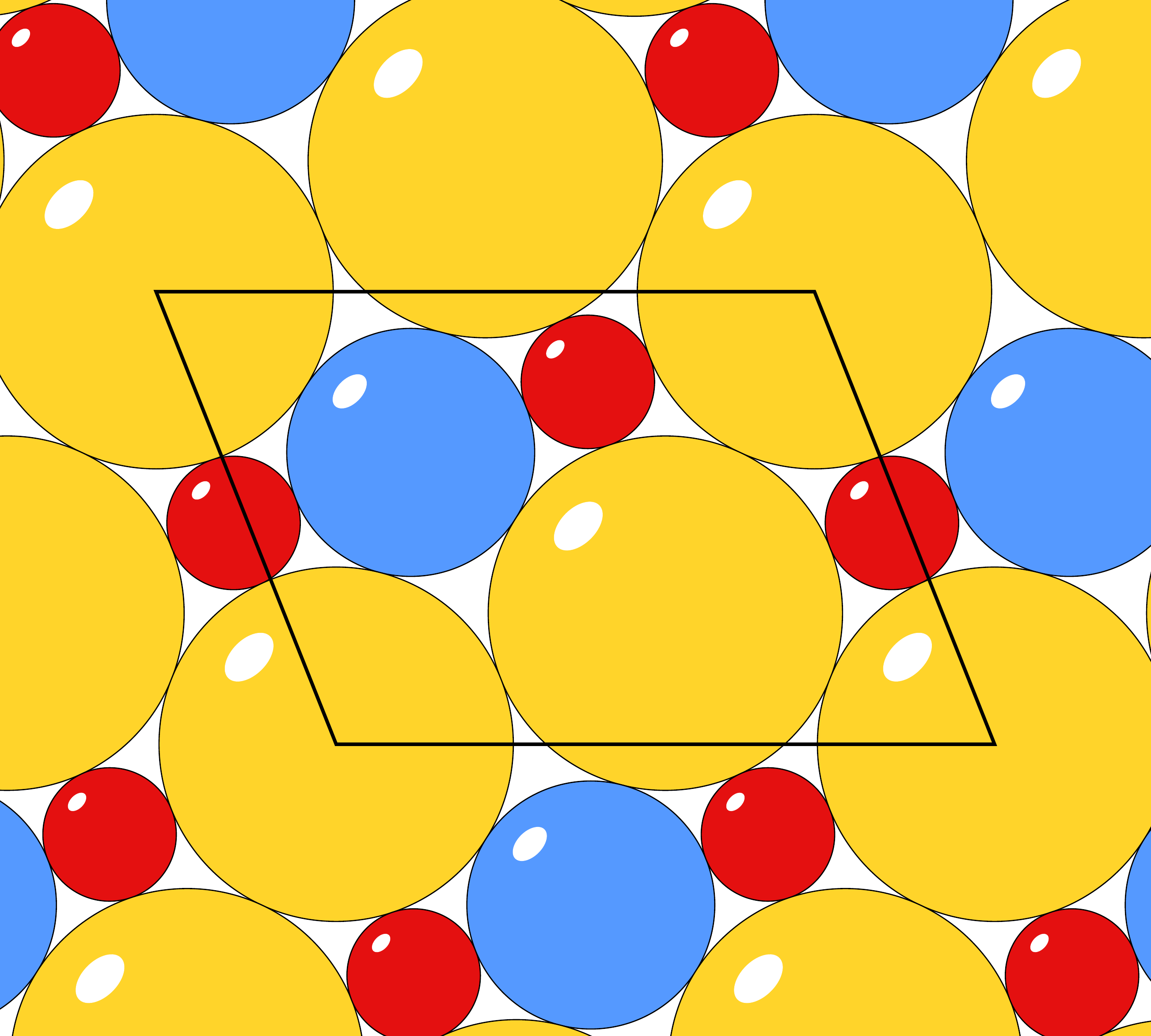}
\end{tabular}
\noindent
\begin{tabular}{lll}
  46 (L)\hfill 111r / 11r1s & 47 (H)\hfill 111r / 1r1r1s & 48 (S)\hfill 111r / 1rr1s\\
  \includegraphics[width=0.3\textwidth]{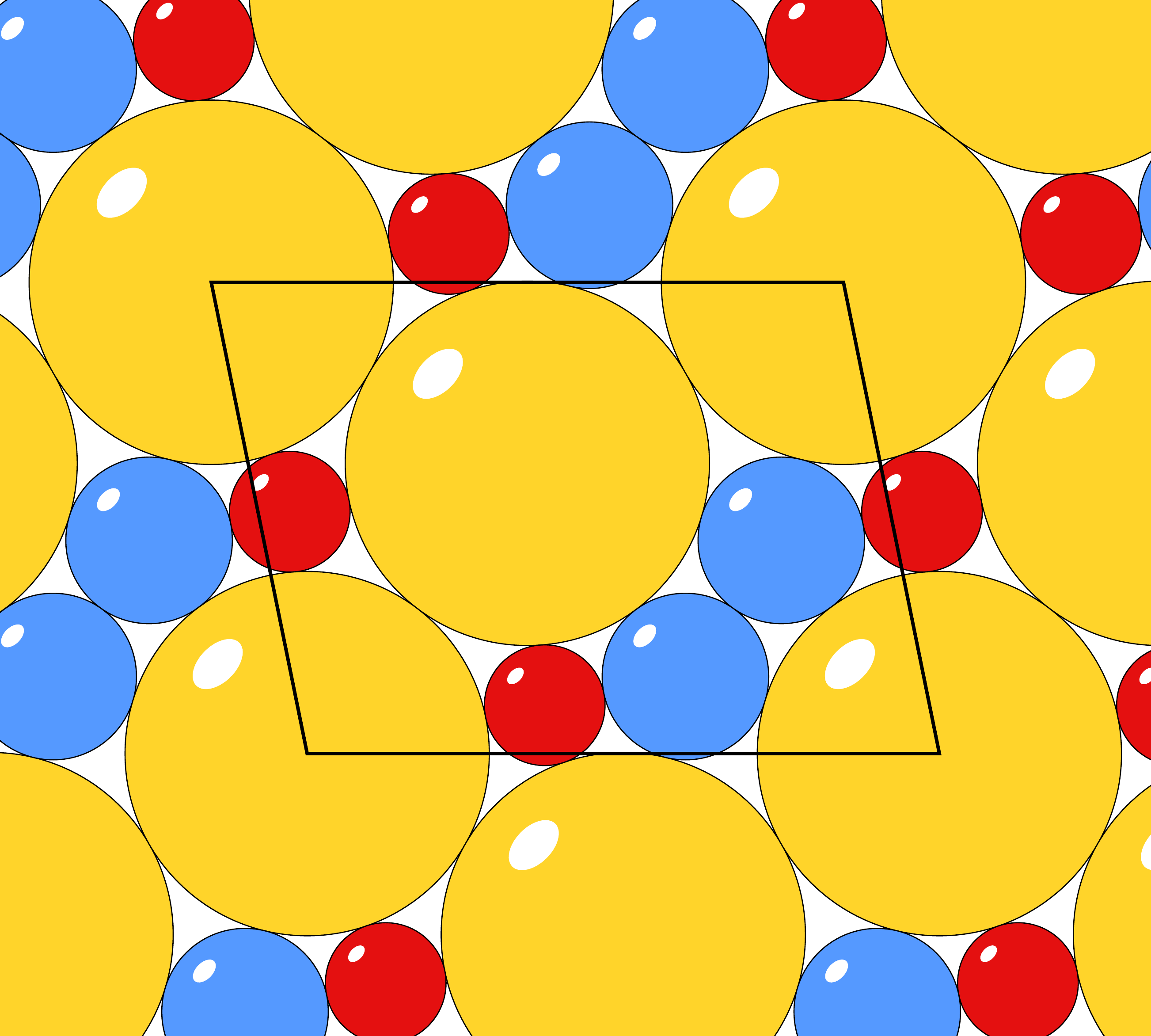} &
  \includegraphics[width=0.3\textwidth]{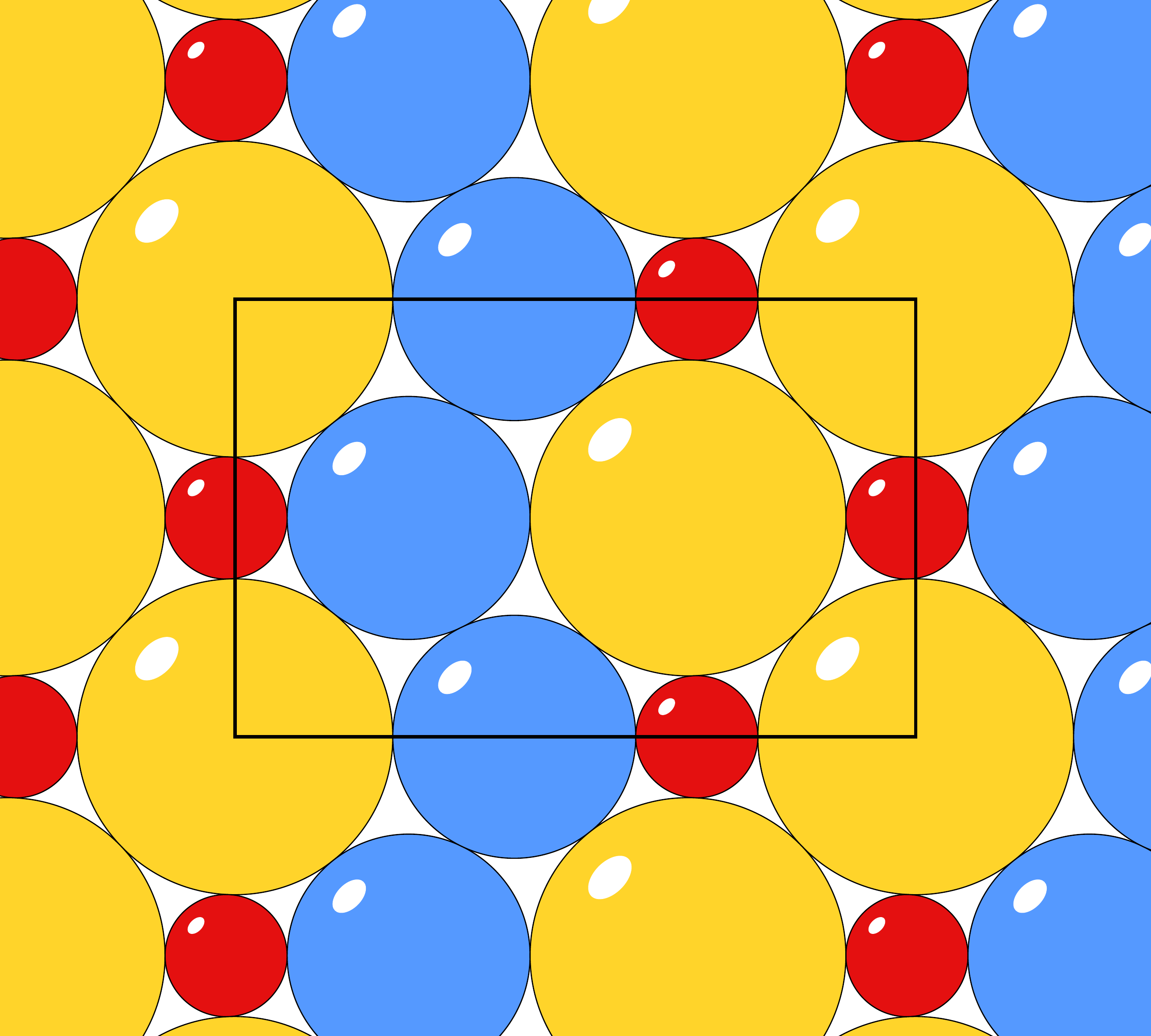} &
  \includegraphics[width=0.3\textwidth]{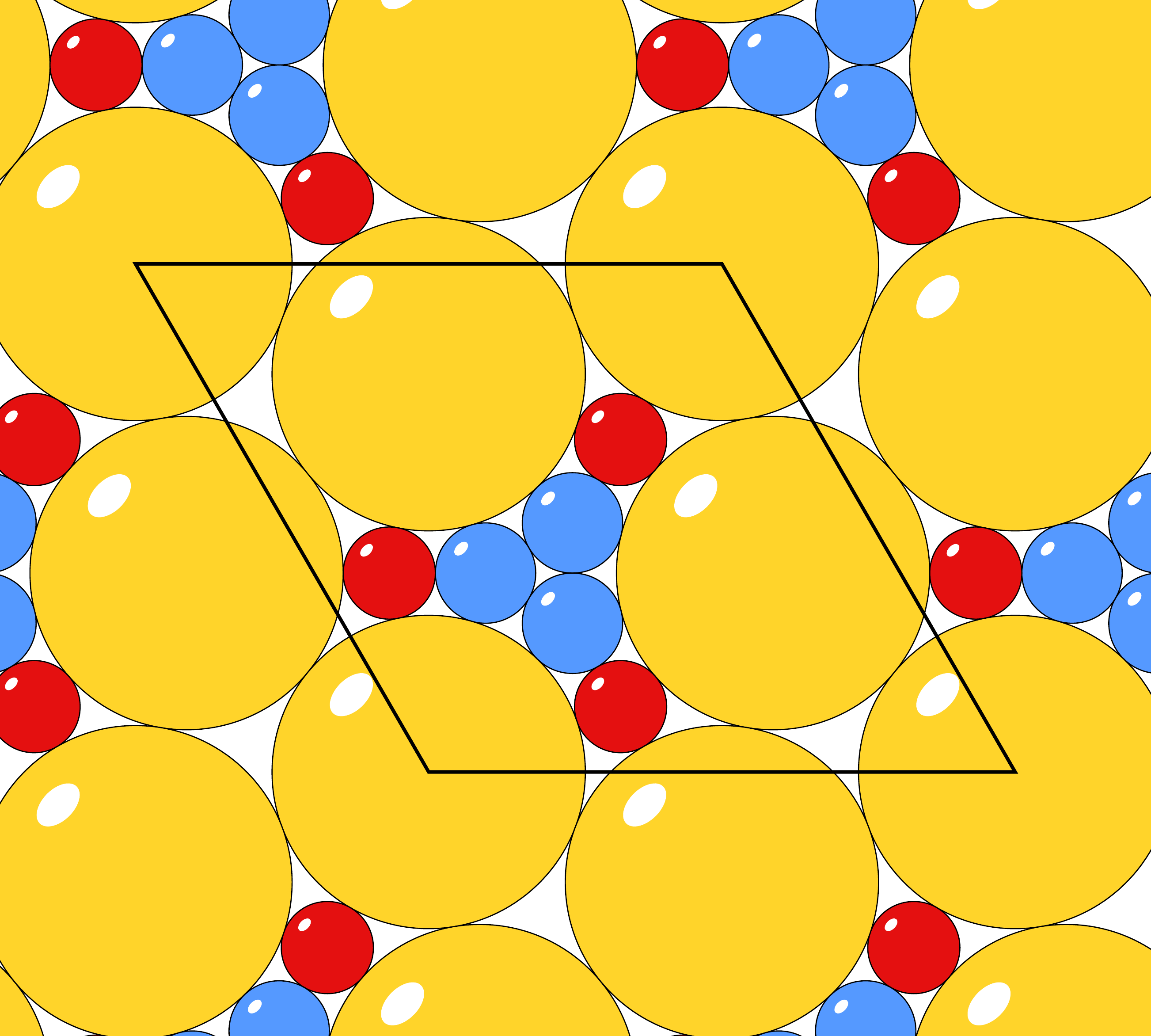}
\end{tabular}
\noindent
\begin{tabular}{lll}
  49 (H)\hfill 111r / 1rrr1s & 50 (S)\hfill 111r / 1s1s1s & 51 (L)\hfill 111rr / 1rrrrs\\
  \includegraphics[width=0.3\textwidth]{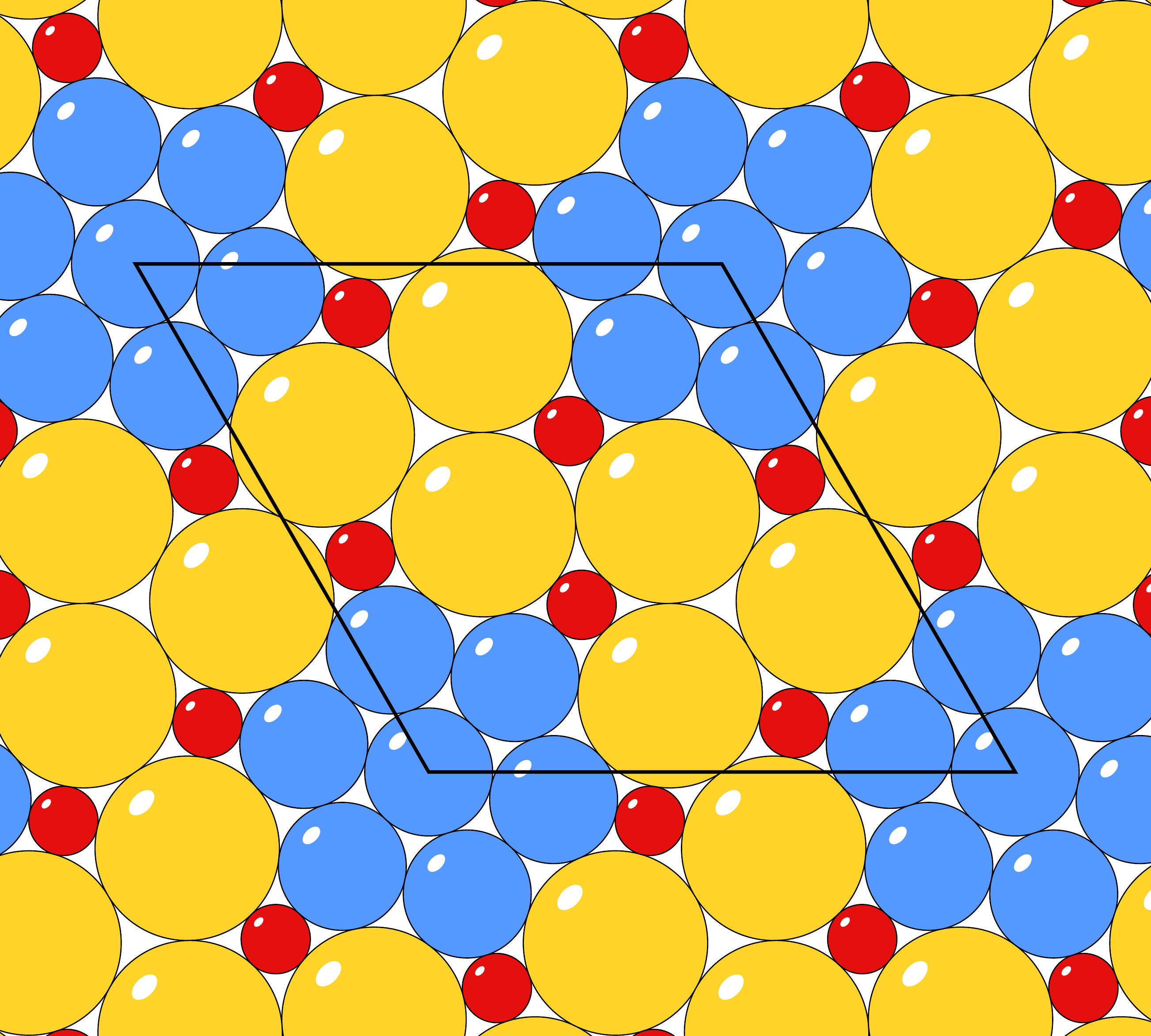} &
  \includegraphics[width=0.3\textwidth]{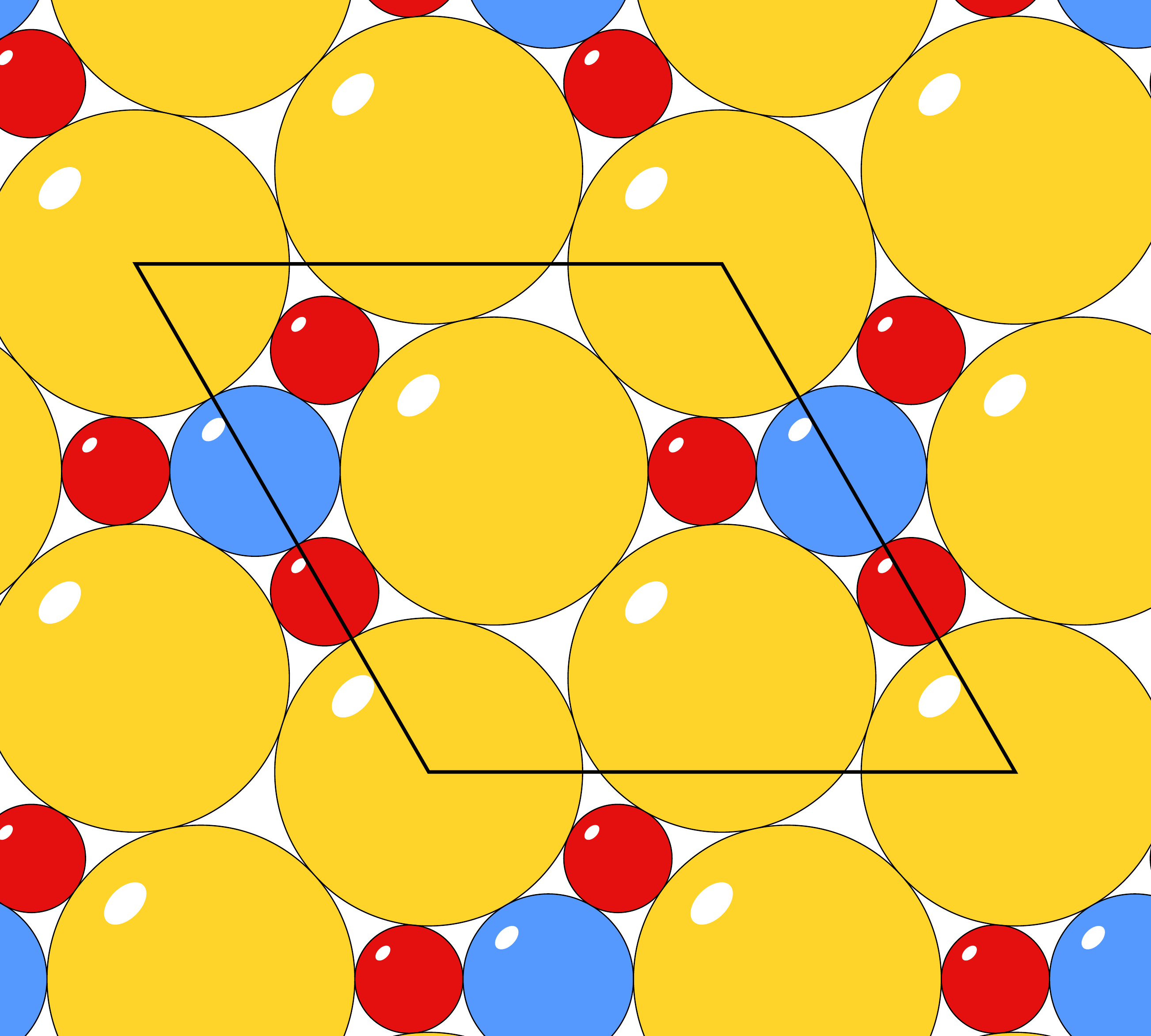} &
  \includegraphics[width=0.3\textwidth]{packing_111rr_1rrrrs.pdf}
\end{tabular}
\noindent
\begin{tabular}{lll}
  52 (H)\hfill 111rr / 1srrrs & 53 (H)\hfill 11r1r / 1r1s1s & 54 (H)\hfill 11r1r / 1s1s1s\\
  \includegraphics[width=0.3\textwidth]{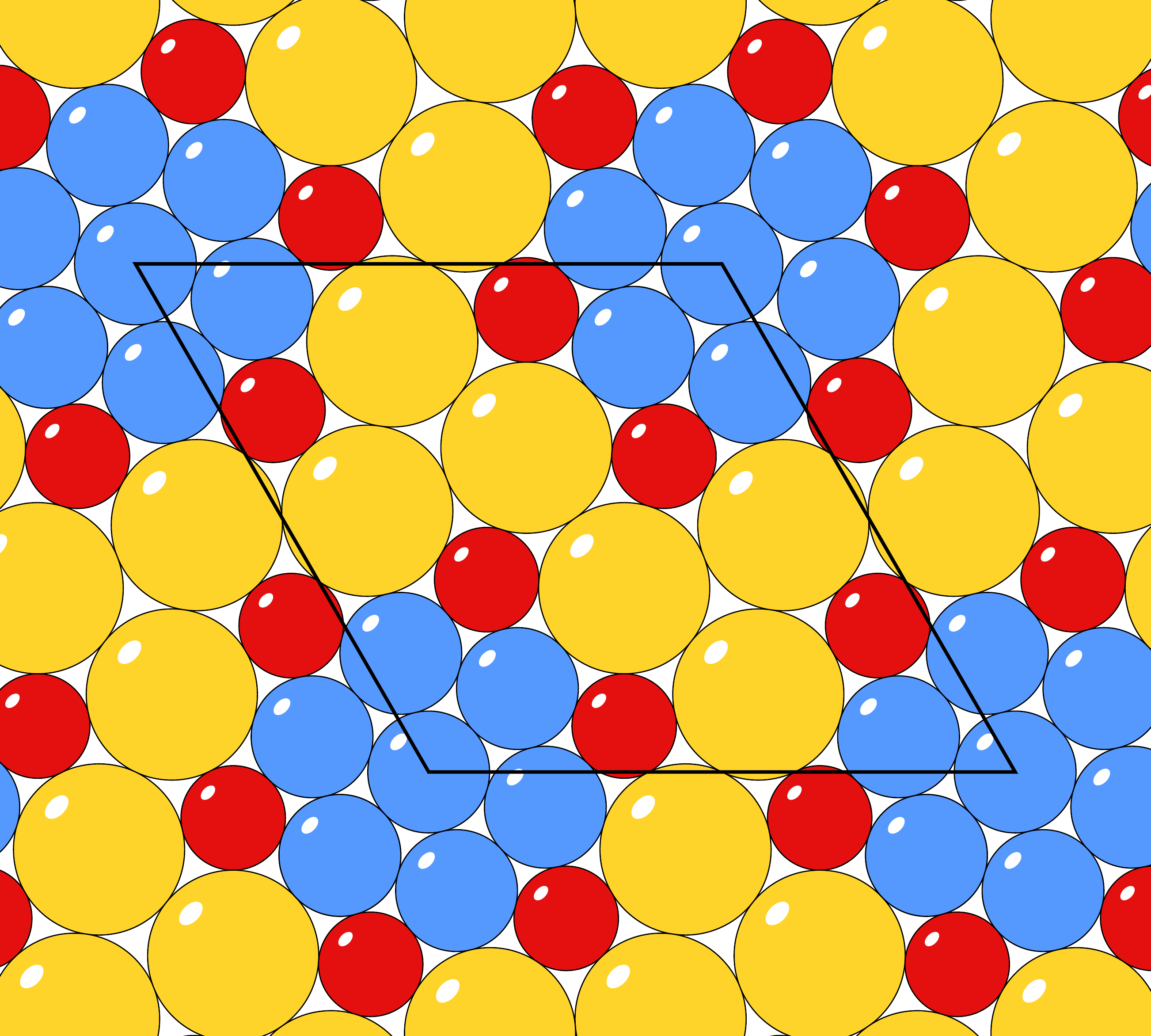} &
  \includegraphics[width=0.3\textwidth]{packing_11r1r_1r1s1s.pdf} &
  \includegraphics[width=0.3\textwidth]{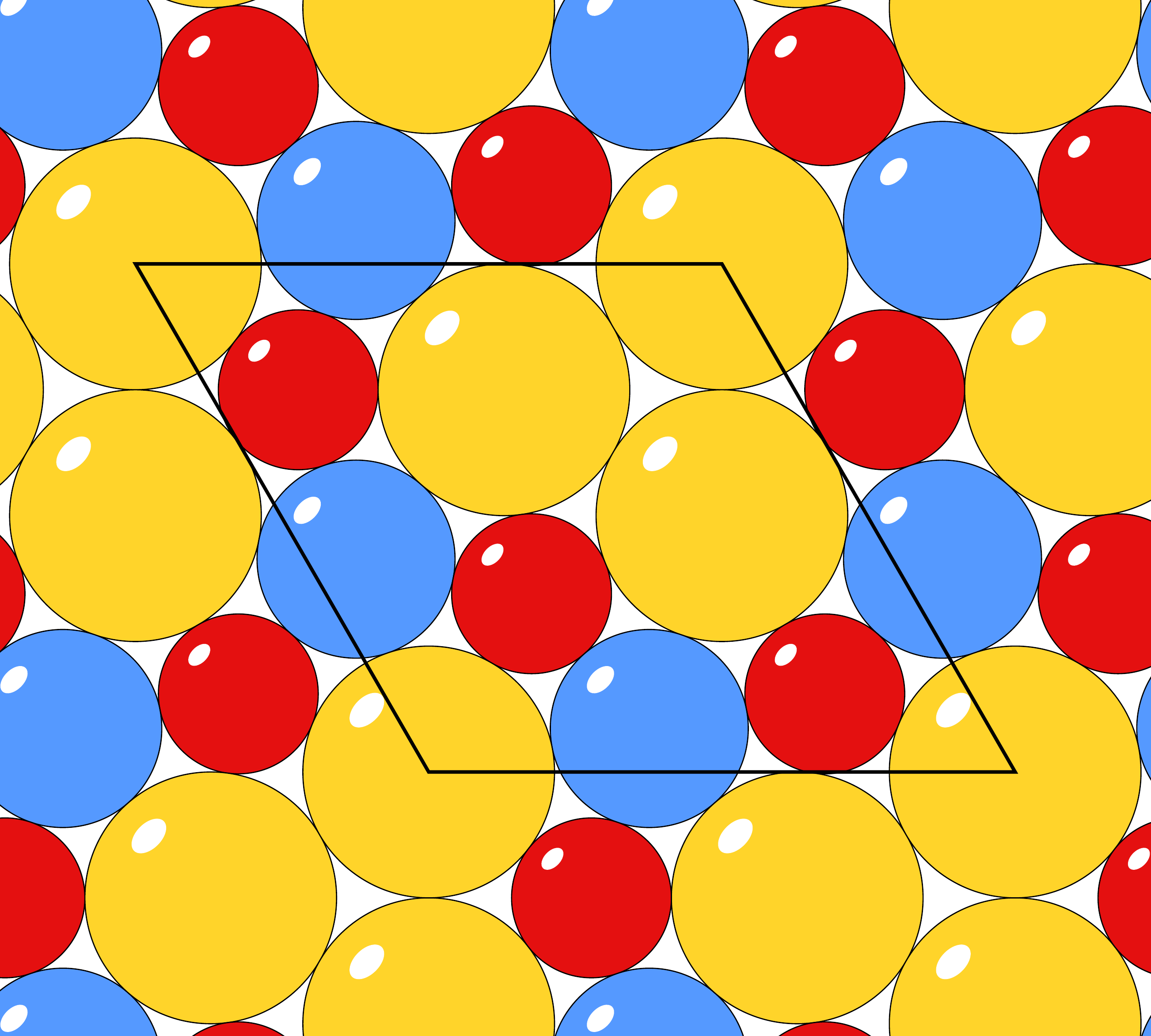}
\end{tabular}
\noindent
\begin{tabular}{lll}
  55 (L)\hfill 11r1s / 111s1s & 56 (L)\hfill 11r1s / 1r1r1s & 57 (H)\hfill 11r1s / 1rrr1s\\
  \includegraphics[width=0.3\textwidth]{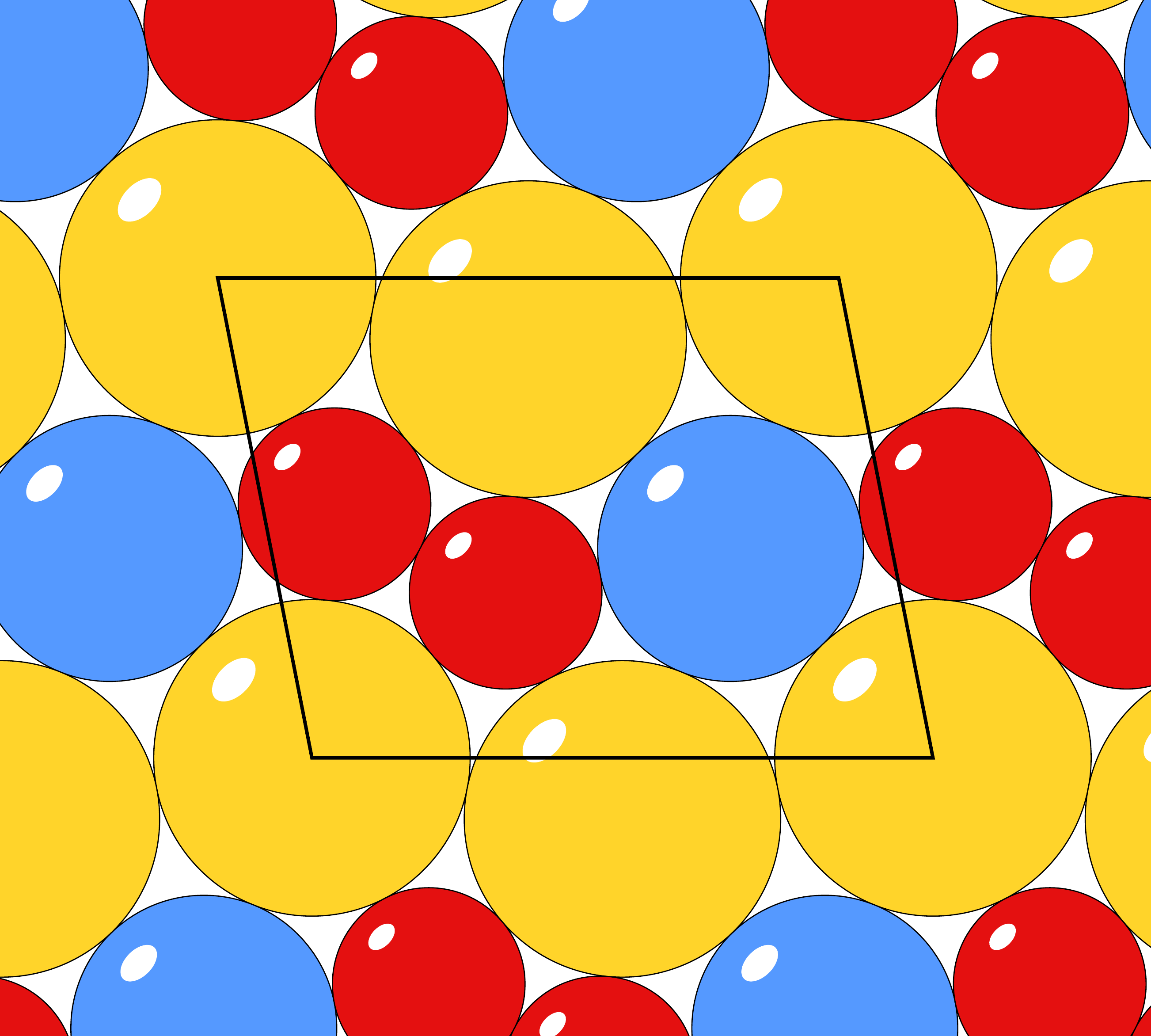} &
  \includegraphics[width=0.3\textwidth]{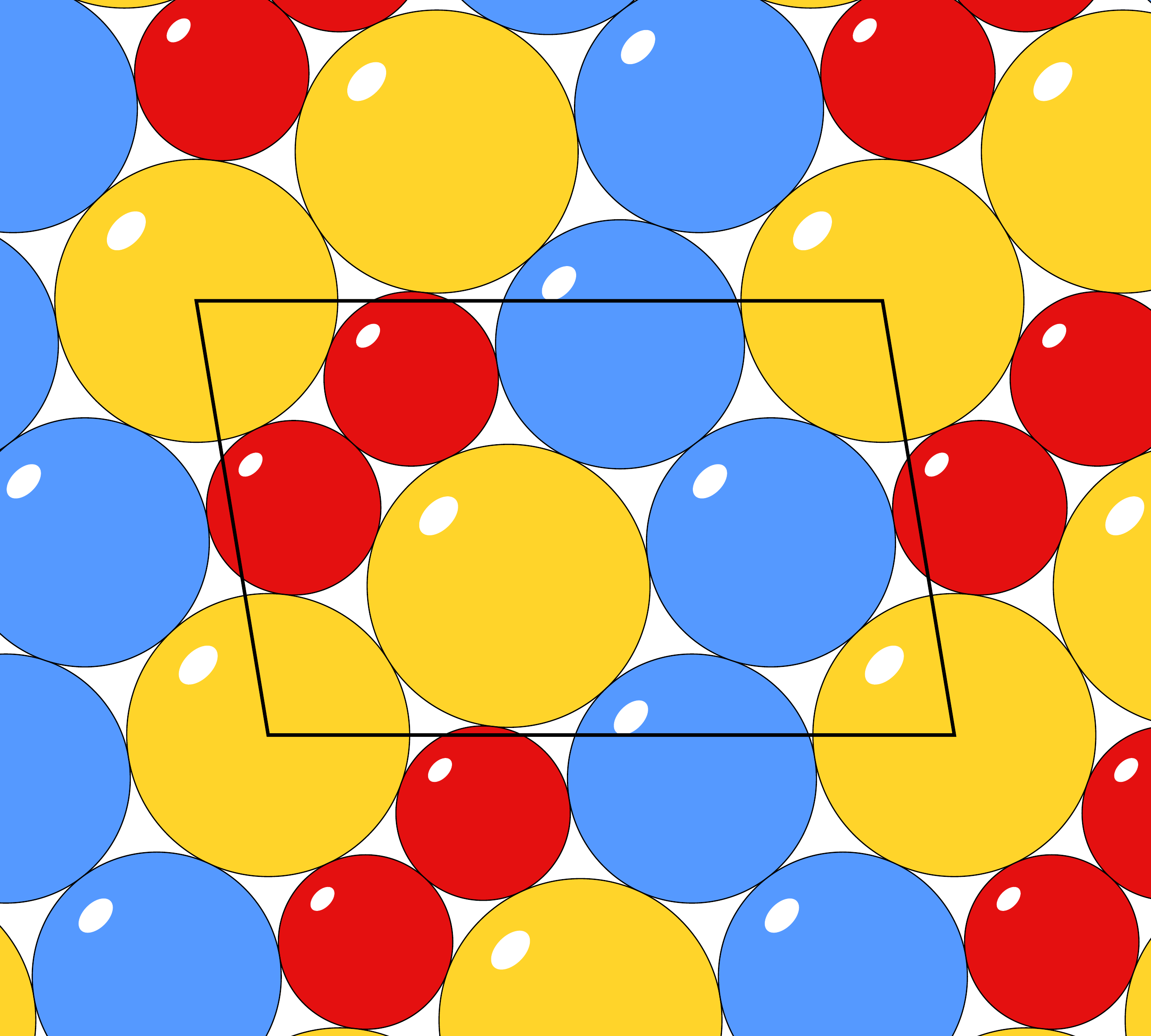} &
  \includegraphics[width=0.3\textwidth]{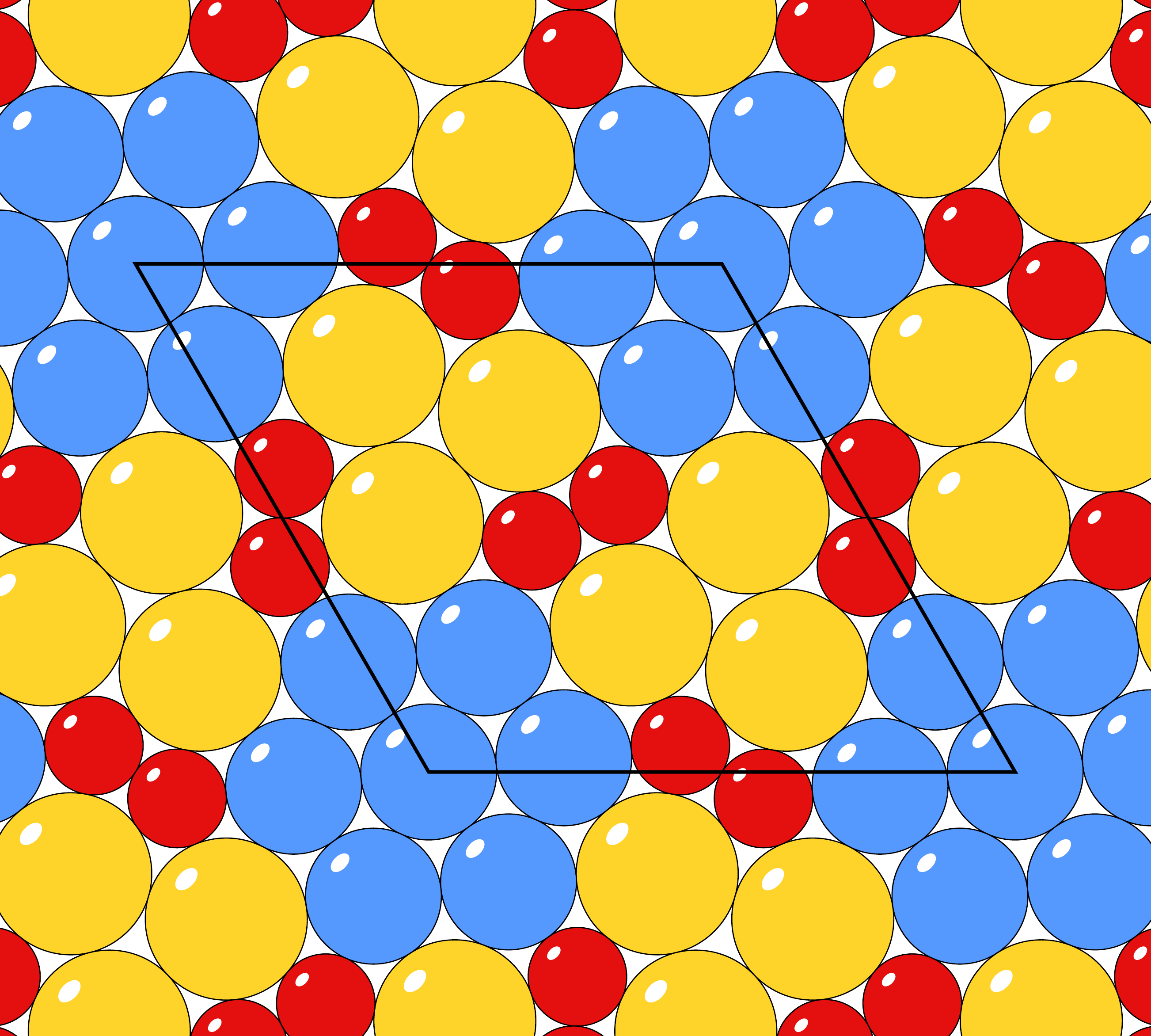}
\end{tabular}
\noindent
\begin{tabular}{lll}
  58 (H)\hfill 11r1s / 1s1s1s & 59 (L)\hfill 11rr / 111srs & 60 (H)\hfill 11rr / 11srrs\\
  \includegraphics[width=0.3\textwidth]{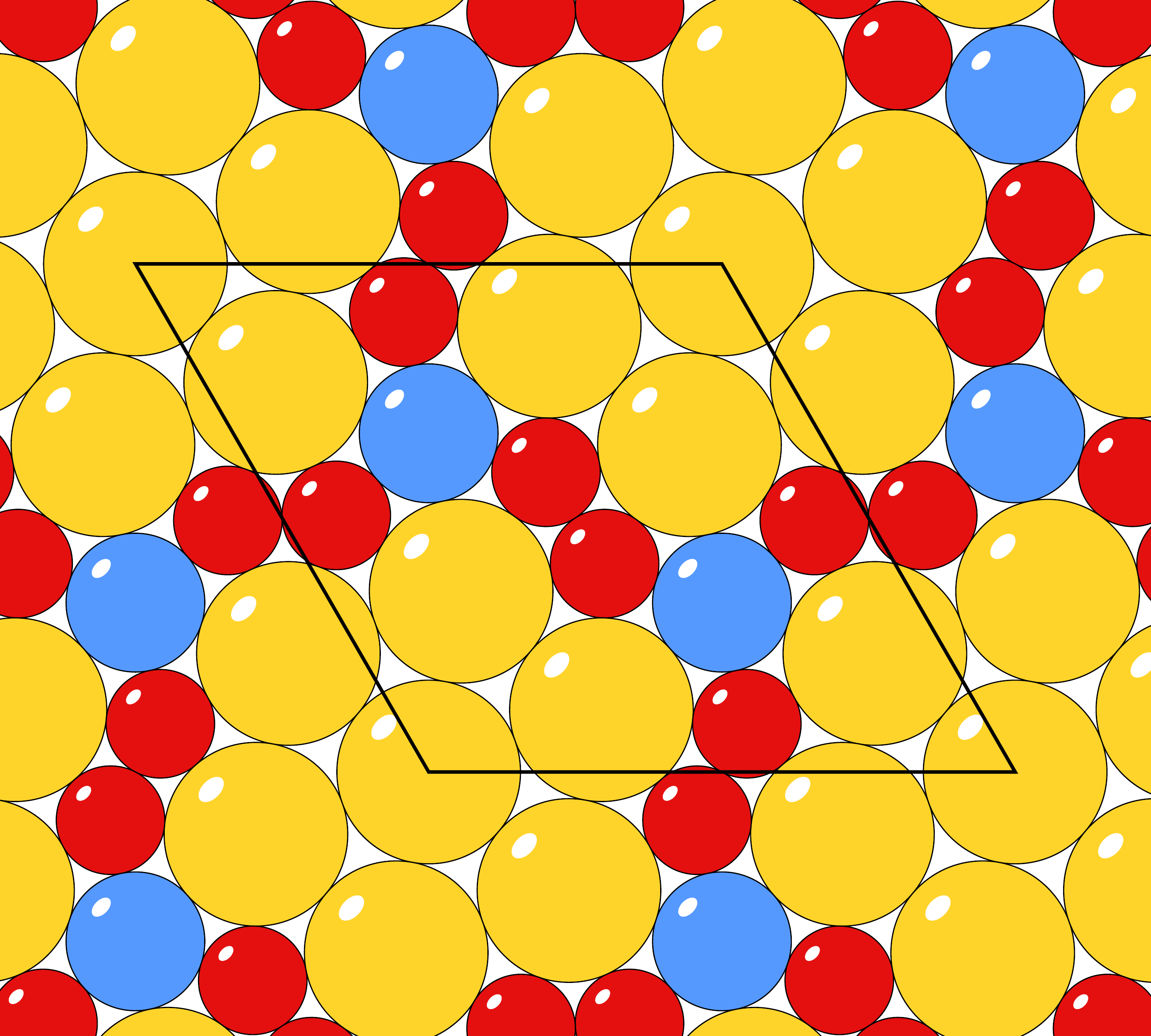} &
  \includegraphics[width=0.3\textwidth]{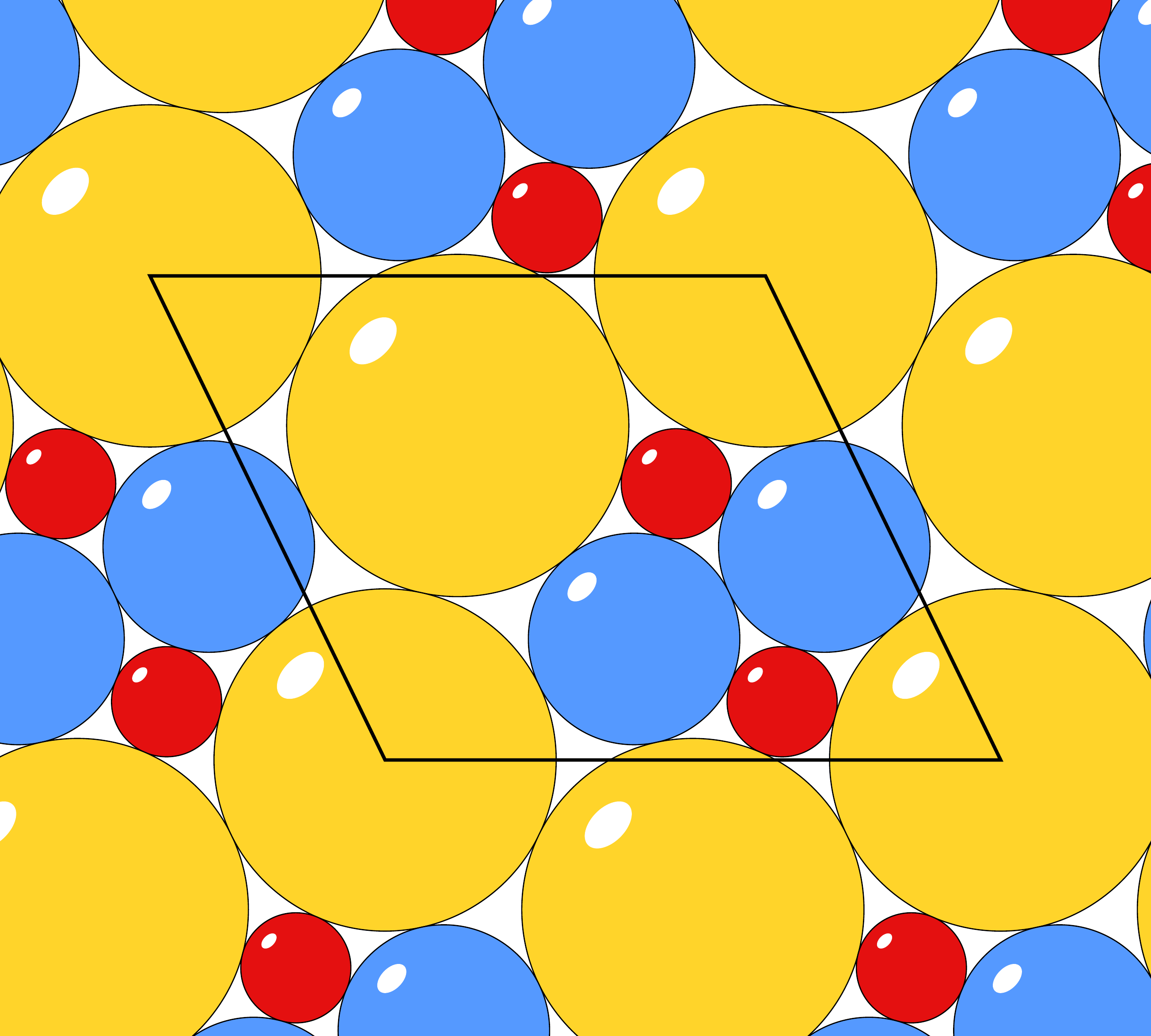} &
  \includegraphics[width=0.3\textwidth]{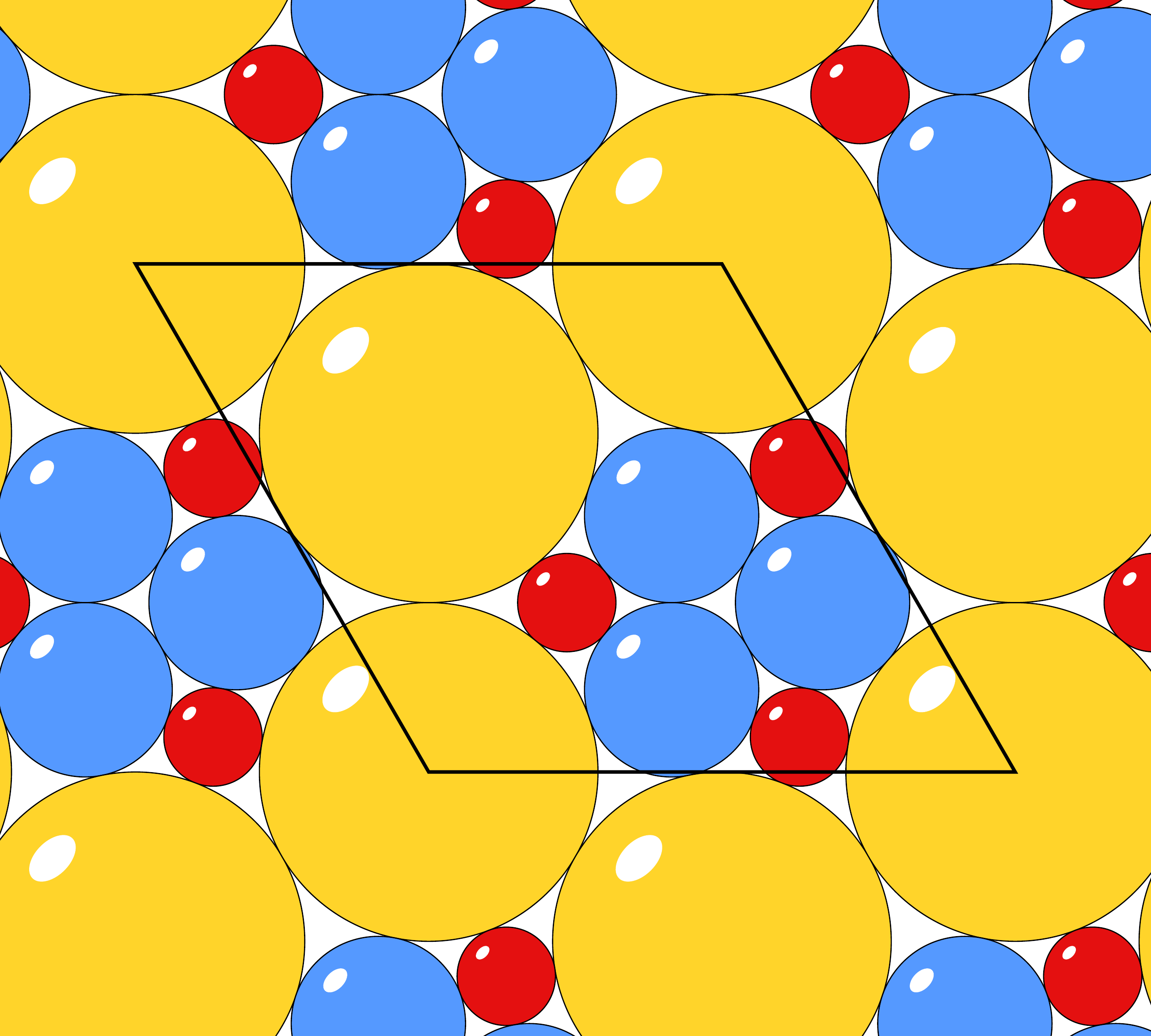}
\end{tabular}
\noindent
\begin{tabular}{lll}
  61 (E)\hfill 11rr / 11srs & 62 (L)\hfill 11rr / 1r1rs & 63 (H)\hfill 11rr / 1rr1rs\\
  \includegraphics[width=0.3\textwidth]{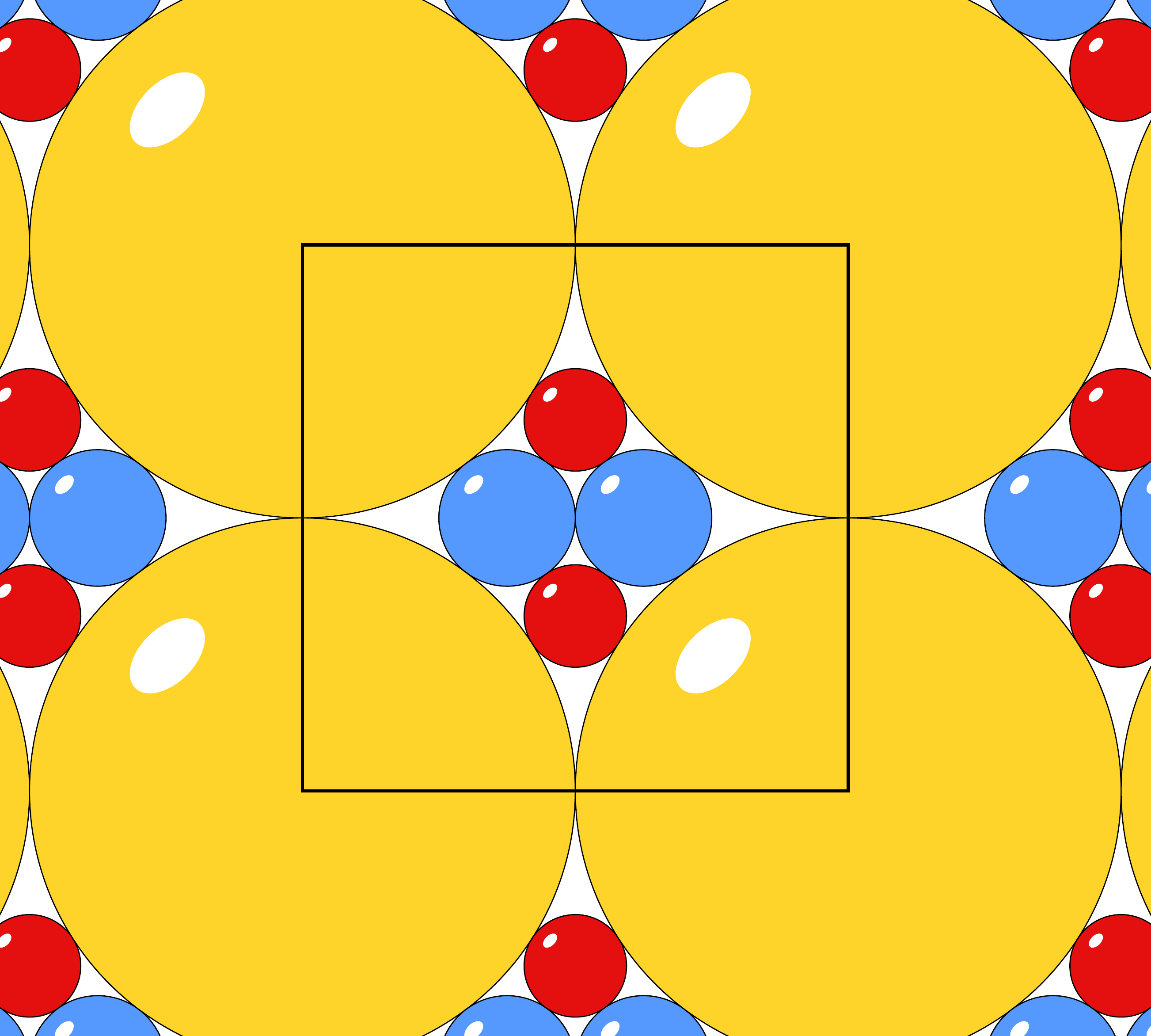} &
  \includegraphics[width=0.3\textwidth]{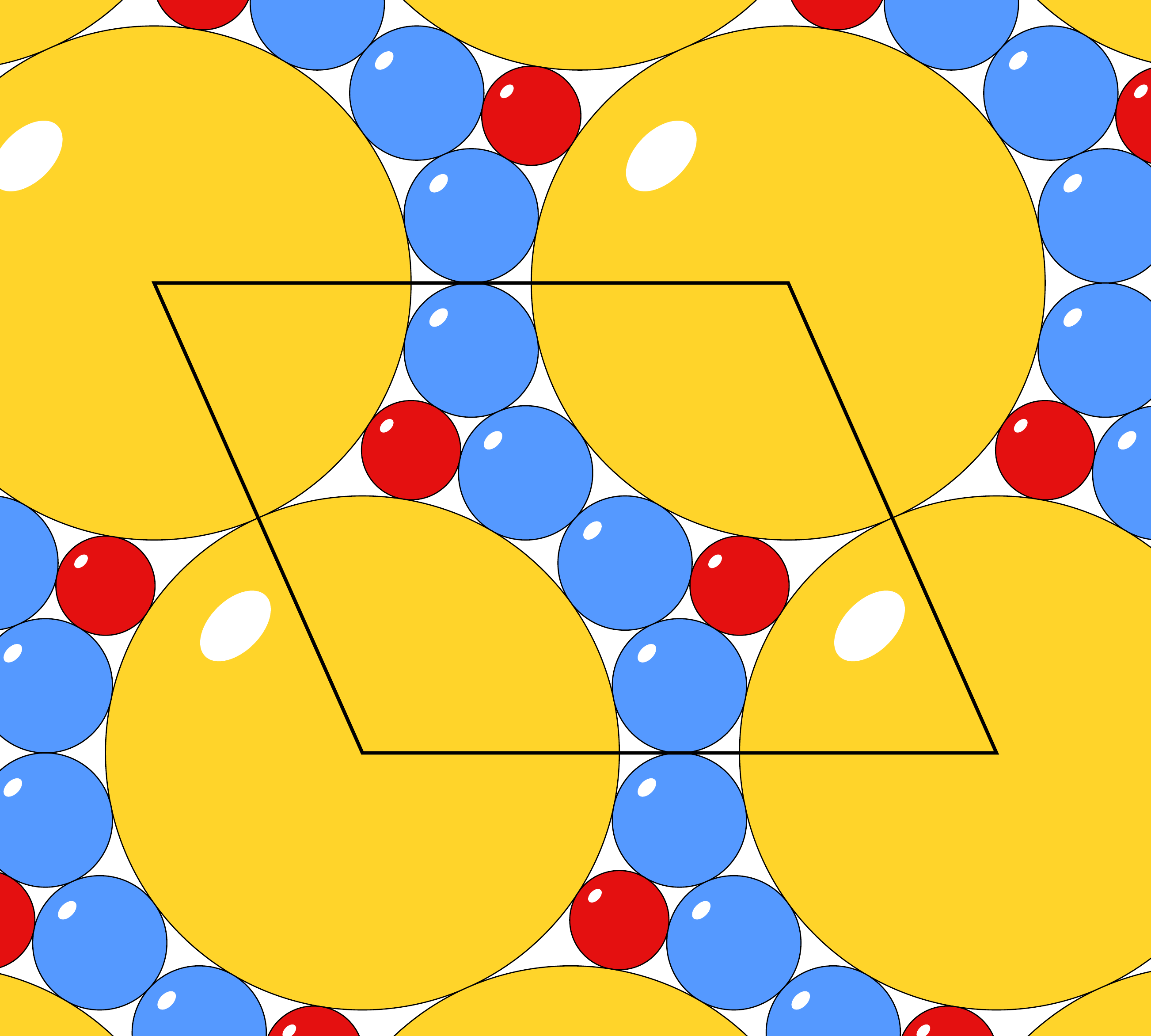} &
  \includegraphics[width=0.3\textwidth]{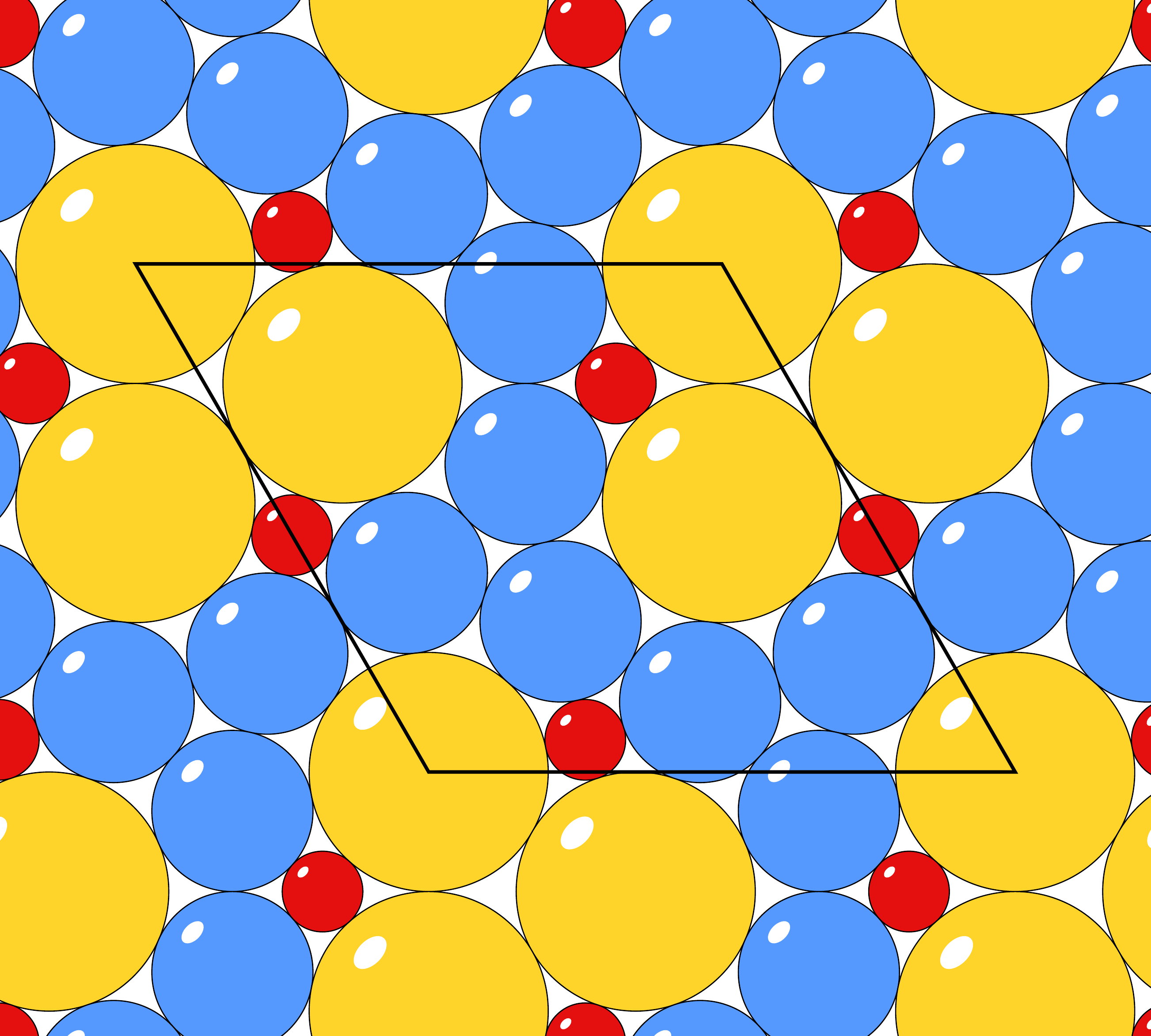}
\end{tabular}
\noindent
\begin{tabular}{lll}
  64 (L)\hfill 11rr / 1rrrrs & 65 (H)\hfill 11rr / 1srrrs & 66 (L)\hfill 11rrr / 1srsrs\\
  \includegraphics[width=0.3\textwidth]{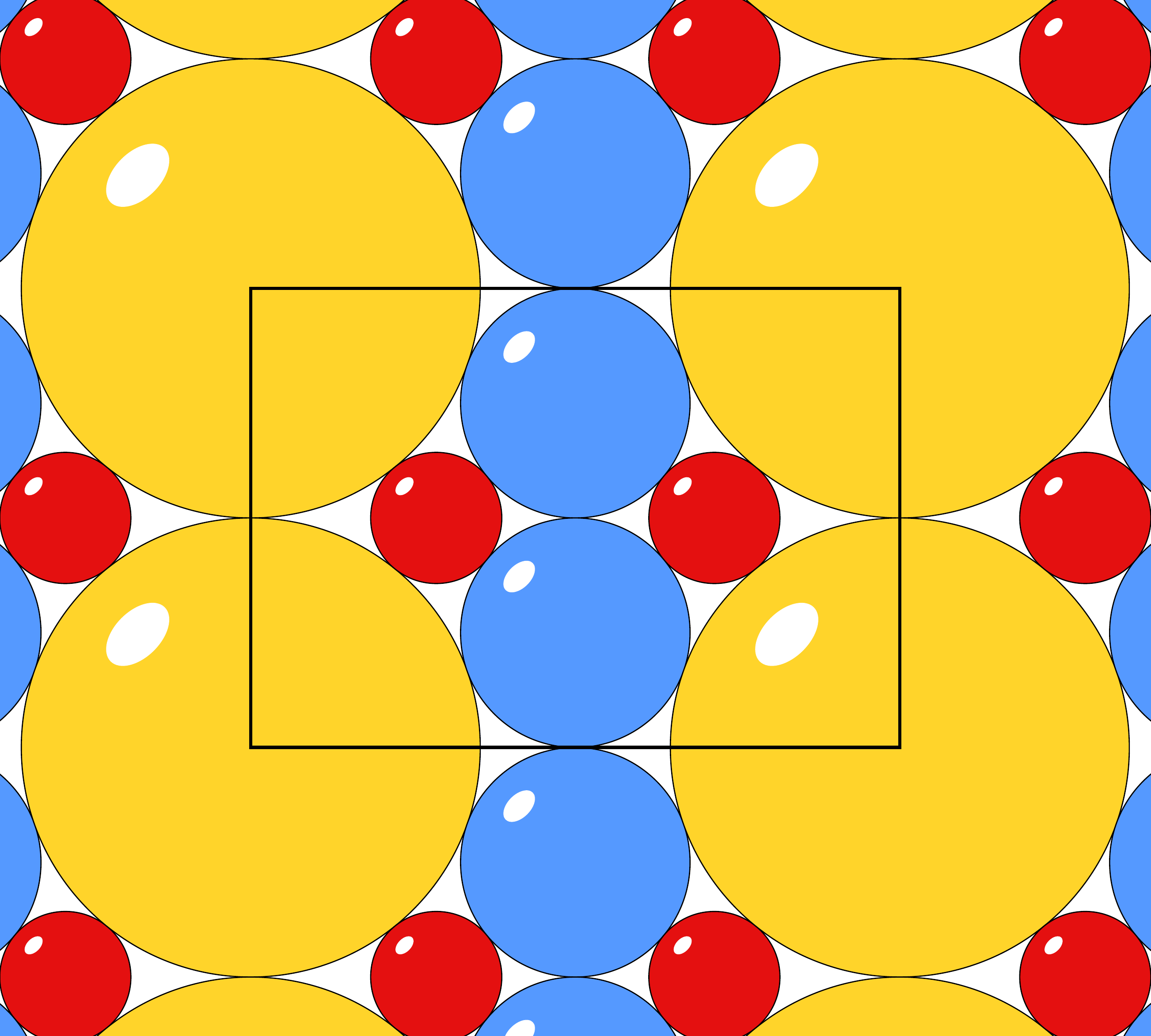} &
  \includegraphics[width=0.3\textwidth]{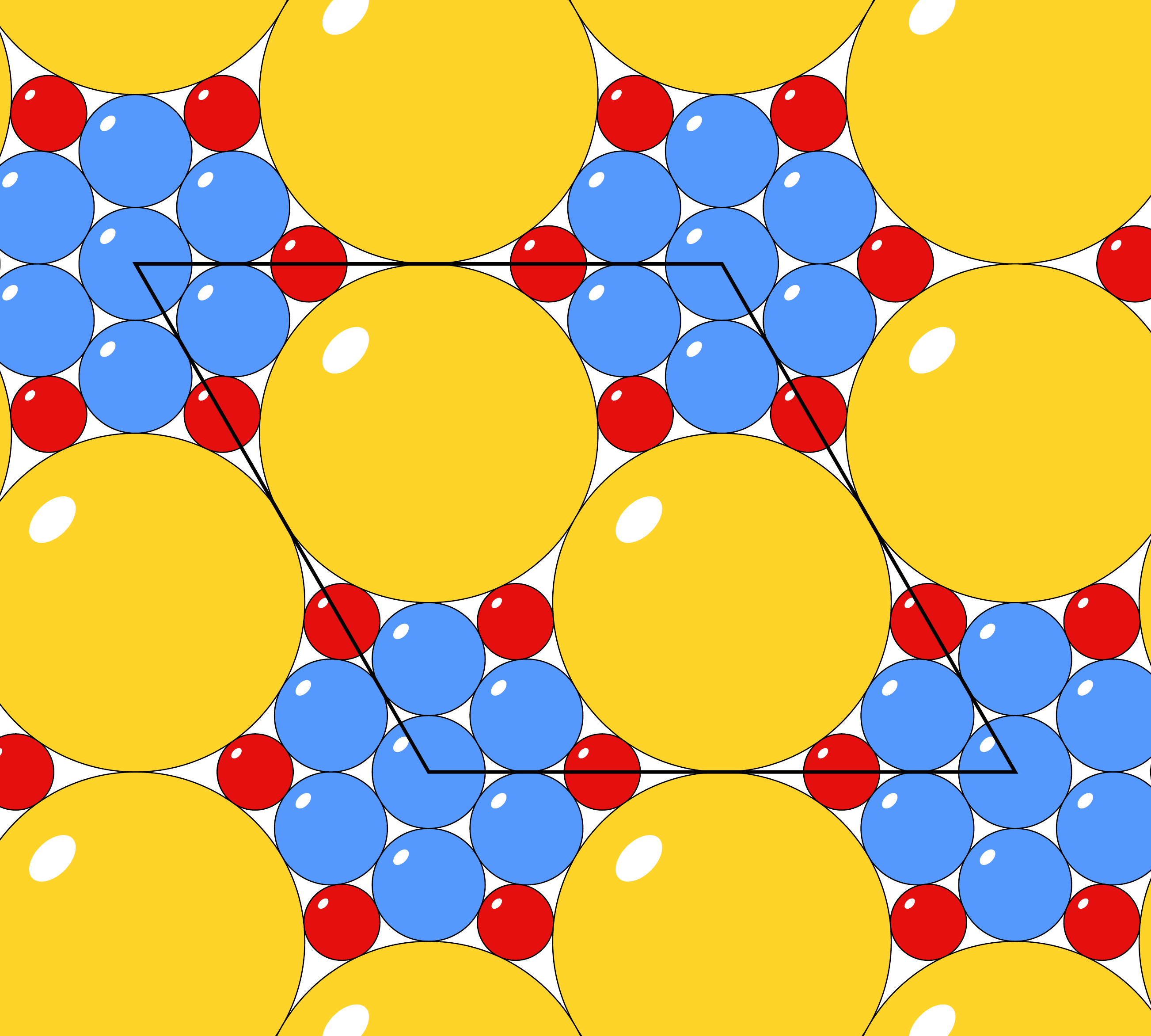} &
  \includegraphics[width=0.3\textwidth]{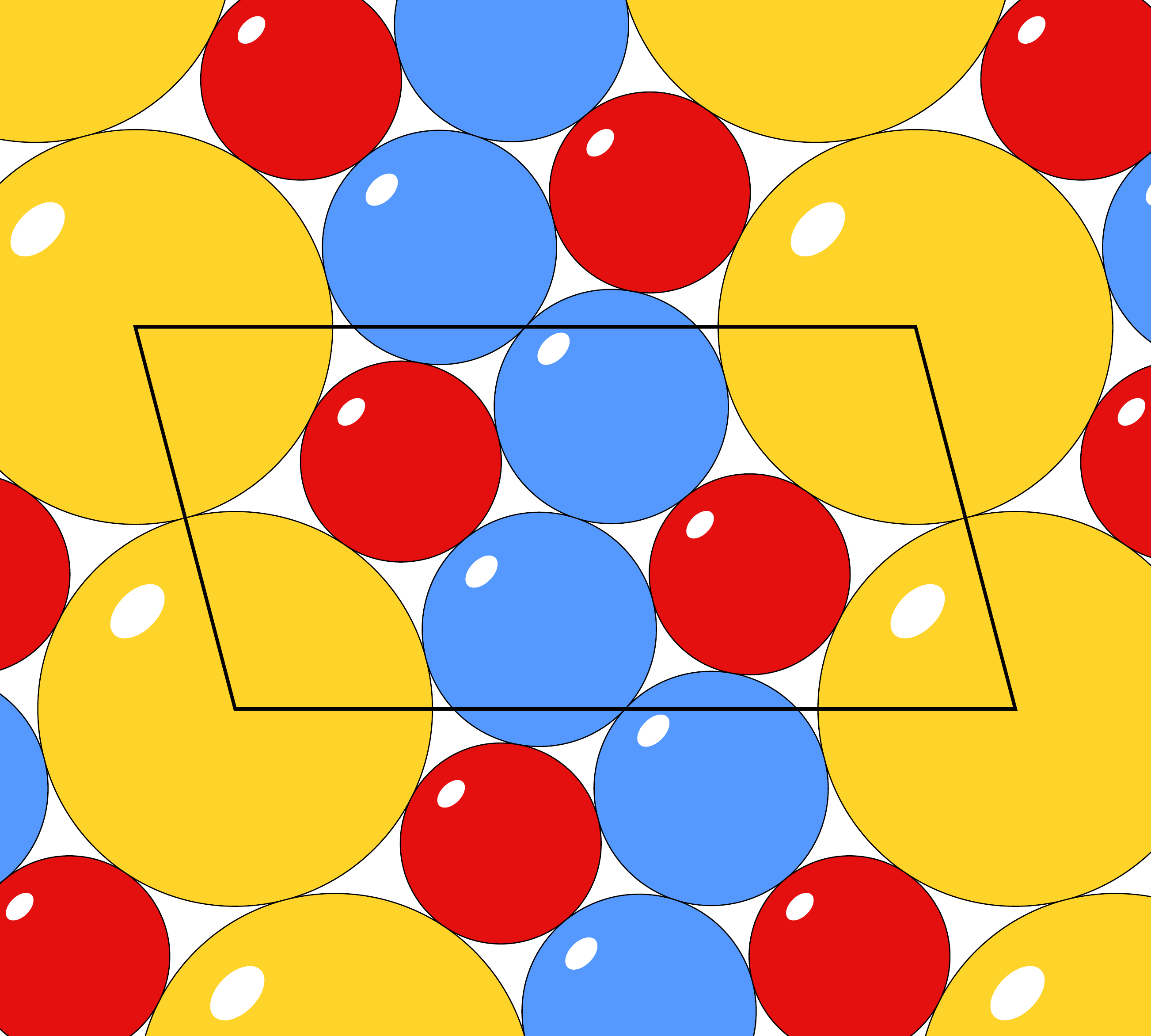}
\end{tabular}
\noindent
\begin{tabular}{lll}
  67 (L)\hfill 11rs / 111s1sss & 68 (E)\hfill 11rs / 111ss & 69 (L)\hfill 11rs / 11r1ss\\
  \includegraphics[width=0.3\textwidth]{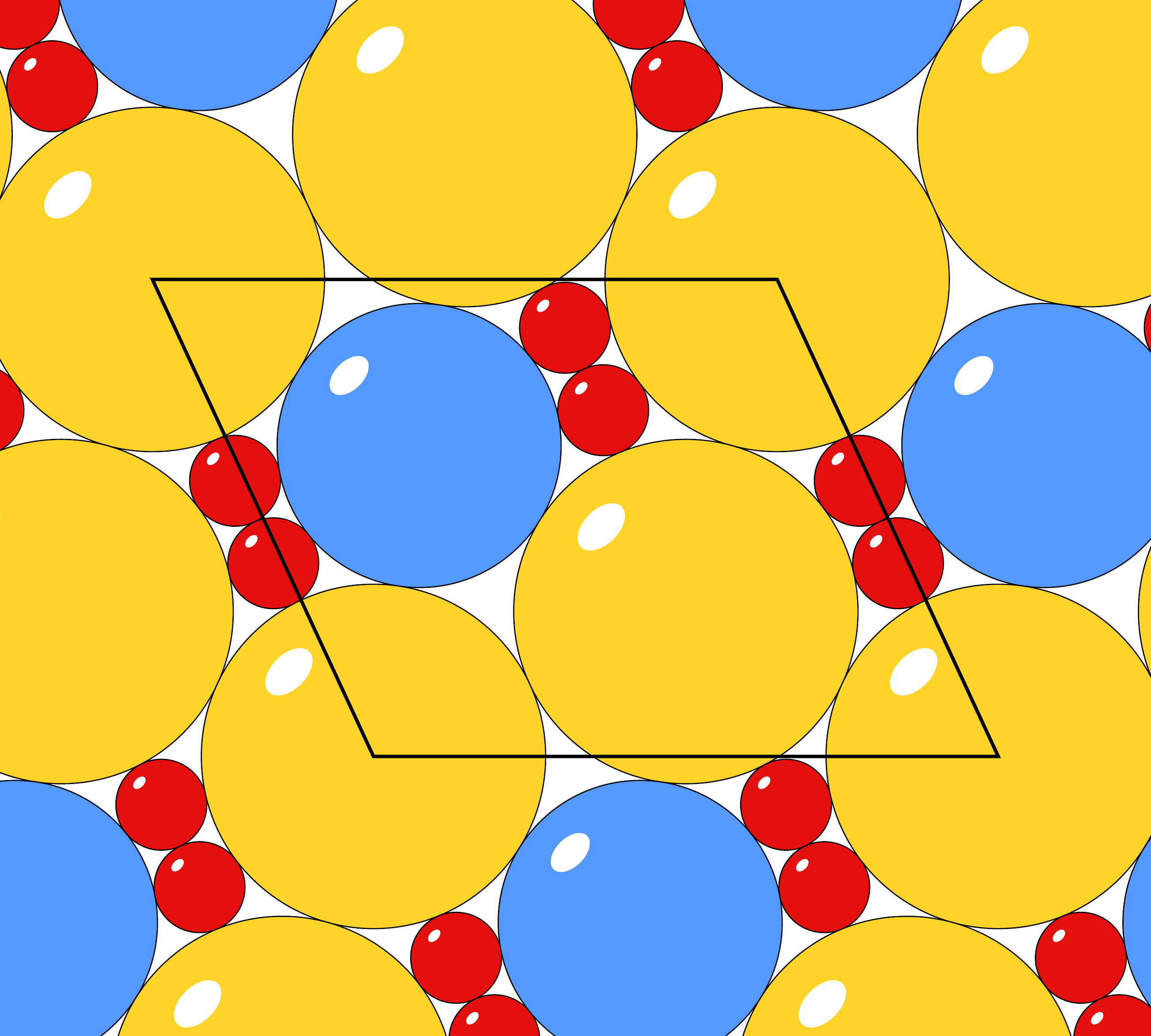} &
  \includegraphics[width=0.3\textwidth]{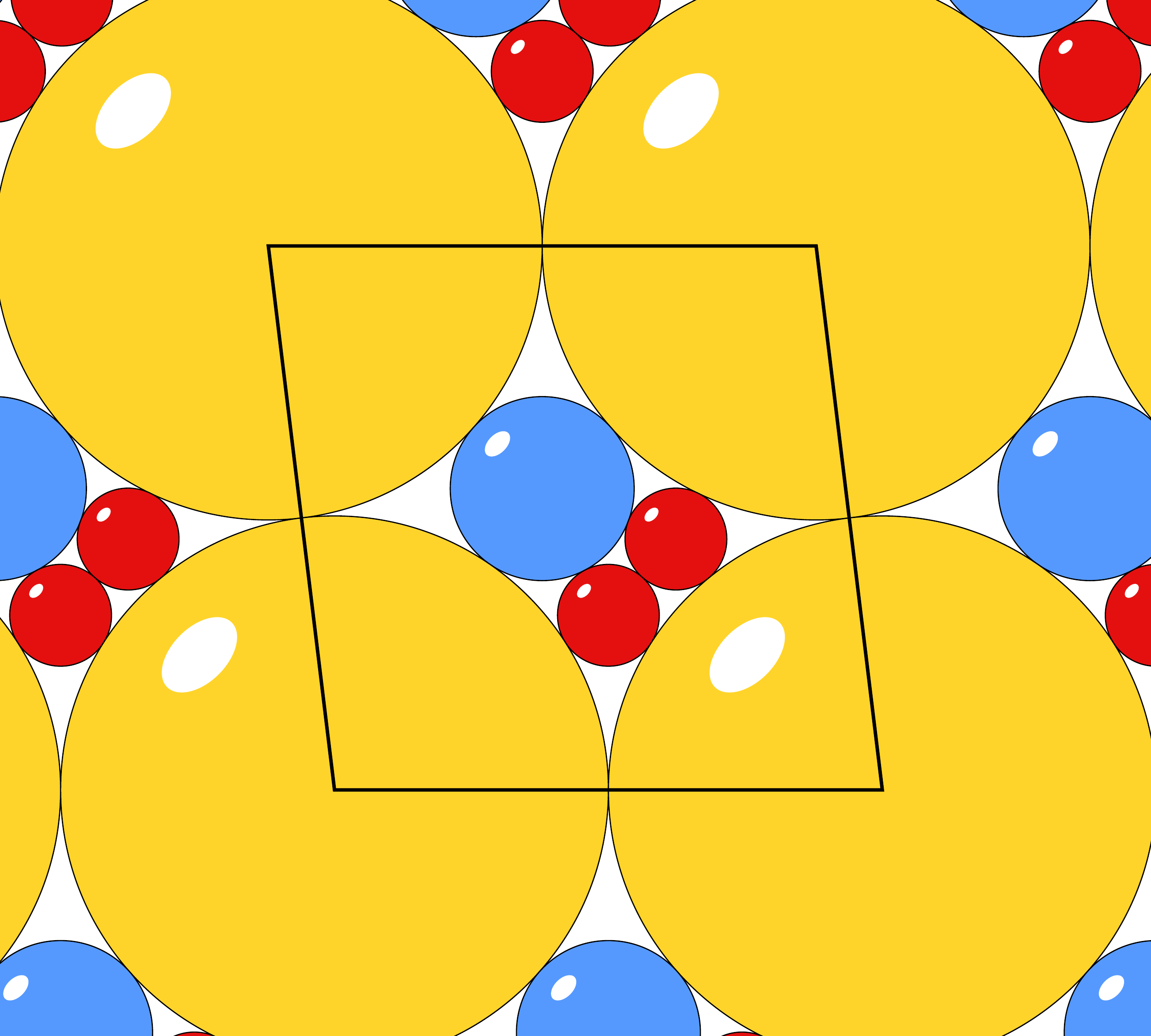} &
  \includegraphics[width=0.3\textwidth]{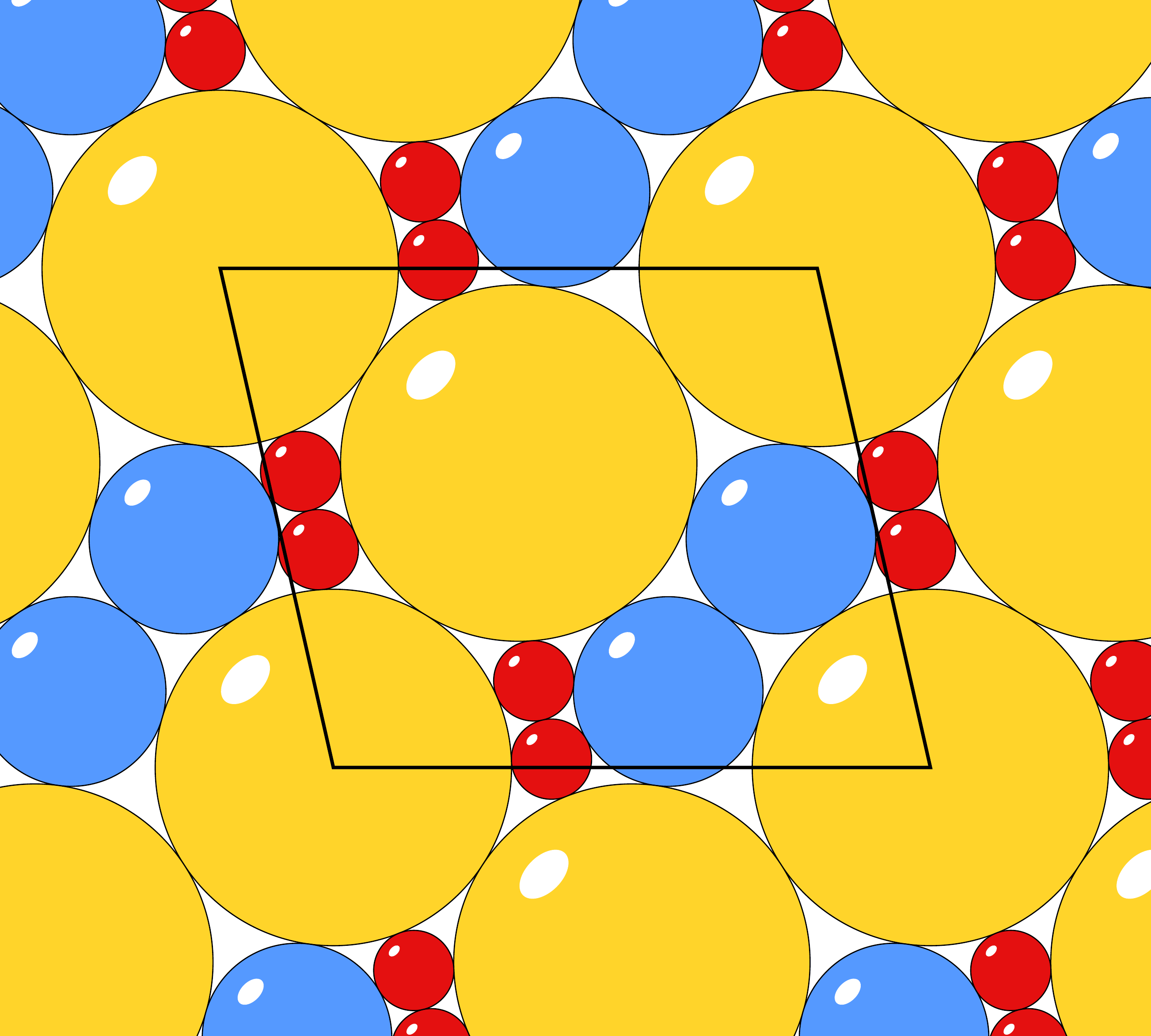}
\end{tabular}
\noindent
\begin{tabular}{lll}
  70 (H)\hfill 11rs / 1r1r1ss & 71 (E)\hfill 11rs / 1r1ss & 72 (S)\hfill 11rs / 1rr1ss\\
  \includegraphics[width=0.3\textwidth]{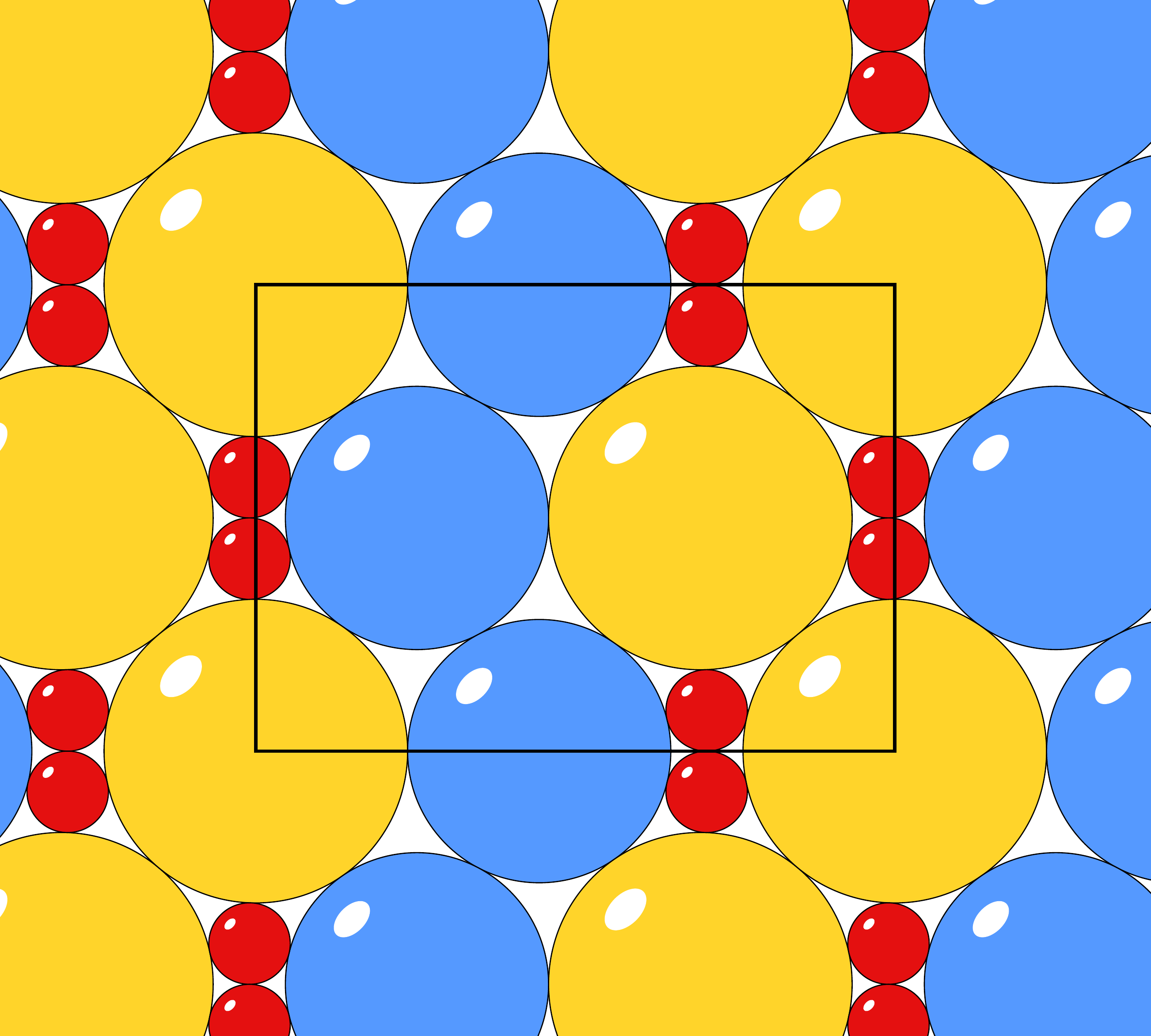} &
  \includegraphics[width=0.3\textwidth]{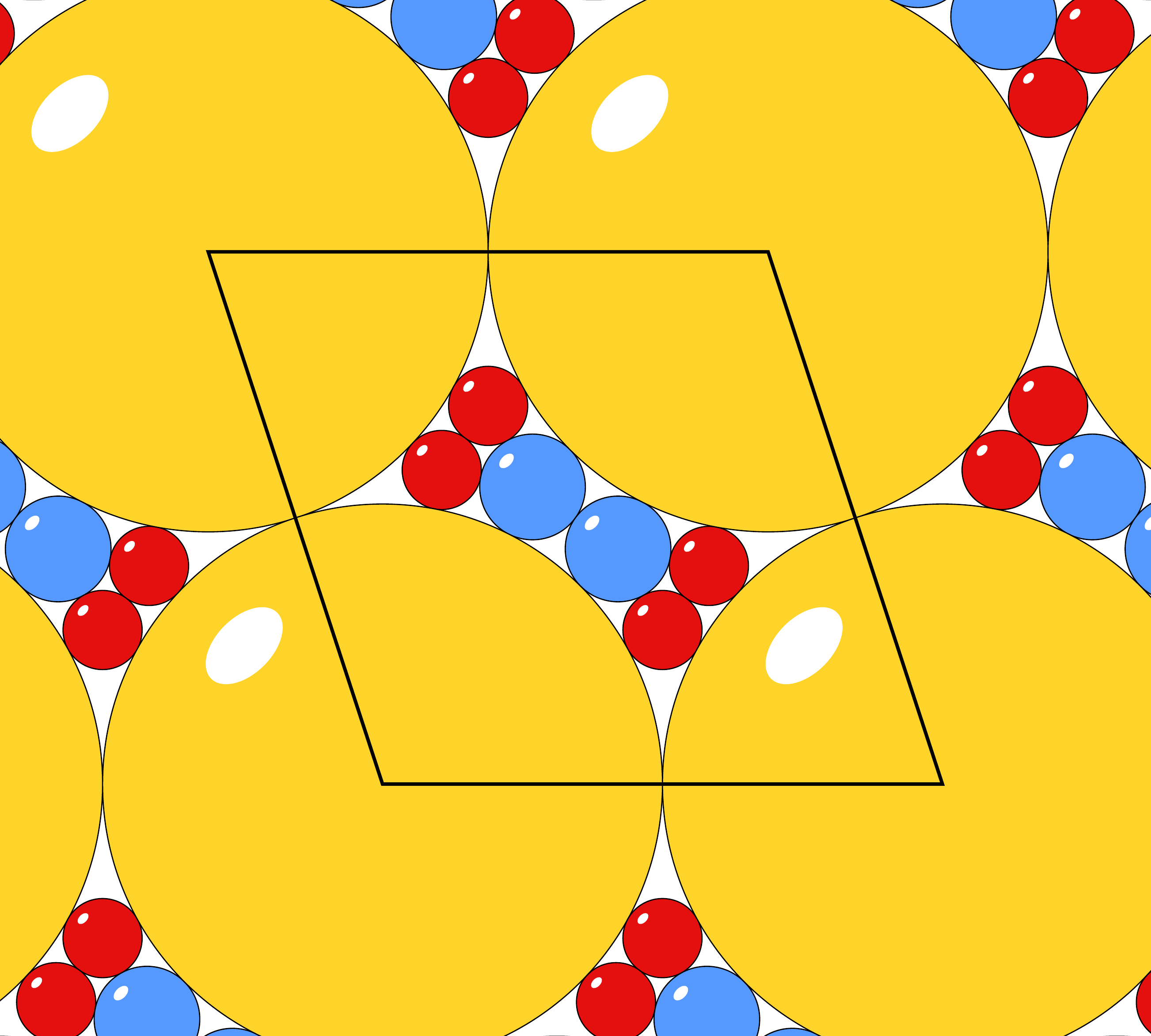} &
  \includegraphics[width=0.3\textwidth]{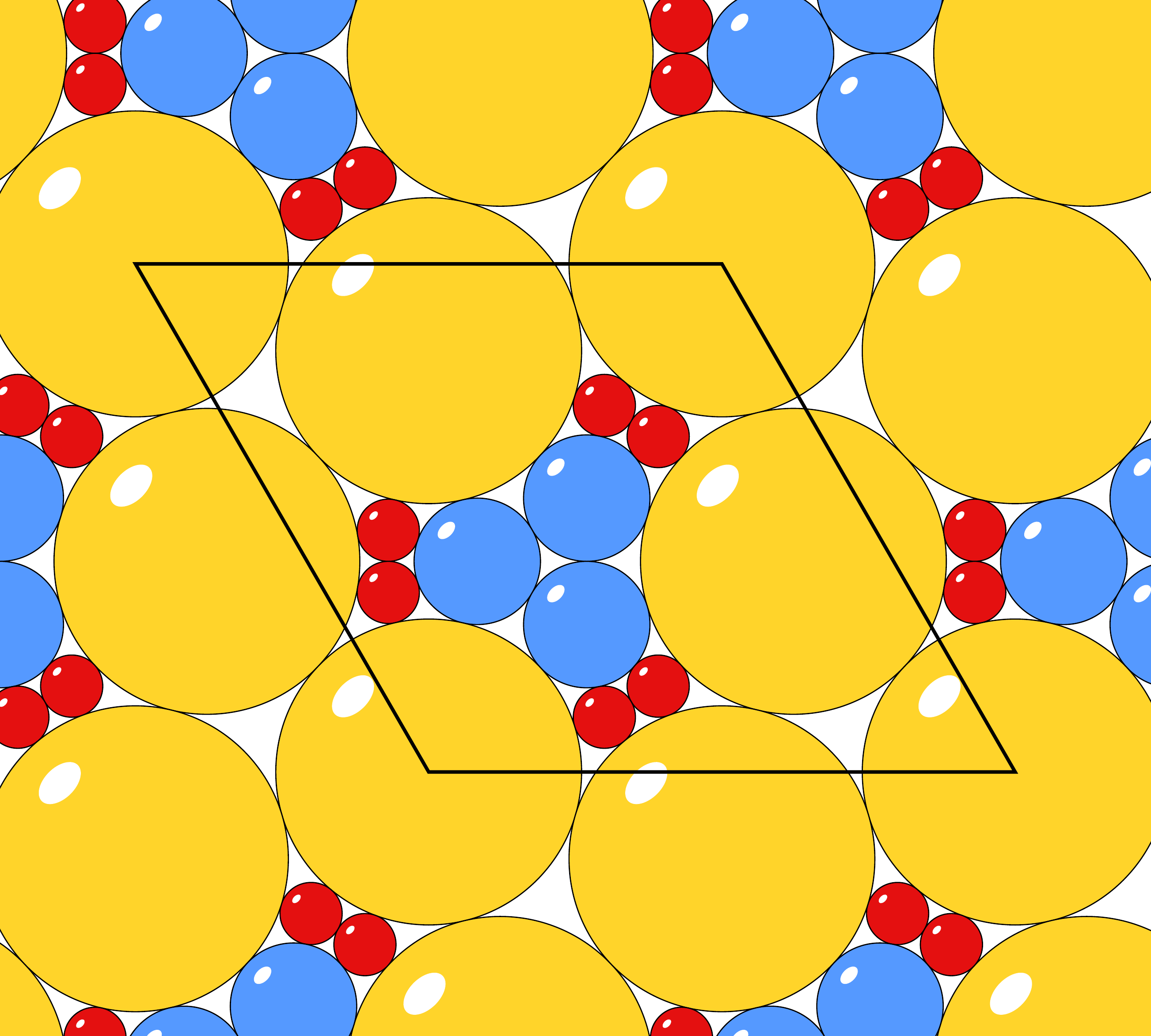}
\end{tabular}
\noindent
\begin{tabular}{lll}
  73 (H)\hfill 11rs / 1rrr1ss & 74 (S)\hfill 11rs / 1s1s1ssss & 75 (E)\hfill 11rs / 1s1sss\\
  \includegraphics[width=0.3\textwidth]{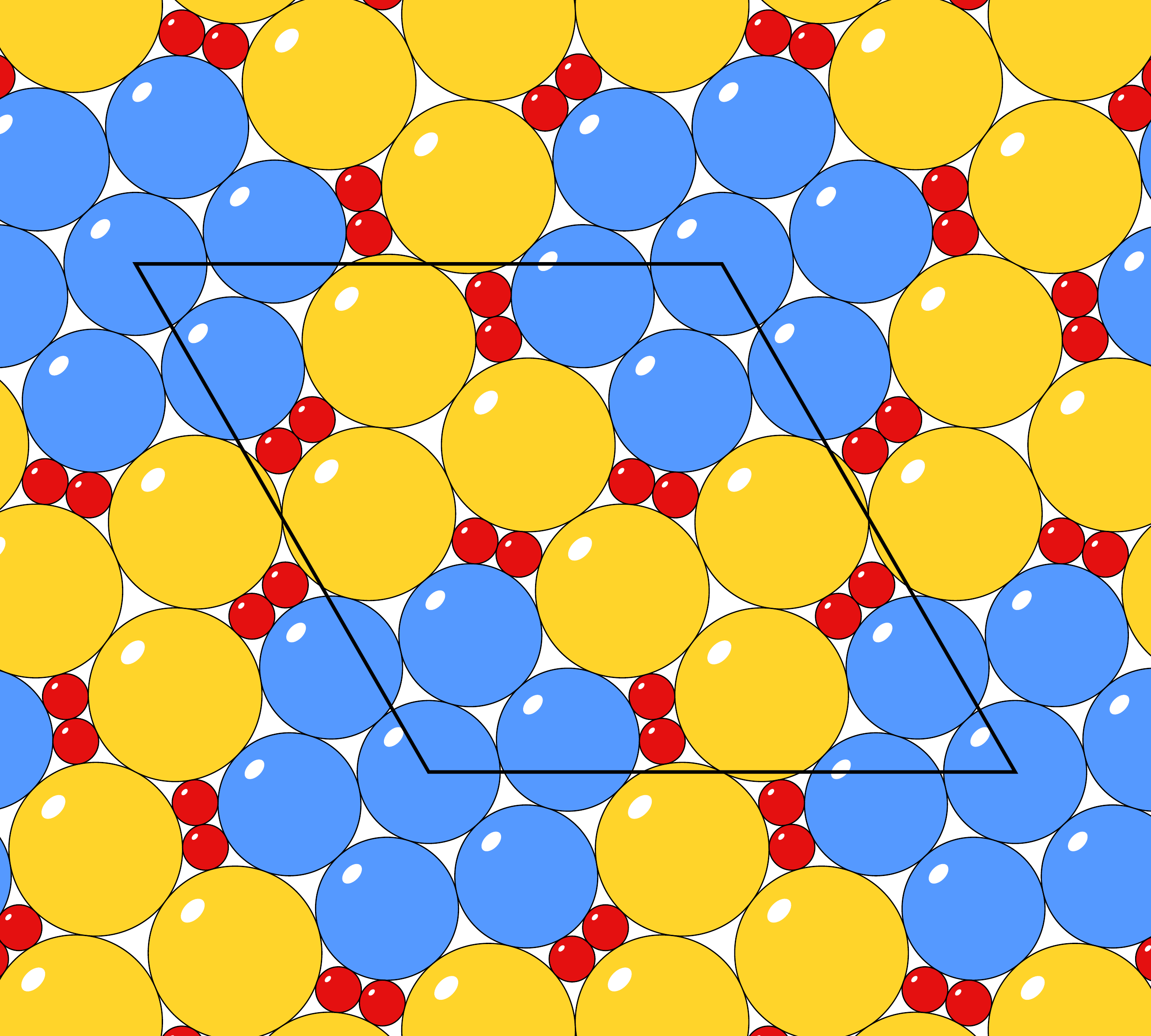} &
  \includegraphics[width=0.3\textwidth]{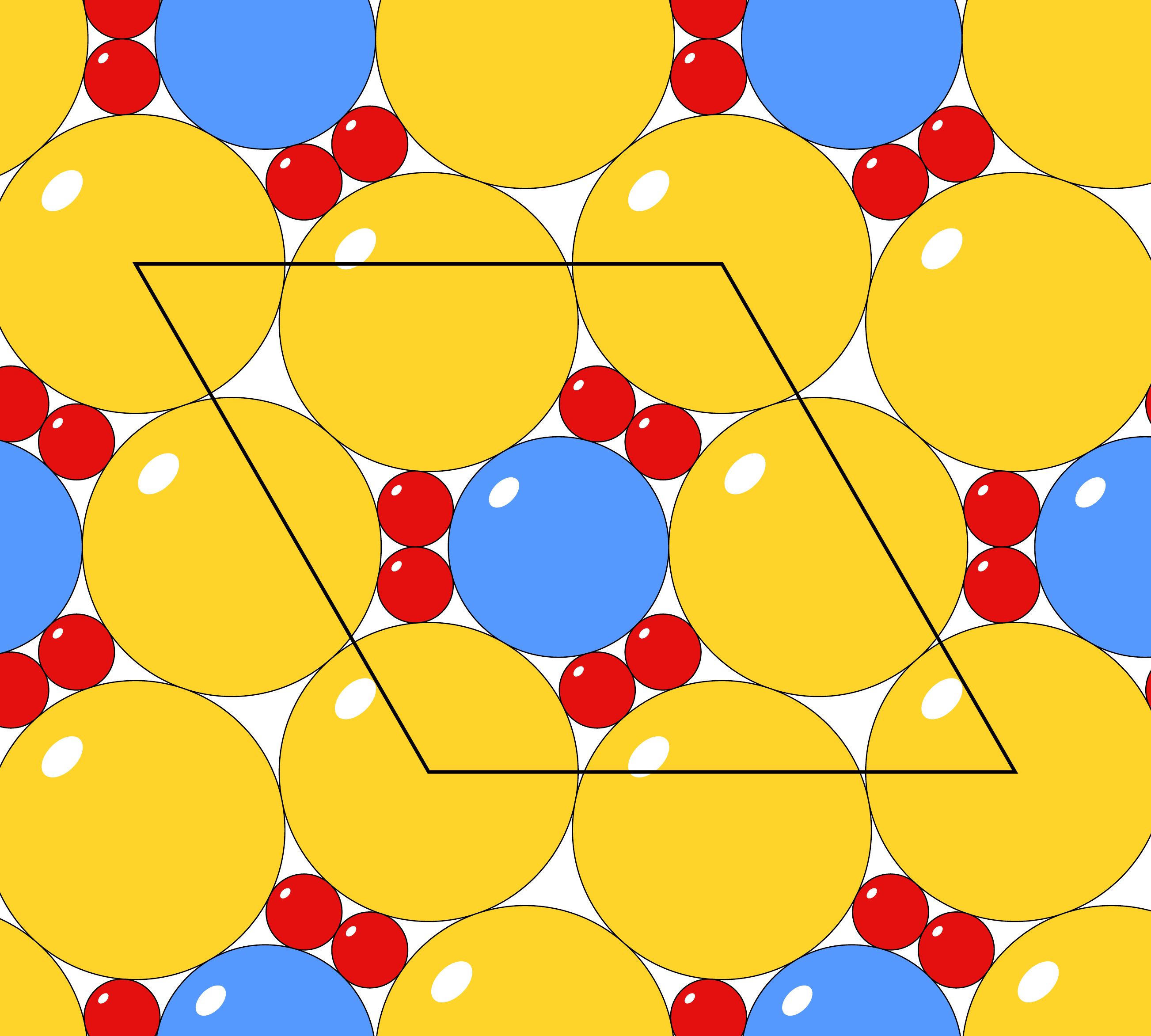} &
  \includegraphics[width=0.3\textwidth]{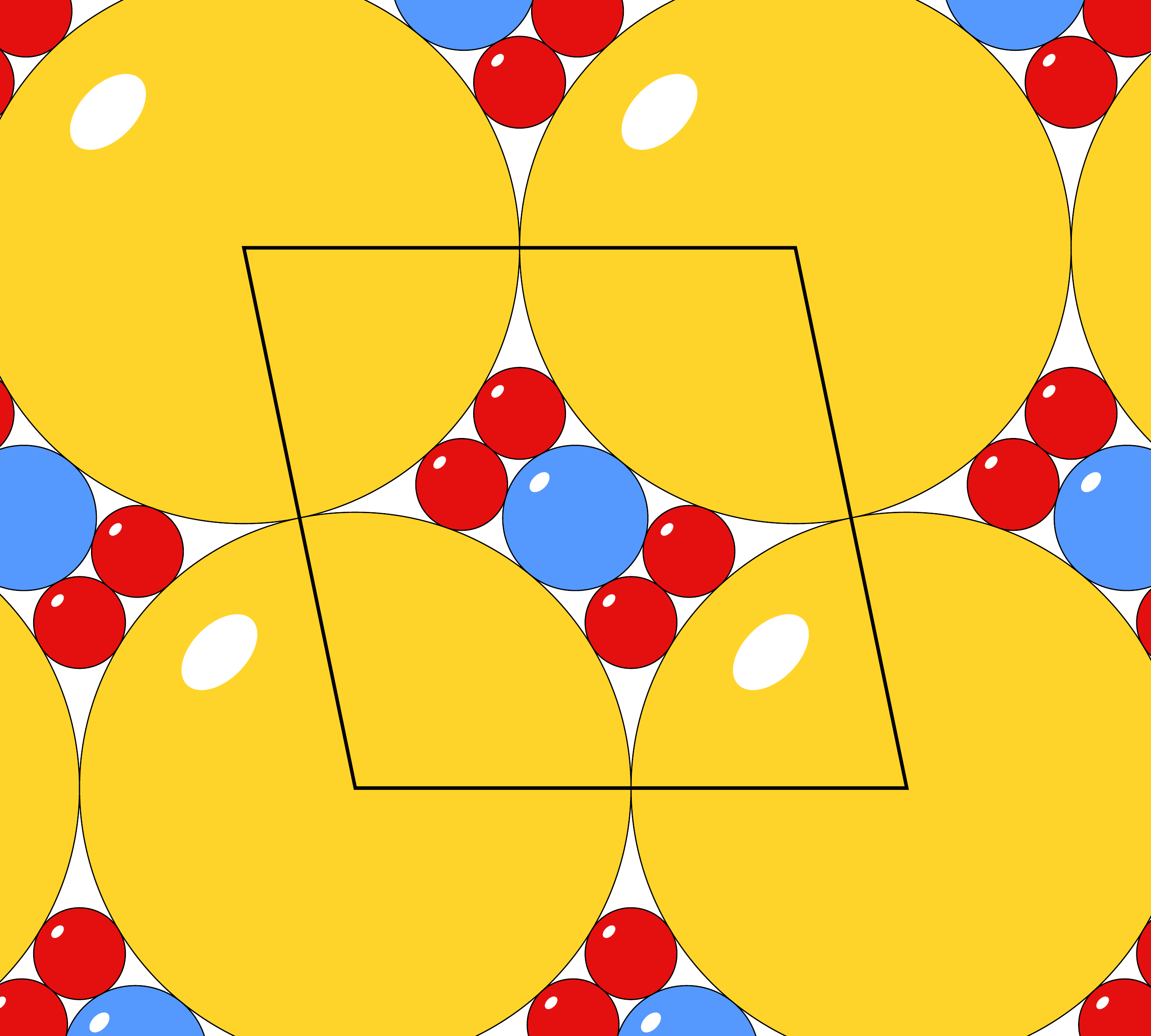}
\end{tabular}
\noindent
\begin{tabular}{lll}
  76 (L)\hfill 11rsr / 111ss & 77 (L)\hfill 11rsr / 11r1ss & 78 (H)\hfill 11rsr / 1rr1ss\\
  \includegraphics[width=0.3\textwidth]{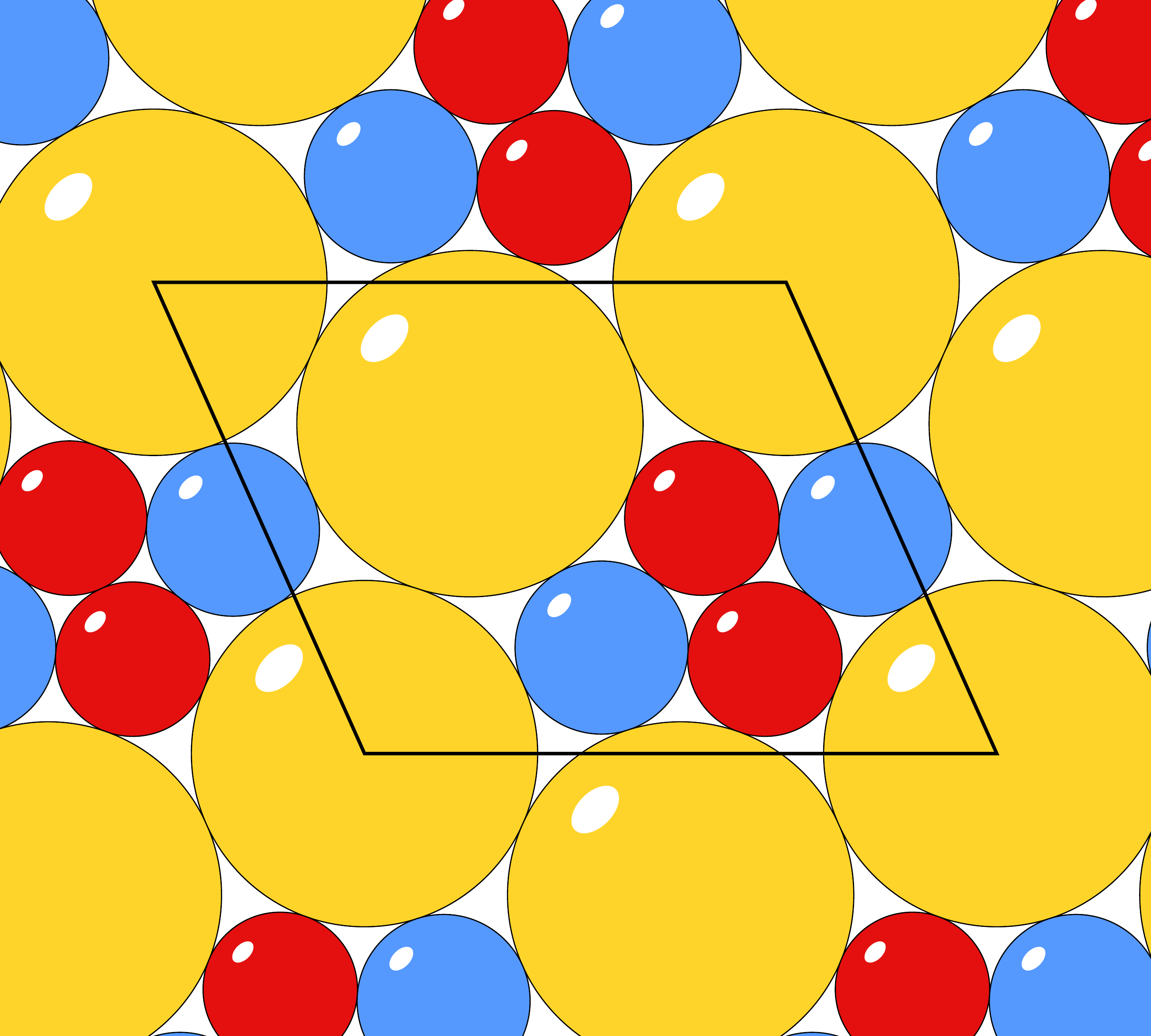} &
  \includegraphics[width=0.3\textwidth]{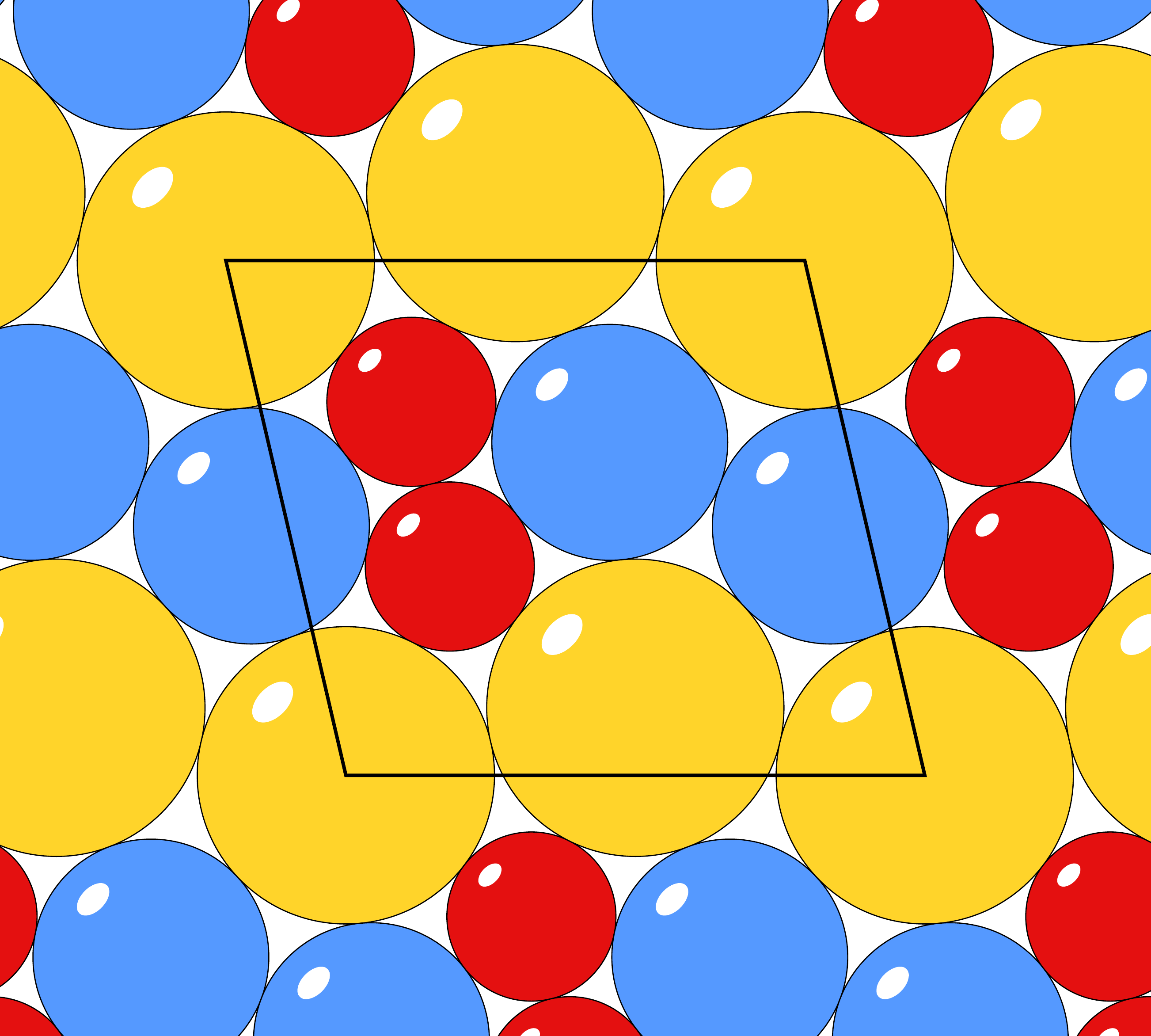} &
  \includegraphics[width=0.3\textwidth]{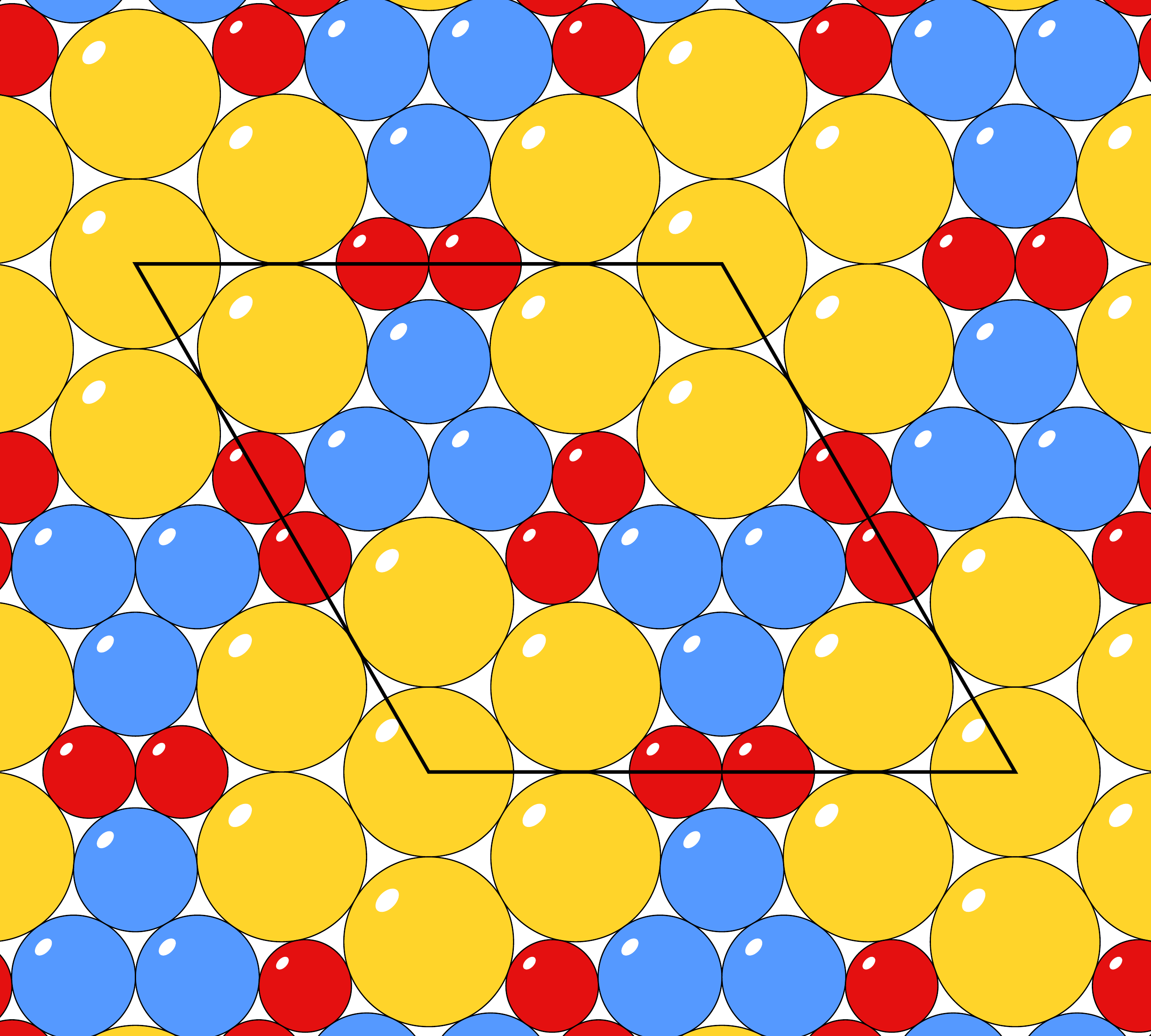}
\end{tabular}
\noindent
\begin{tabular}{lll}
  79 (L)\hfill 11rsr / 1s1sss & 80 (L)\hfill 1r1r / 1111s & 81 (L)\hfill 1r1r / 111r1s\\
  \includegraphics[width=0.3\textwidth]{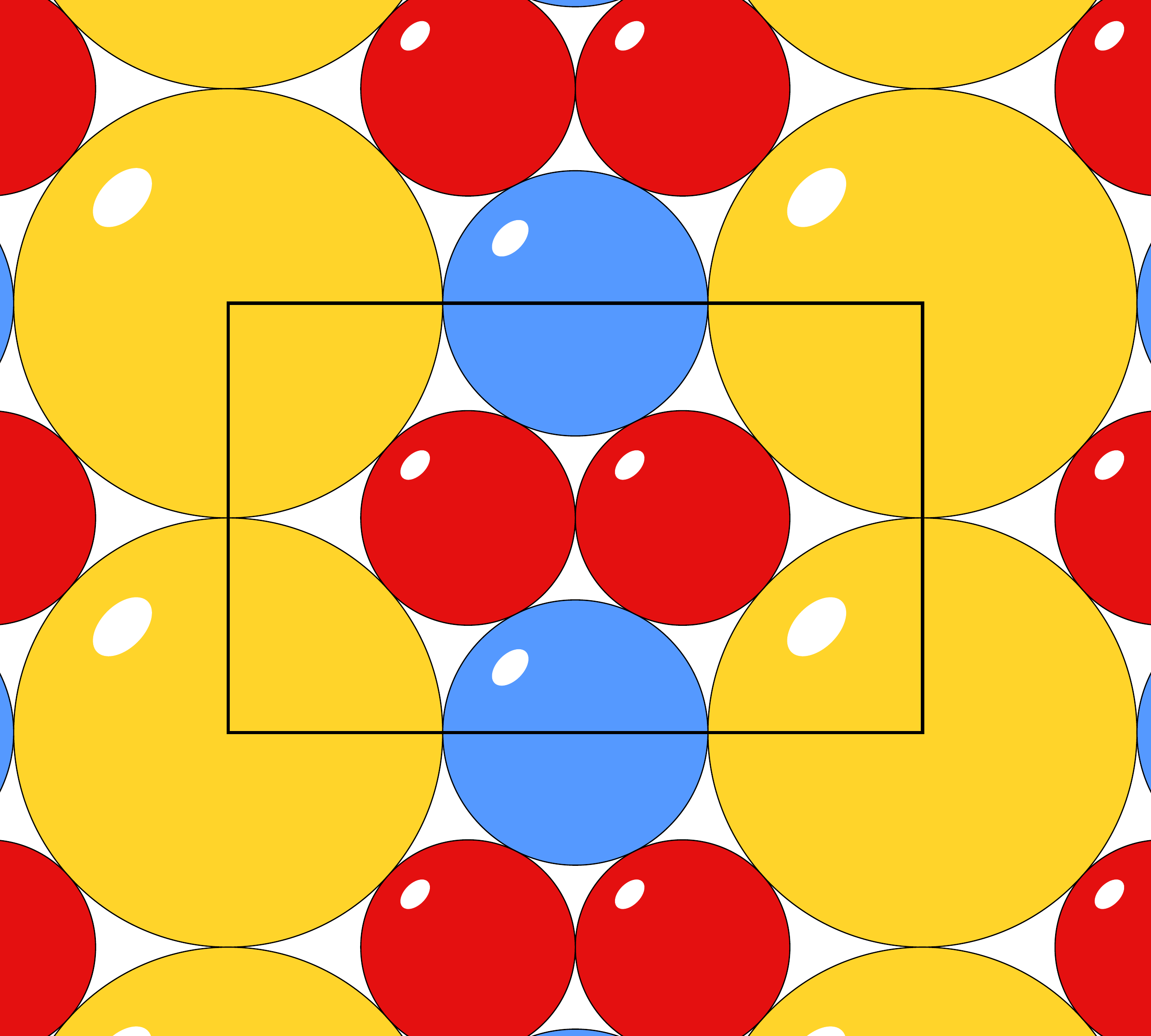} &
  \includegraphics[width=0.3\textwidth]{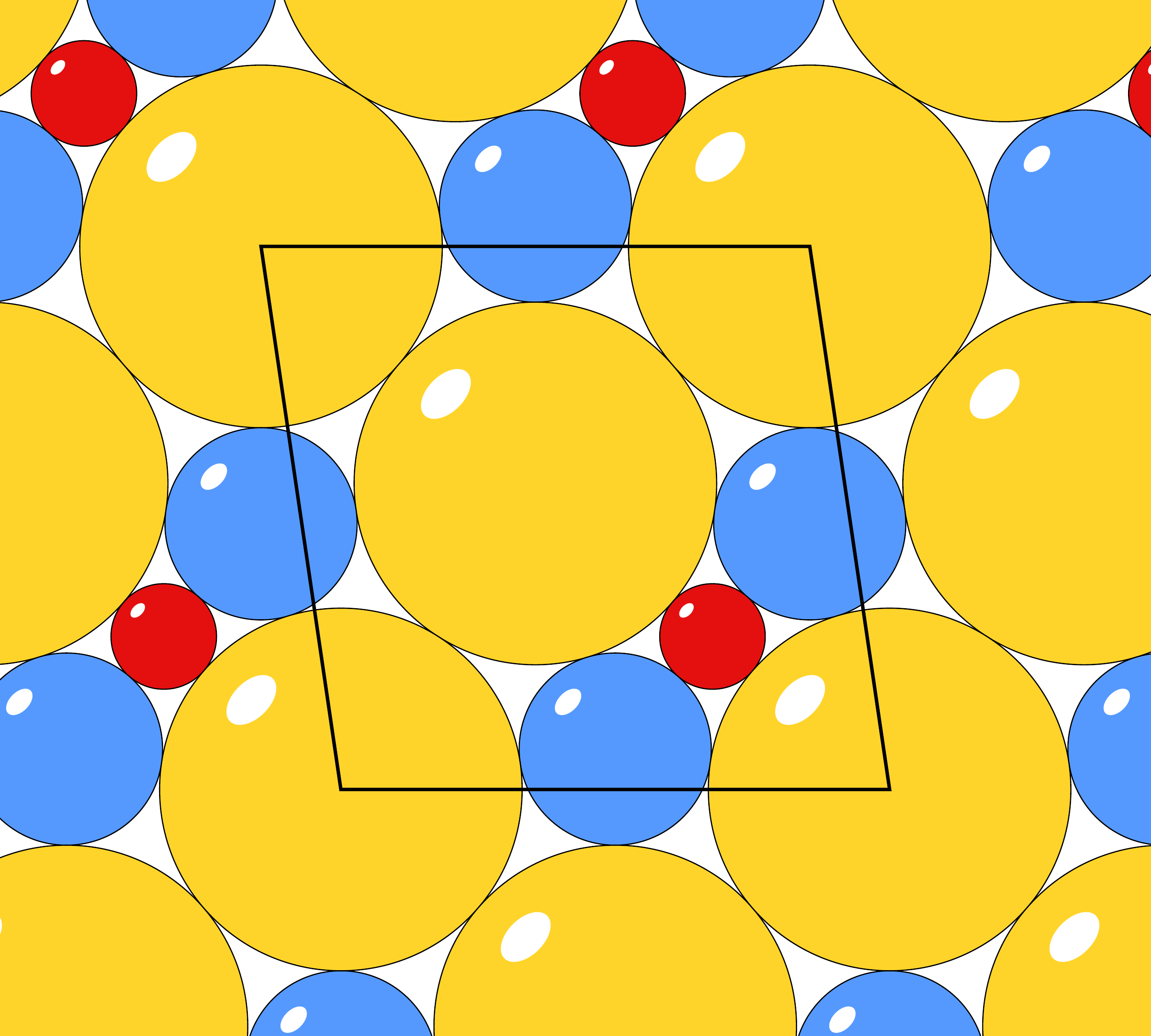} &
  \includegraphics[width=0.3\textwidth]{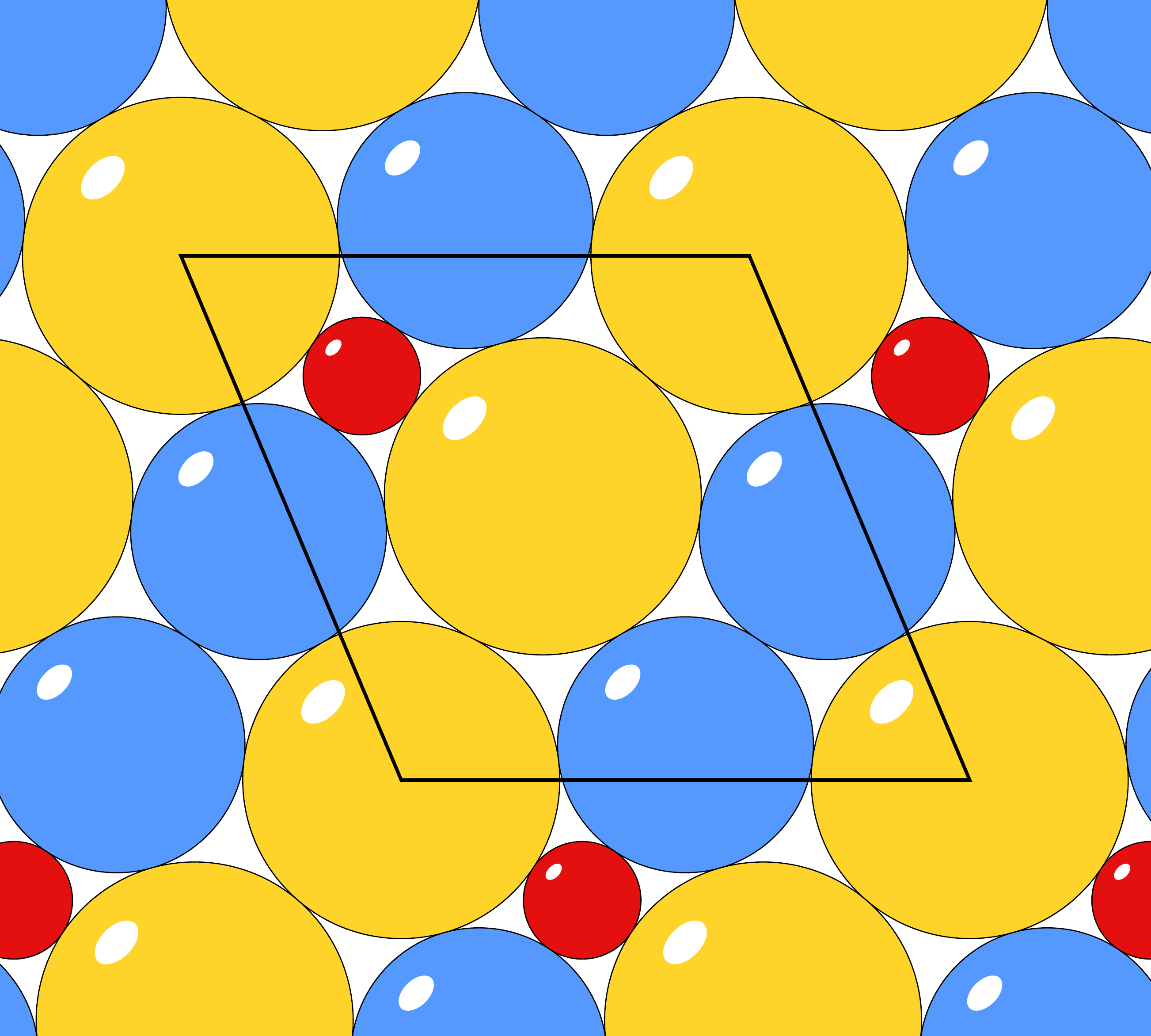}
\end{tabular}
\noindent
\begin{tabular}{lll}
  82 (L)\hfill 1r1r / 111s1s & 83 (E)\hfill 1r1r / 11r1s & 84 (H)\hfill 1r1r / 11rr1s\\
  \includegraphics[width=0.3\textwidth]{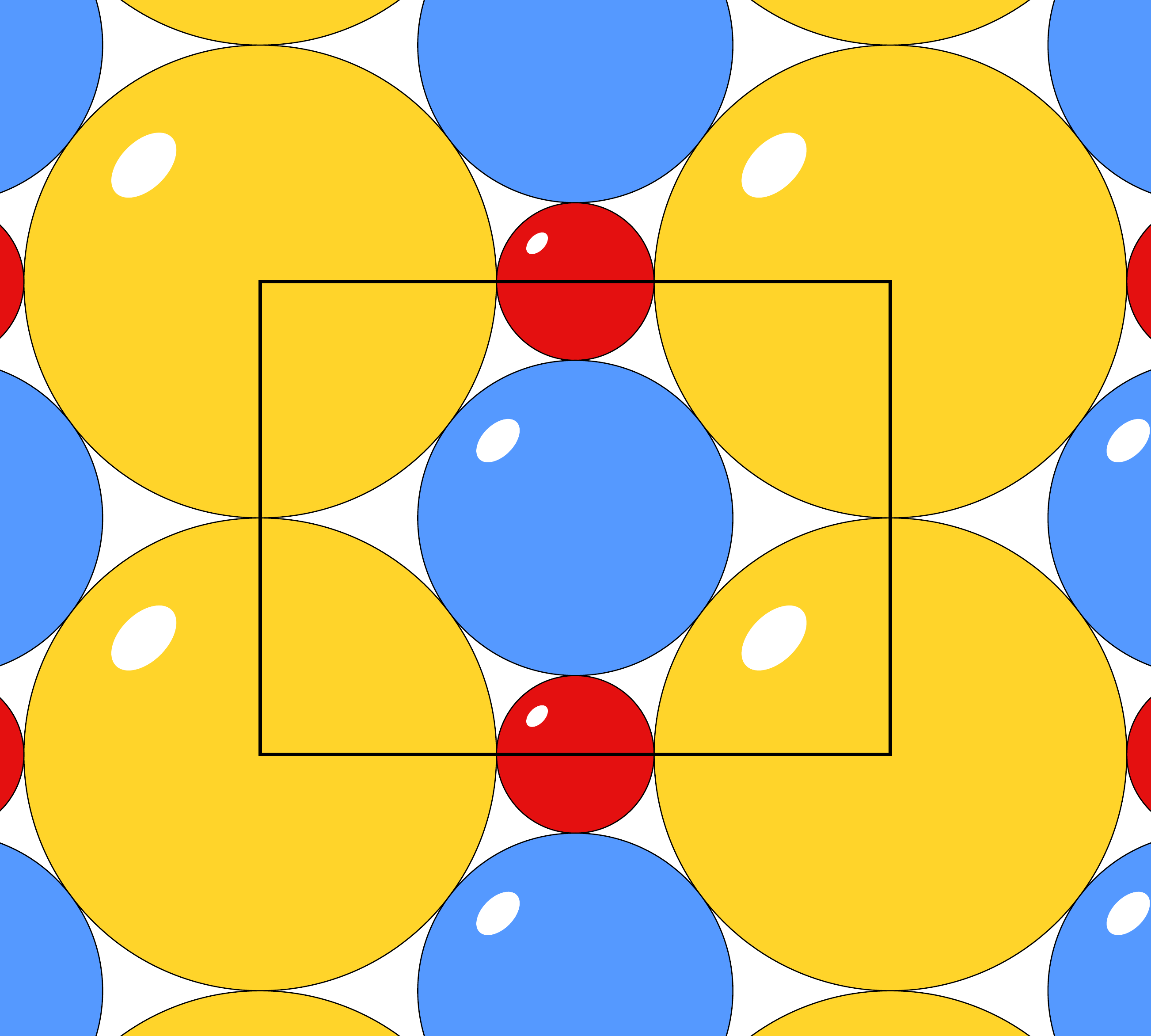} &
  \includegraphics[width=0.3\textwidth]{packing_1r1r_11r1s.pdf} &
  \includegraphics[width=0.3\textwidth]{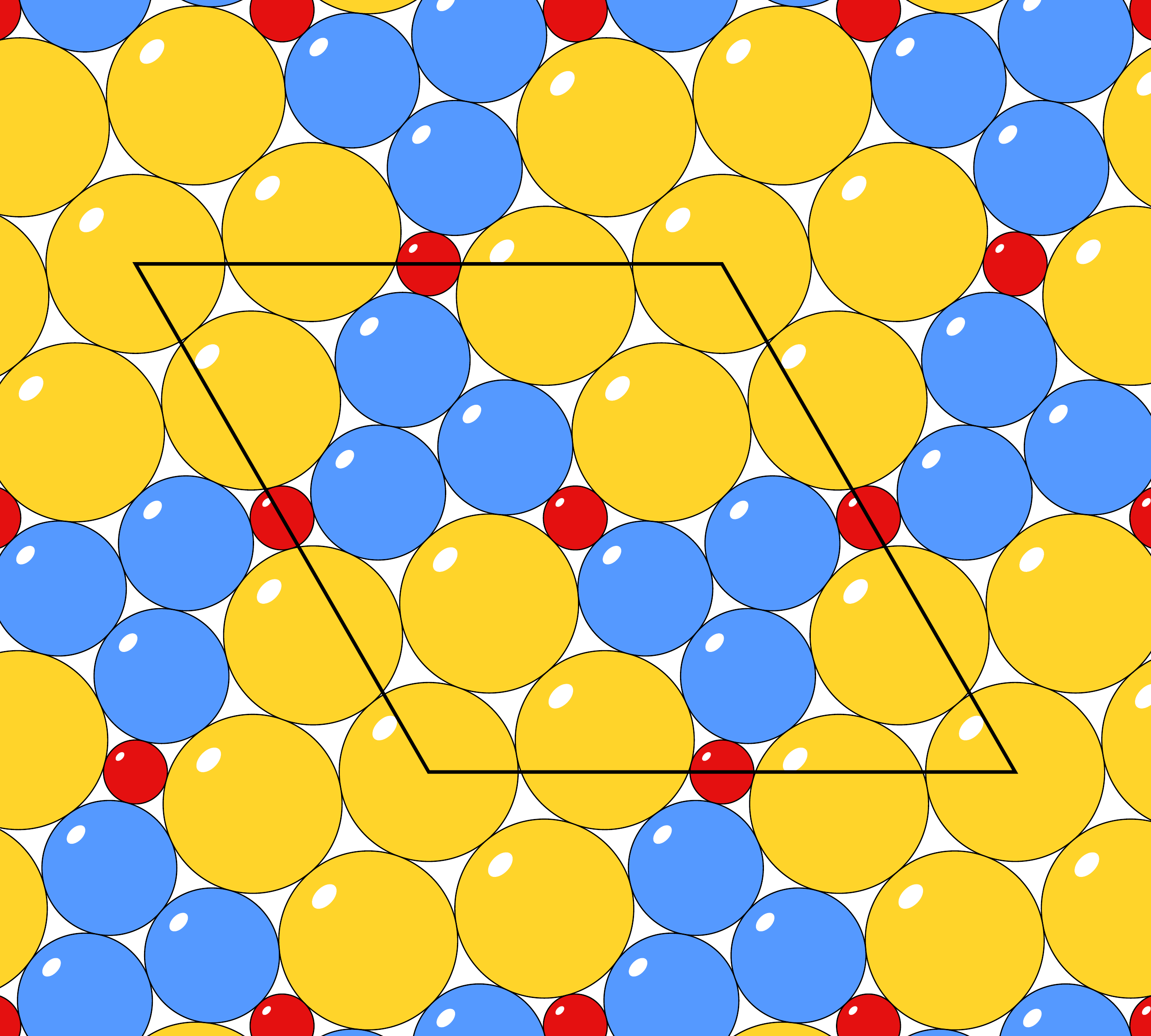}
\end{tabular}
\noindent
\begin{tabular}{lll}
  85 (L)\hfill 1r1r / 11s1s & 86 (L)\hfill 1r1r / 11s1s1s & 87 (L)\hfill 1r1r / 1r1r1s\\
  \includegraphics[width=0.3\textwidth]{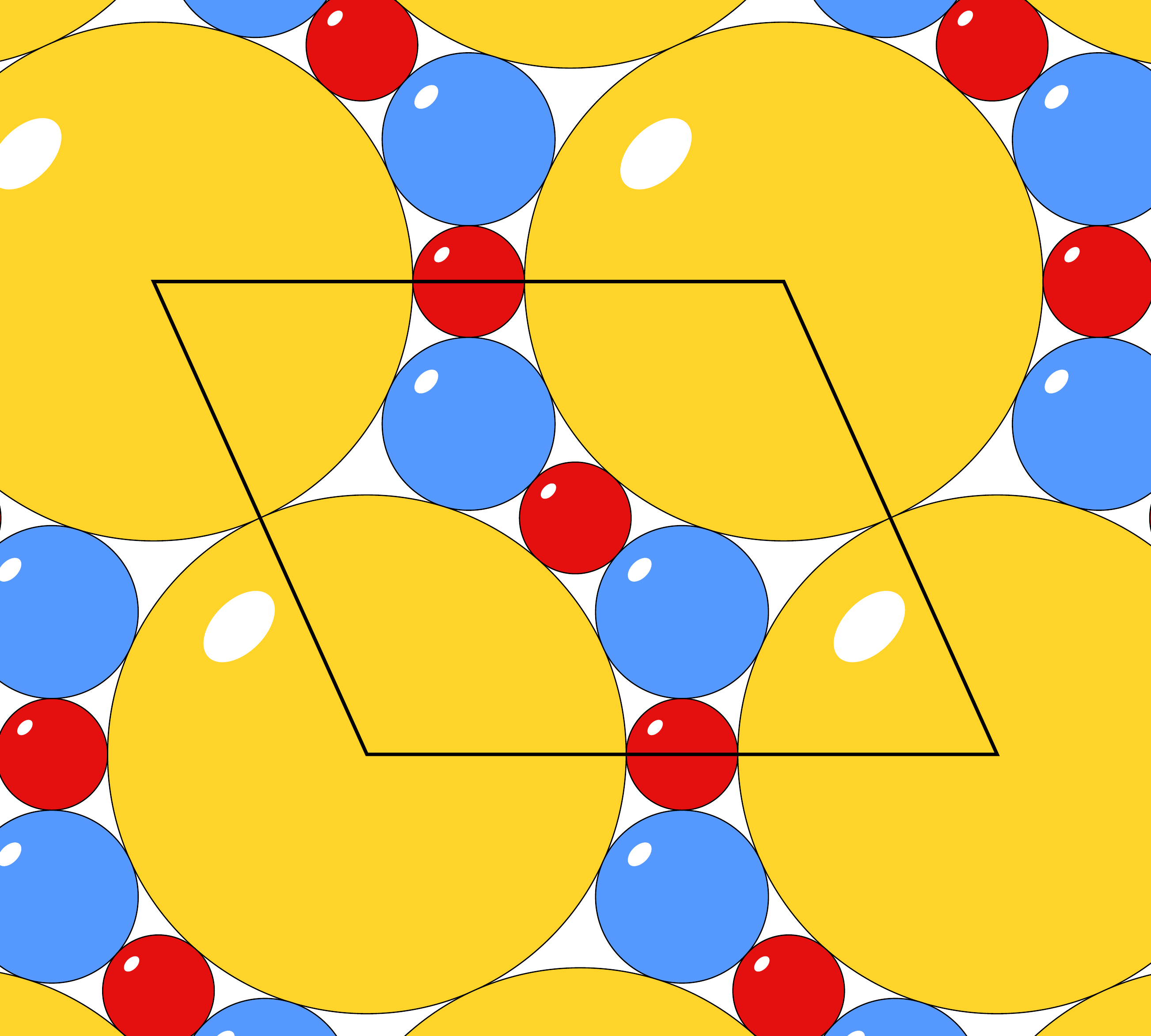} &
  \includegraphics[width=0.3\textwidth]{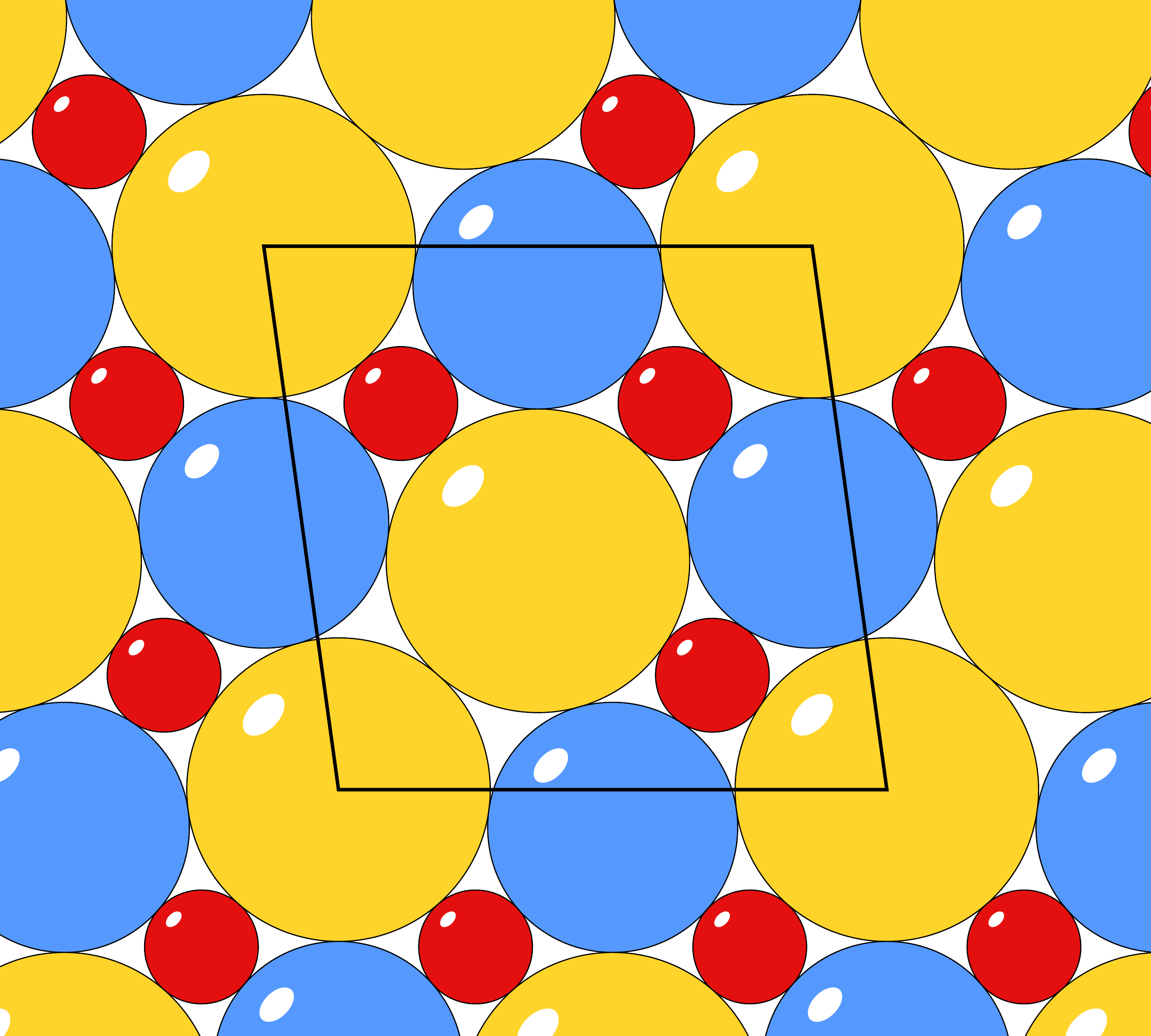} &
  \includegraphics[width=0.3\textwidth]{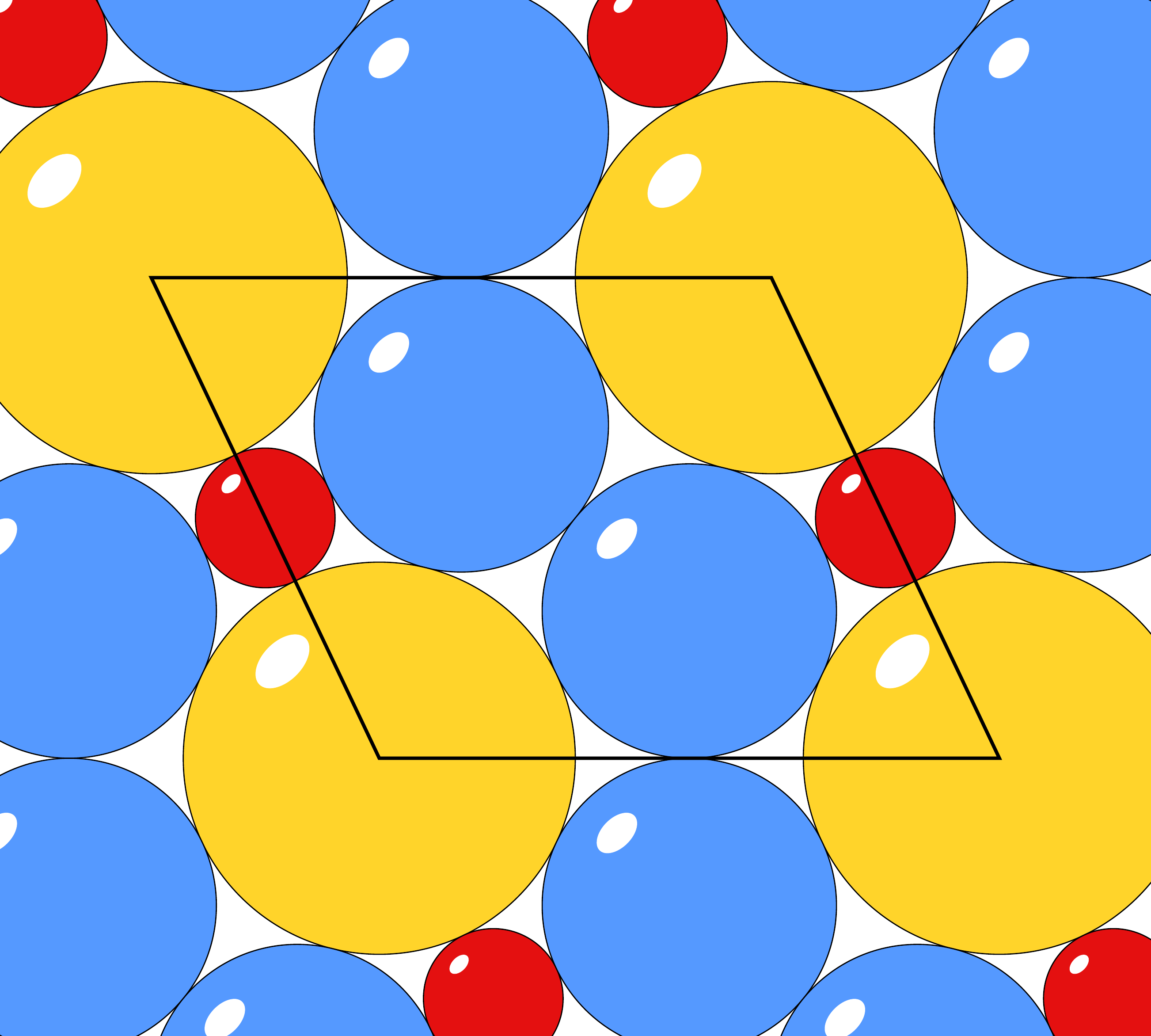}
\end{tabular}
\noindent
\begin{tabular}{lll}
  88 (L)\hfill 1r1r / 1r1s1s & 89 (H)\hfill 1r1r / 1rr1s & 90 (H)\hfill 1r1r / 1rrr1s\\
  \includegraphics[width=0.3\textwidth]{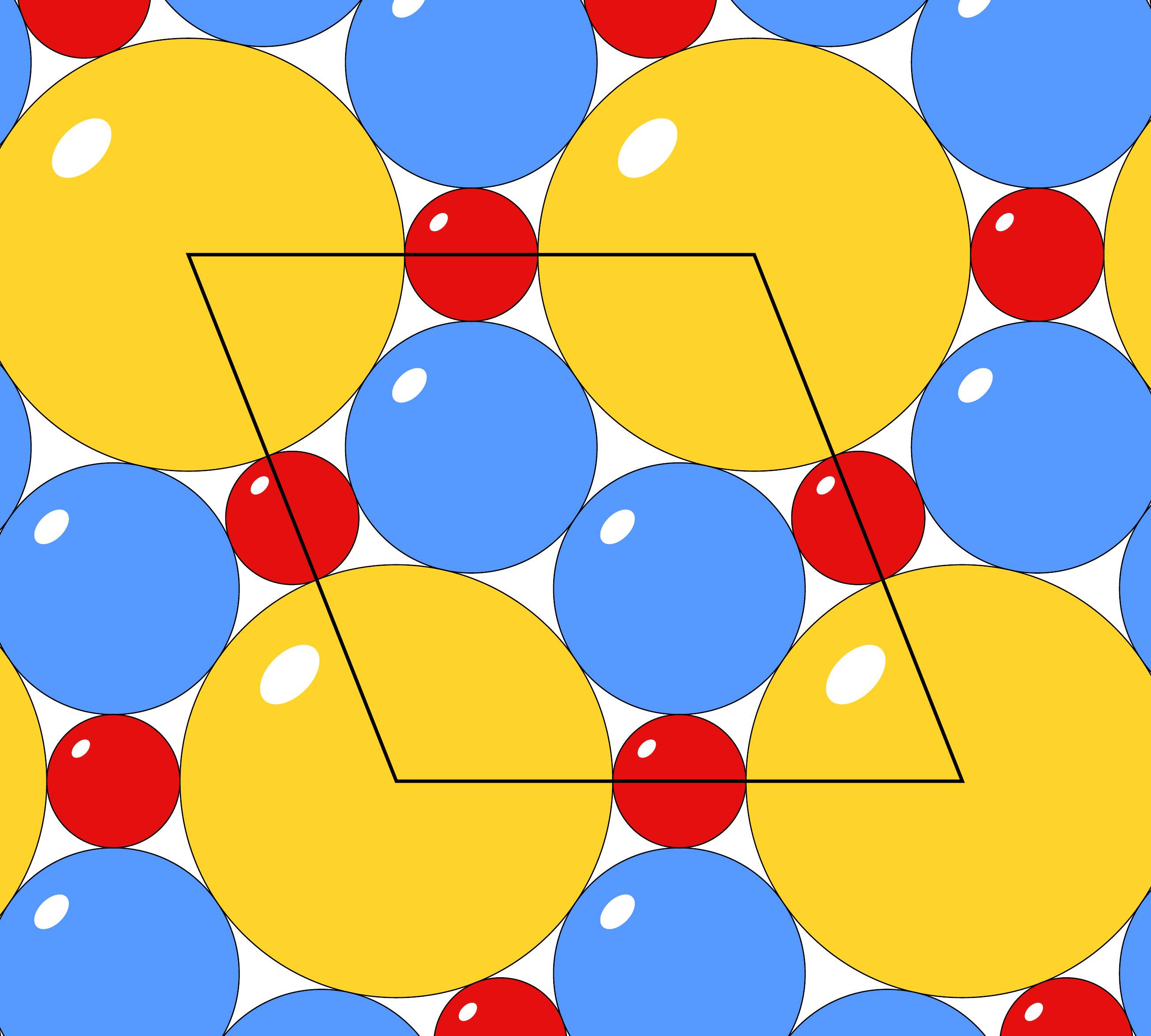} &
  \includegraphics[width=0.3\textwidth]{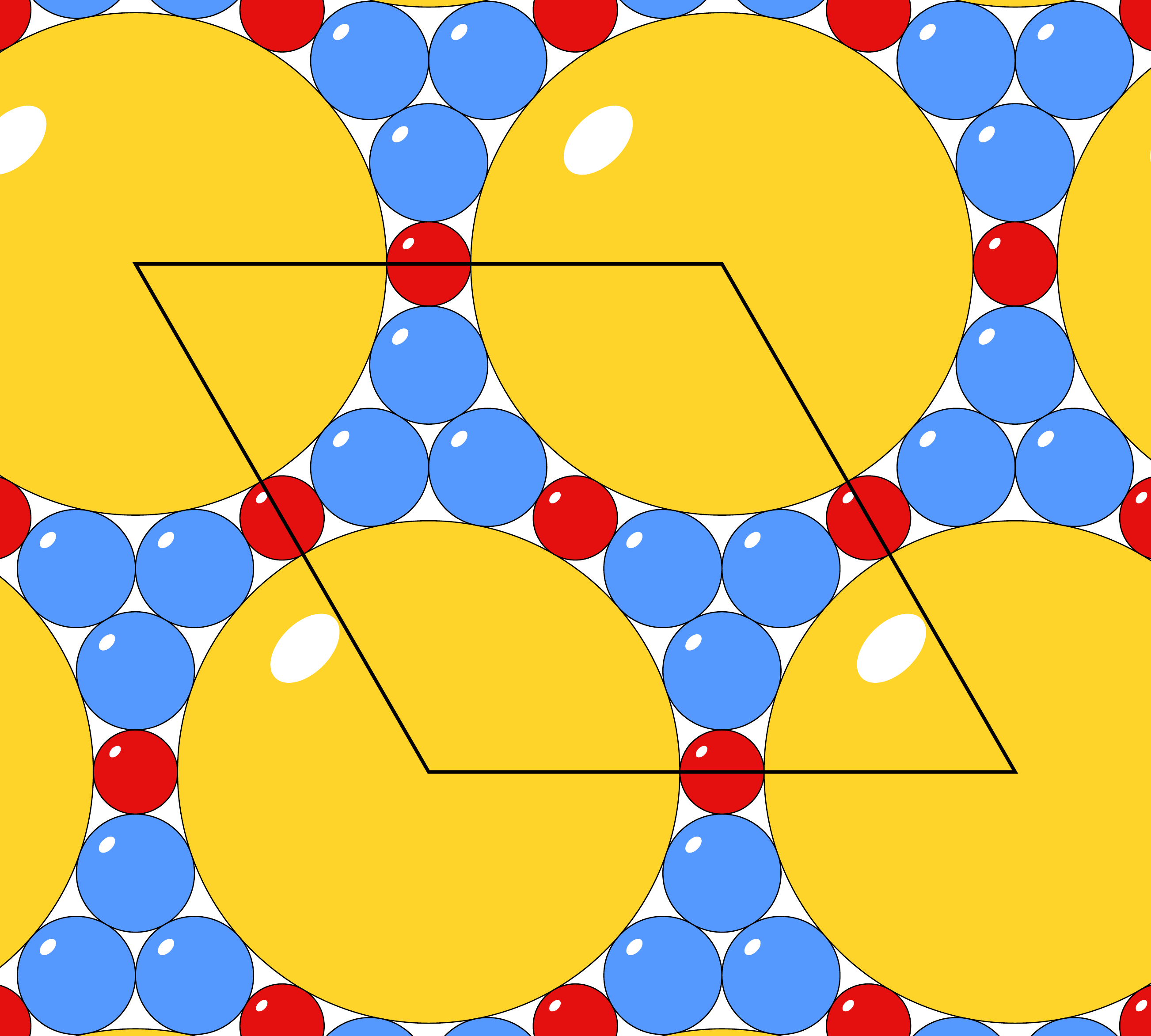} &
  \includegraphics[width=0.3\textwidth]{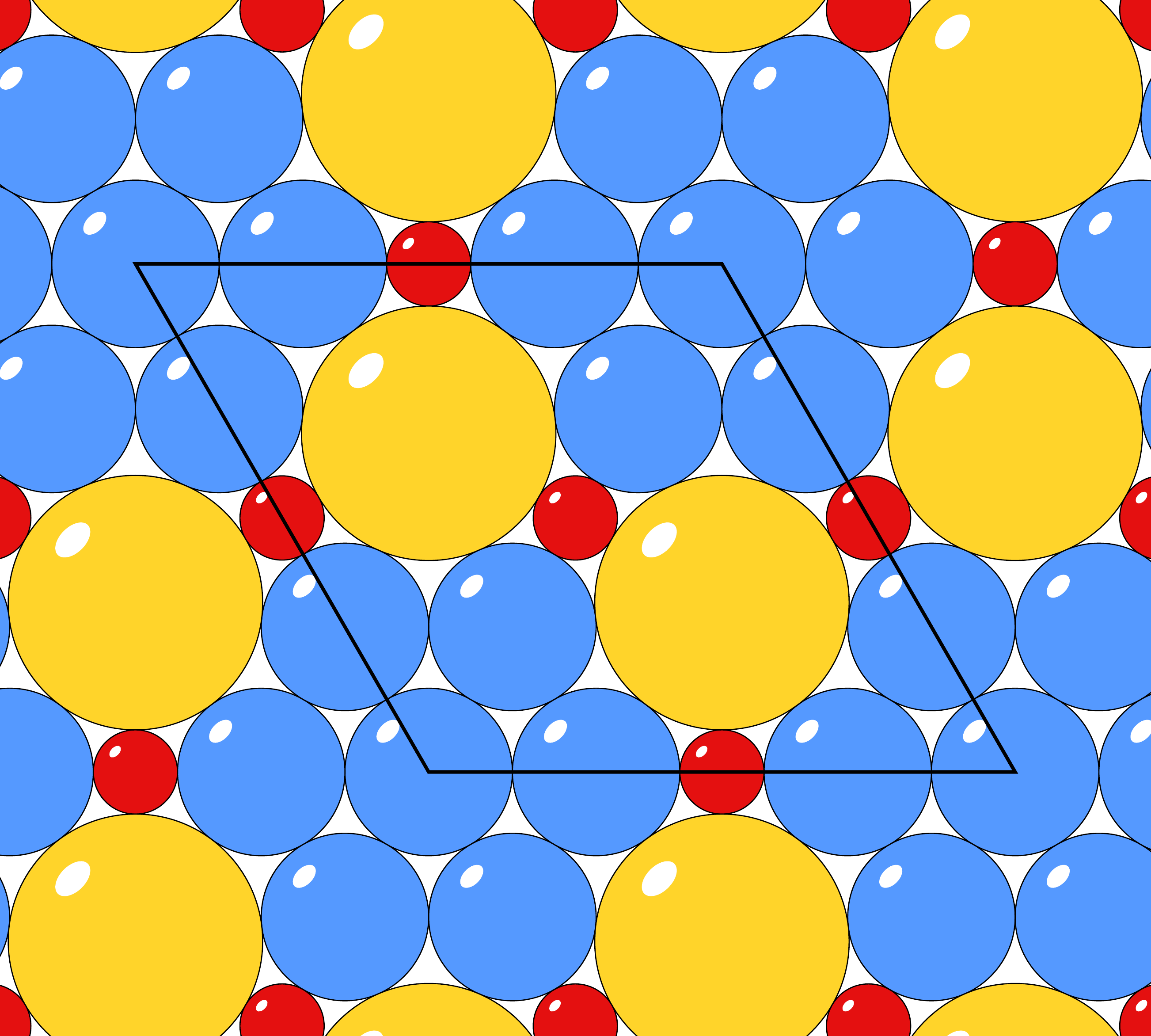}
\end{tabular}
\noindent
\begin{tabular}{lll}
  91 (H)\hfill 1r1r / 1s1s1s & 92 (E)\hfill 1r1r / 1srsrs & 93 (H)\hfill 1r1rr / 1s1srs\\
  \includegraphics[width=0.3\textwidth]{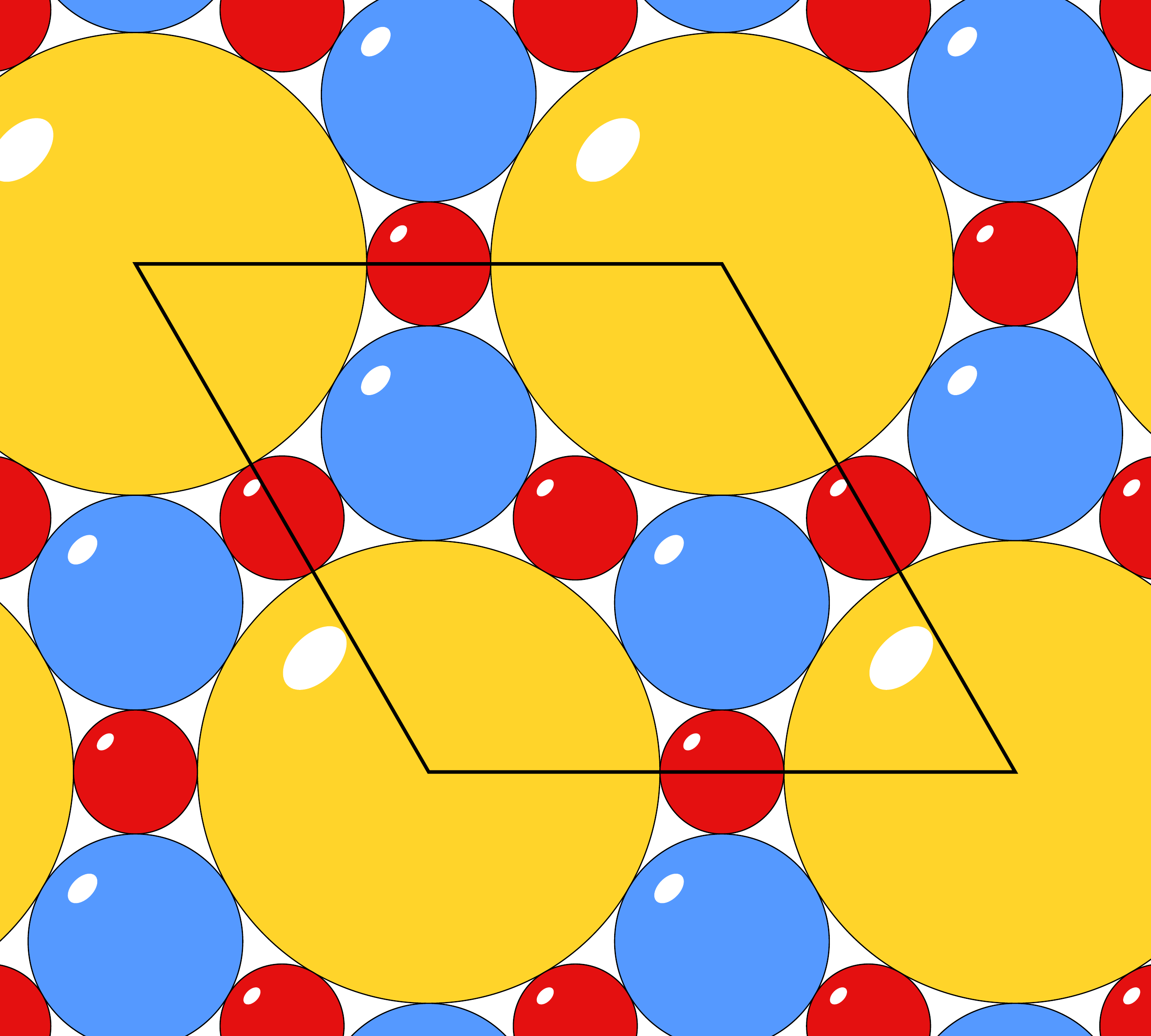} &
  \includegraphics[width=0.3\textwidth]{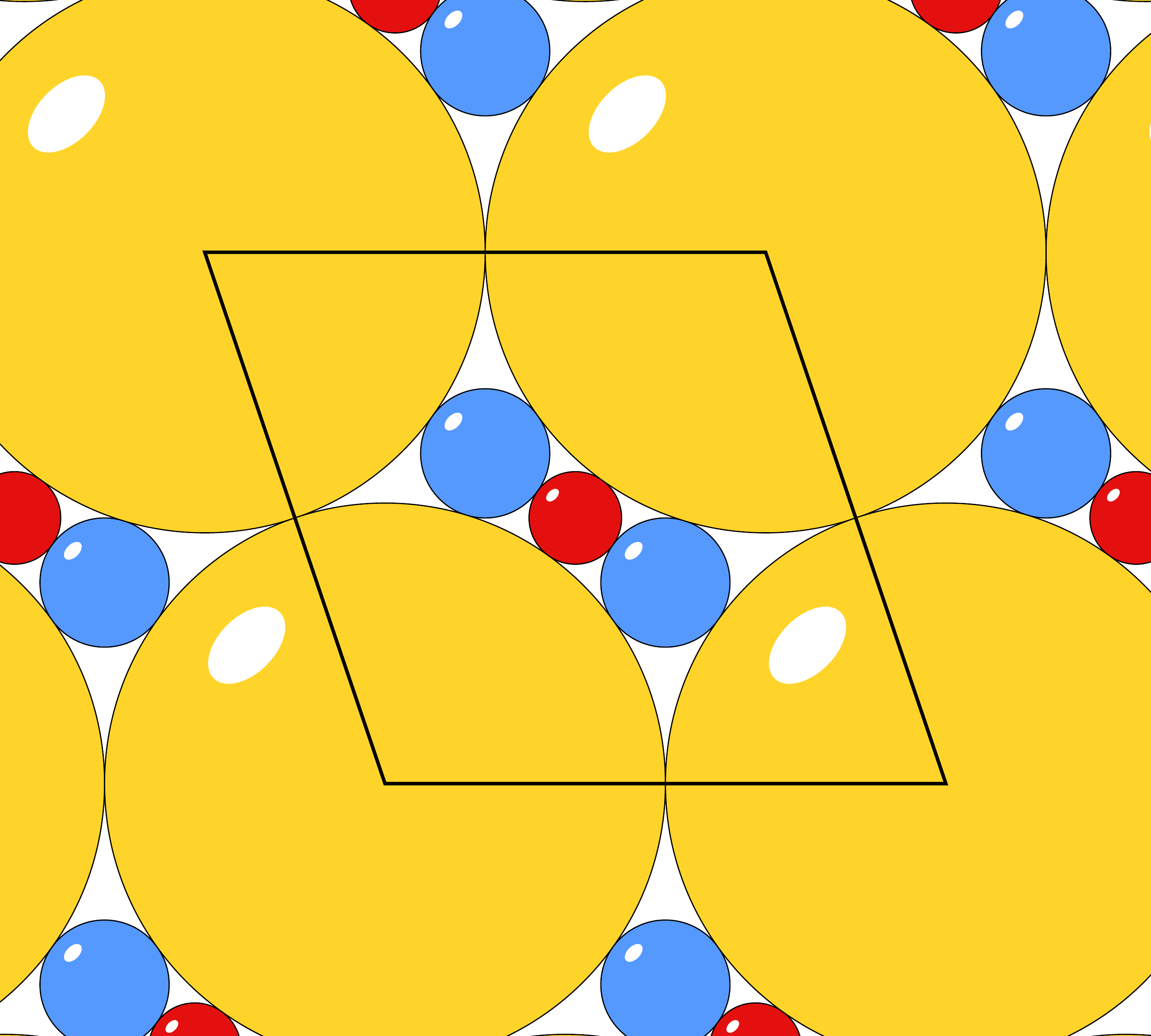} &
  \includegraphics[width=0.3\textwidth]{packing_1r1rr_1s1srs.pdf}
\end{tabular}
\noindent
\begin{tabular}{lll}
  94 (L)\hfill 1r1s / 1111s & 95 (L)\hfill 1r1s / 111r1s & 96 (L)\hfill 1r1s / 111s1s\\
  \includegraphics[width=0.3\textwidth]{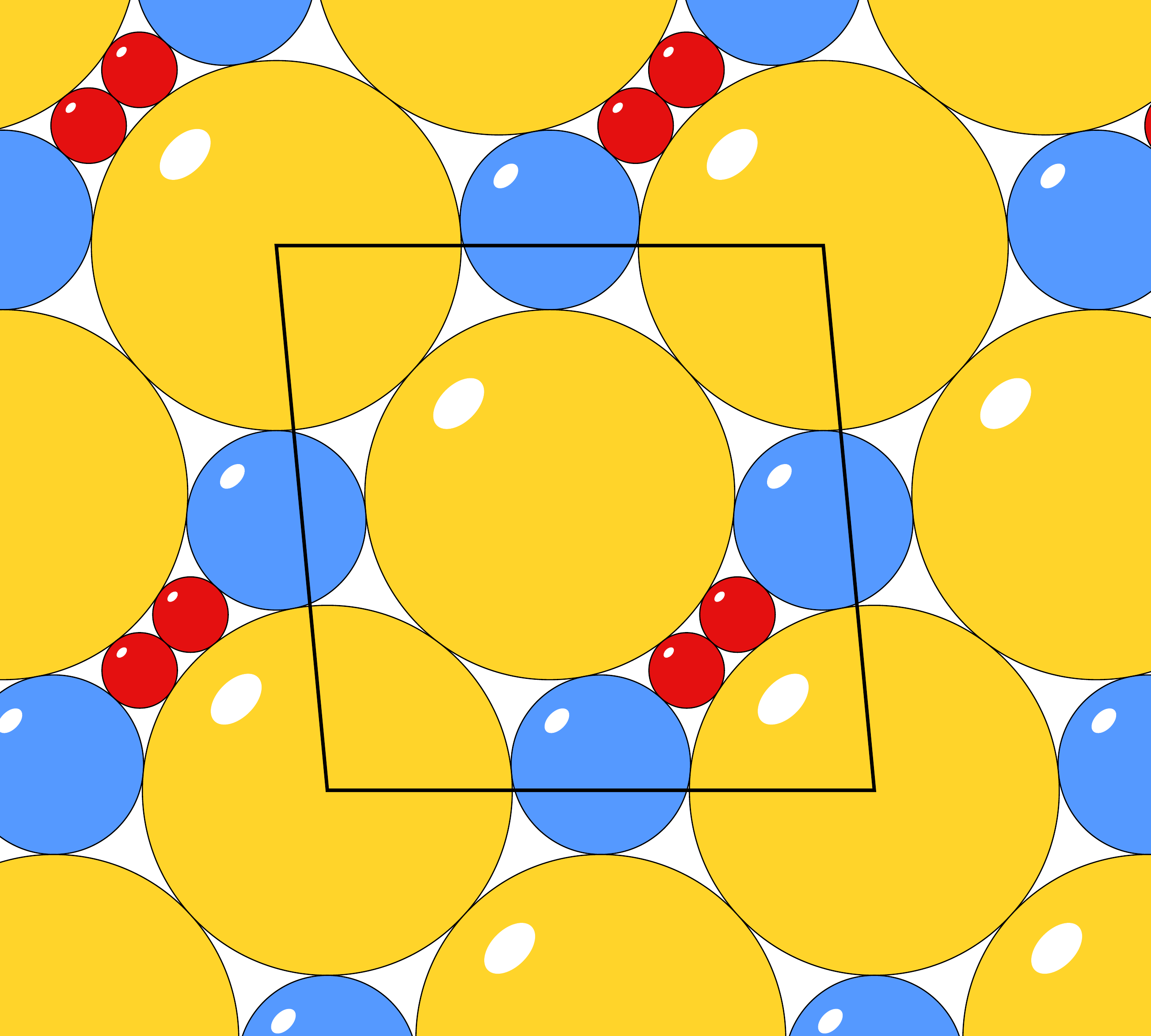} &
  \includegraphics[width=0.3\textwidth]{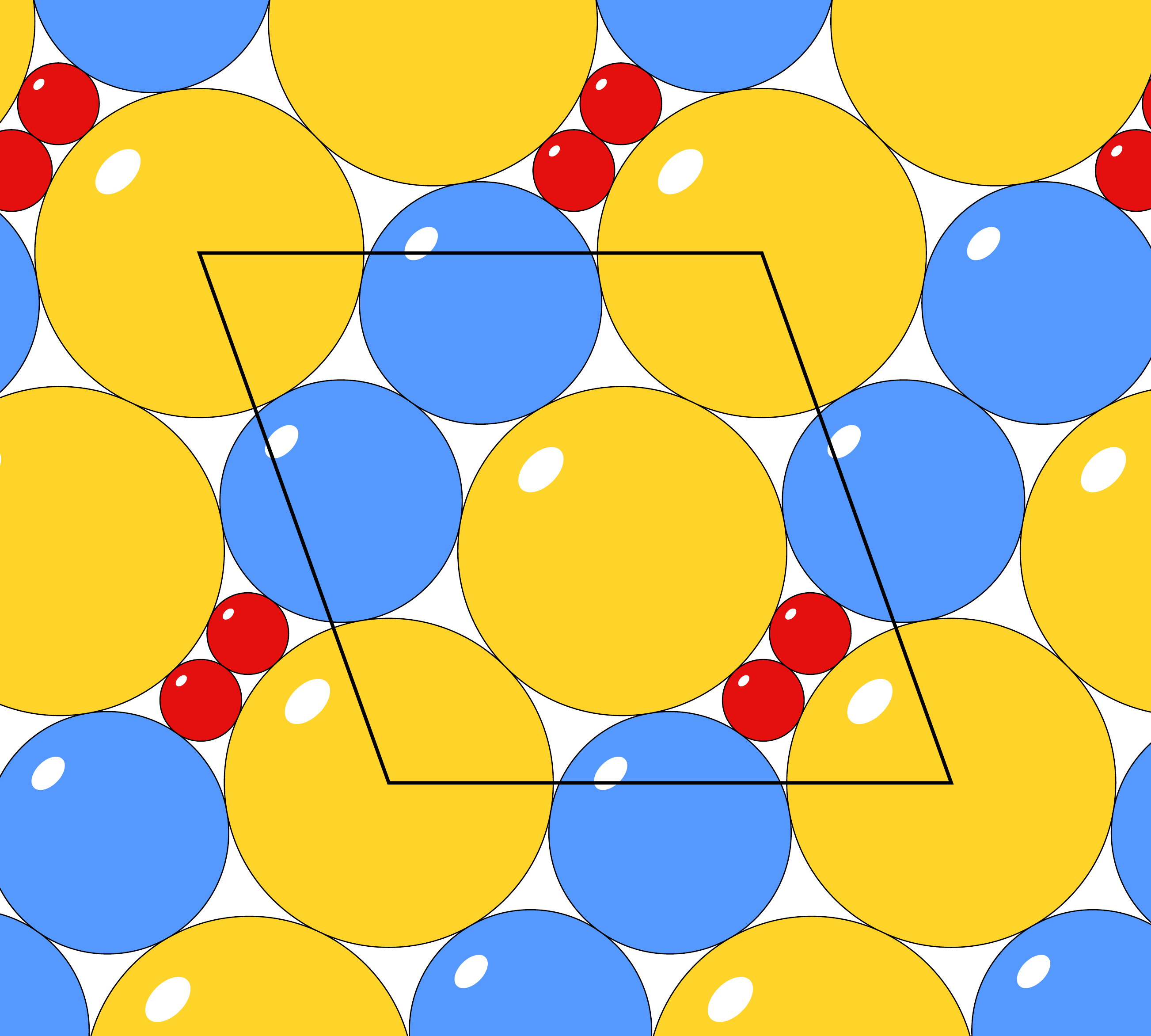} &
  \includegraphics[width=0.3\textwidth]{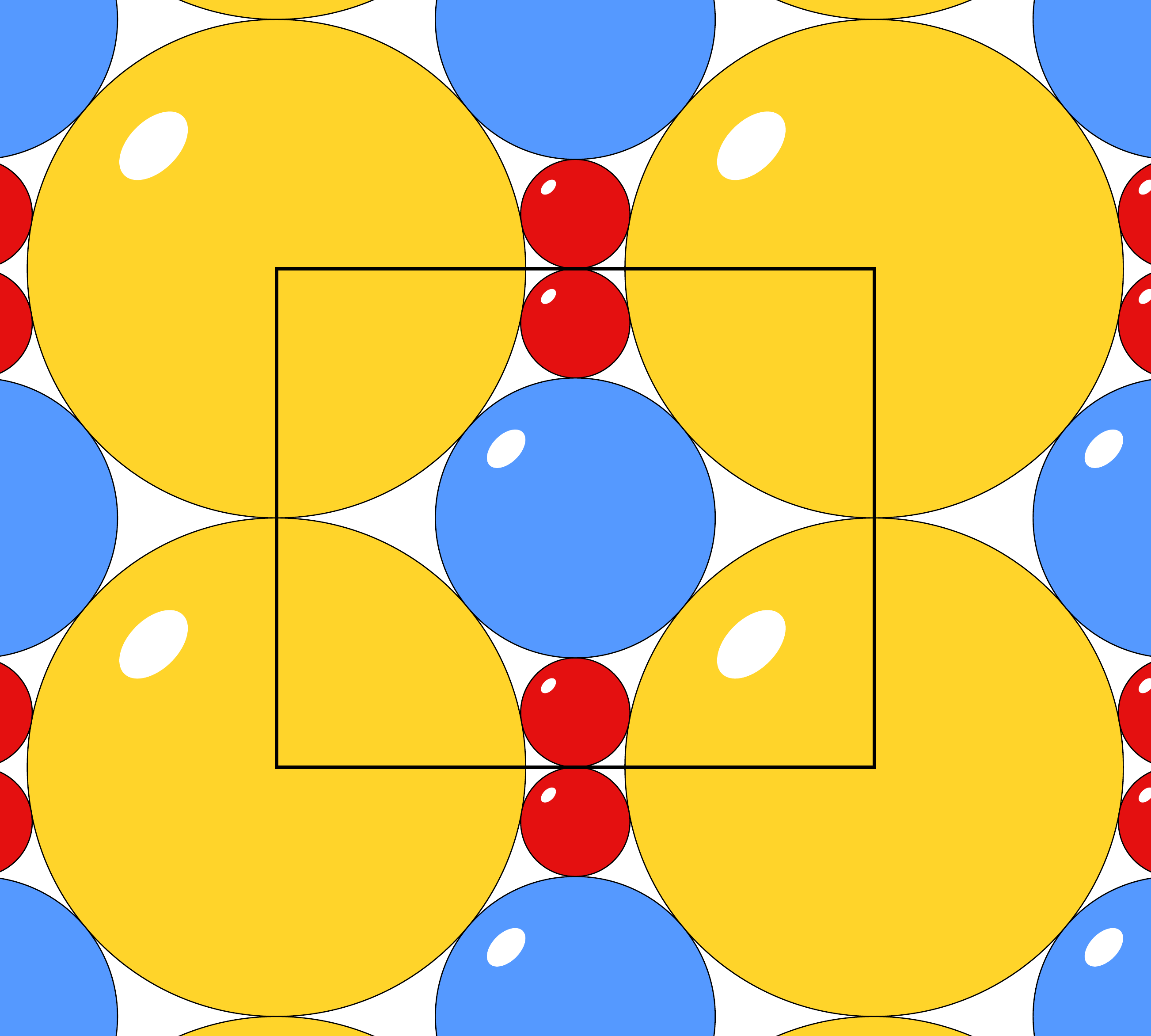}
\end{tabular}
\noindent
\begin{tabular}{lll}
  97 (L)\hfill 1r1s / 11r1s & 98 (L)\hfill 1r1s / 11r1s1s & 99 (H)\hfill 1r1s / 11rr1s\\
  \includegraphics[width=0.3\textwidth]{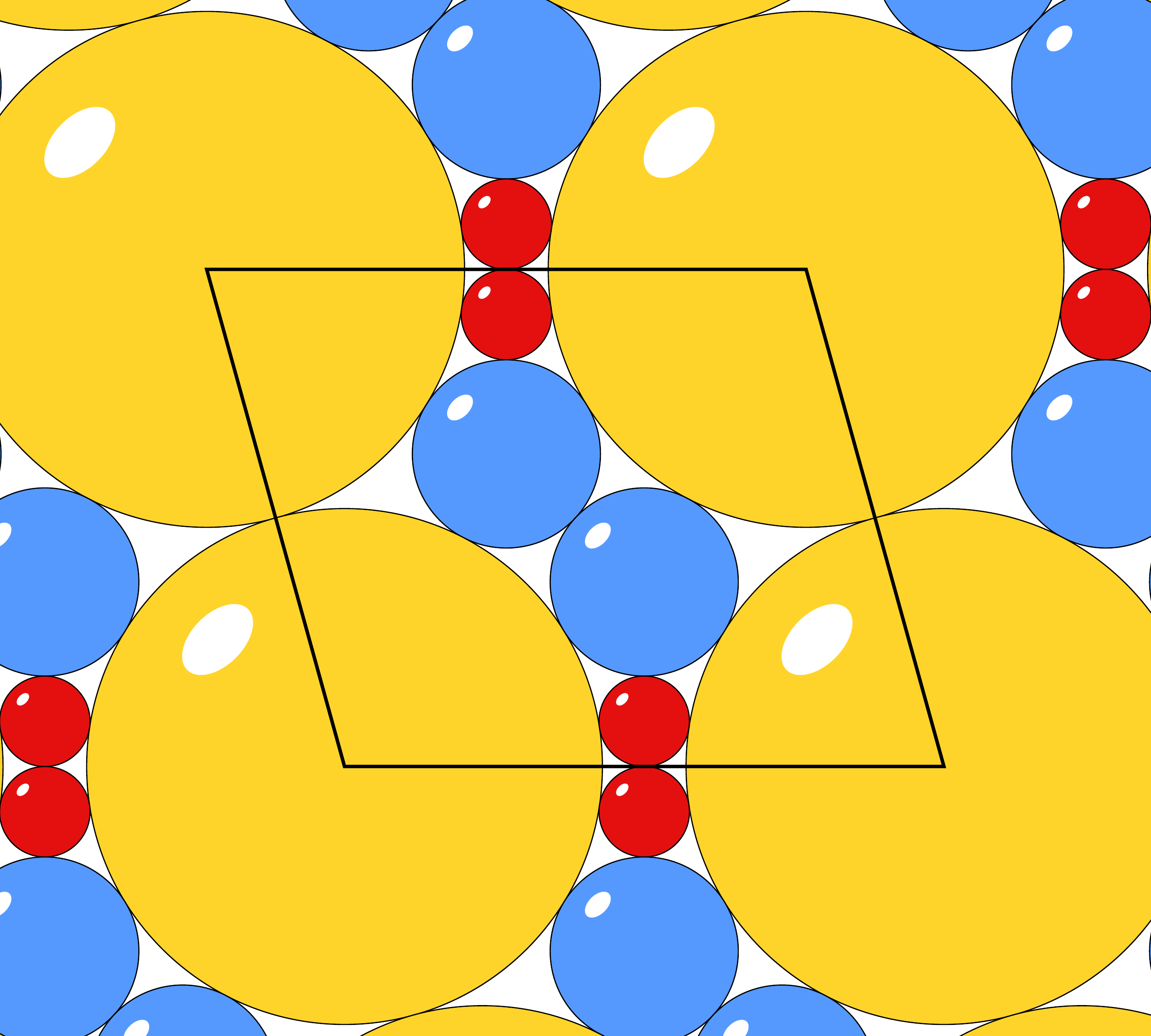} &
  \includegraphics[width=0.3\textwidth]{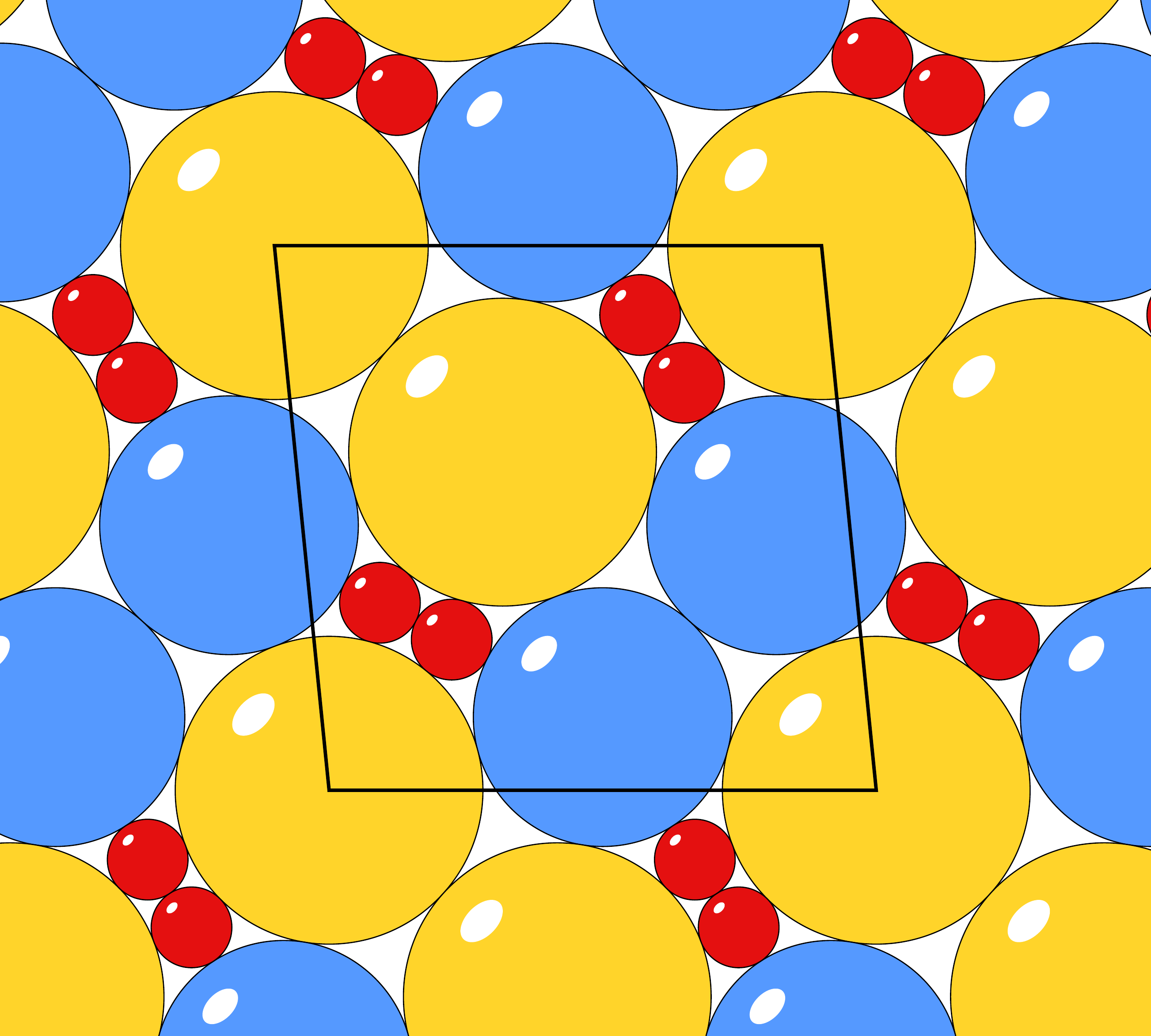} &
  \includegraphics[width=0.3\textwidth]{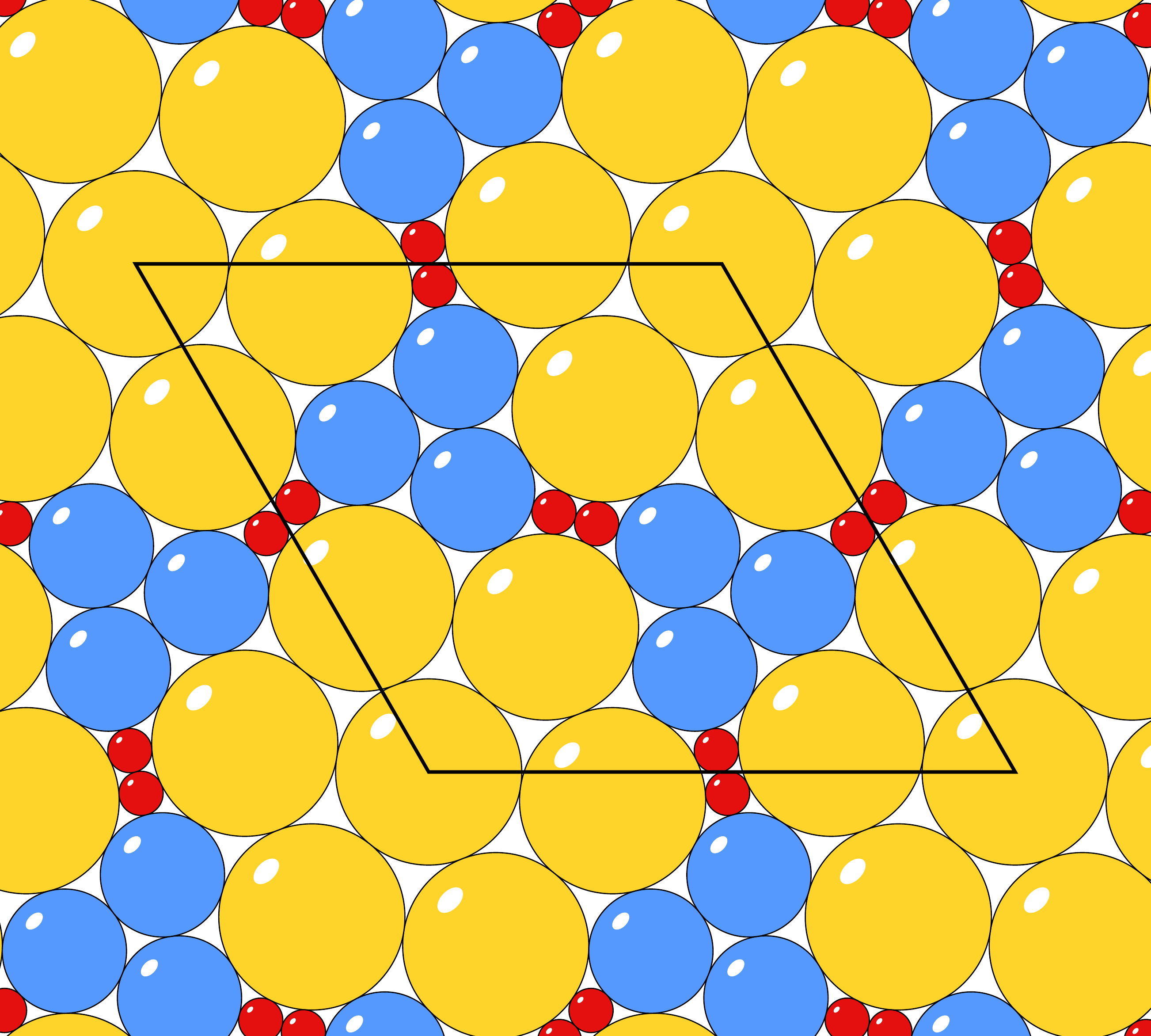}
\end{tabular}
\noindent
\begin{tabular}{lll}
  100 (L)\hfill 1r1s / 11s1s & 101 (L)\hfill 1r1s / 11s1s1s & 102 (L)\hfill 1r1s / 1r1r1s\\
  \includegraphics[width=0.3\textwidth]{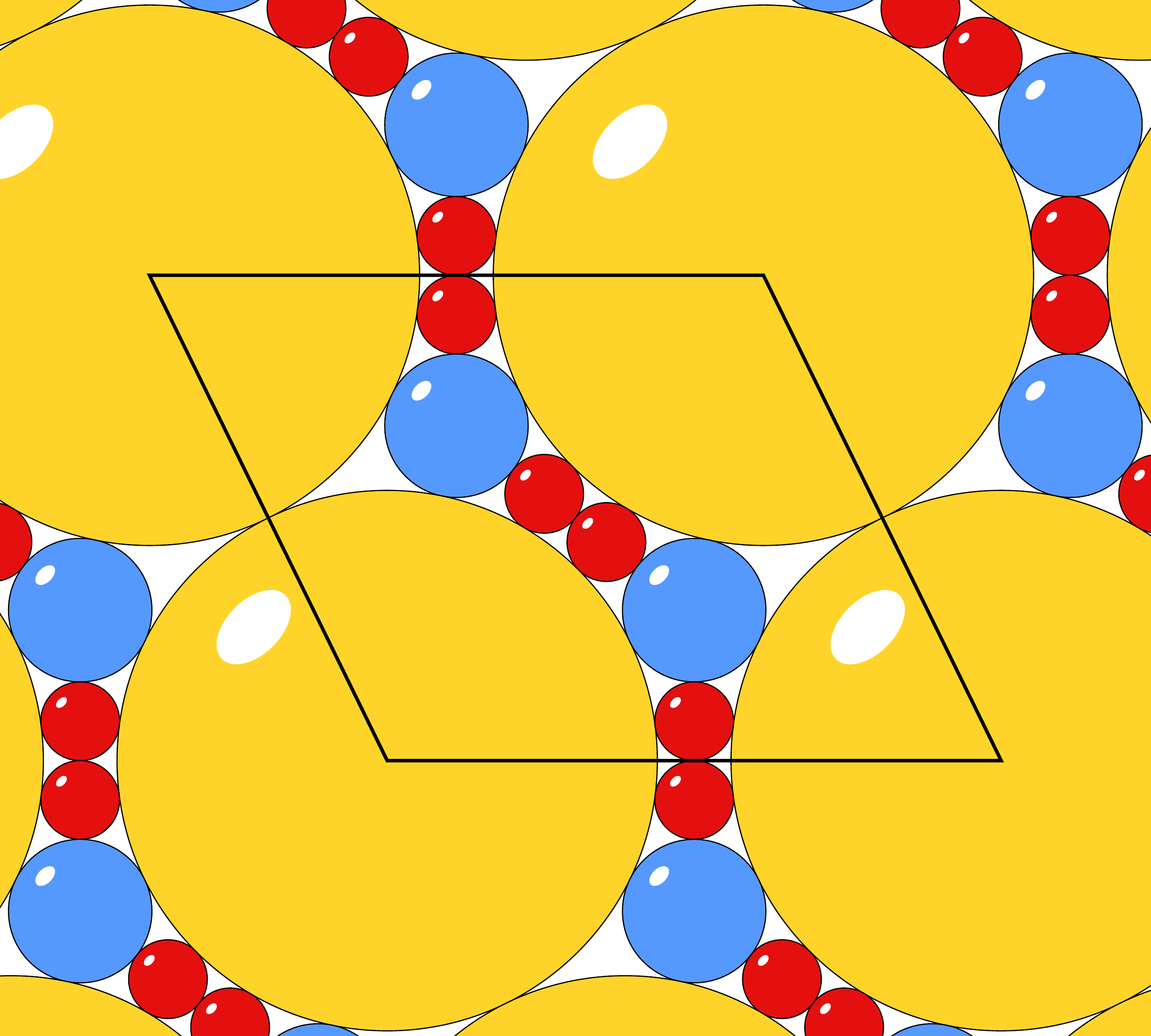} &
  \includegraphics[width=0.3\textwidth]{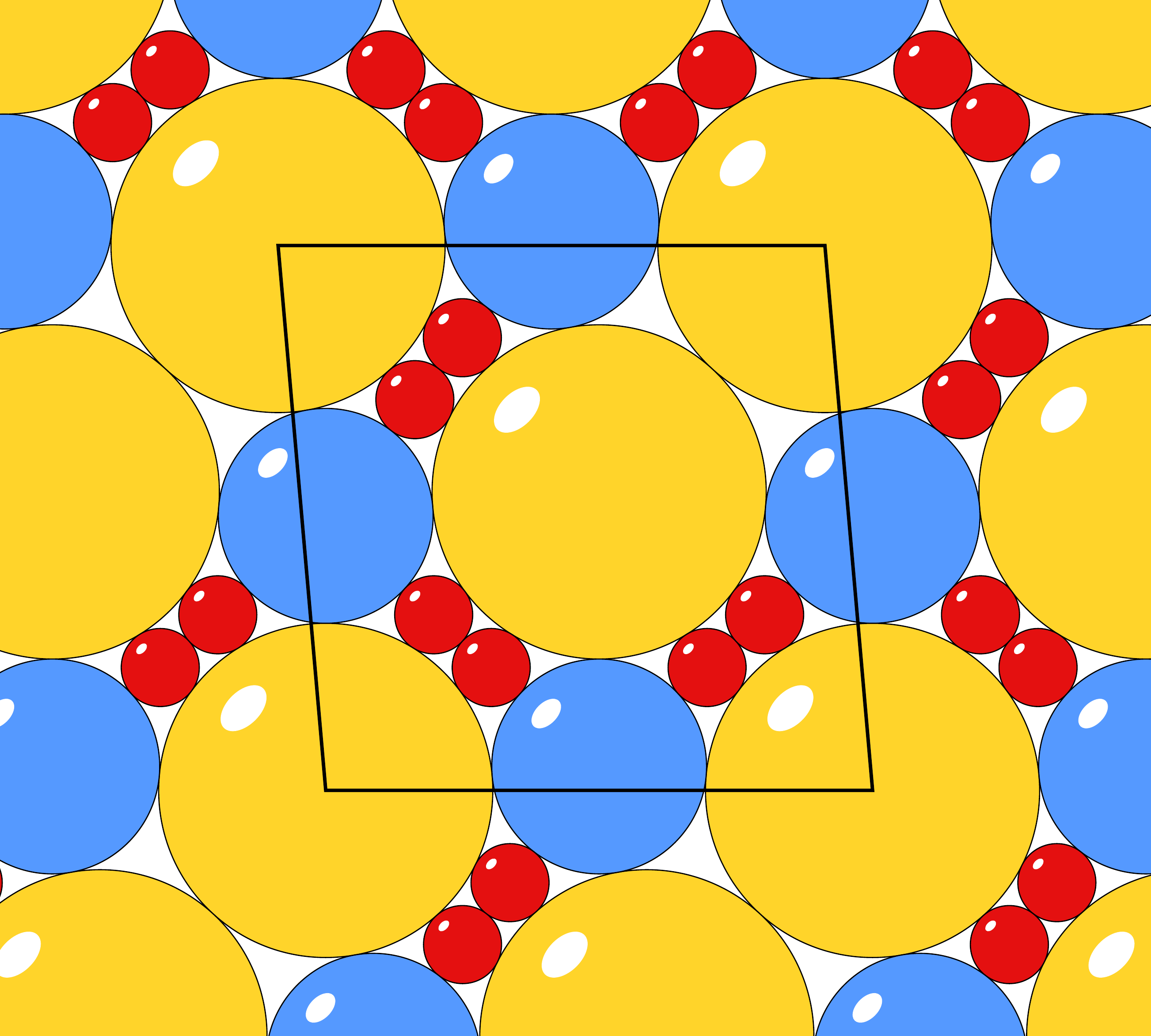} &
  \includegraphics[width=0.3\textwidth]{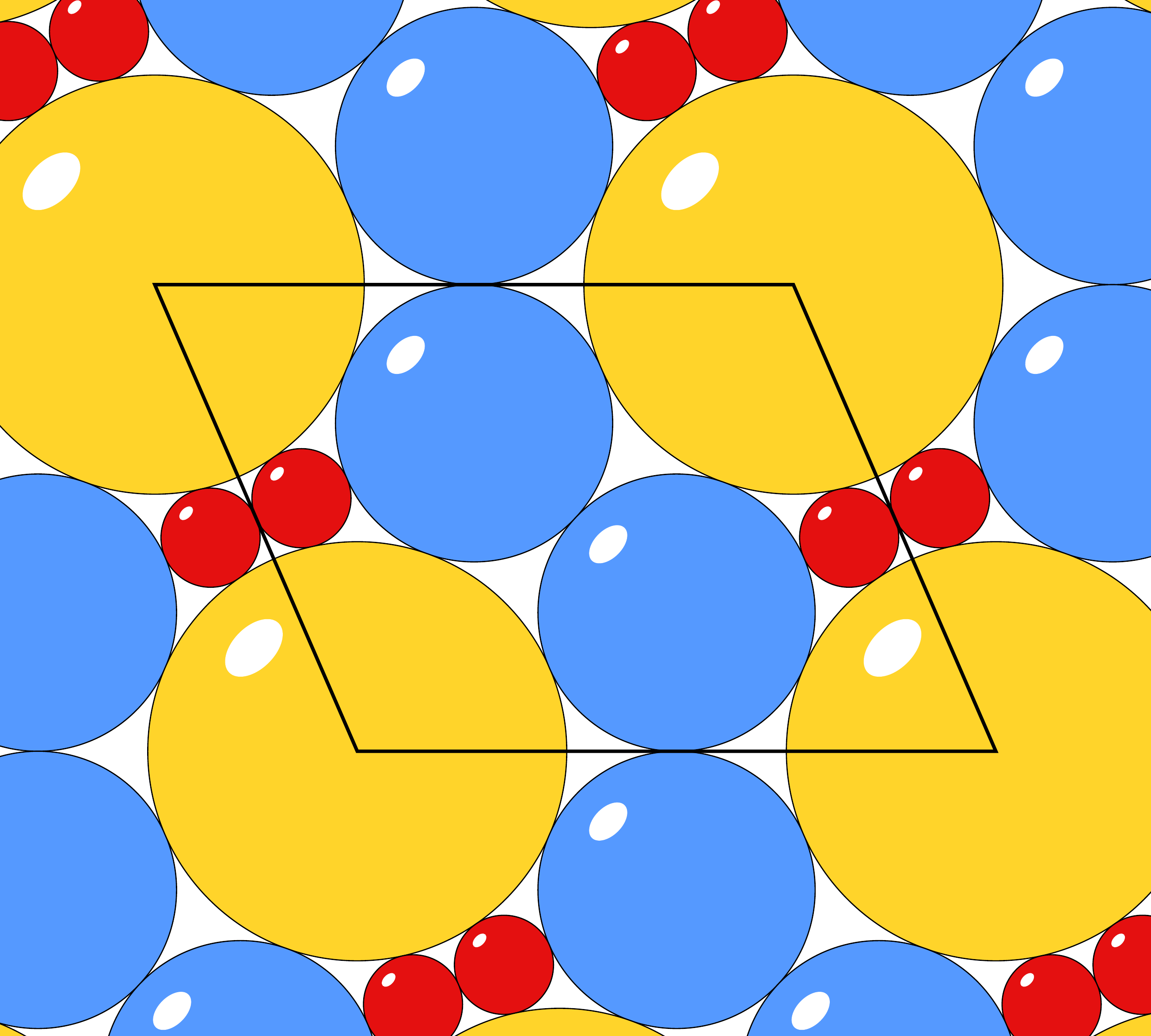}
\end{tabular}
\noindent
\begin{tabular}{lll}
  103 (L)\hfill 1r1s / 1r1s1s & 104 (L)\hfill 1r1s / 1r1s1s1s & 105 (H)\hfill 1r1s / 1rr1s\\
  \includegraphics[width=0.3\textwidth]{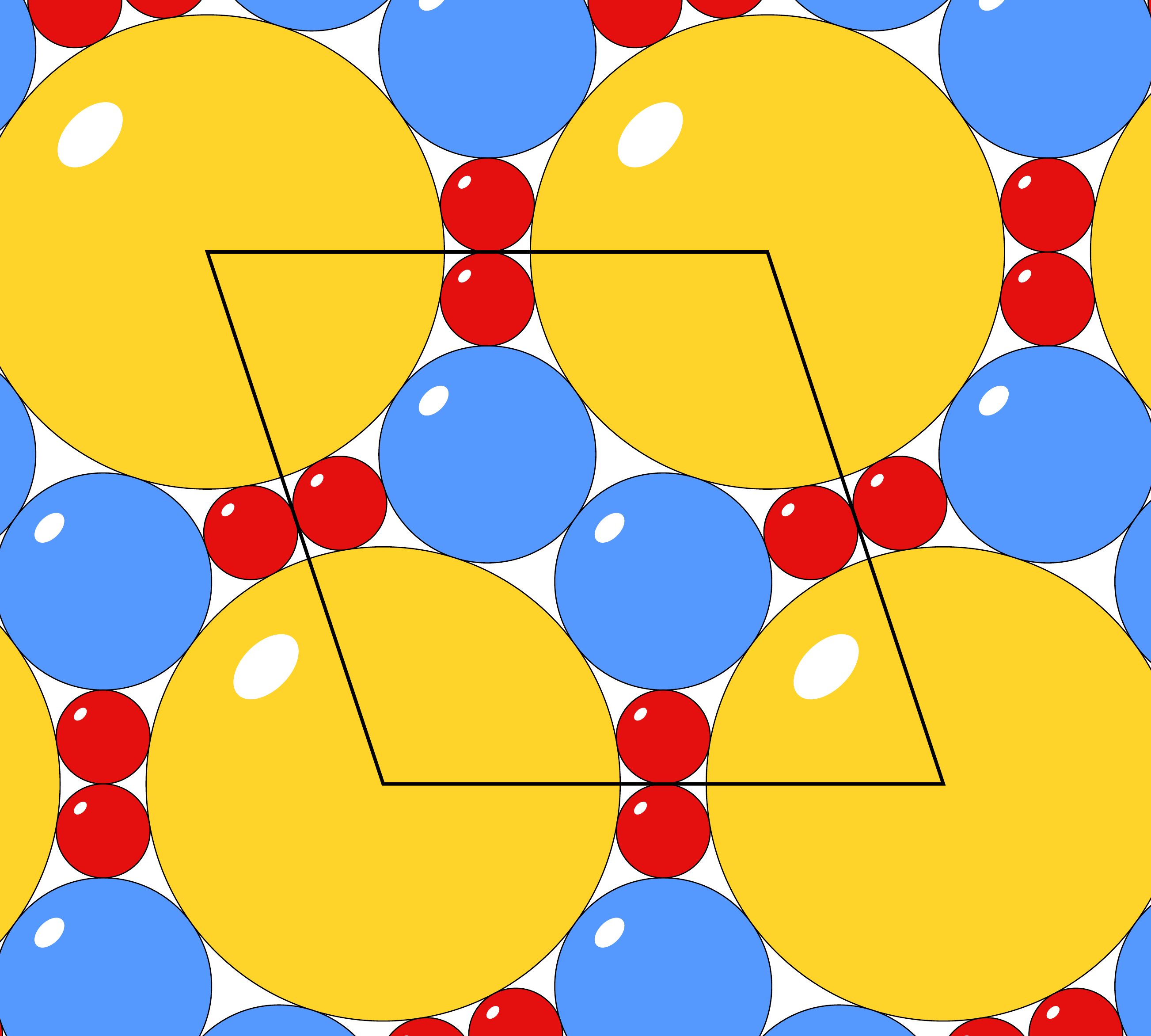} &
  \includegraphics[width=0.3\textwidth]{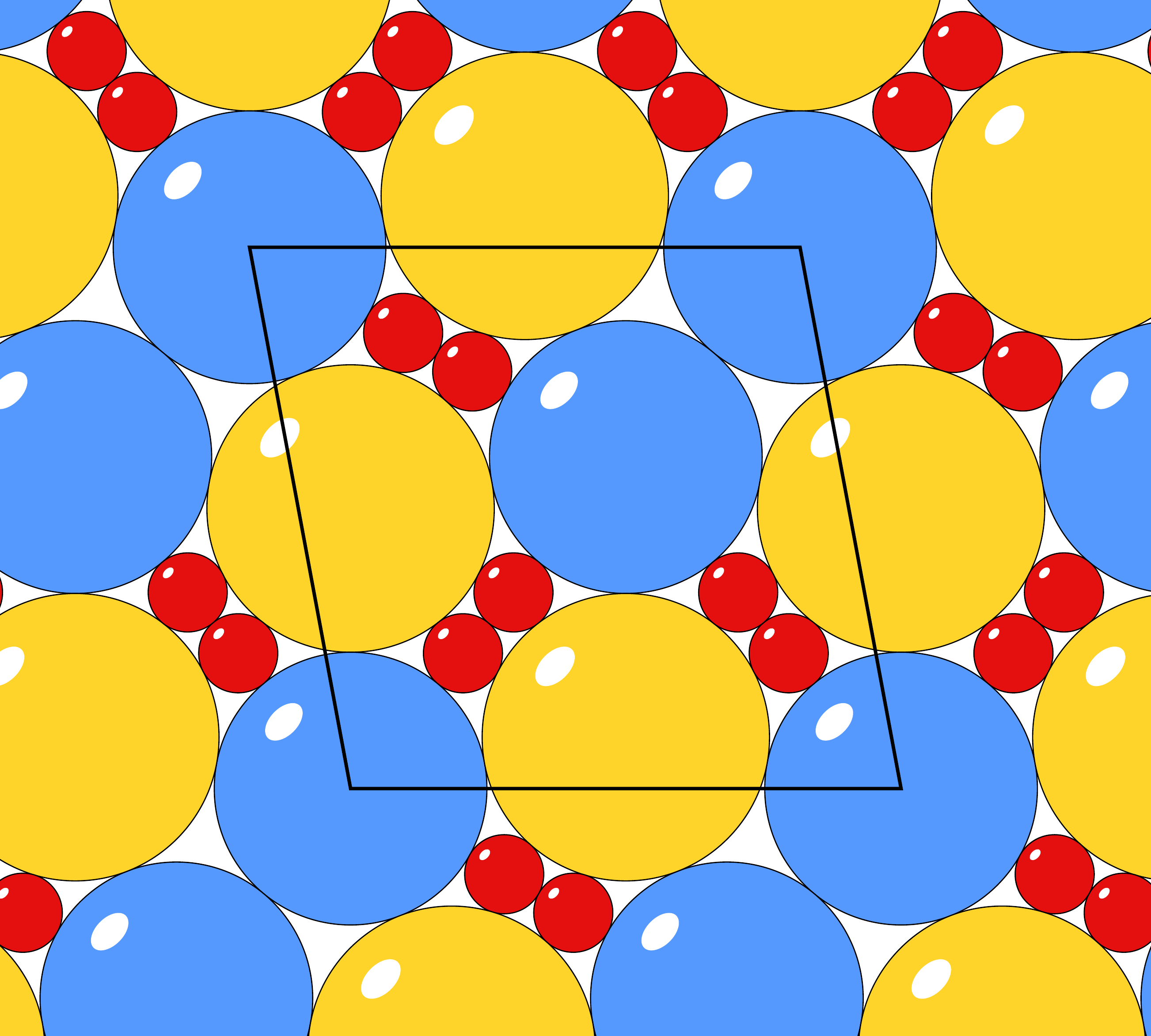} &
  \includegraphics[width=0.3\textwidth]{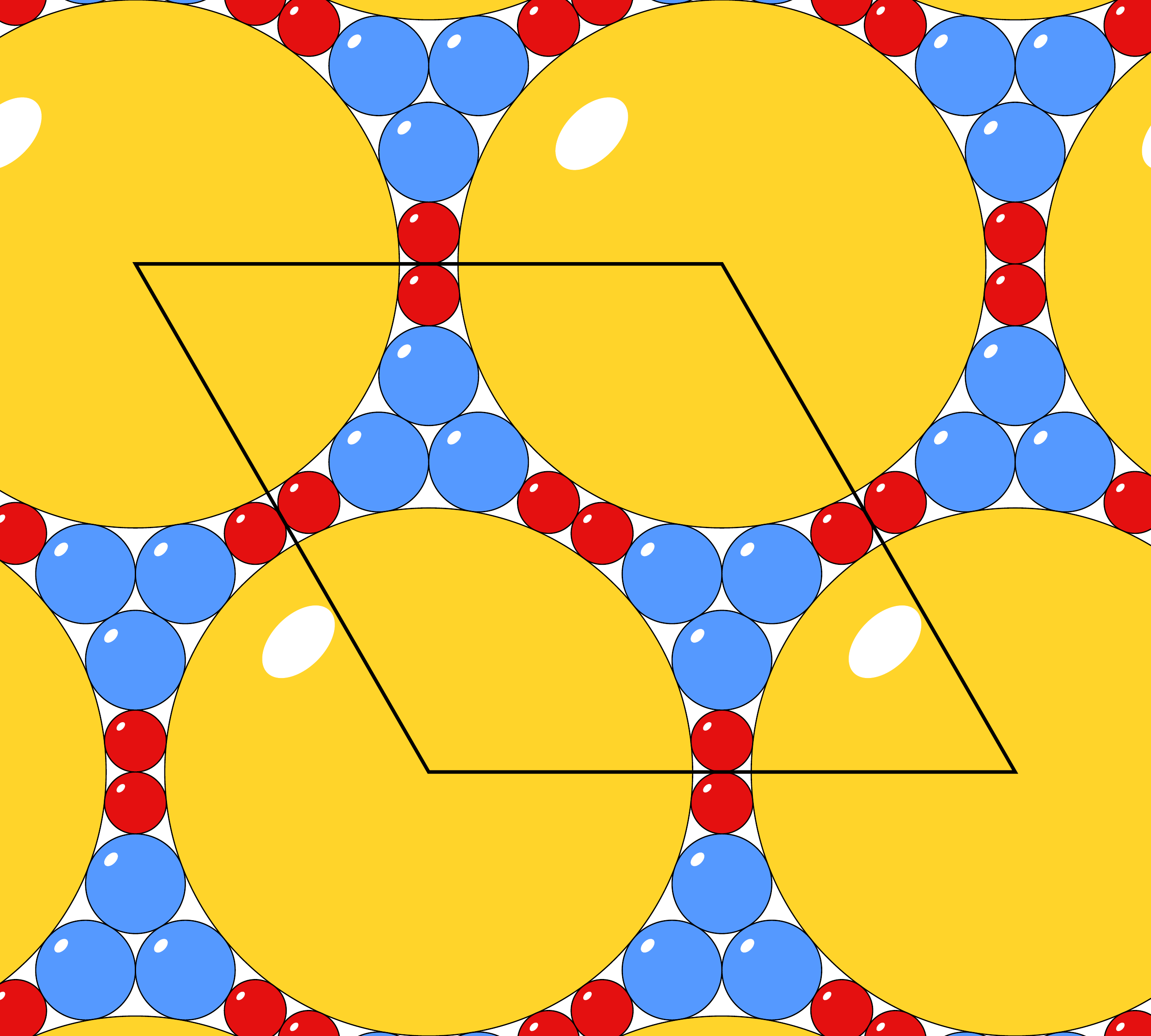}
\end{tabular}
\noindent
\begin{tabular}{lll}
  106 (H)\hfill 1r1s / 1rrr1s & 107 (H)\hfill 1r1s / 1s1s1s & 108 (H)\hfill 1r1s / 1s1s1s1s\\
  \includegraphics[width=0.3\textwidth]{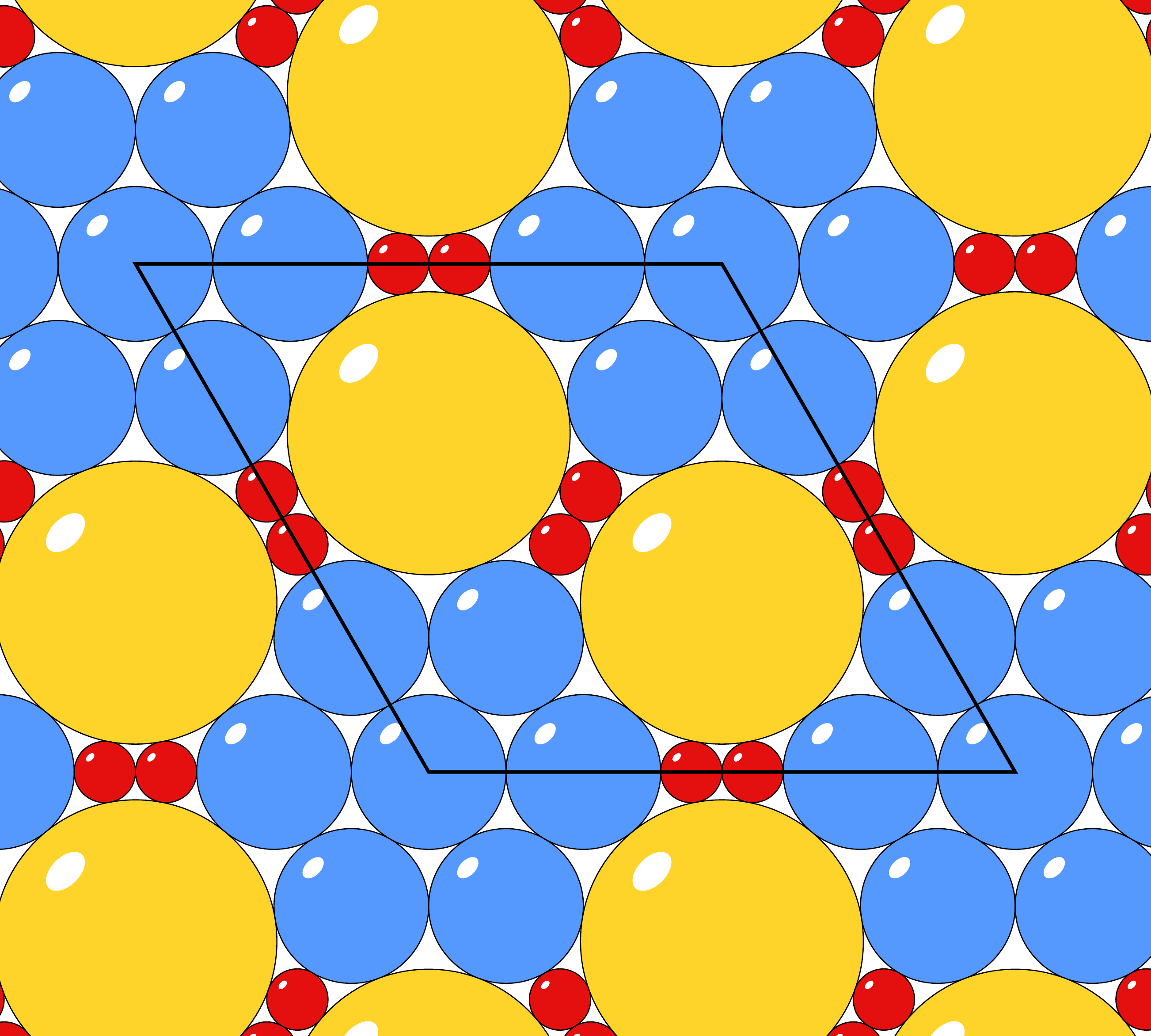} &
  \includegraphics[width=0.3\textwidth]{packing_1r1s_1s1s1s.pdf} &
  \includegraphics[width=0.3\textwidth]{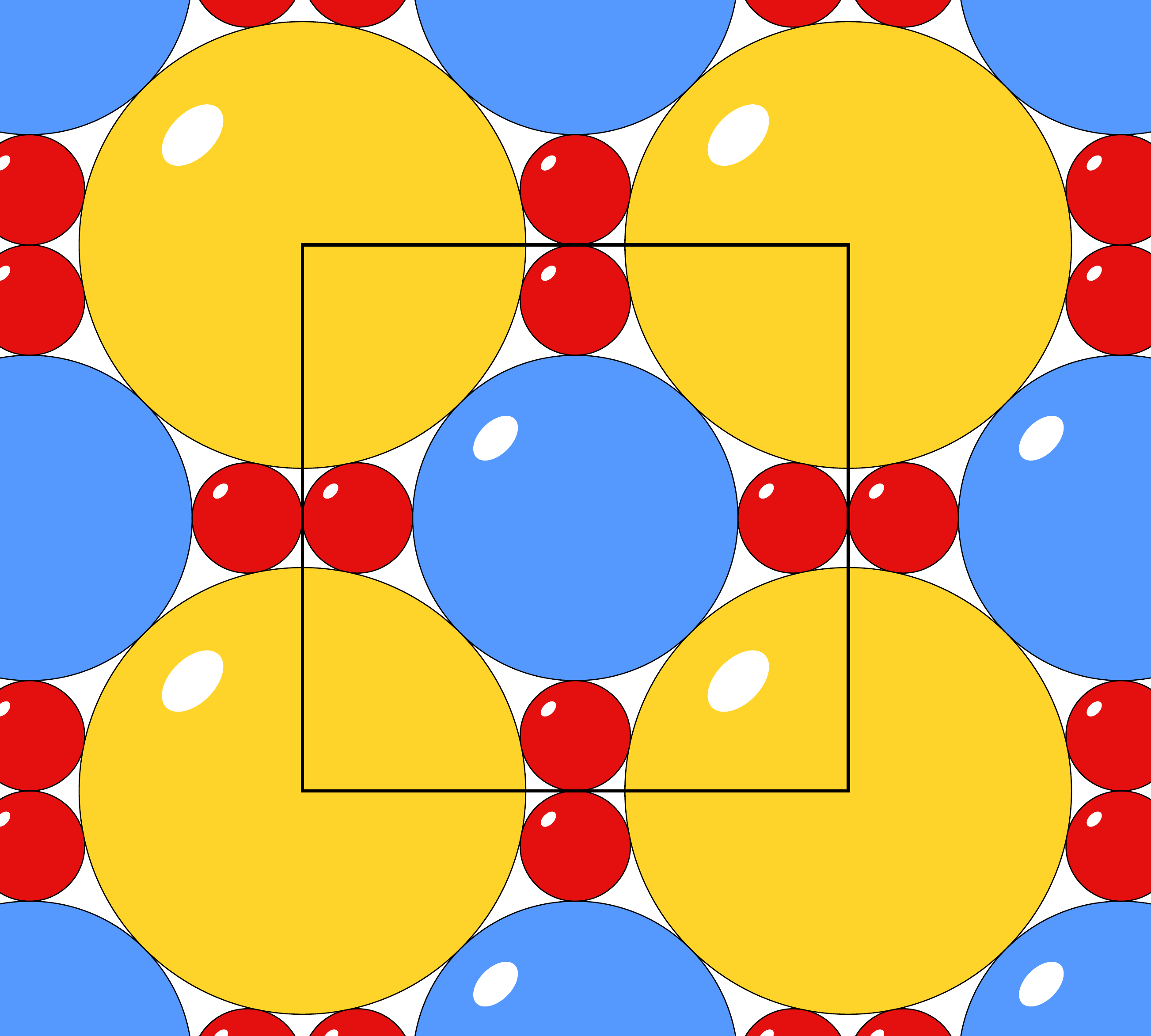}
\end{tabular}
\noindent
\begin{tabular}{lll}
  109 (E)\hfill 1r1s / 1s1sss & 110 (H)\hfill 1r1ss / 111s1s & 111 (H)\hfill 1r1ss / 11r1s\\
  \includegraphics[width=0.3\textwidth]{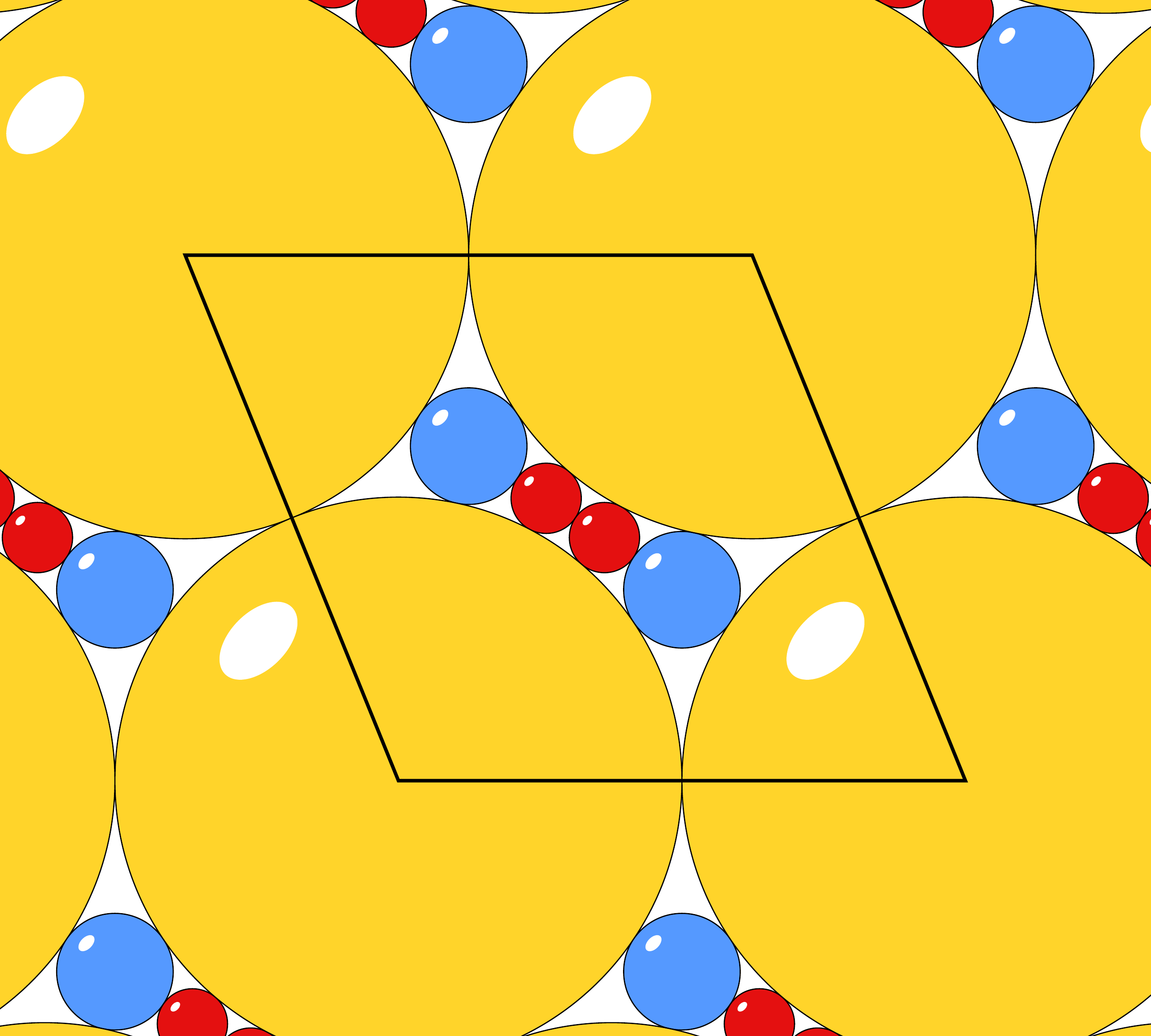} &
  \includegraphics[width=0.3\textwidth]{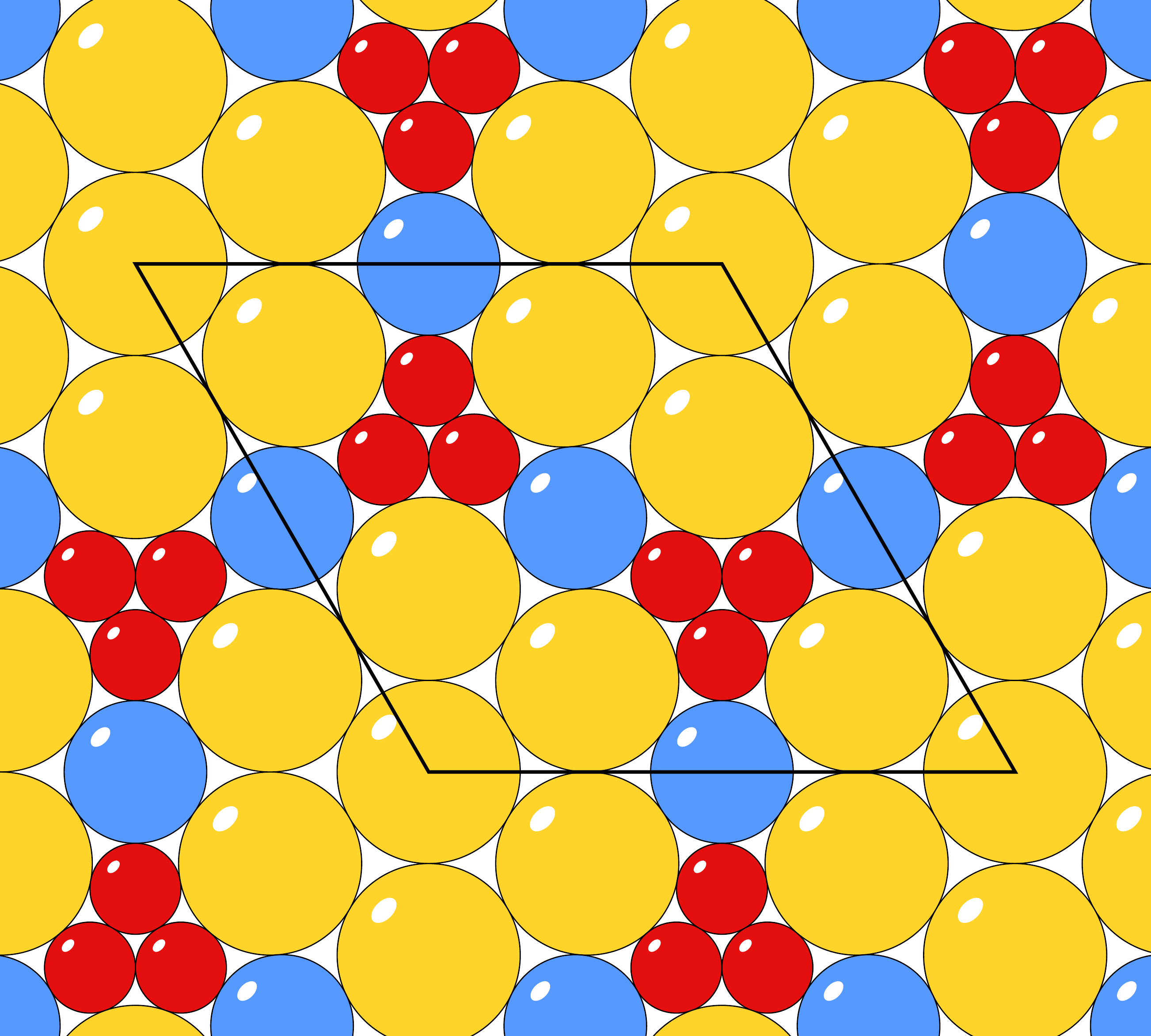} &
  \includegraphics[width=0.3\textwidth]{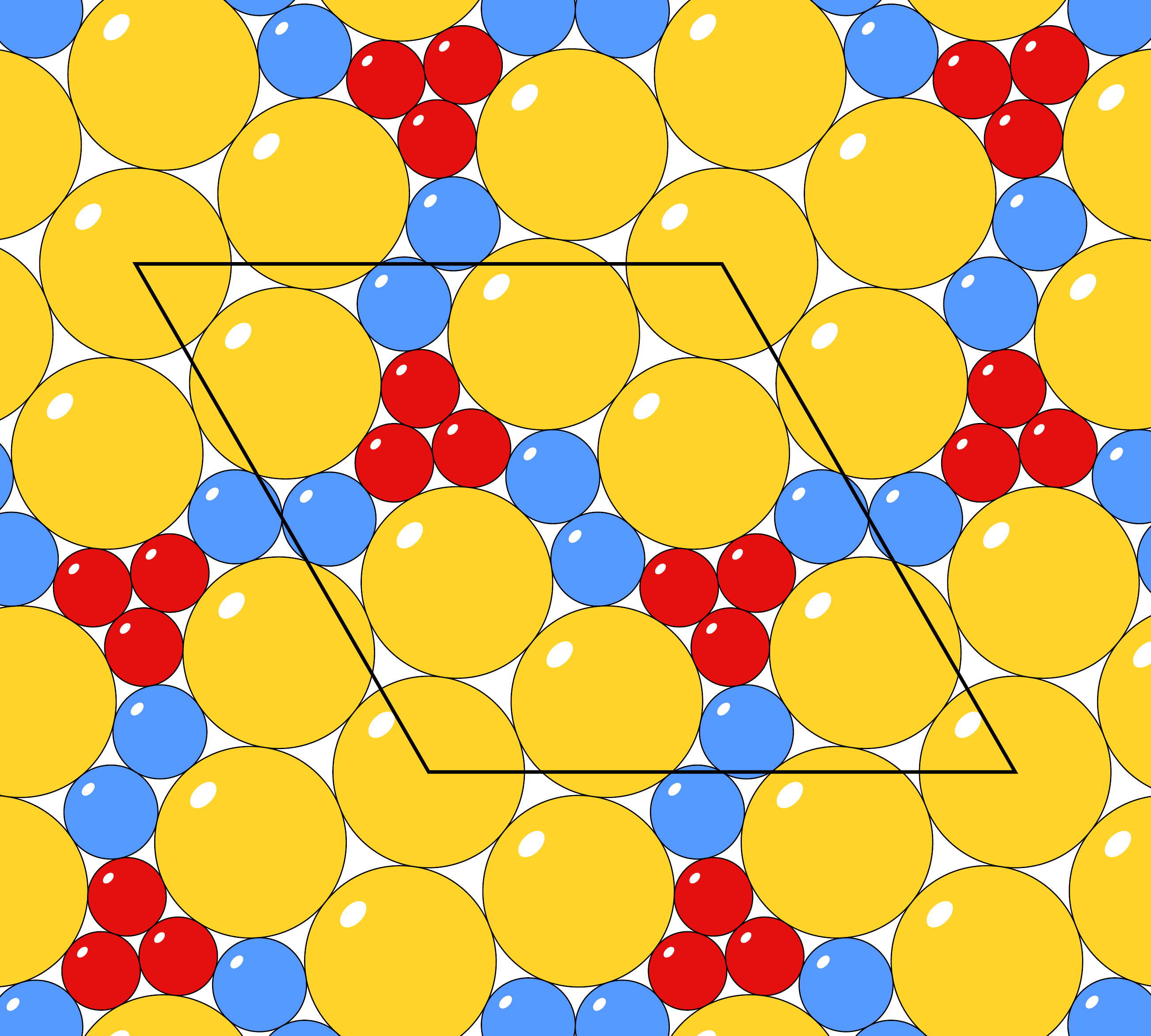}
\end{tabular}
\noindent
\begin{tabular}{lll}
  112 (H)\hfill 1r1ss / 11rr1s & 113 (H)\hfill 1r1ss / 11s1s & 114 (H)\hfill 1r1ss / 1rrr1s\\
  \includegraphics[width=0.3\textwidth]{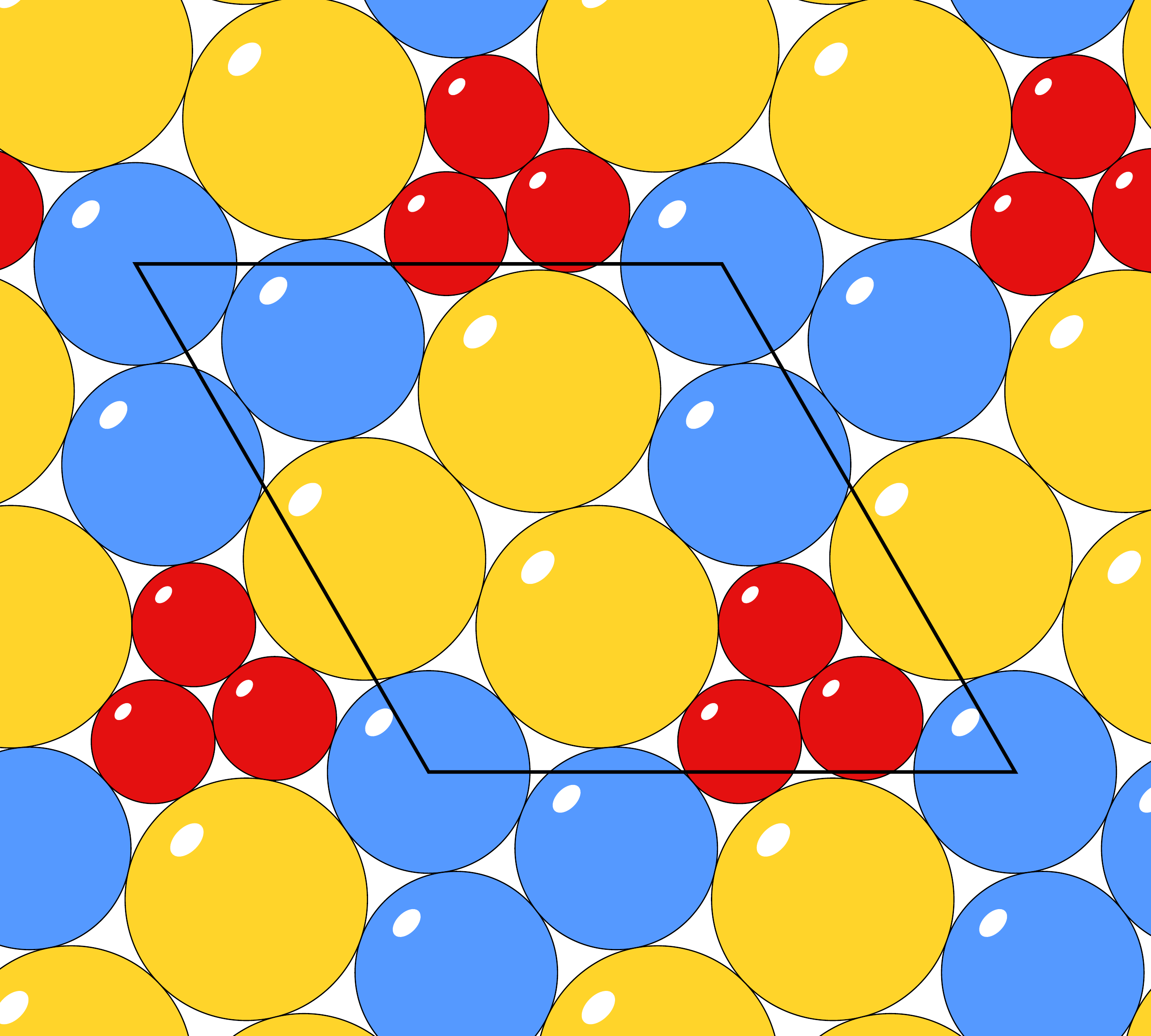} &
  \includegraphics[width=0.3\textwidth]{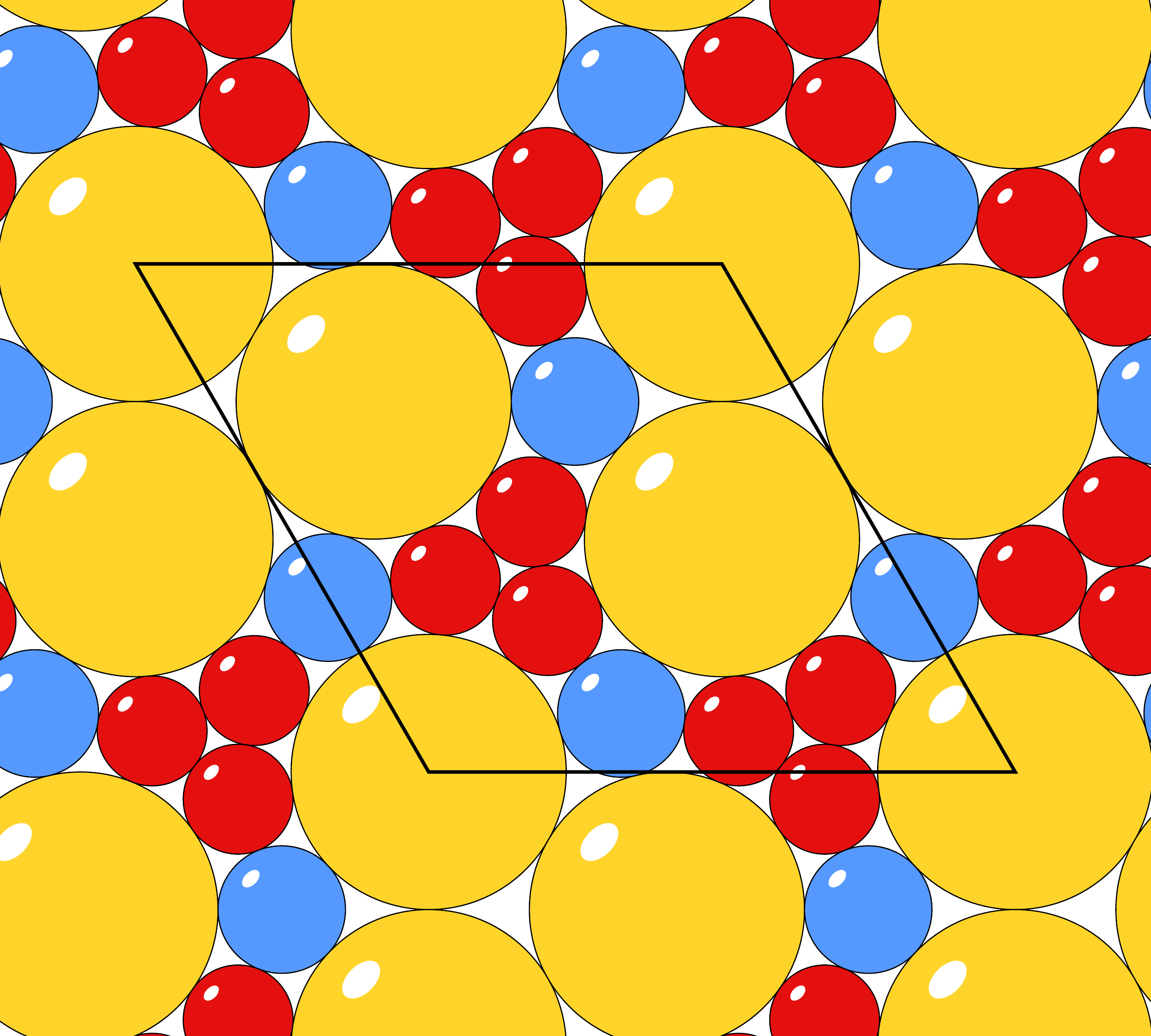} &
  \includegraphics[width=0.3\textwidth]{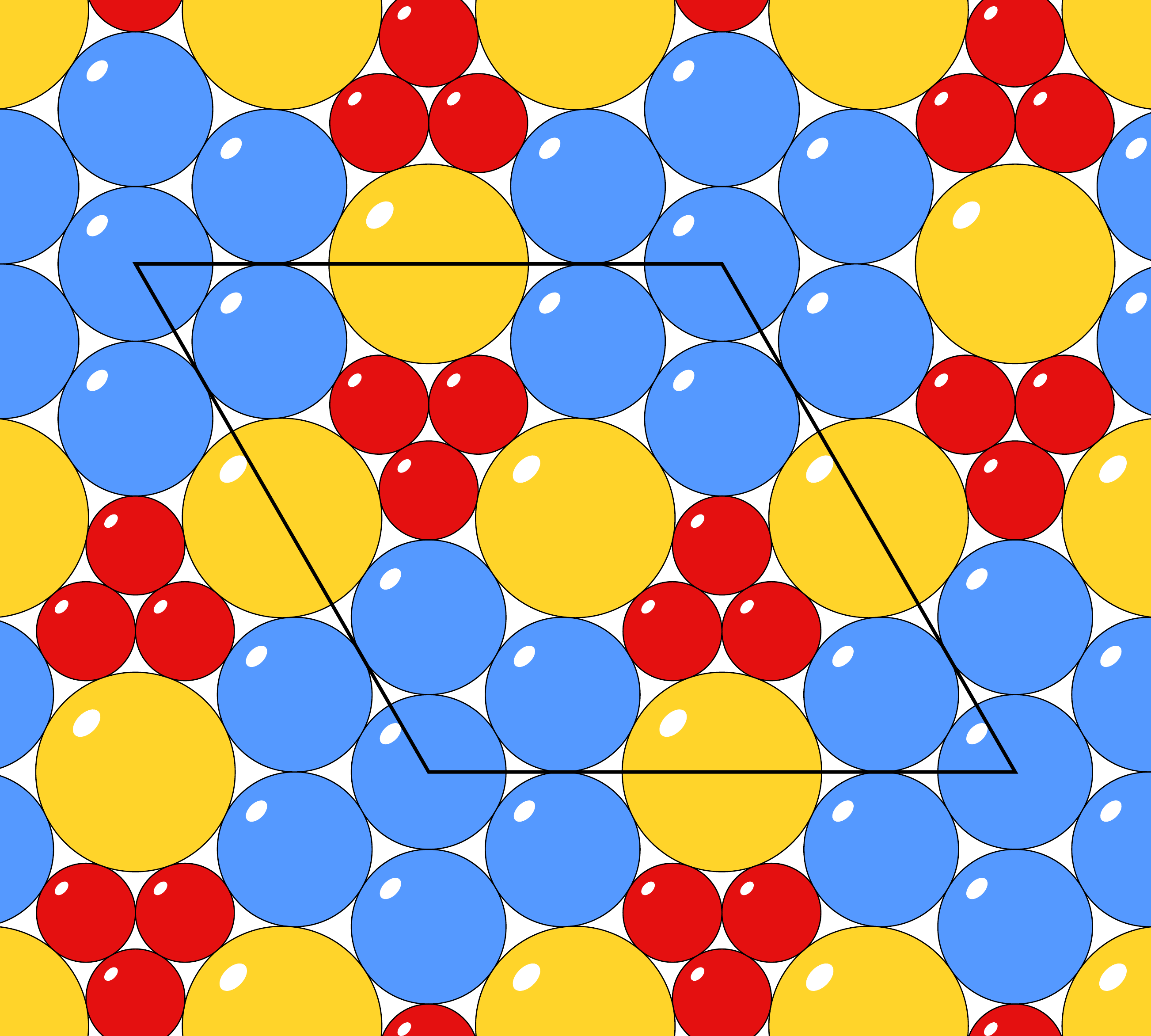}
\end{tabular}
\noindent
\begin{tabular}{lll}
  115 (H)\hfill 1r1ss / 1s1s1s & 116 (H)\hfill 1rr1s / 111srs & 117 (H)\hfill 1rr1s / 11srrs\\
  \includegraphics[width=0.3\textwidth]{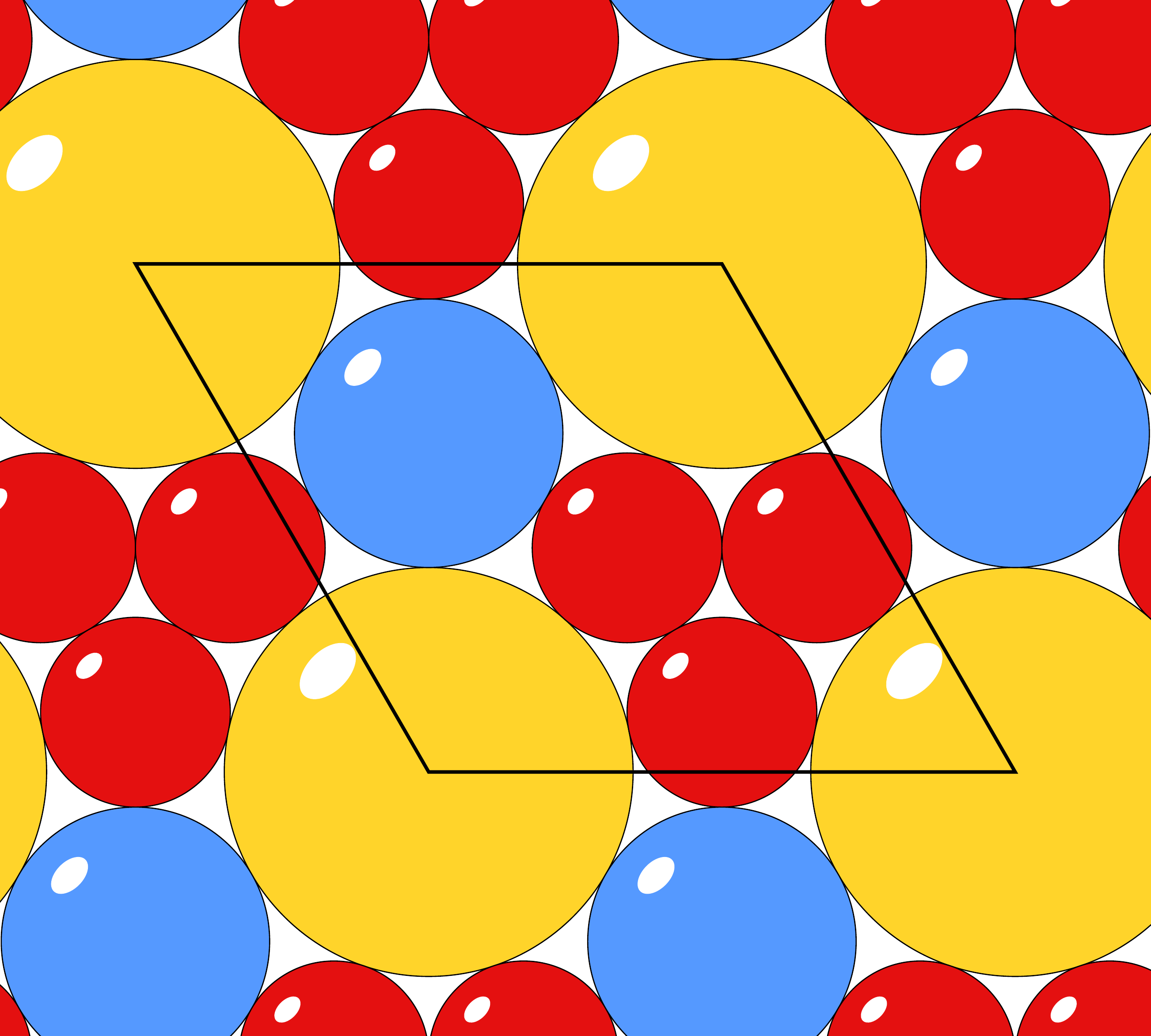} &
  \includegraphics[width=0.3\textwidth]{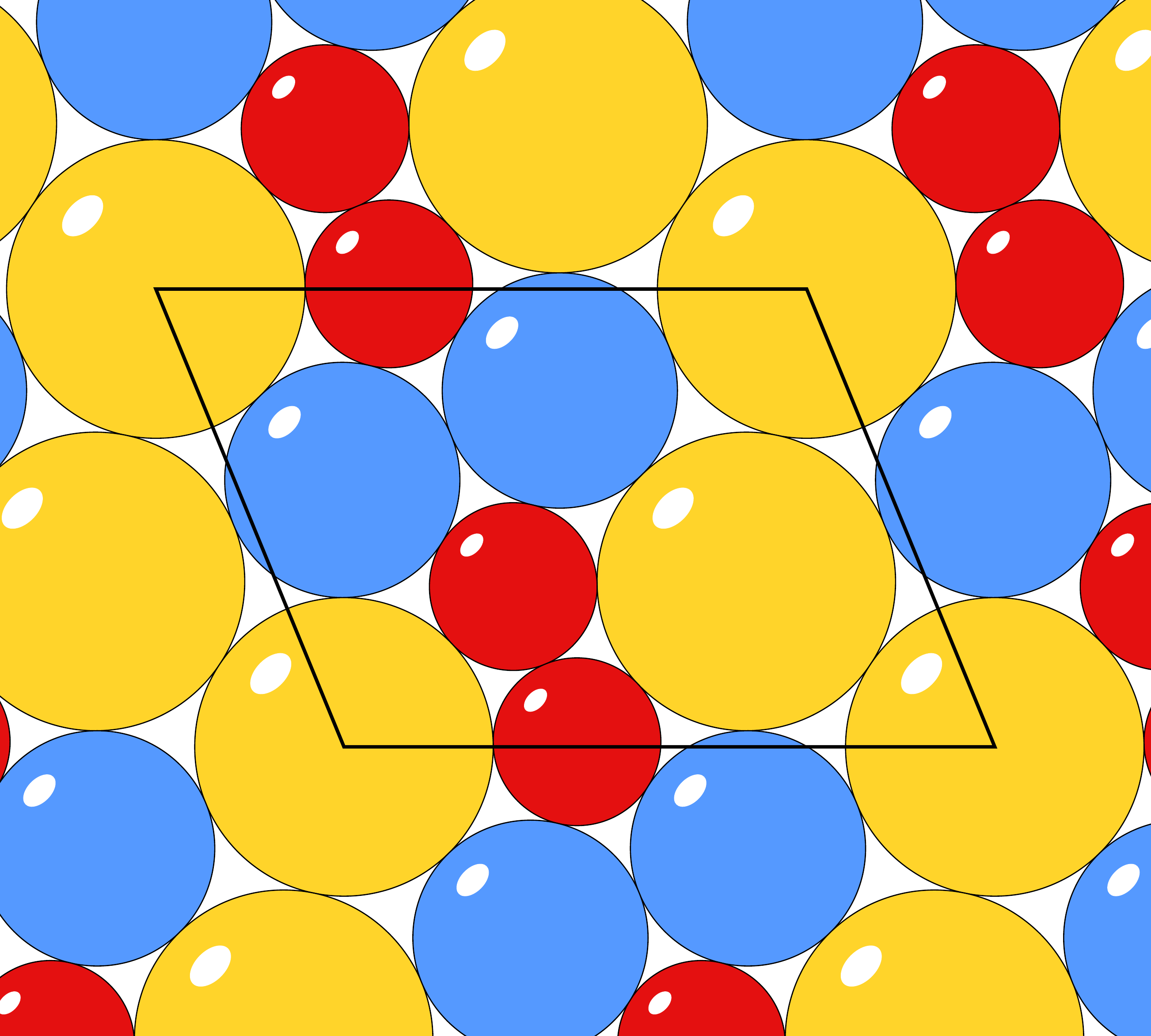} &
  \includegraphics[width=0.3\textwidth]{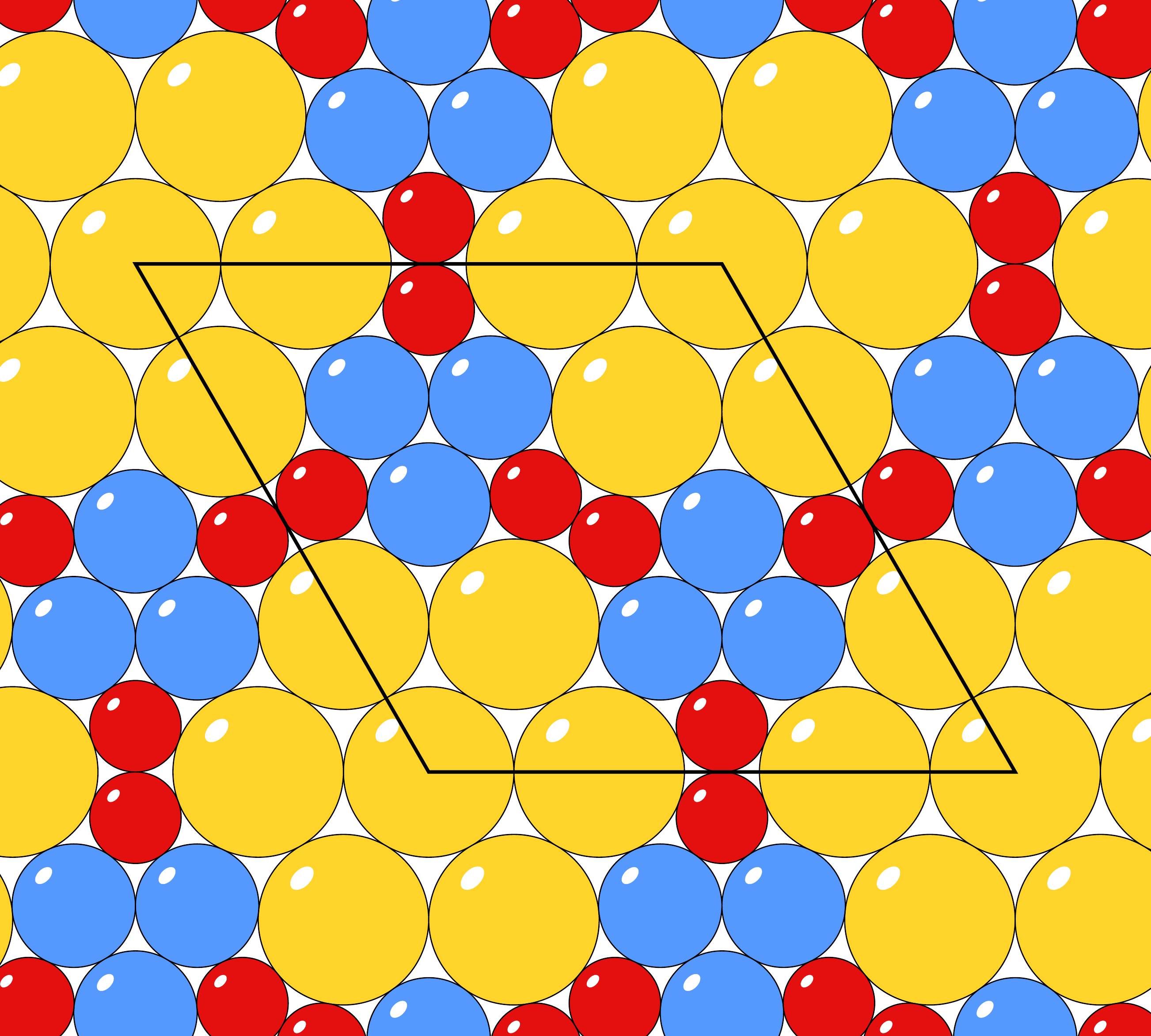}
\end{tabular}
\noindent
\begin{tabular}{lll}
  118 (L)\hfill 1rr1s / 11srs & 119 (L)\hfill 1rr1s / 1rrrrs & 120 (H)\hfill 1rr1s / 1srrrs\\
  \includegraphics[width=0.3\textwidth]{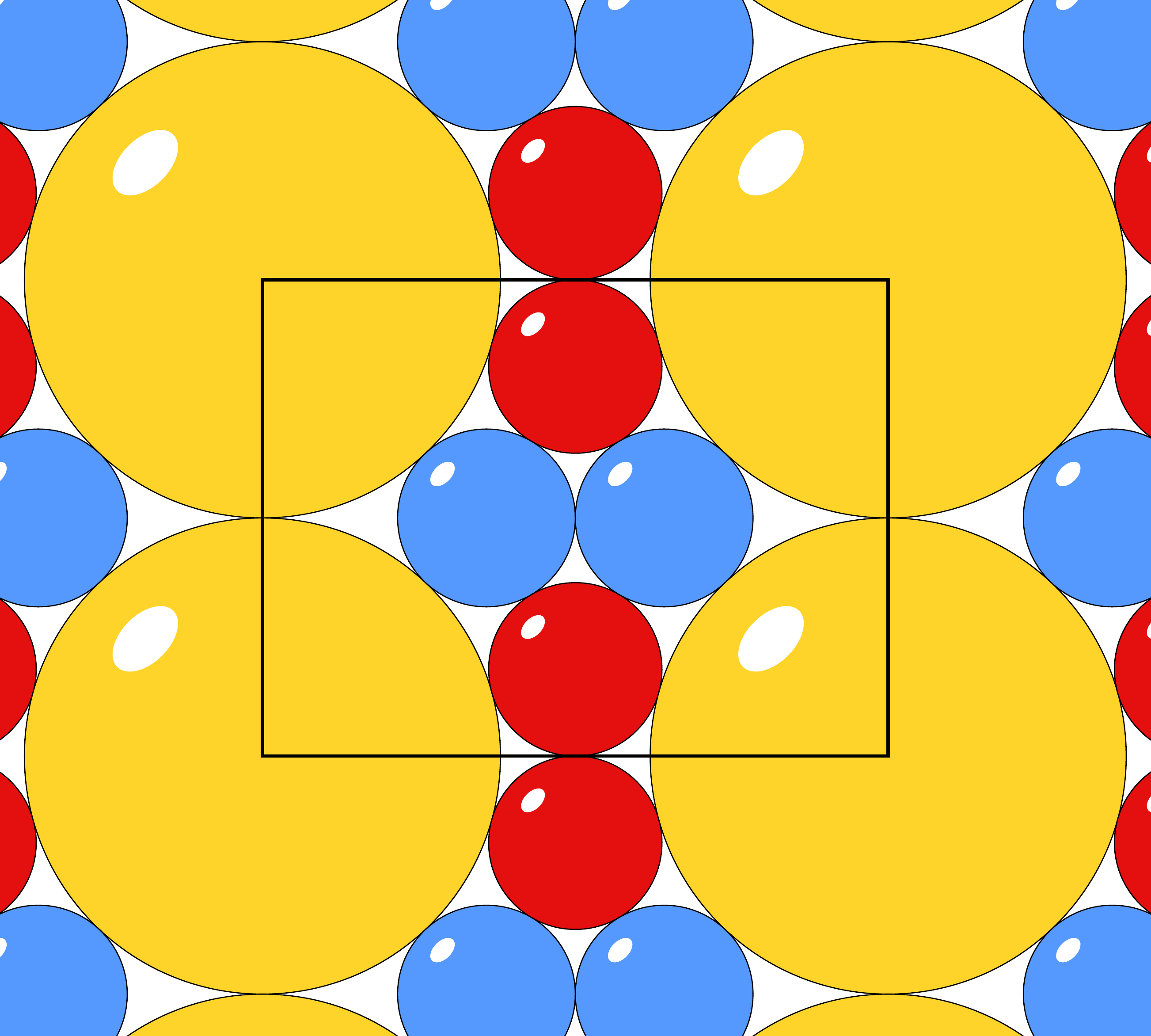} &
  \includegraphics[width=0.3\textwidth]{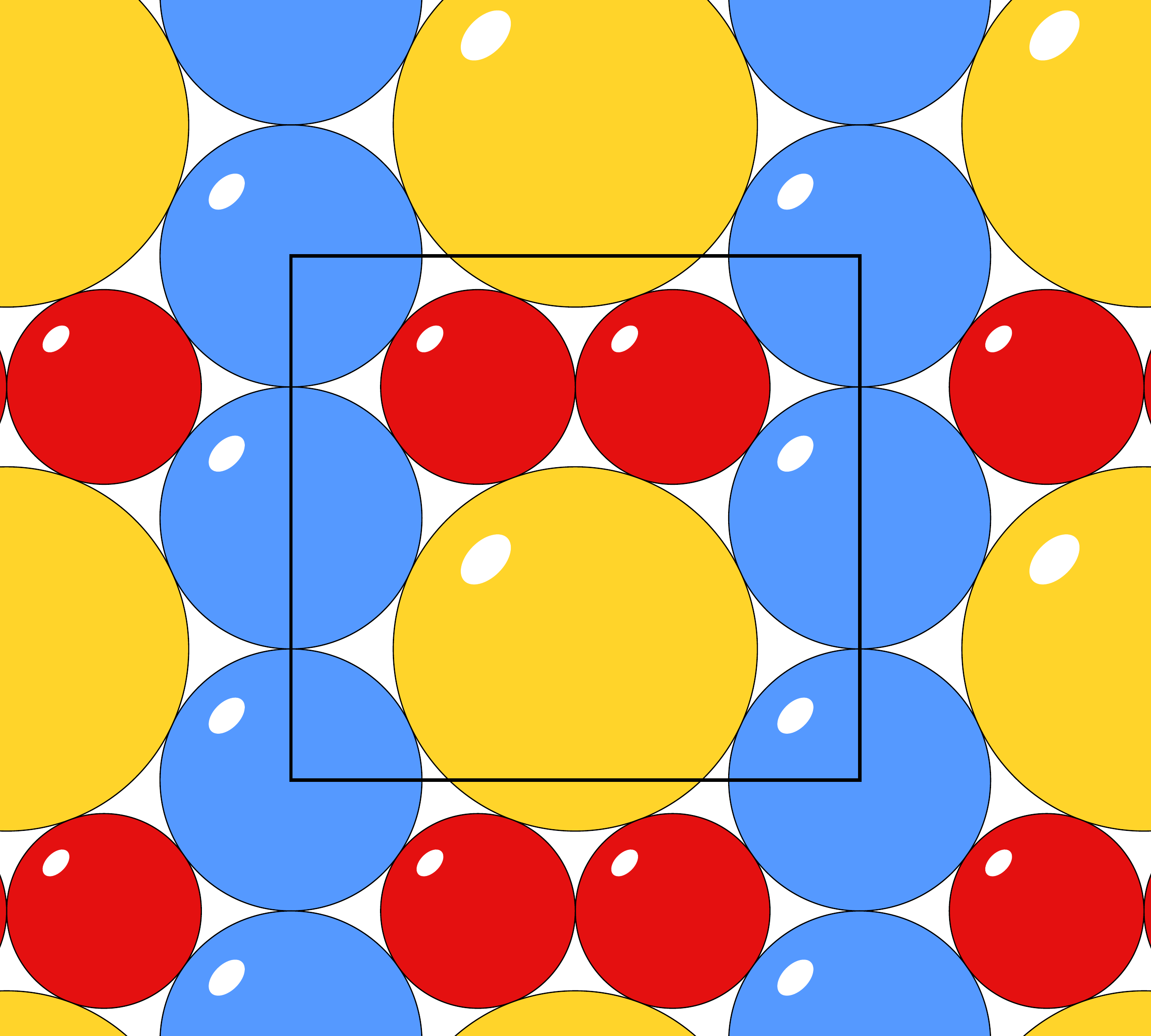} &
  \includegraphics[width=0.3\textwidth]{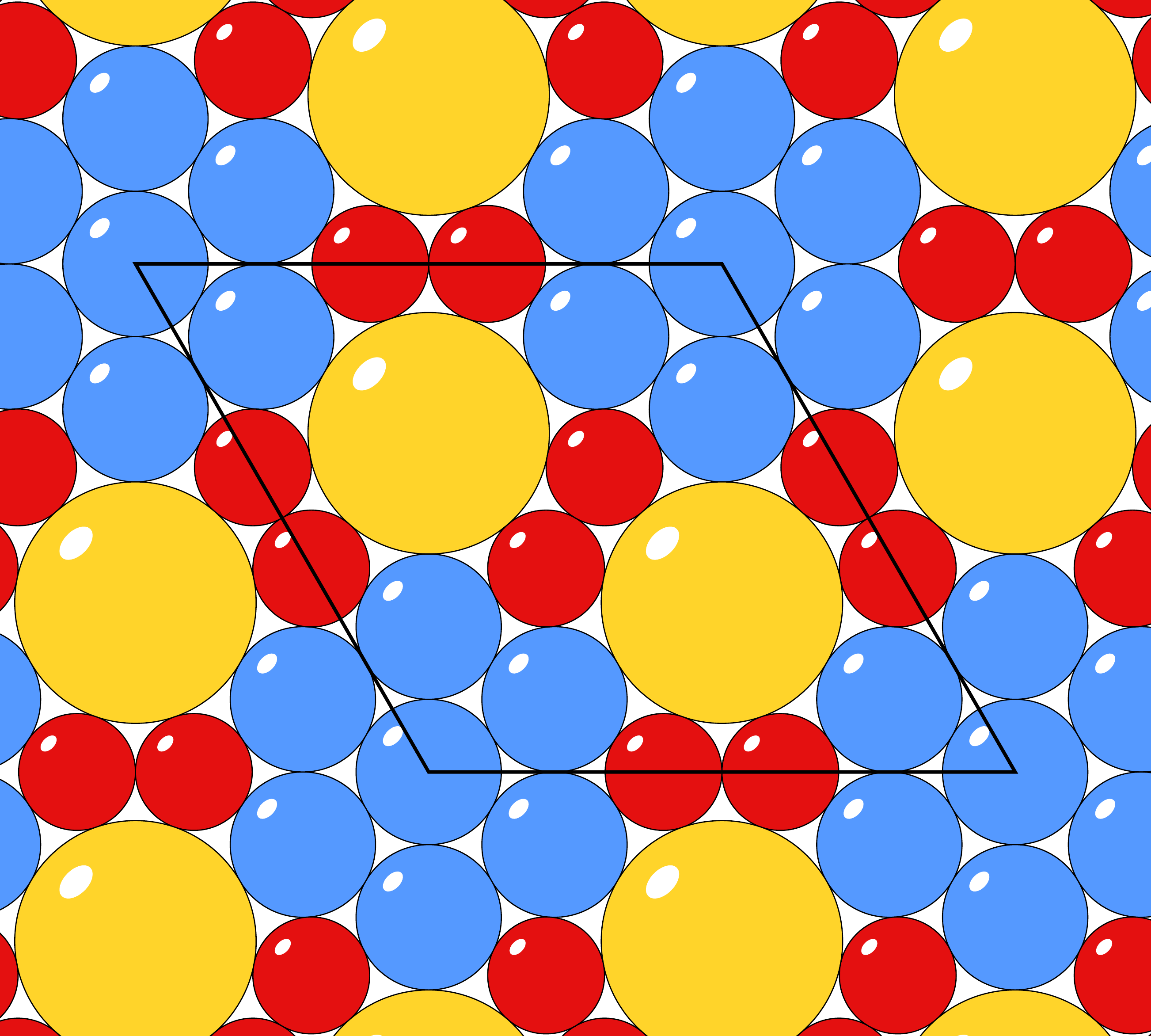}
\end{tabular}
\noindent
\begin{tabular}{lll}
  121 (L)\hfill 1rrr / 11srsrs & 122 (H)\hfill 1rrr / 1srrsrs & 123 (H)\hfill 1rrrr / 11rsrs\\
  \includegraphics[width=0.3\textwidth]{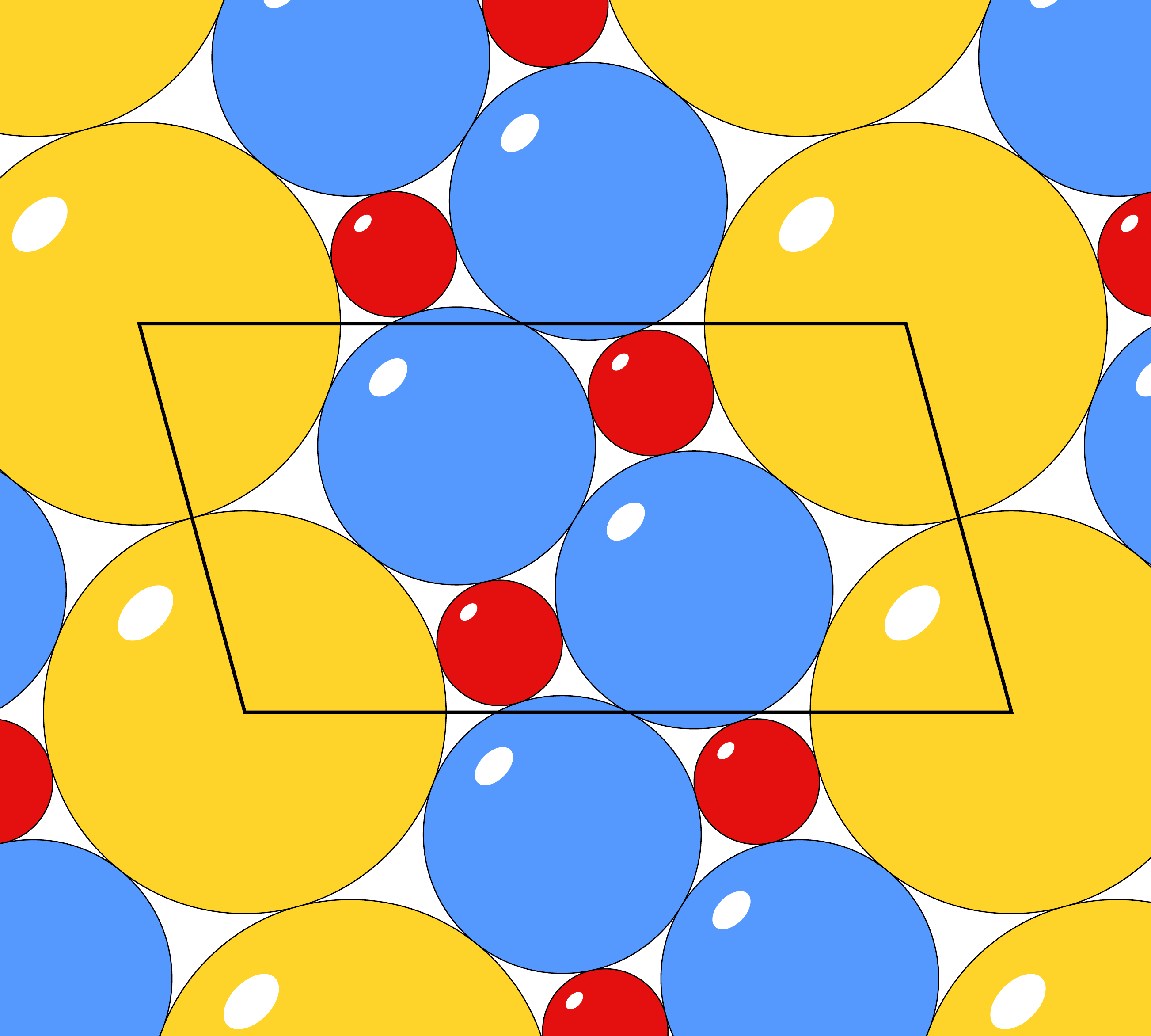} &
  \includegraphics[width=0.3\textwidth]{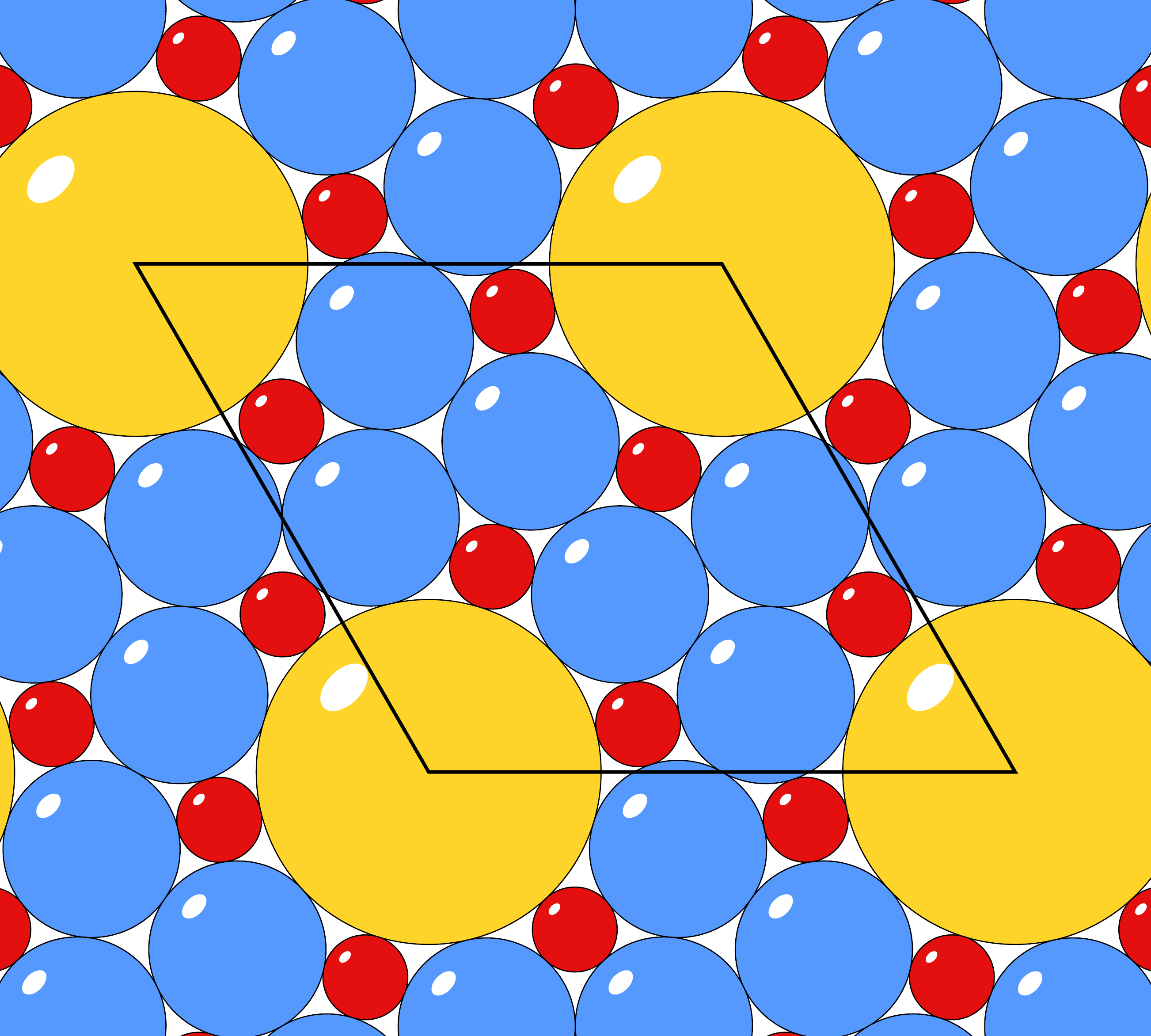} &
  \includegraphics[width=0.3\textwidth]{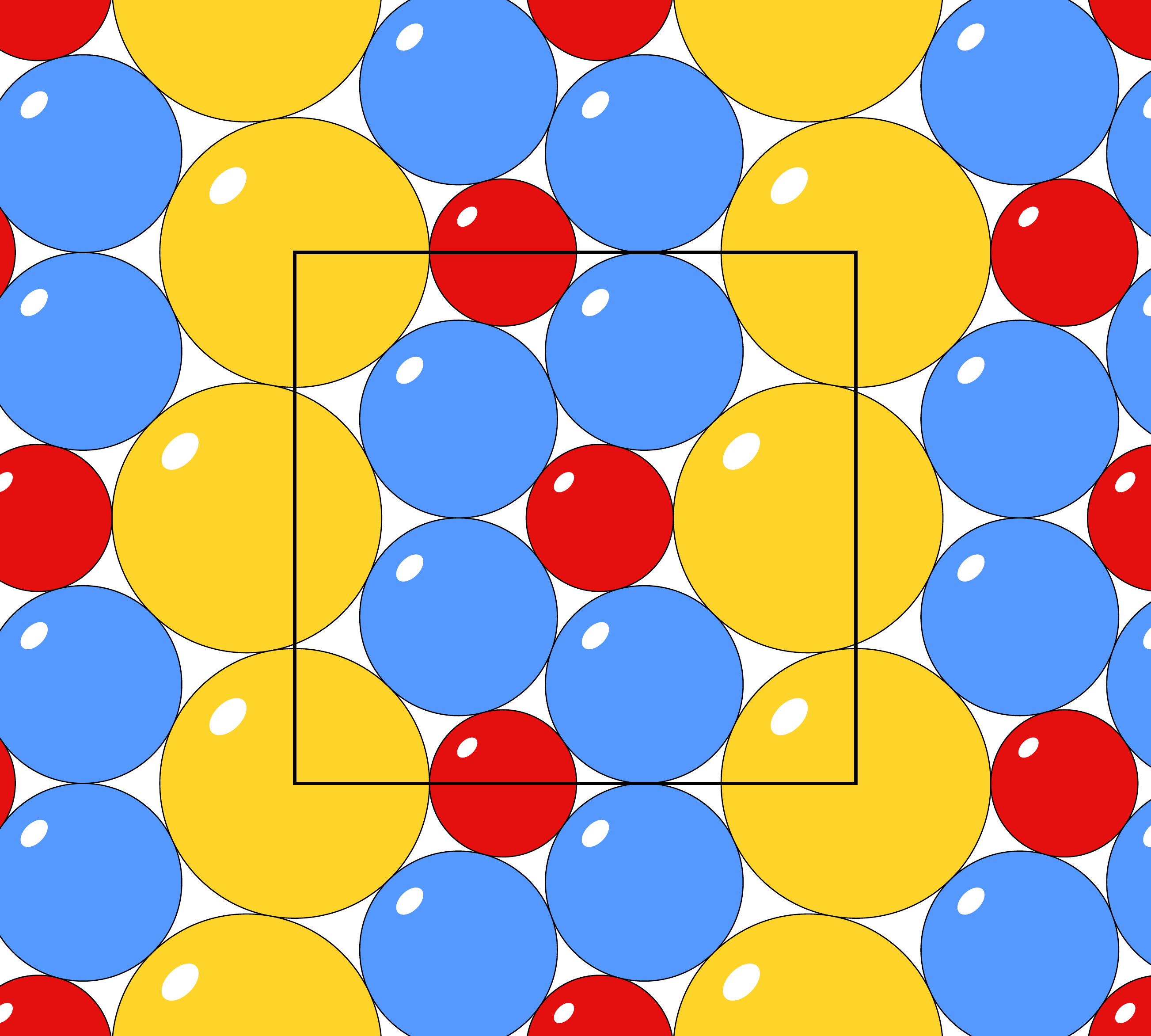}
\end{tabular}
\noindent
\begin{tabular}{lll}
  124 (H)\hfill 1rrrr / 1rrsrs & 125 (L)\hfill 1rrs / 11srsrss & 126 (H)\hfill 1rrs / 1srsrrss\\
  \includegraphics[width=0.3\textwidth]{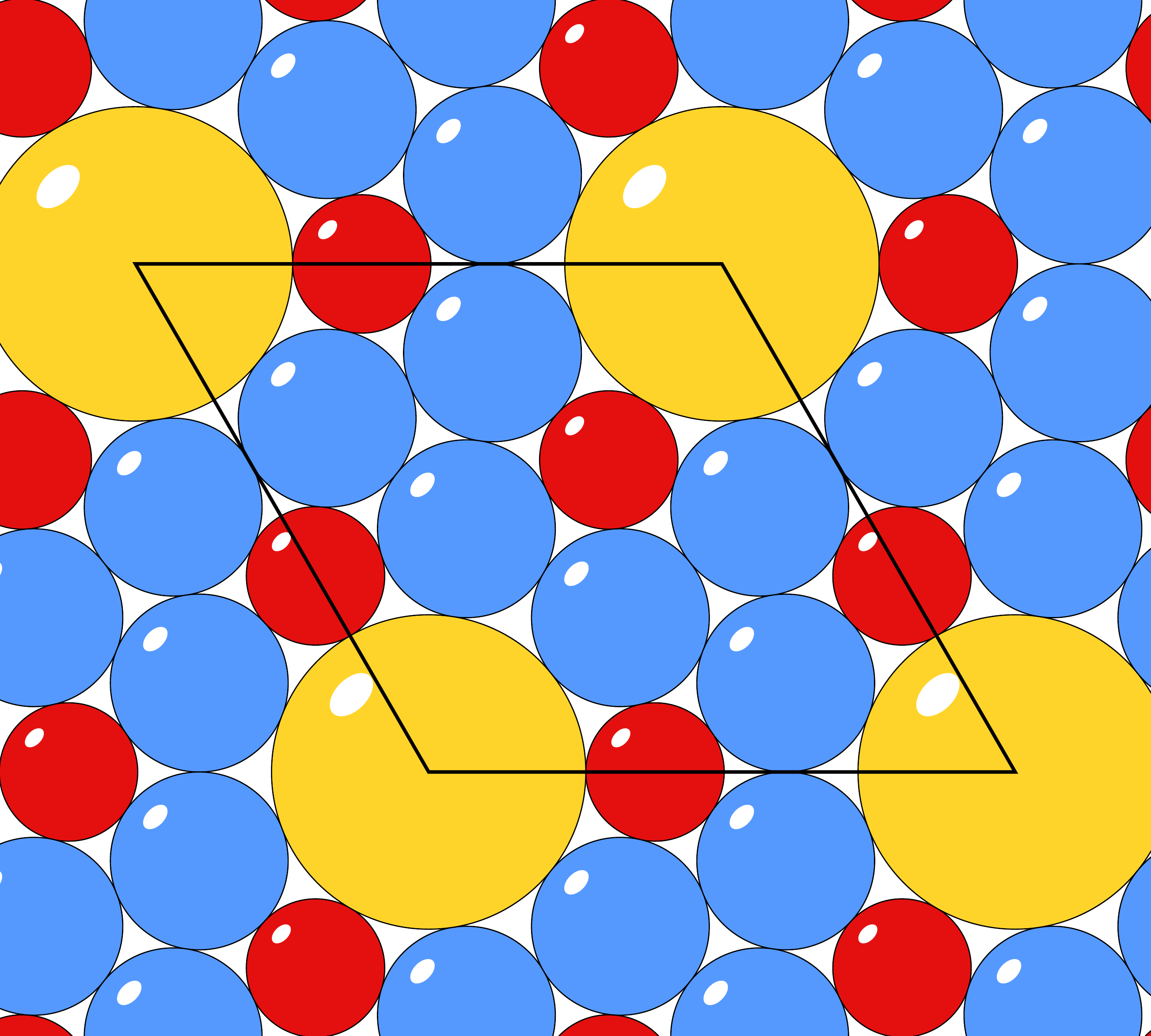} &
  \includegraphics[width=0.3\textwidth]{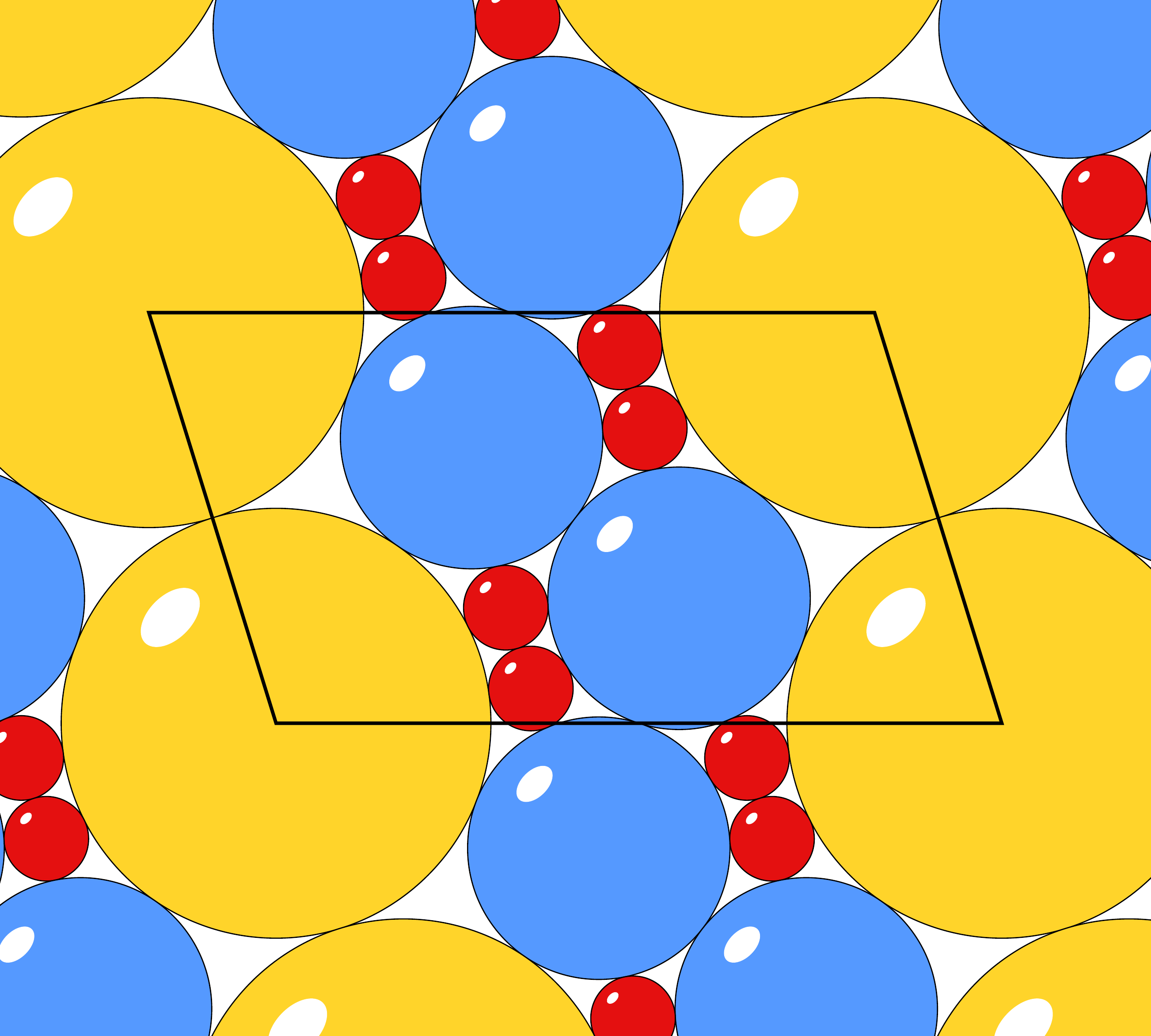} &
  \includegraphics[width=0.3\textwidth]{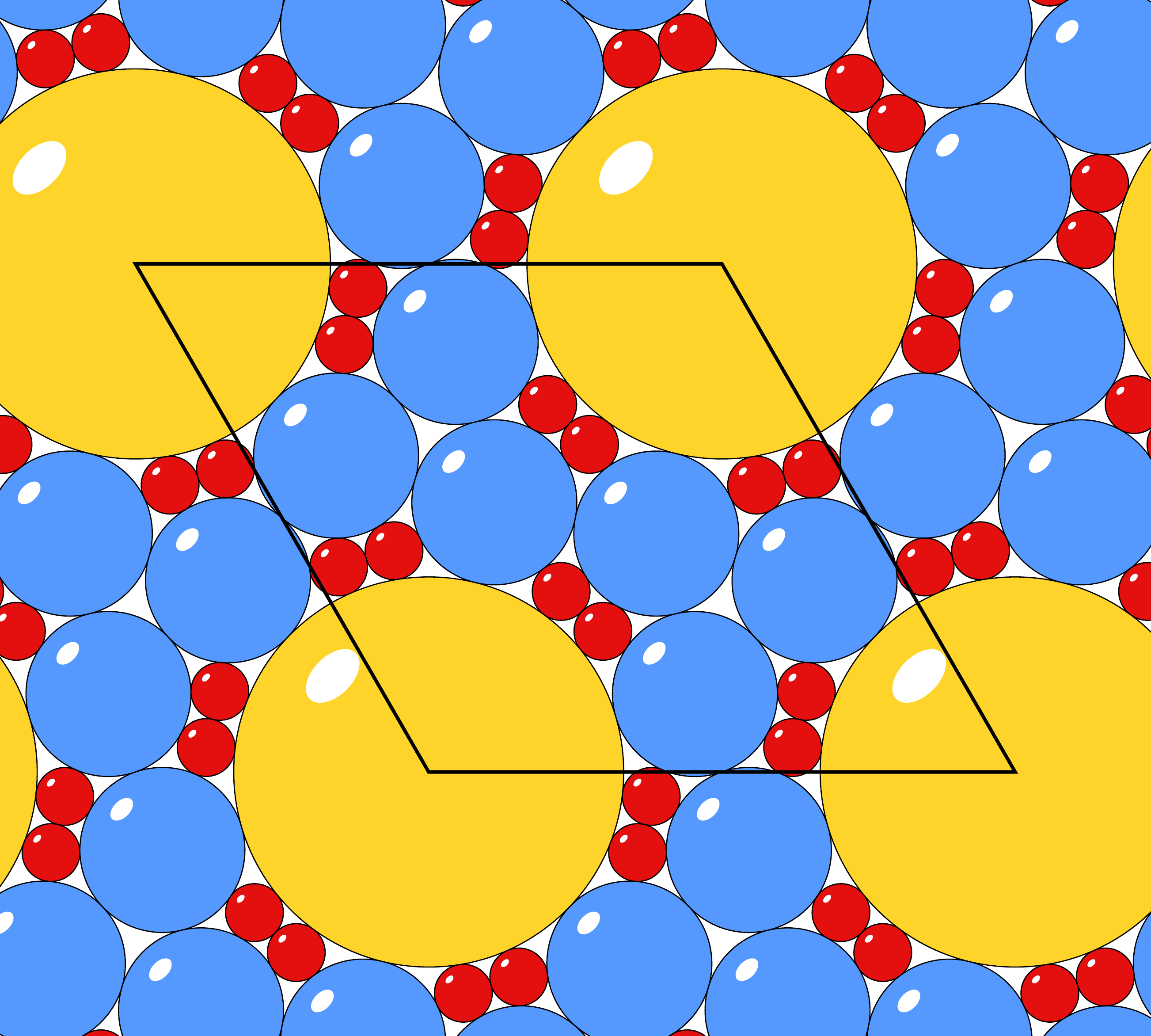}
\end{tabular}
\noindent
\begin{tabular}{lll}
  127 (L)\hfill 1rrsr / 11srss & 128 (H)\hfill 1rrsr / 1srrss & 129 (H)\hfill 1rs1s / 111ss\\
  \includegraphics[width=0.3\textwidth]{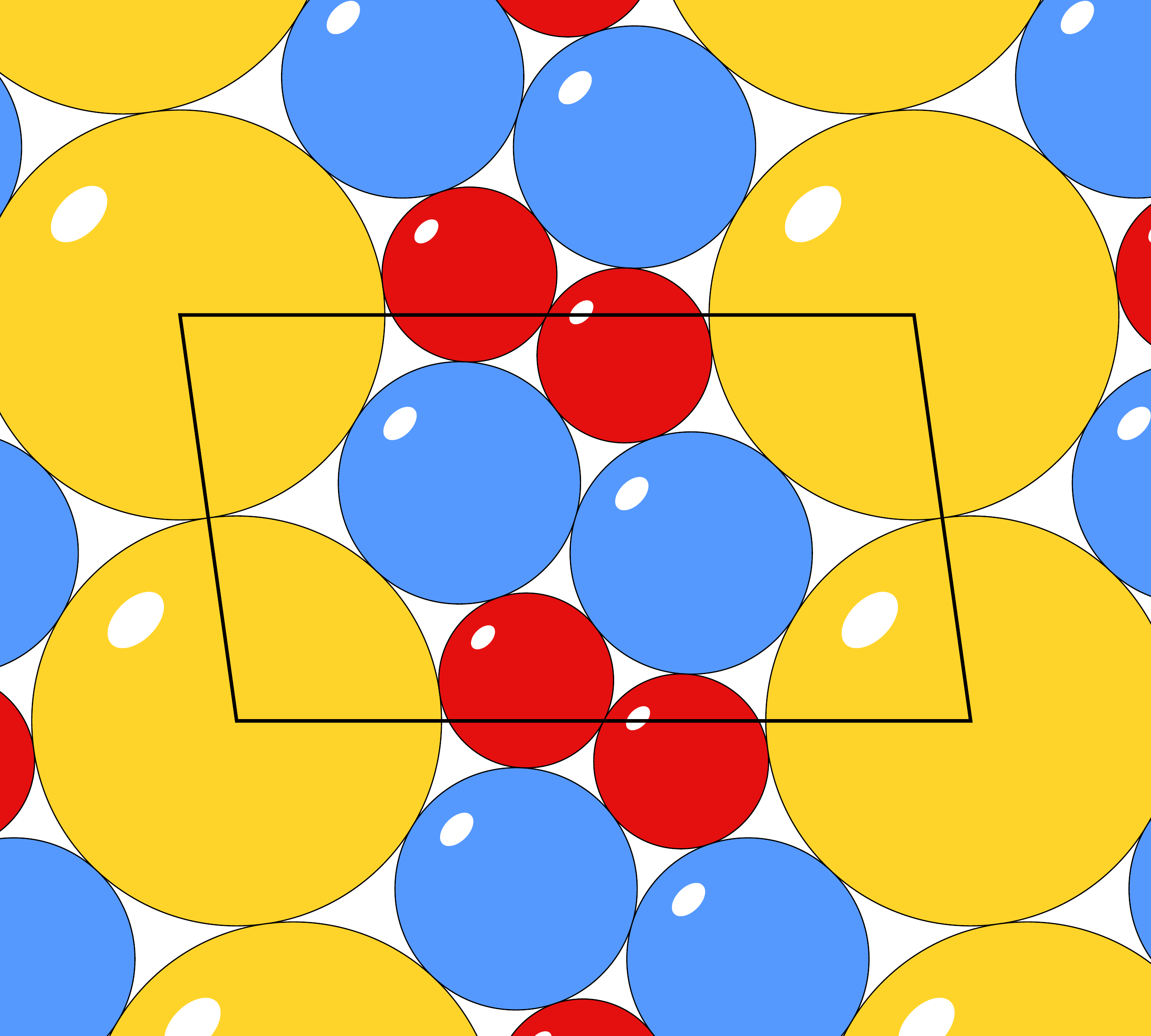} &
  \includegraphics[width=0.3\textwidth]{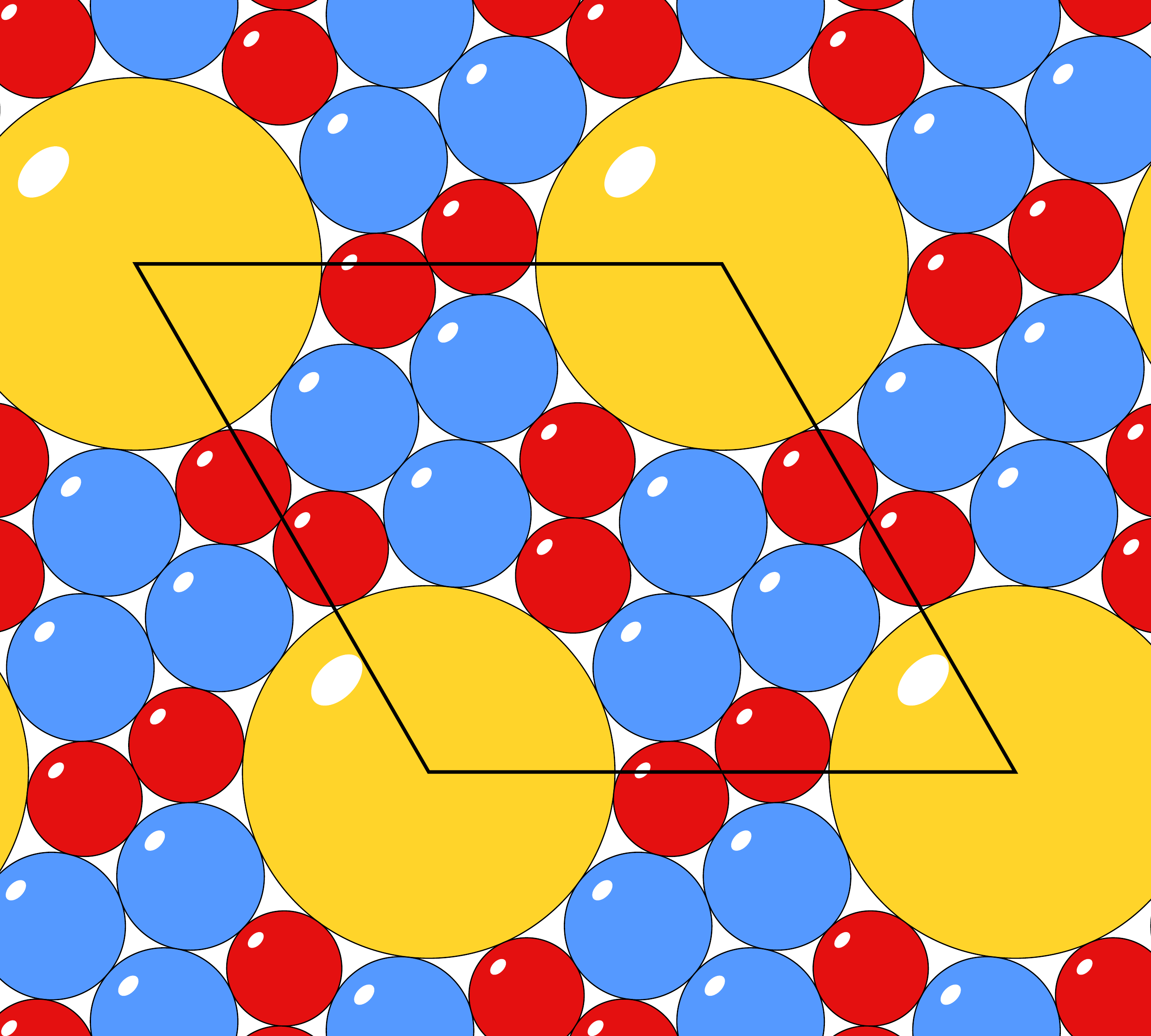} &
  \includegraphics[width=0.3\textwidth]{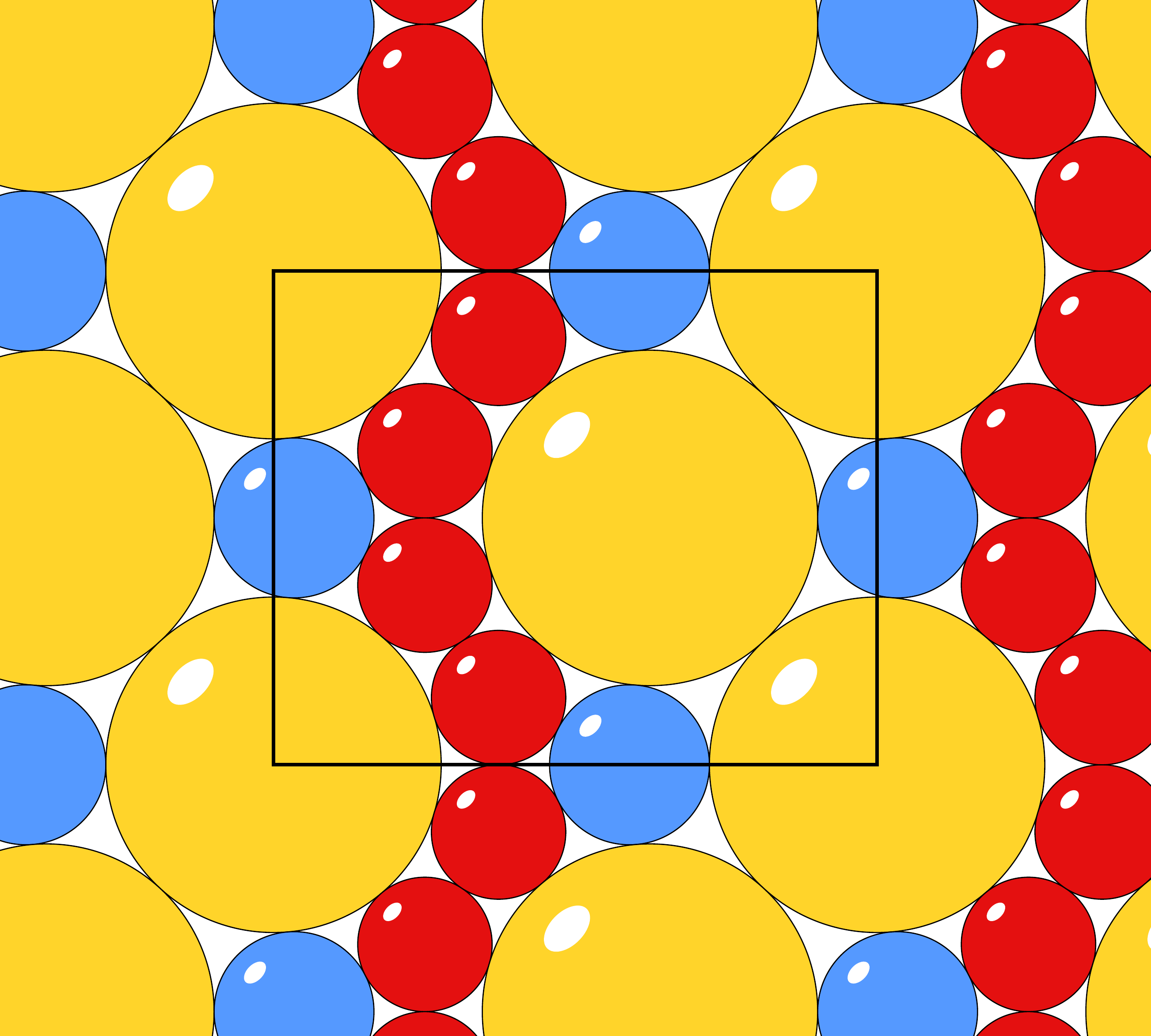}
\end{tabular}
\noindent
\begin{tabular}{lll}
  130 (L)\hfill 1rs1s / 11r1ss & 131 (L)\hfill 1rs1s / 1s1sss & 132 (L)\hfill 1rsr / 1111ss\\
  \includegraphics[width=0.3\textwidth]{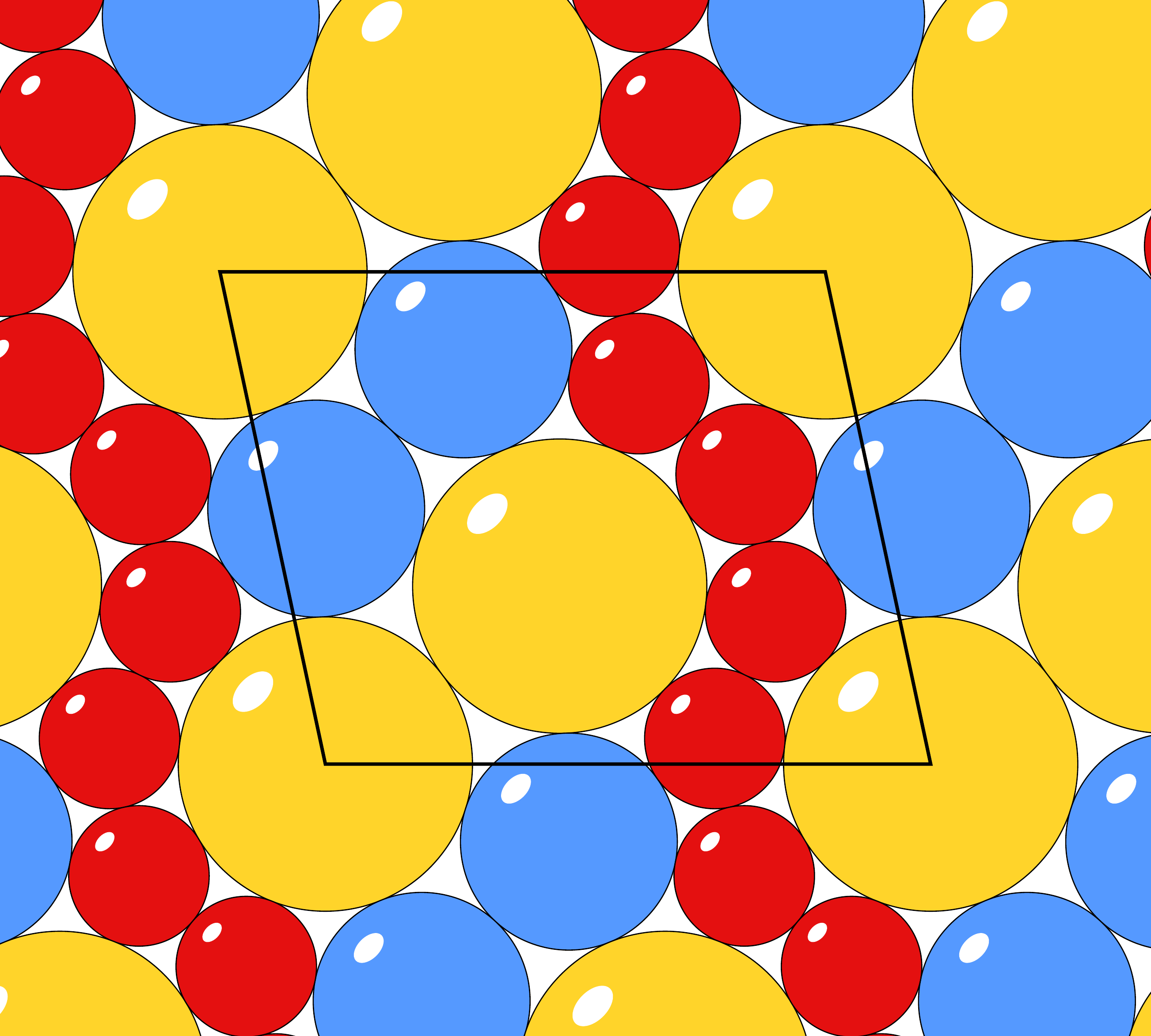} &
  \includegraphics[width=0.3\textwidth]{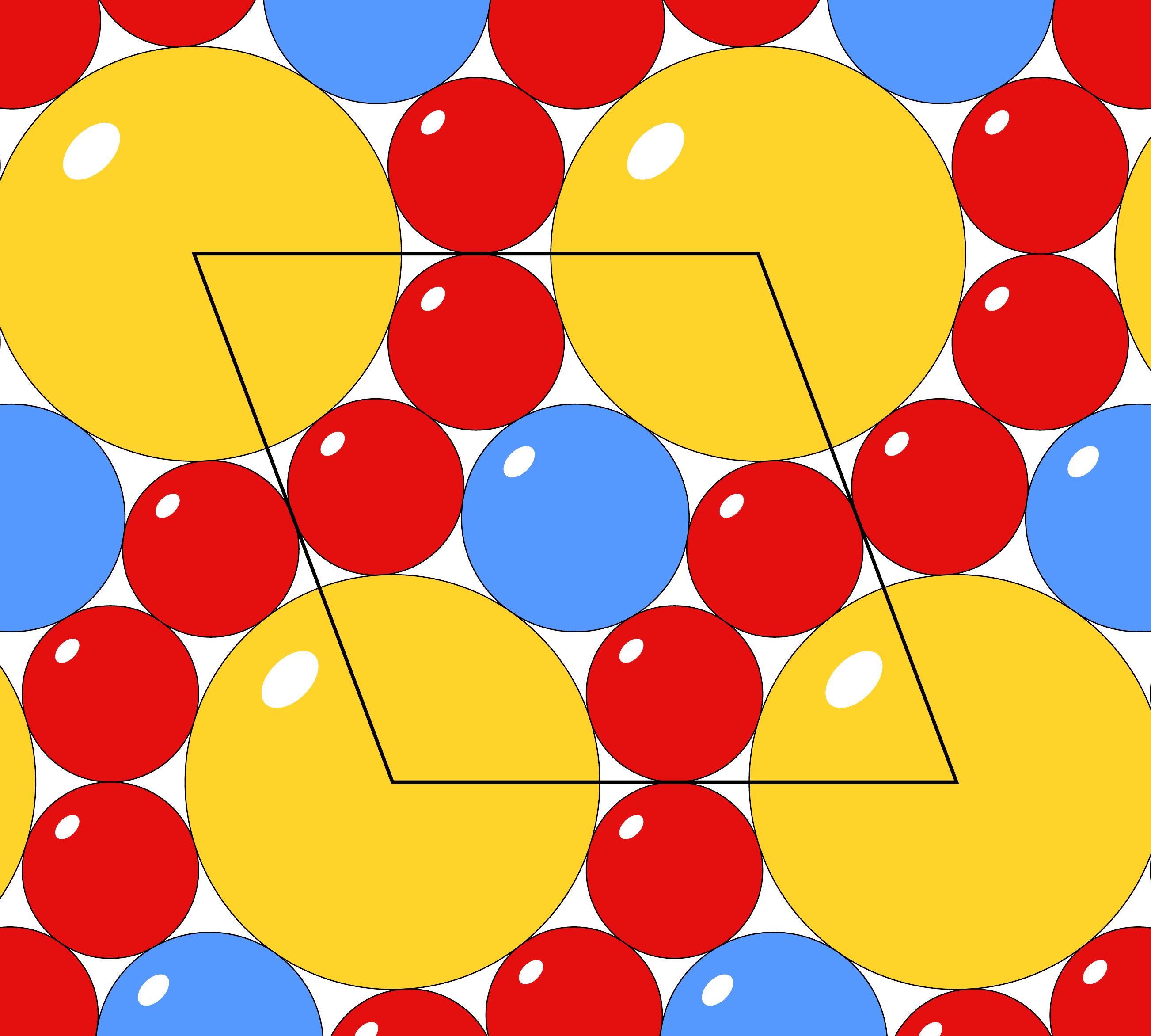} &
  \includegraphics[width=0.3\textwidth]{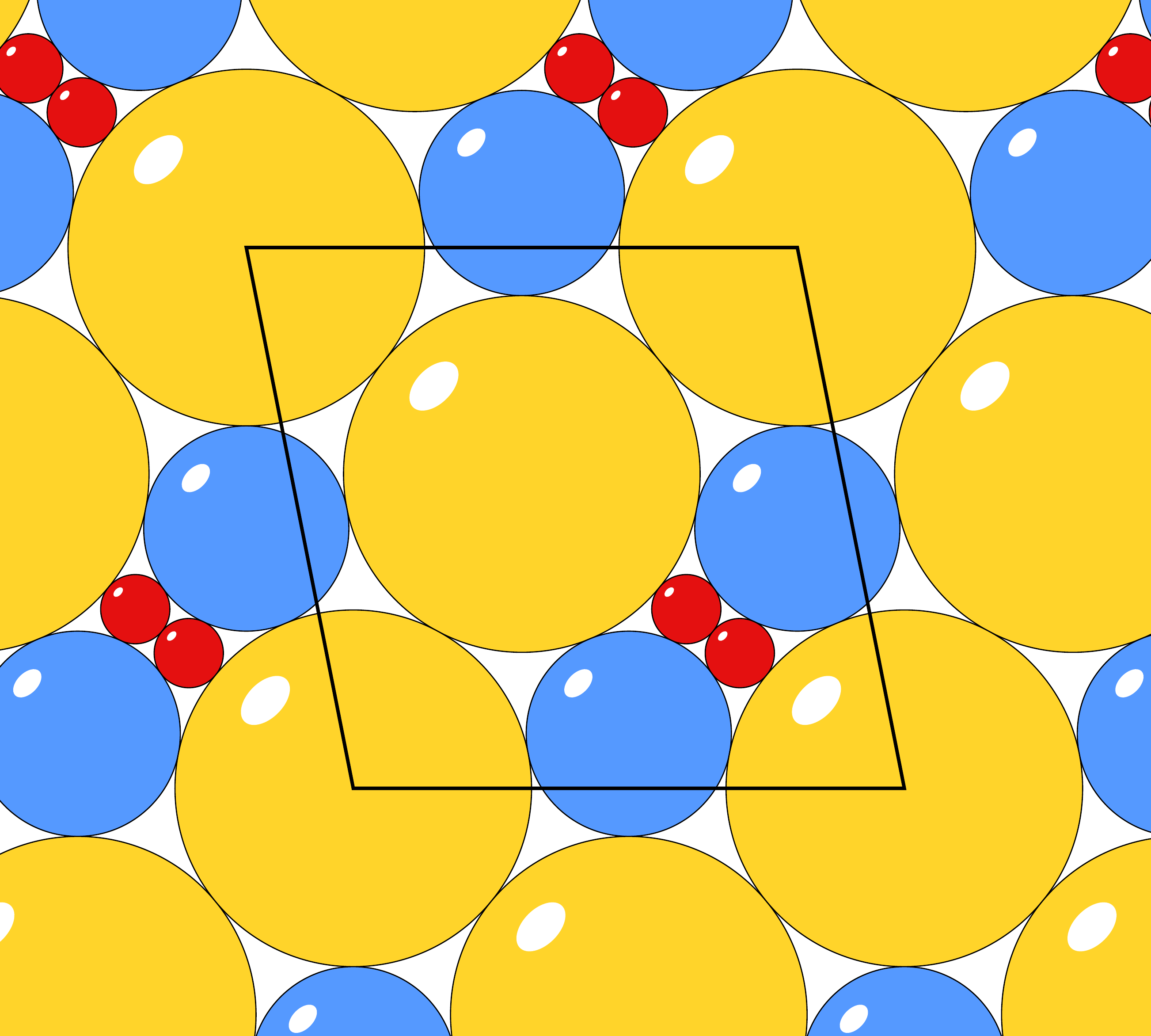}
\end{tabular}
\noindent
\begin{tabular}{lll}
  133 (L)\hfill 1rsr / 111r1ss & 134 (L)\hfill 1rsr / 111s1sss & 135 (E)\hfill 1rsr / 111ss\\
  \includegraphics[width=0.3\textwidth]{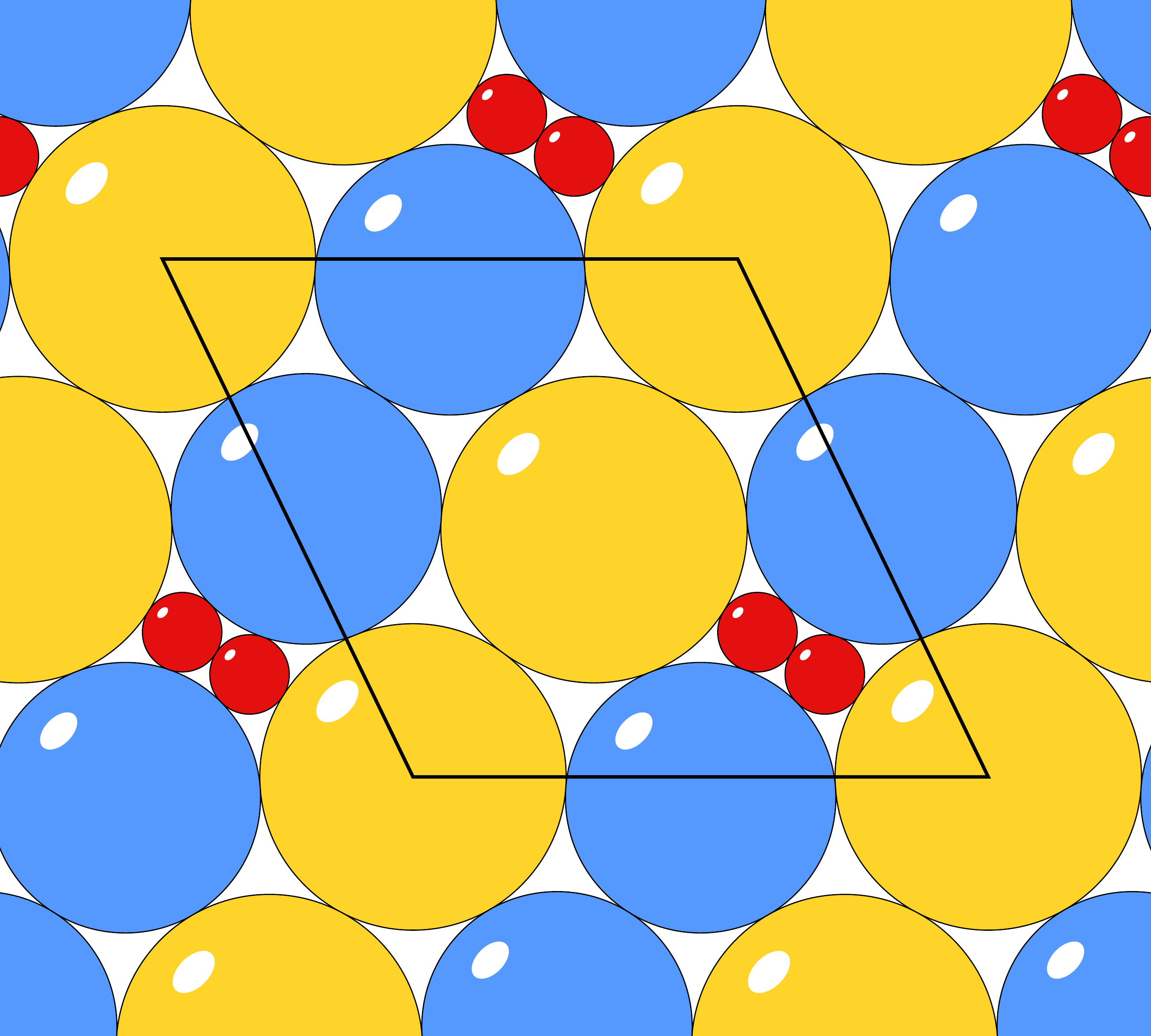} &
  \includegraphics[width=0.3\textwidth]{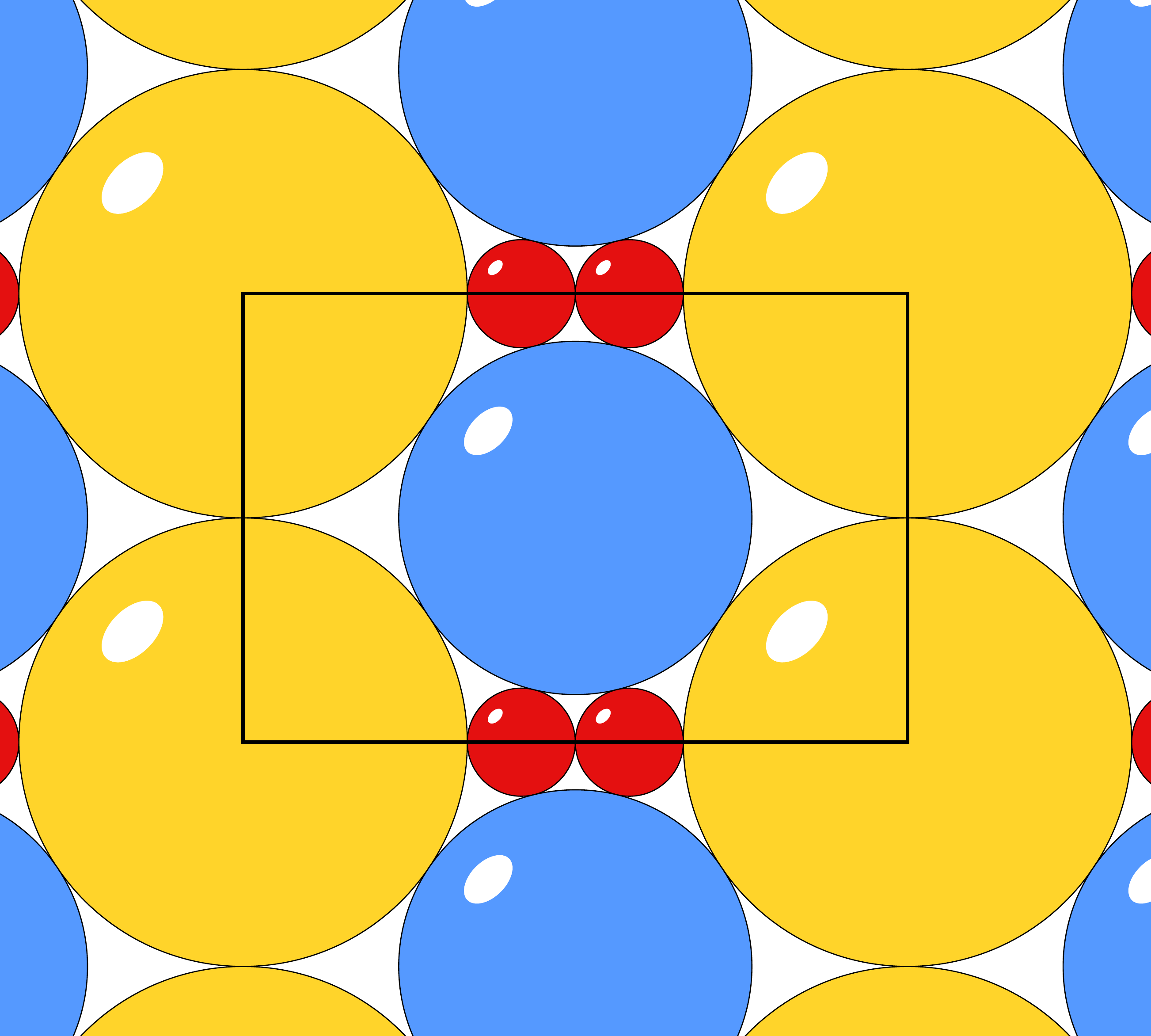} &
  \includegraphics[width=0.3\textwidth]{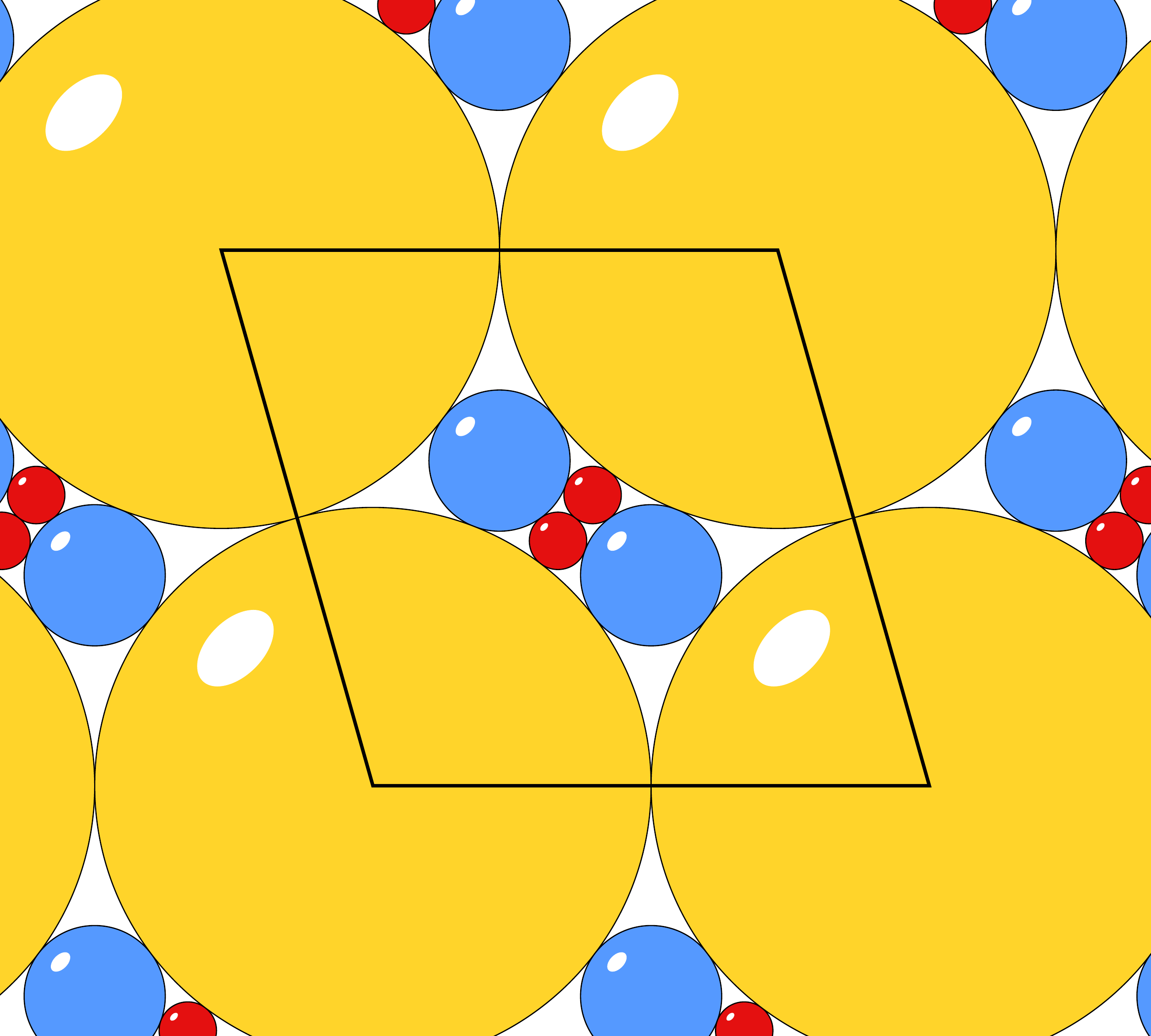}
\end{tabular}
\noindent
\begin{tabular}{lll}
  136 (L)\hfill 1rsr / 11r1ss & 137 (H)\hfill 1rsr / 11rr1ss & 138 (L)\hfill 1rsr / 11s1sss\\
  \includegraphics[width=0.3\textwidth]{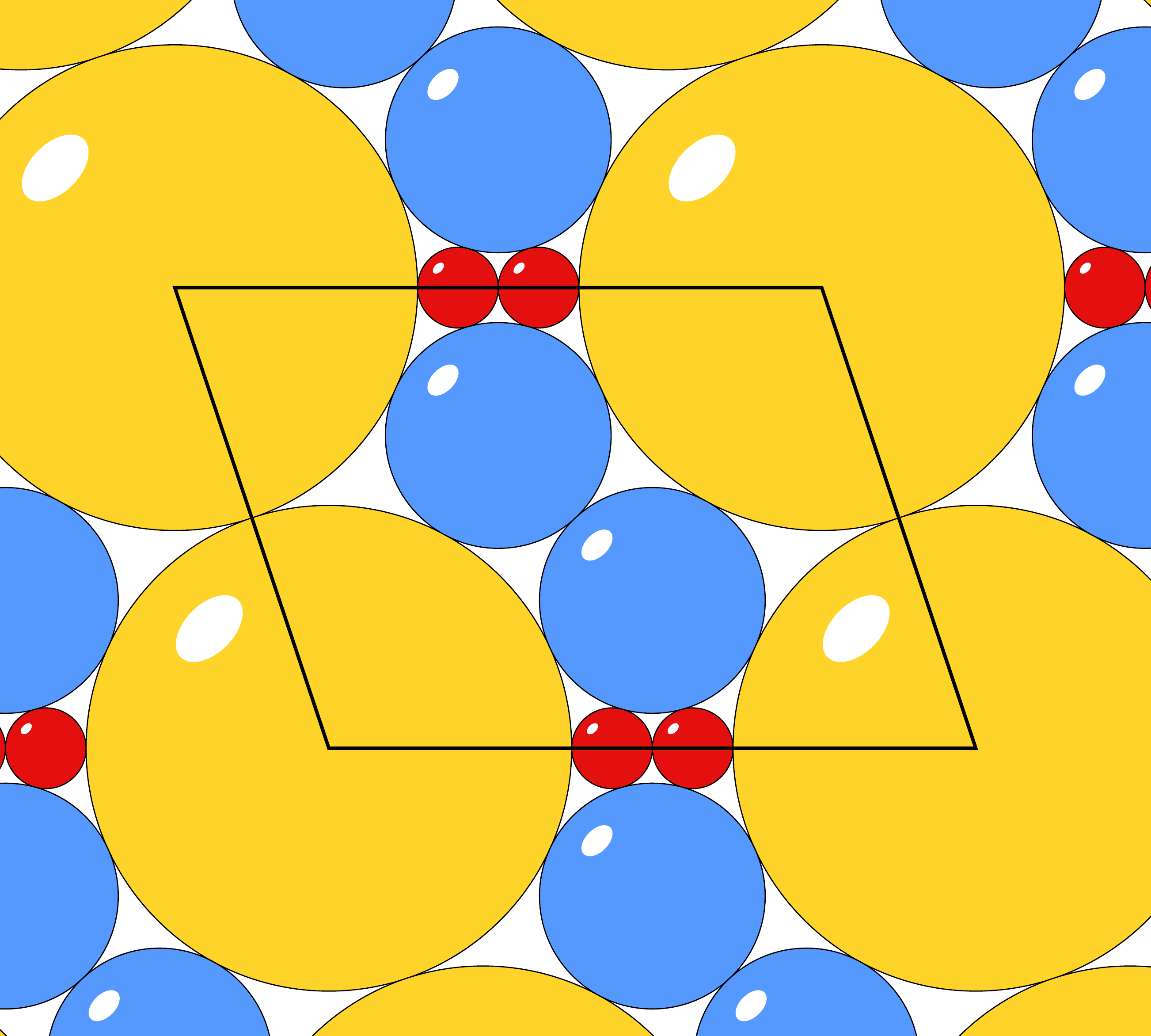} &
  \includegraphics[width=0.3\textwidth]{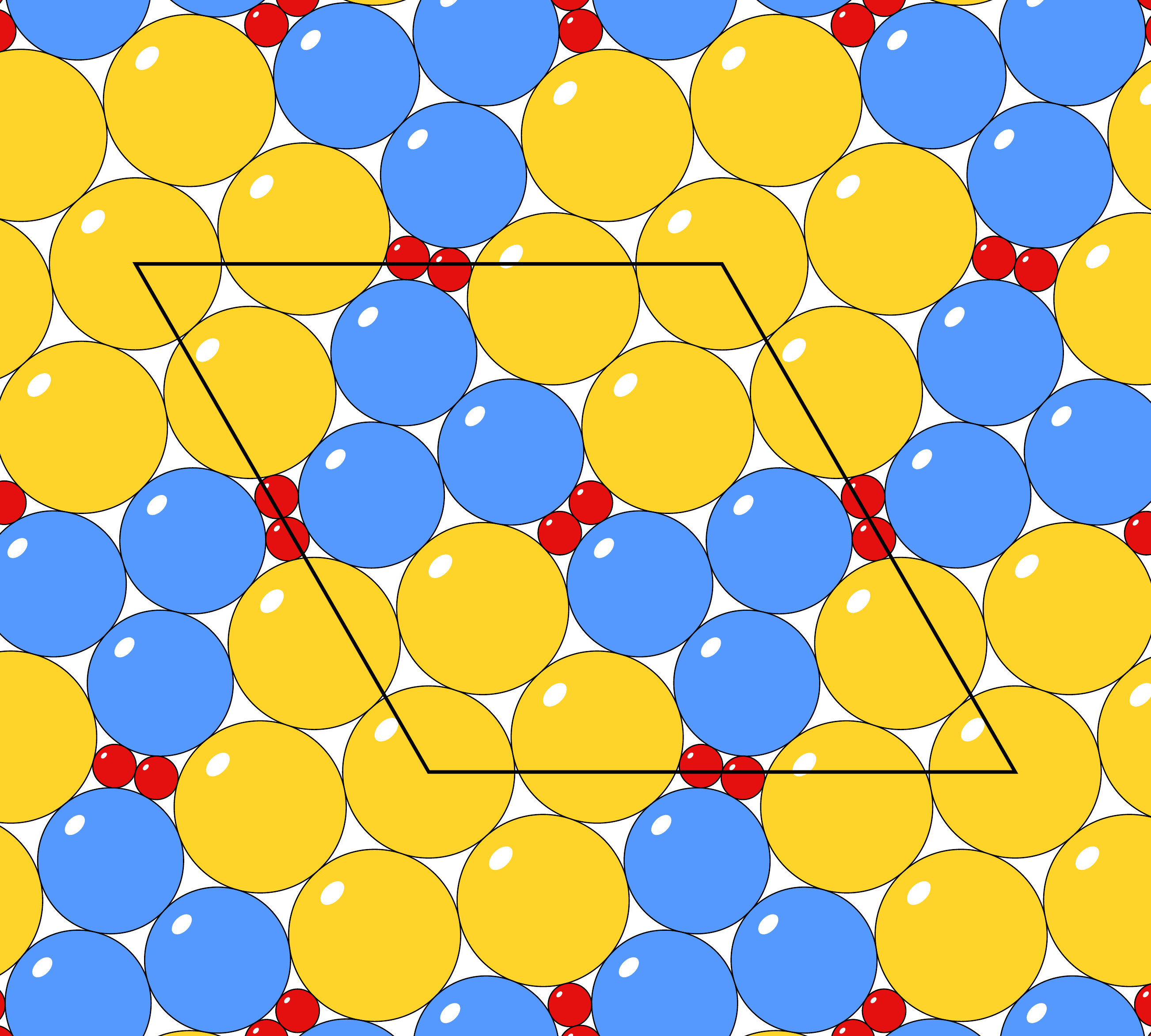} &
  \includegraphics[width=0.3\textwidth]{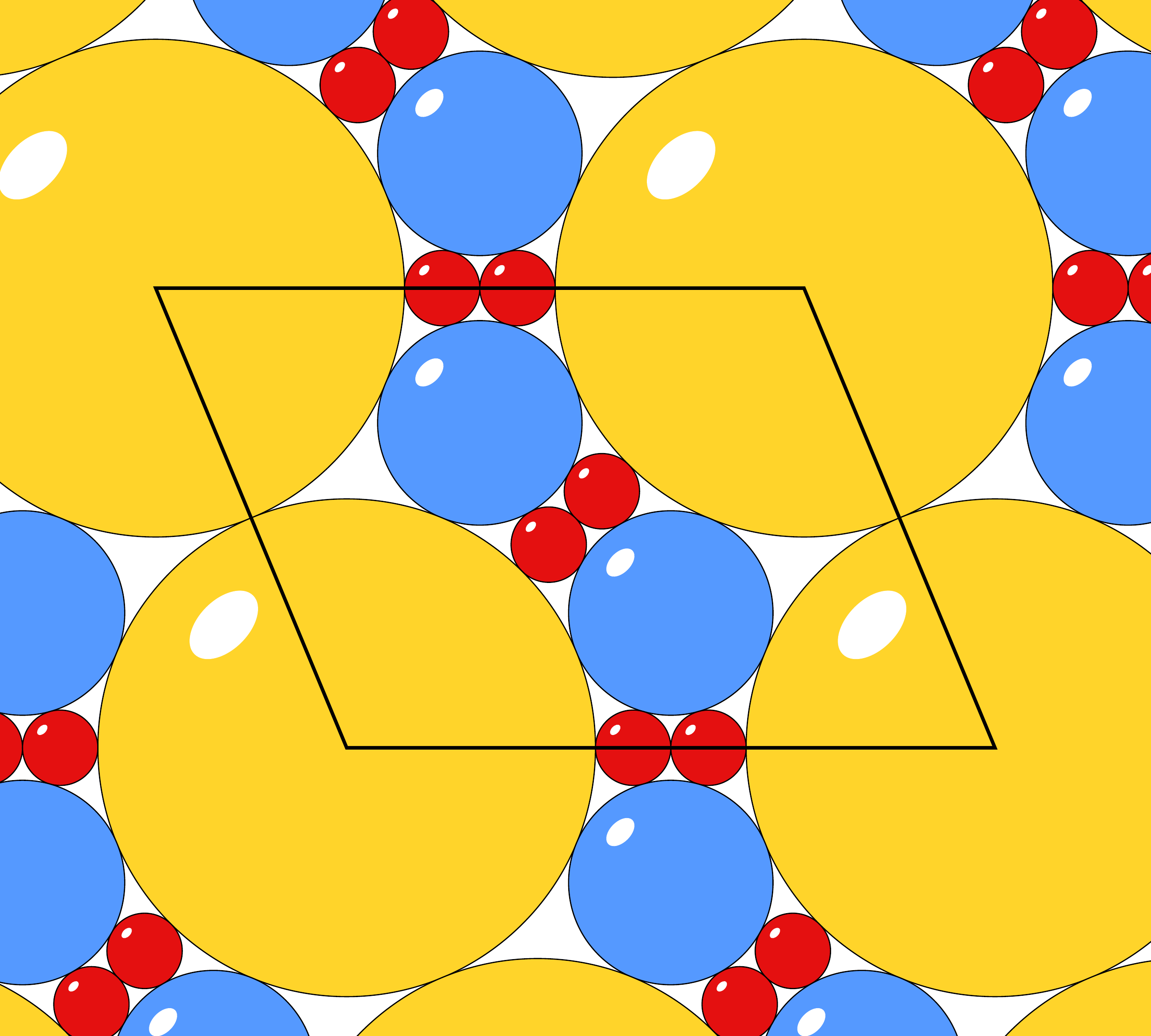}
\end{tabular}
\noindent
\begin{tabular}{lll}
  139 (L)\hfill 1rsr / 1r1r1ss & 140 (L)\hfill 1rsr / 1r1s1sss & 141 (H)\hfill 1rsr / 1rr1ss\\
  \includegraphics[width=0.3\textwidth]{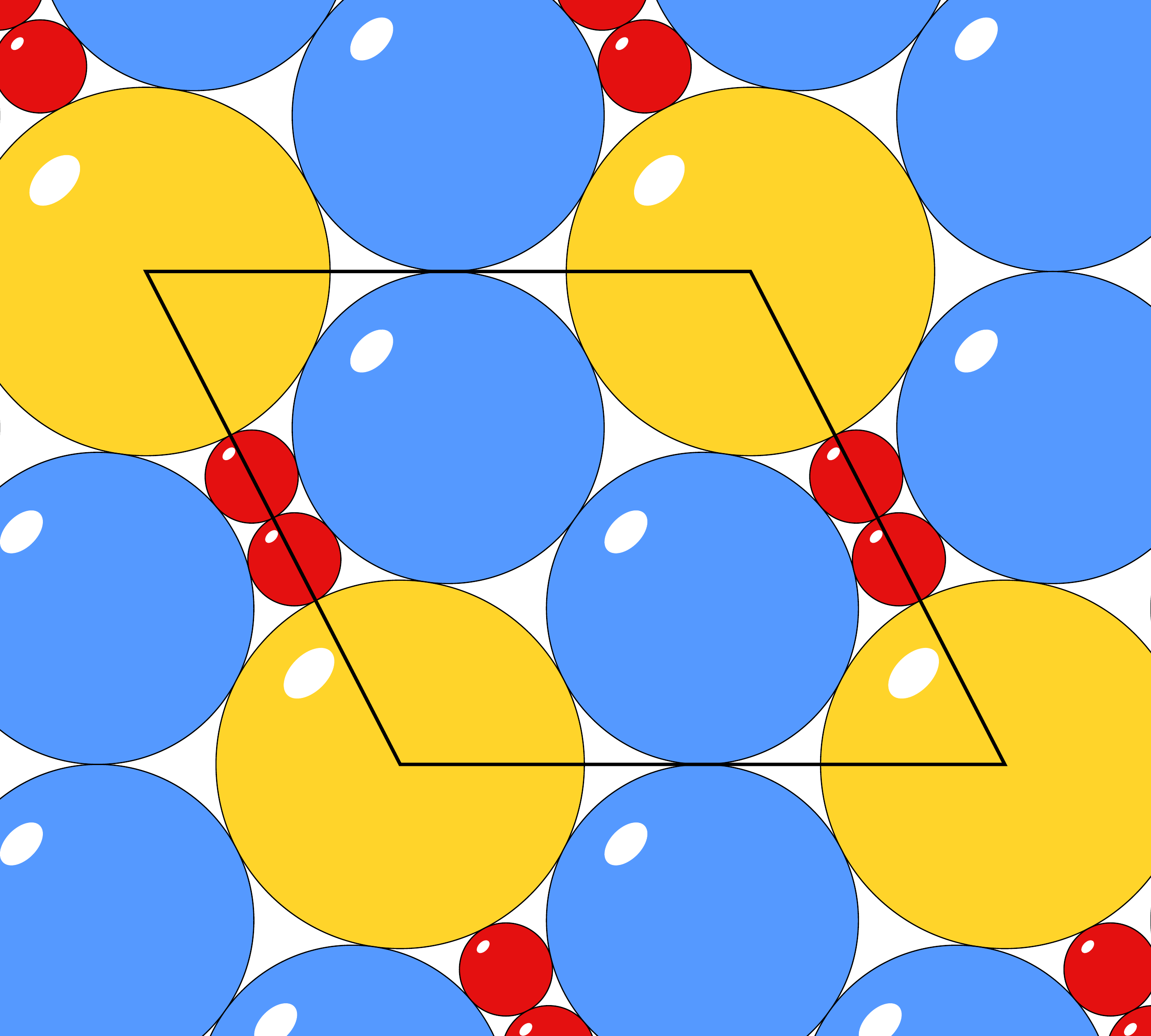} &
  \includegraphics[width=0.3\textwidth]{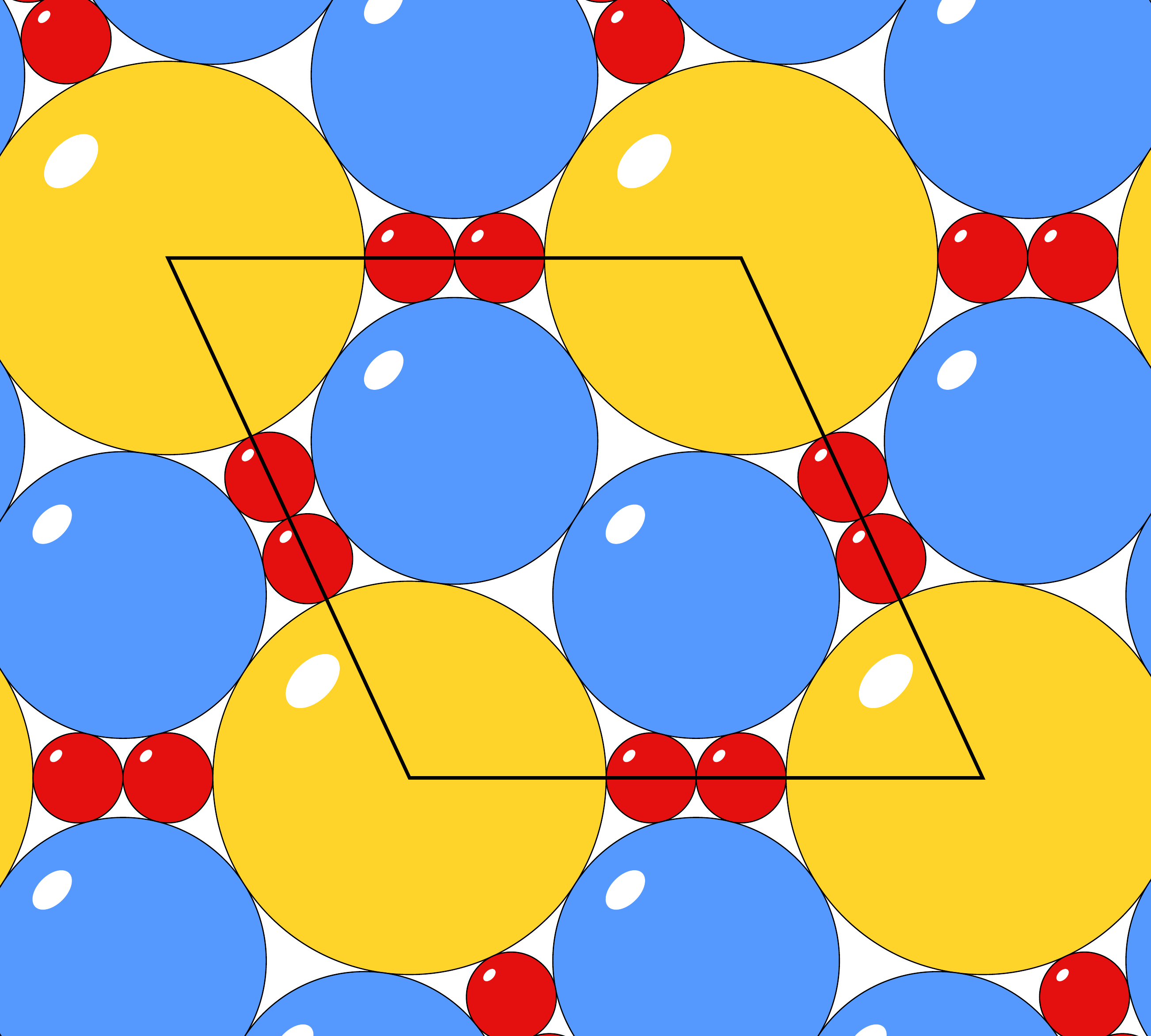} &
  \includegraphics[width=0.3\textwidth]{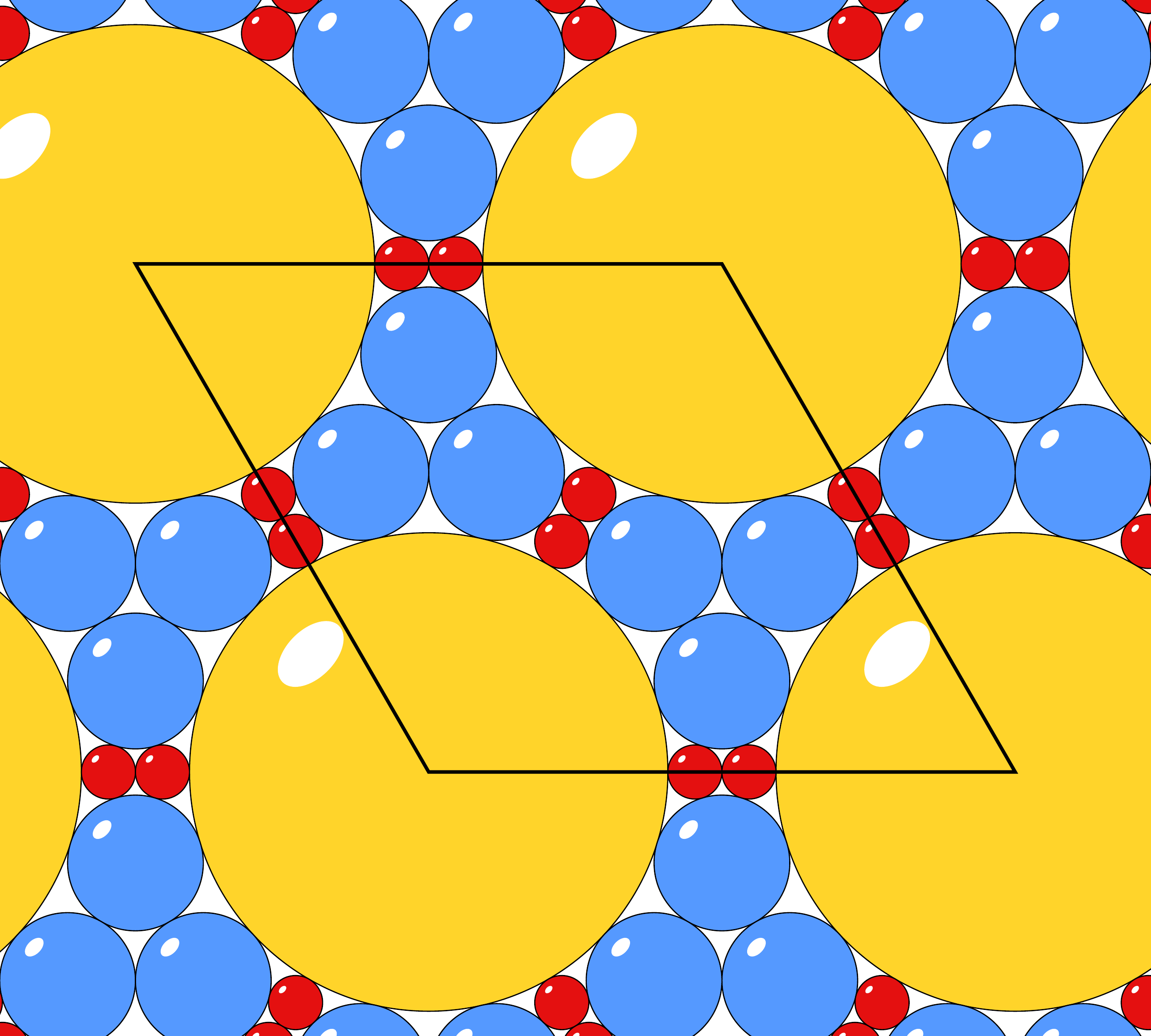}
\end{tabular}
\noindent
\begin{tabular}{lll}
  142 (H)\hfill 1rsr / 1rrr1ss & 143 (H)\hfill 1rsr / 1s1s1ssss & 144 (H)\hfill 1rsrs / 111ssss\\
  \includegraphics[width=0.3\textwidth]{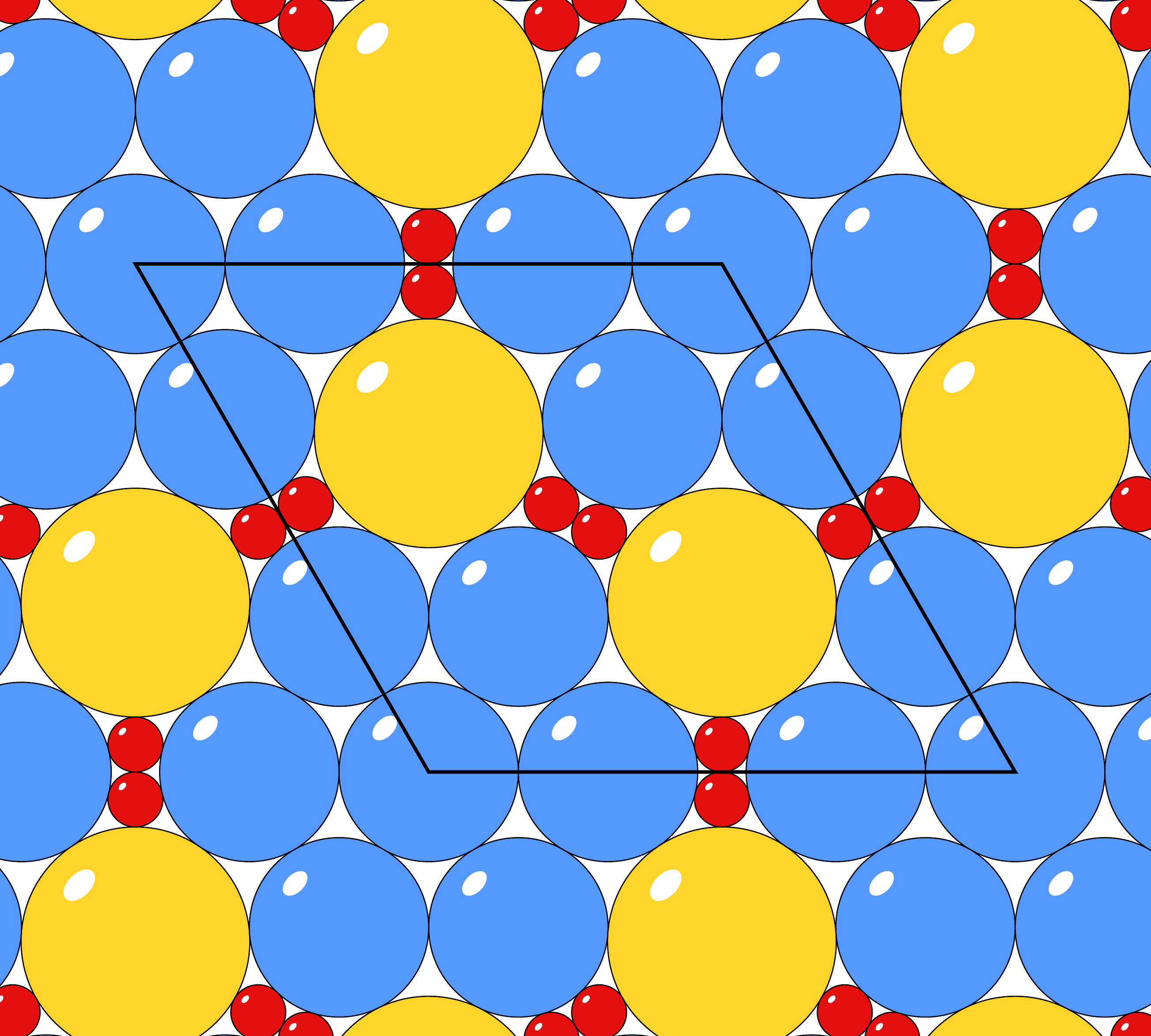} &
  \includegraphics[width=0.3\textwidth]{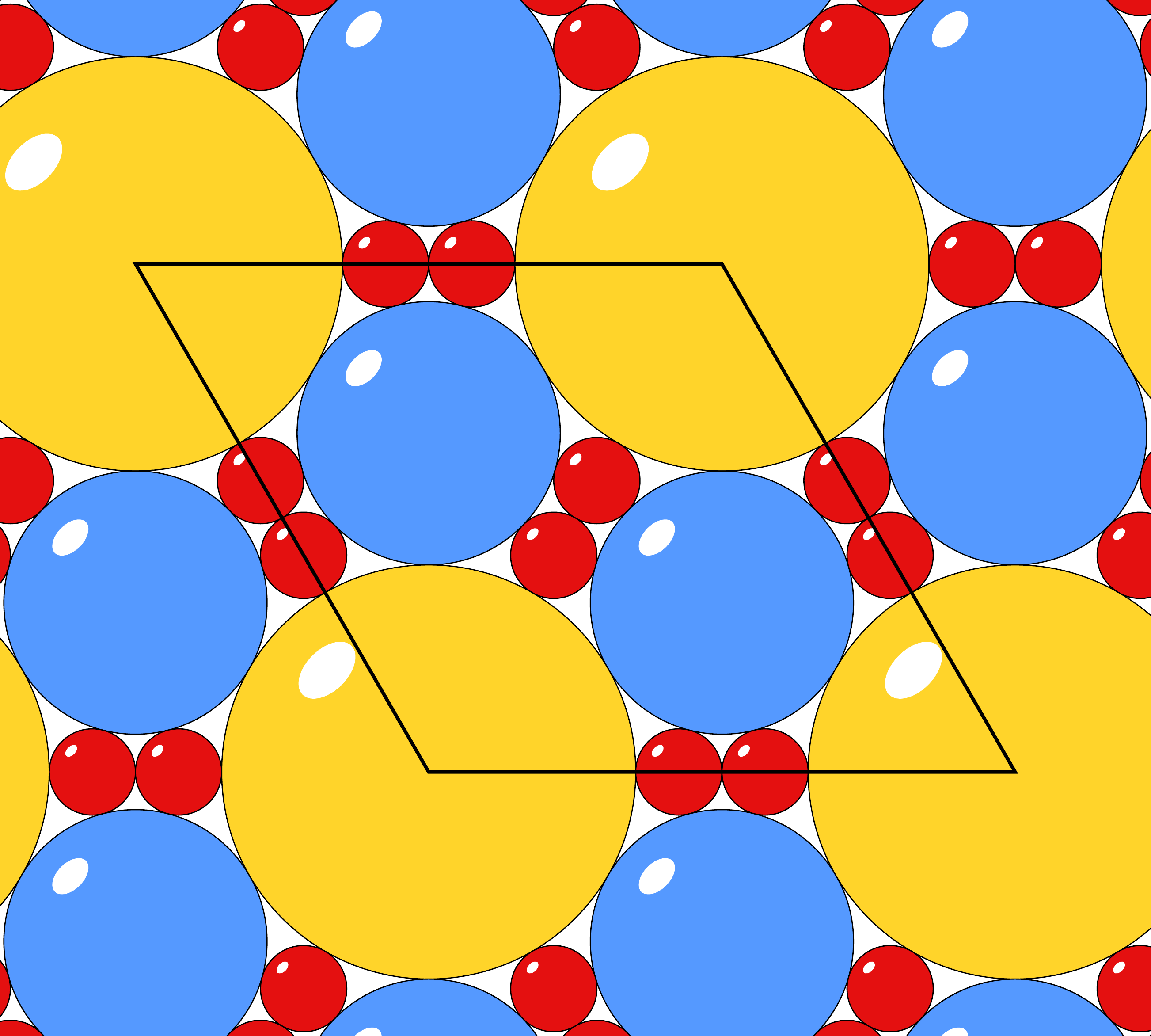} &
  \includegraphics[width=0.3\textwidth]{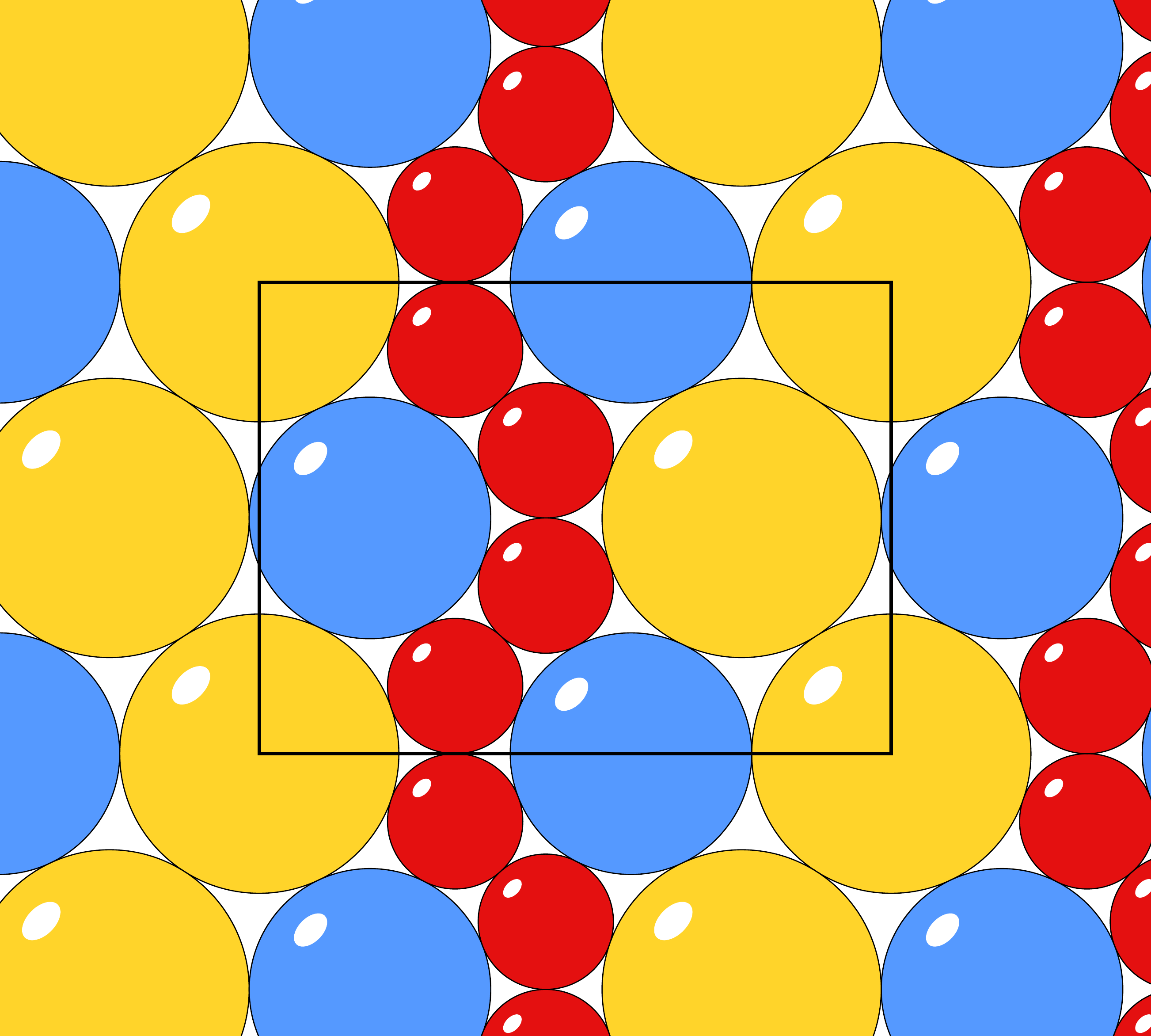}
\end{tabular}
\noindent
\begin{tabular}{lll}
  145 (L)\hfill 1rsrs / 11ssss & 146 (H)\hfill 1rsrs / 1r1ssss & 147 (H)\hfill 1rssr / 111ss\\
  \includegraphics[width=0.3\textwidth]{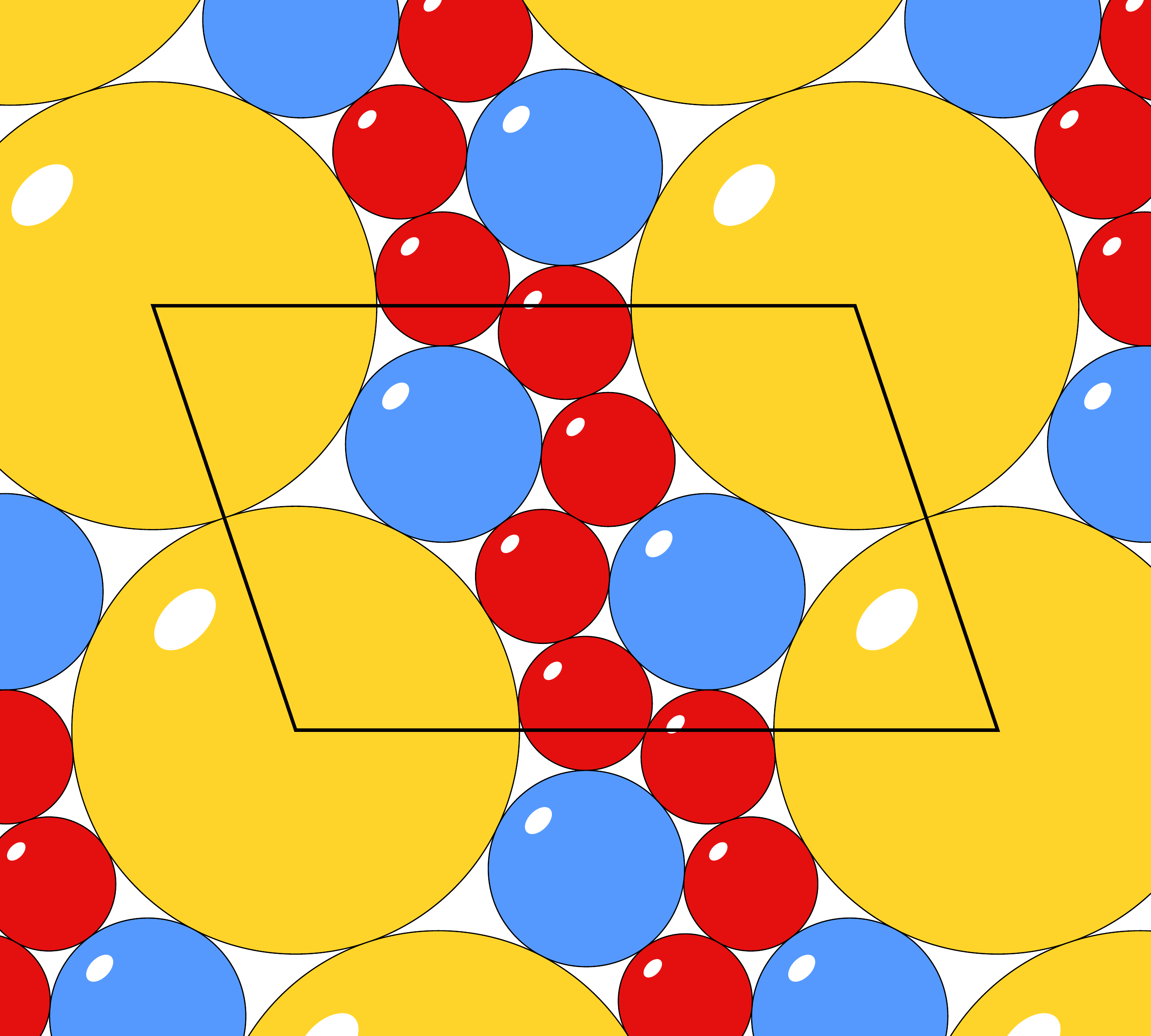} &
  \includegraphics[width=0.3\textwidth]{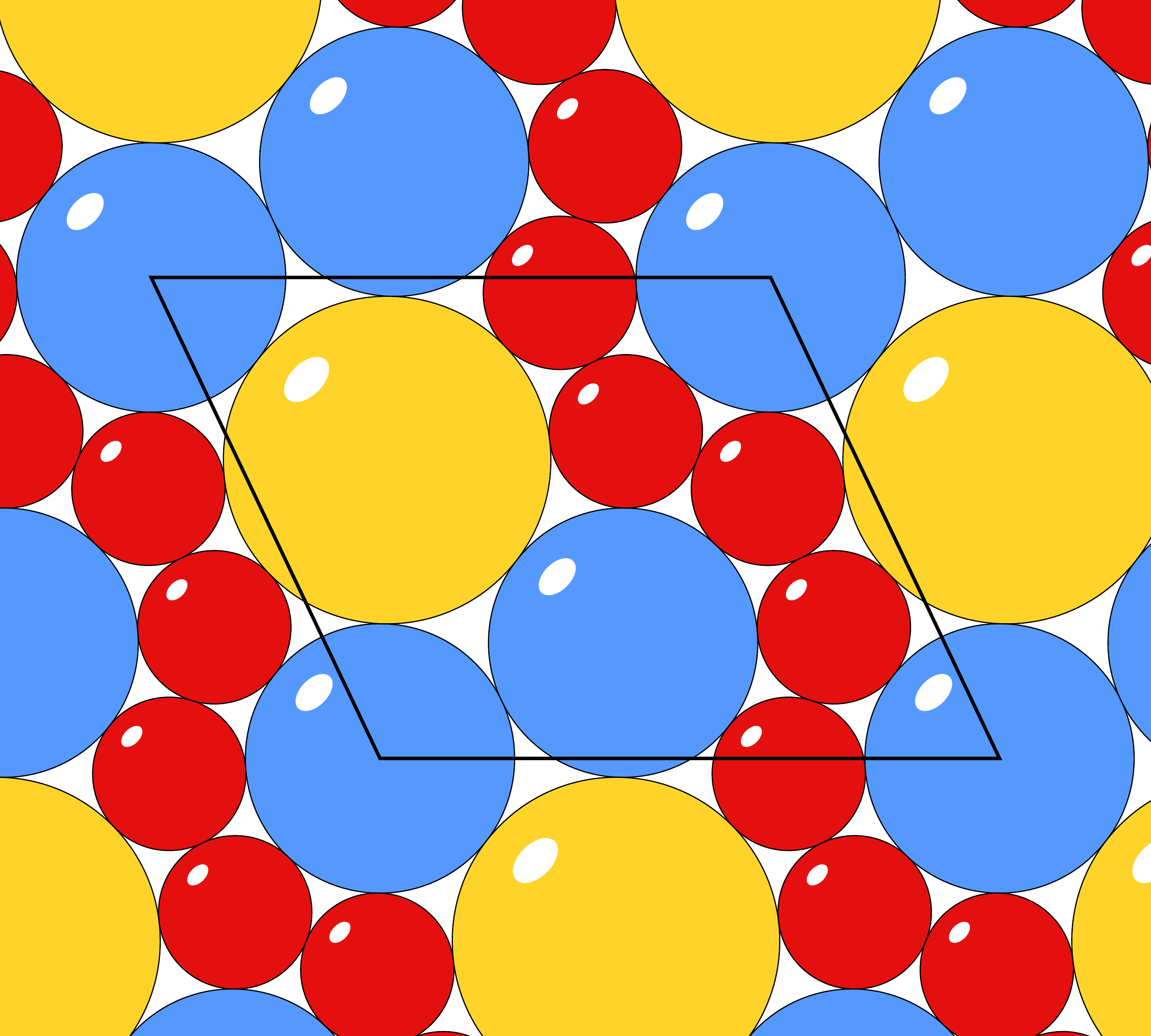} &
  \includegraphics[width=0.3\textwidth]{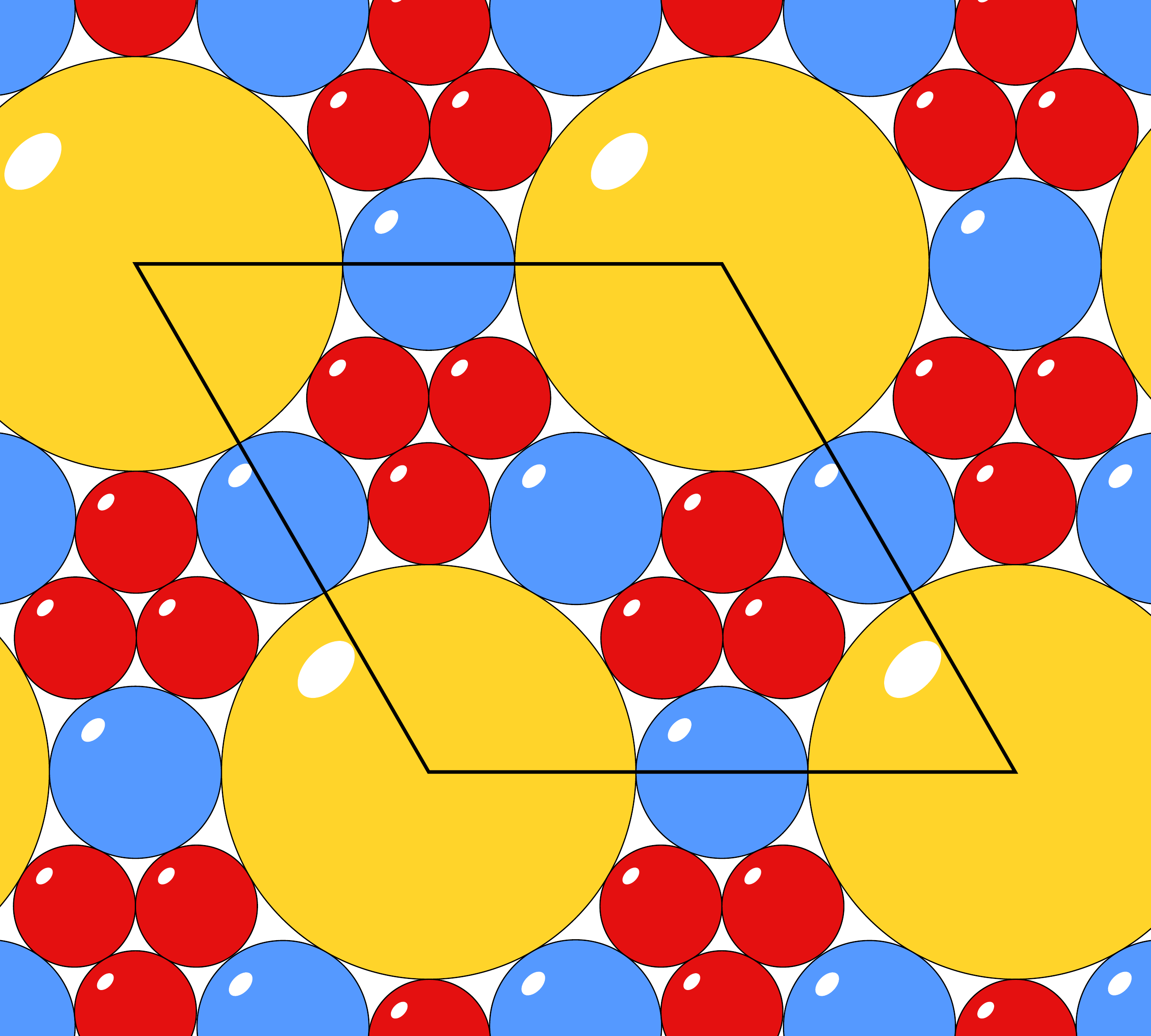}
\end{tabular}
\noindent
\begin{tabular}{lll}
  148 (H)\hfill 1rssr / 11r1ss & 149 (H)\hfill 1rssr / 11s1sss & 150 (H)\hfill 1rssr / 1r1ss\\
  \includegraphics[width=0.3\textwidth]{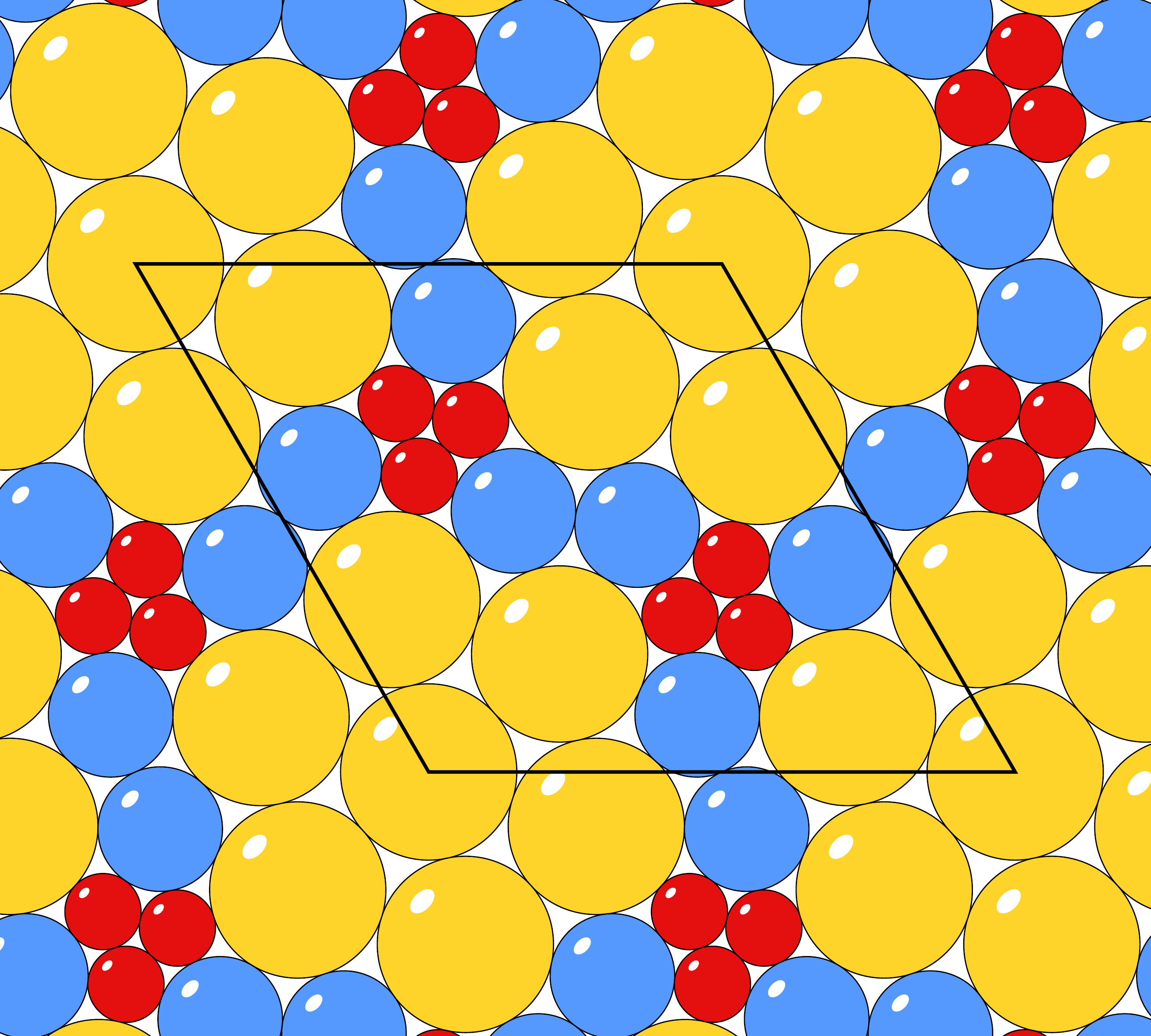} &
  \includegraphics[width=0.3\textwidth]{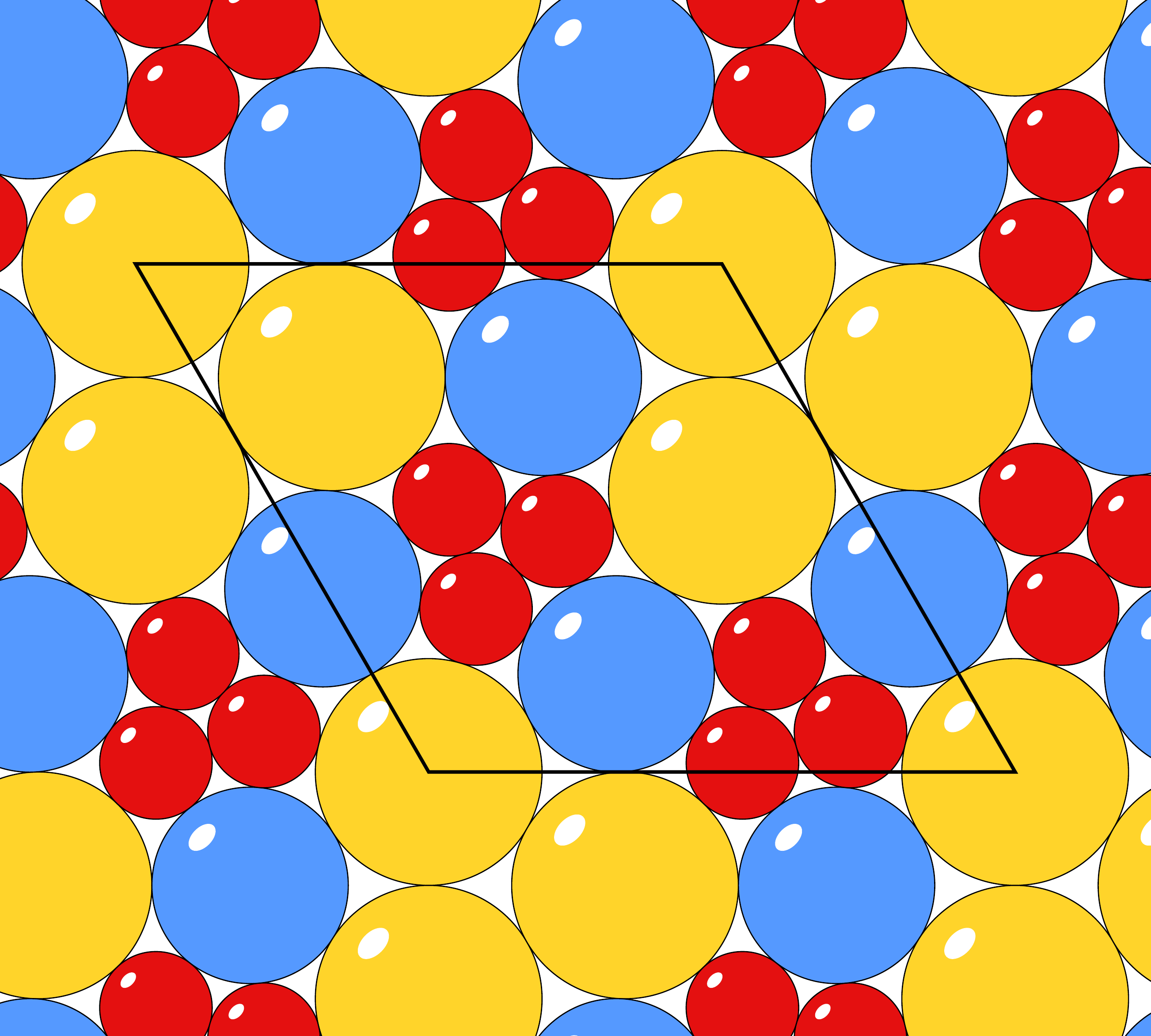} &
  \includegraphics[width=0.3\textwidth]{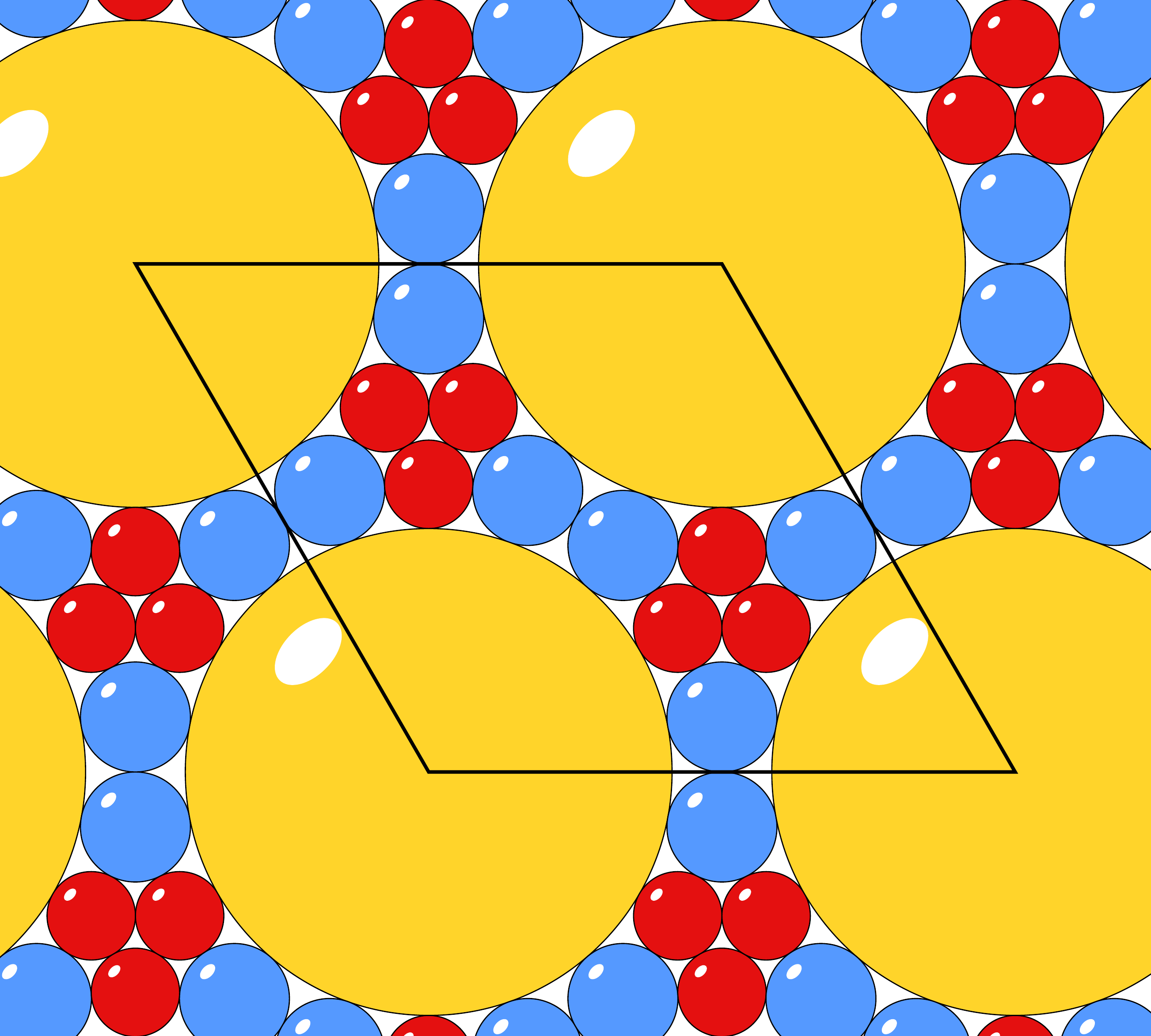}
\end{tabular}
\noindent
\begin{tabular}{lll}
  151 (H)\hfill 1rssr / 1rr1ss & 152 (S)\hfill 1rssr / 1s1sss & 153 (S)\hfill 1rsss / 111ss\\
  \includegraphics[width=0.3\textwidth]{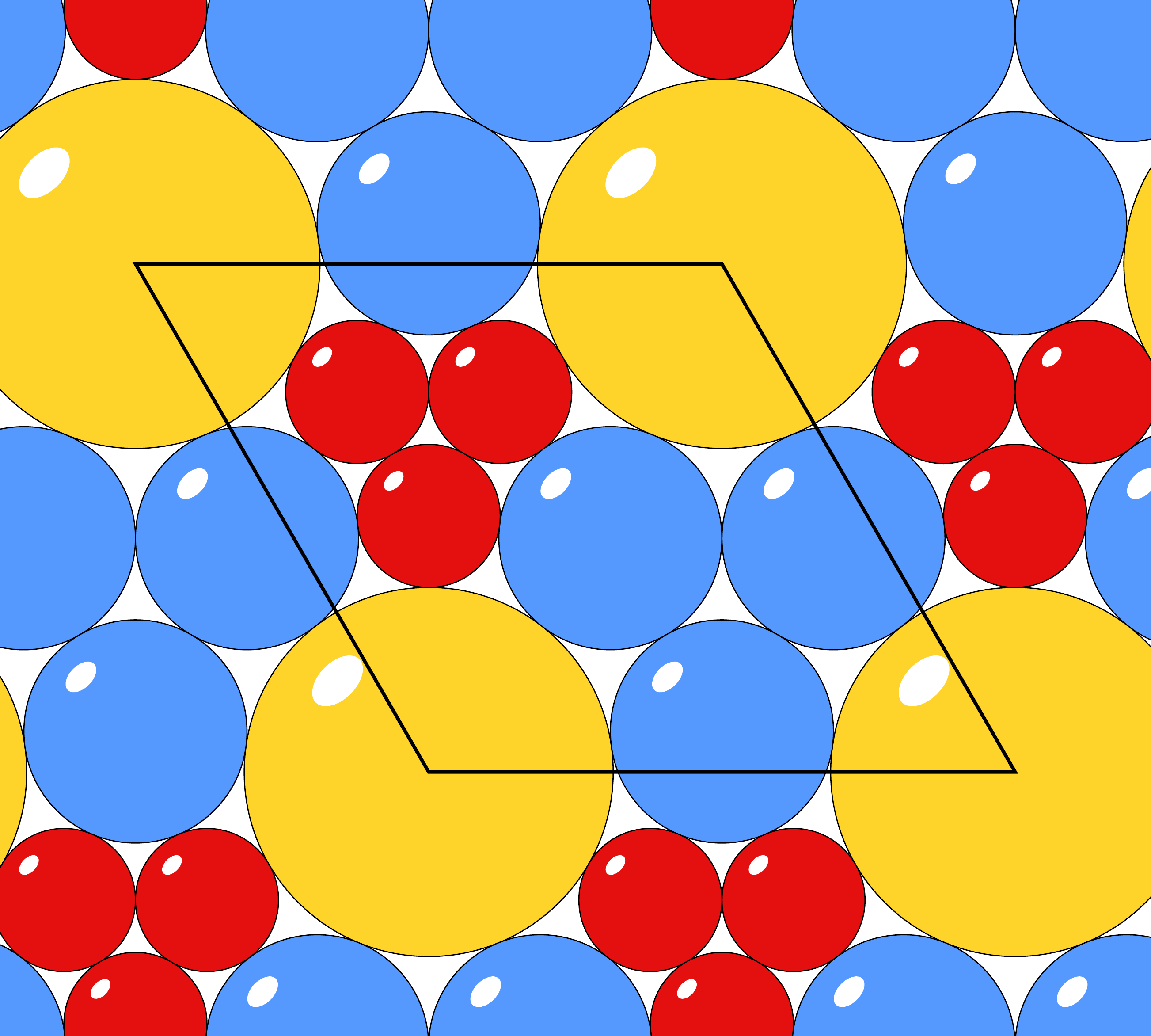} &
  \includegraphics[width=0.3\textwidth]{packing_1rssr_1s1sss.pdf} &
  \includegraphics[width=0.3\textwidth]{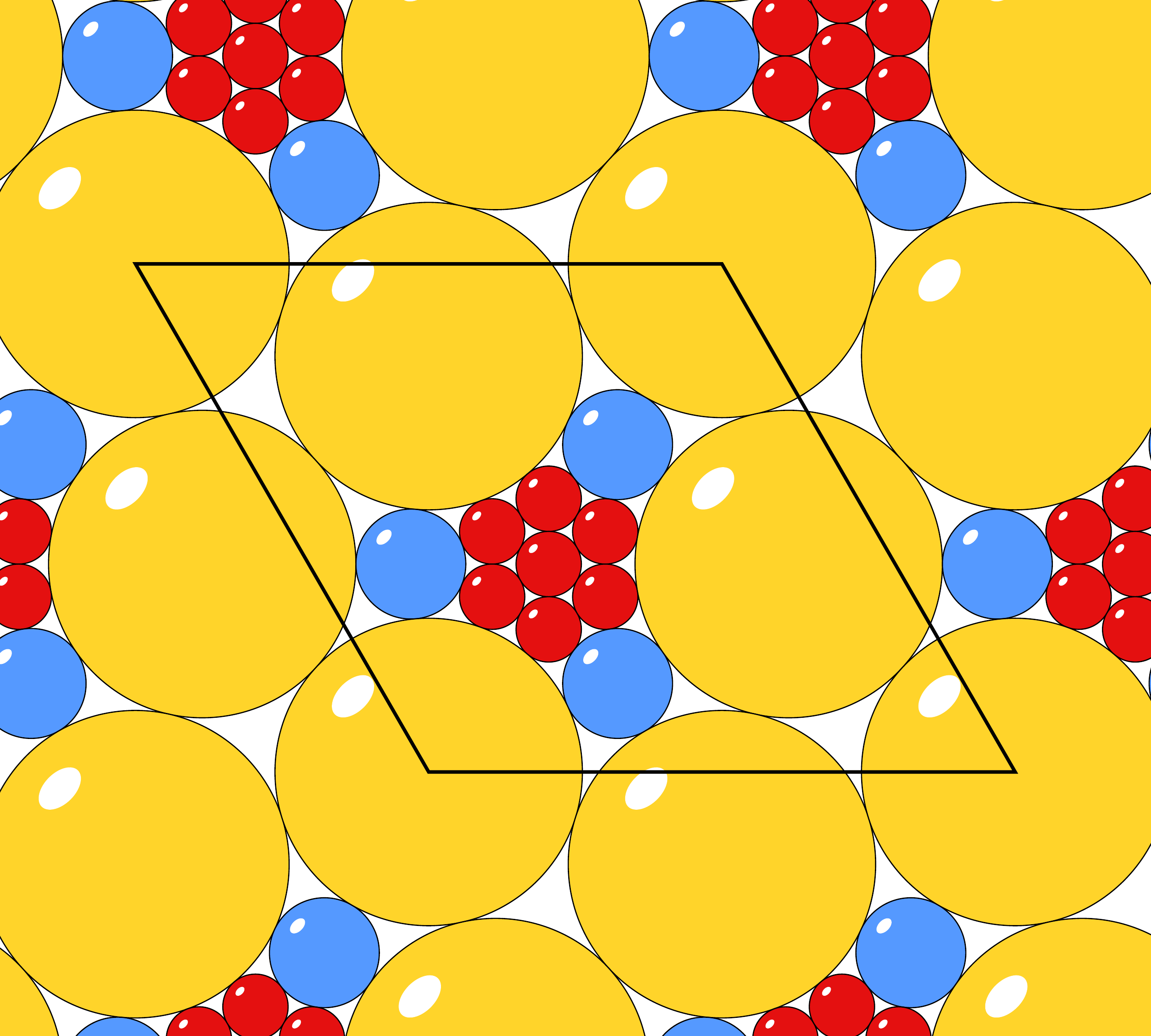}
\end{tabular}
\noindent
\begin{tabular}{lll}
  154 (H)\hfill 1rsss / 11r1ss & 155 (H)\hfill 1rsss / 11s1sss & 156 (H)\hfill 1rsss / 1r1ss\\
  \includegraphics[width=0.3\textwidth]{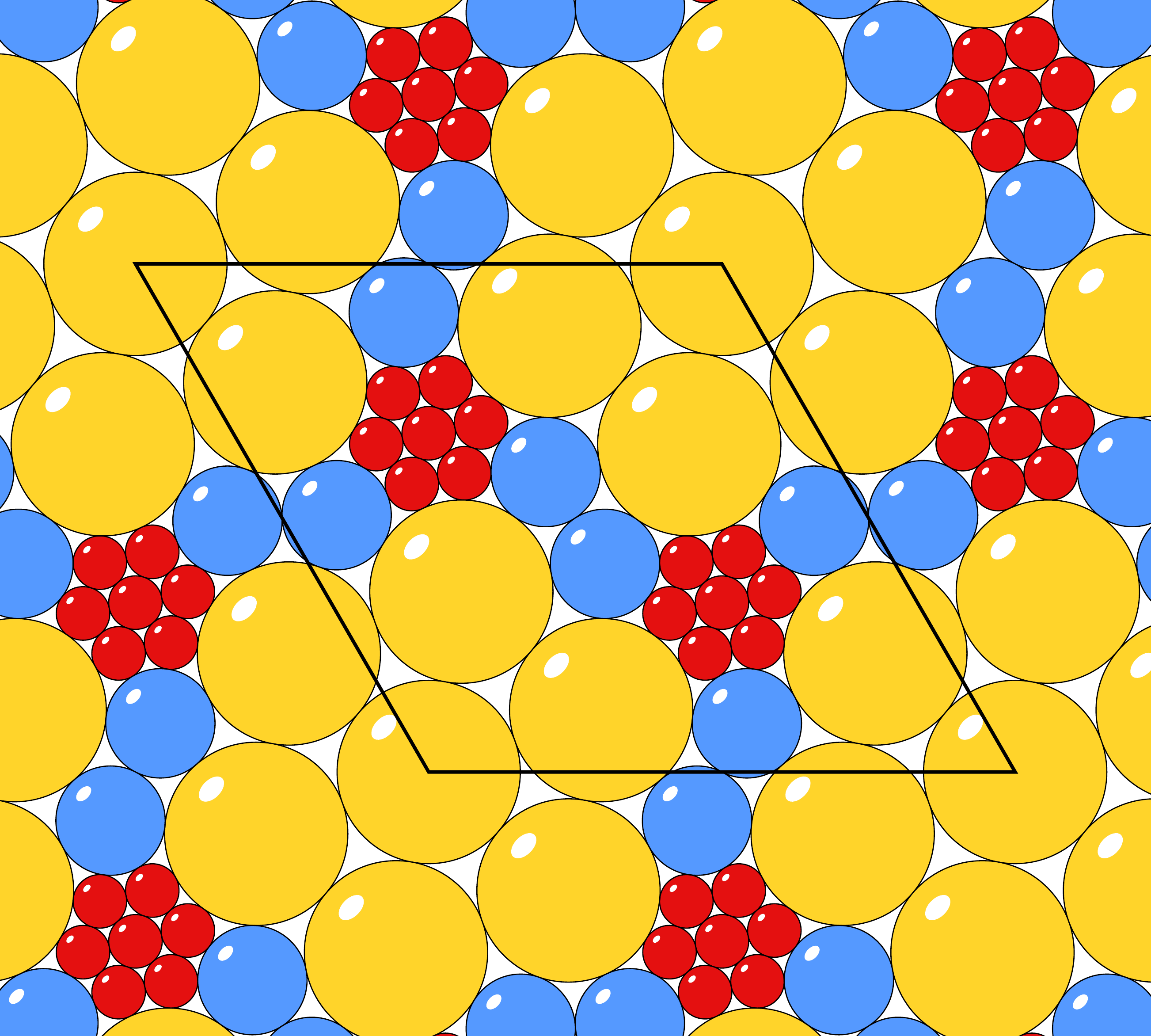} &
  \includegraphics[width=0.3\textwidth]{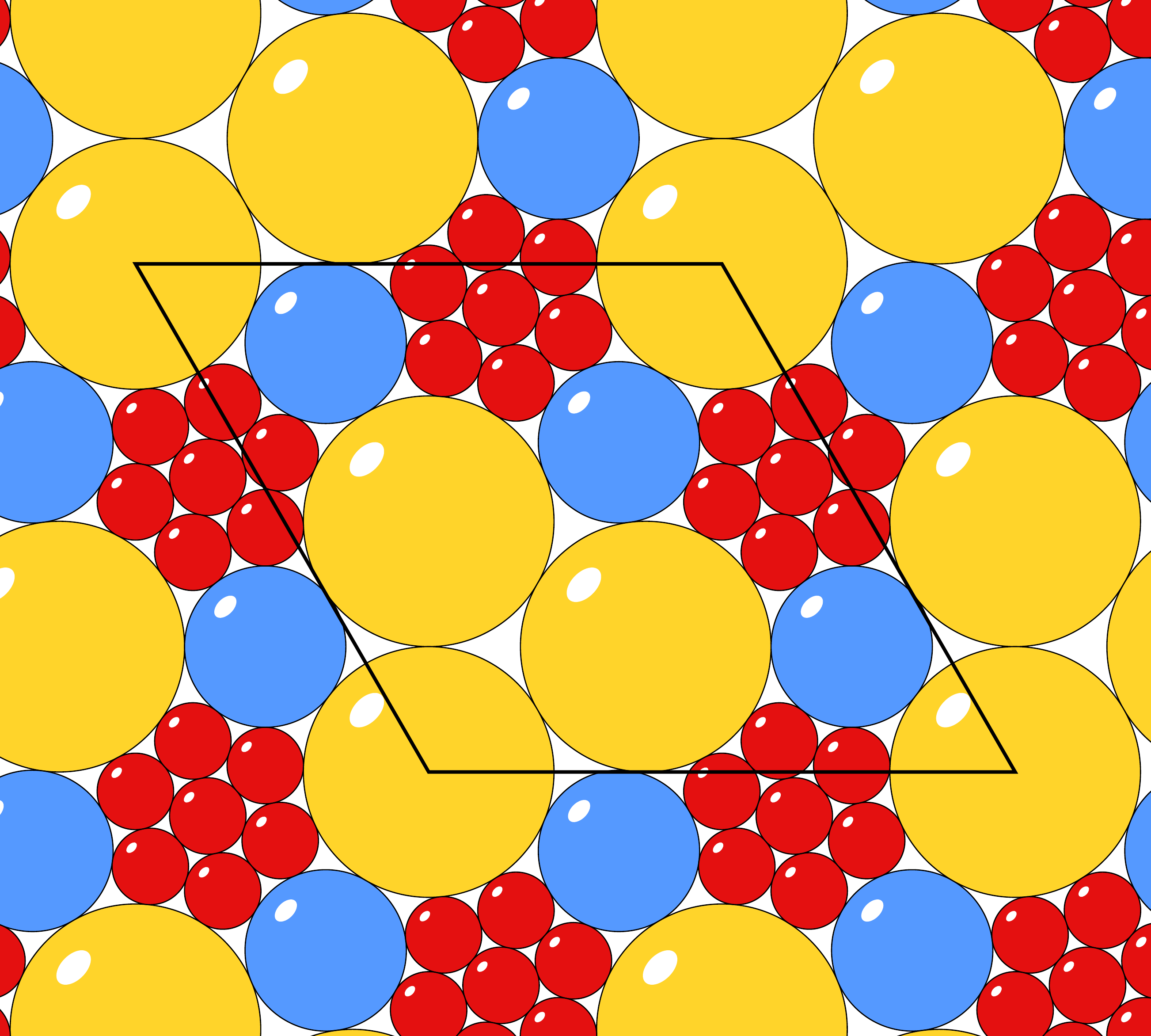} &
  \includegraphics[width=0.3\textwidth]{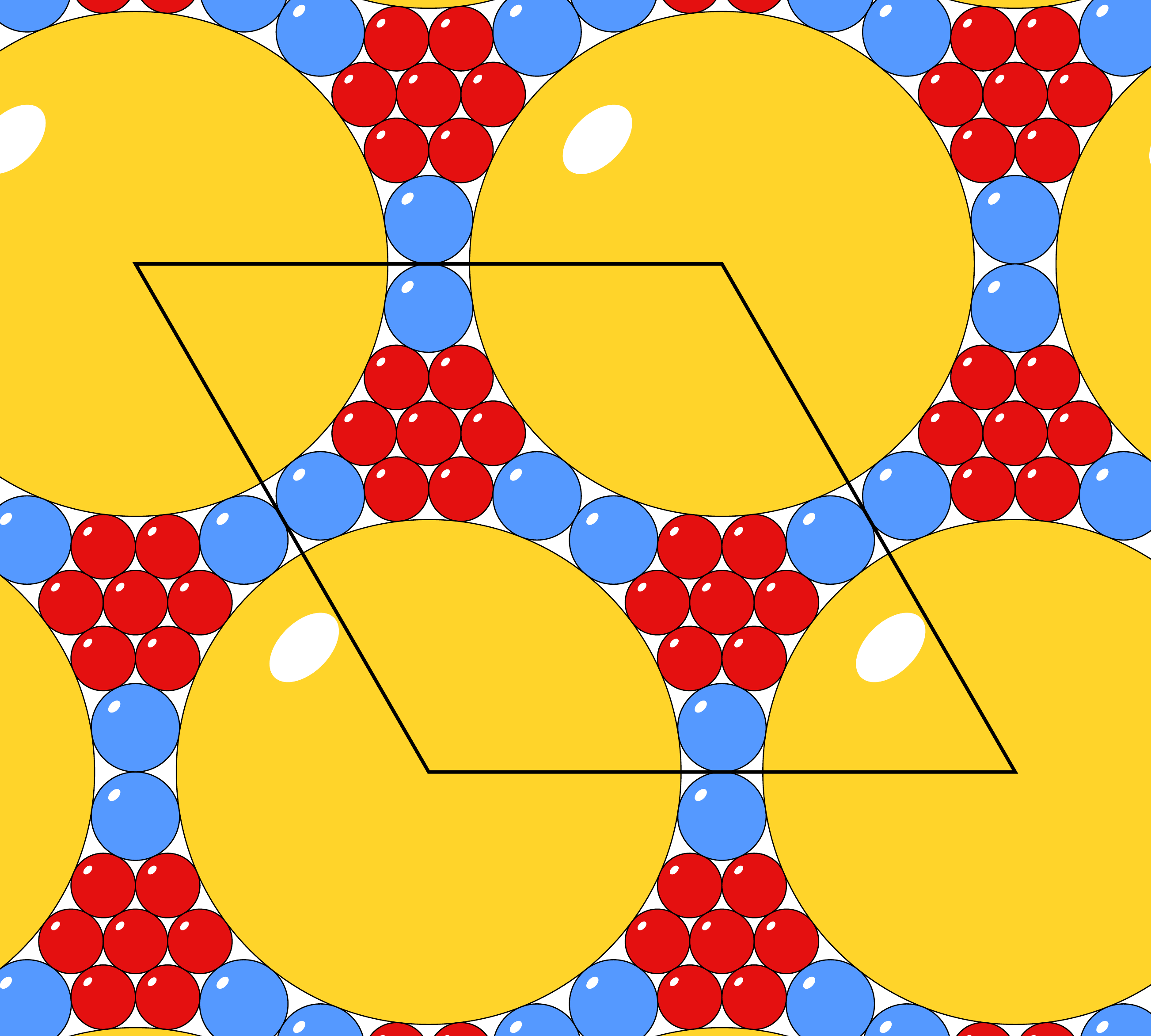}
\end{tabular}
\noindent
\begin{tabular}{lll}
  157 (H)\hfill 1rsss / 1rr1ss & 158 (H)\hfill 1rsss / 1s1sss & 159 (H)\hfill rrrr / 11rrsr\\
  \includegraphics[width=0.3\textwidth]{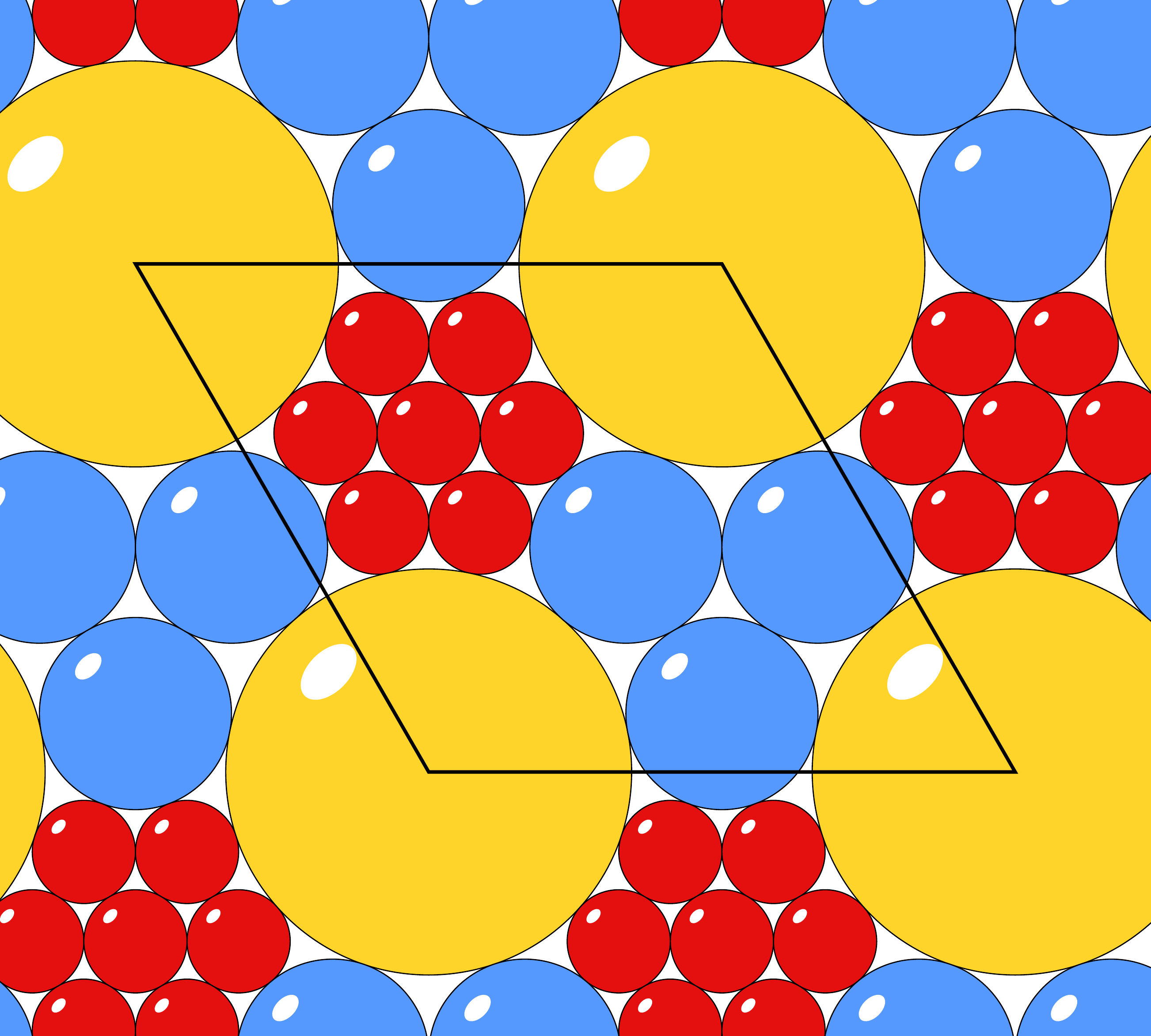} &
  \includegraphics[width=0.3\textwidth]{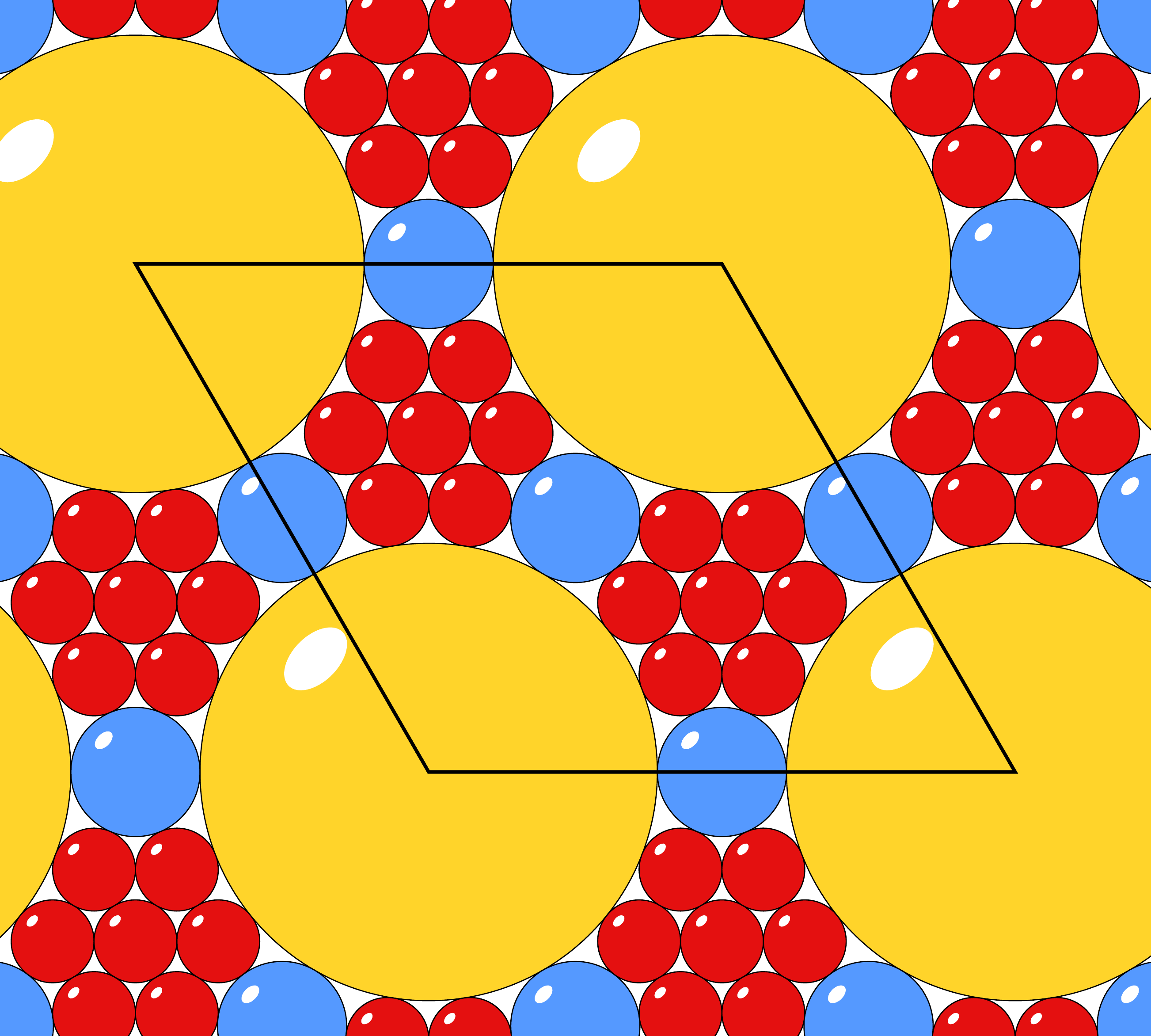} &
  \includegraphics[width=0.3\textwidth]{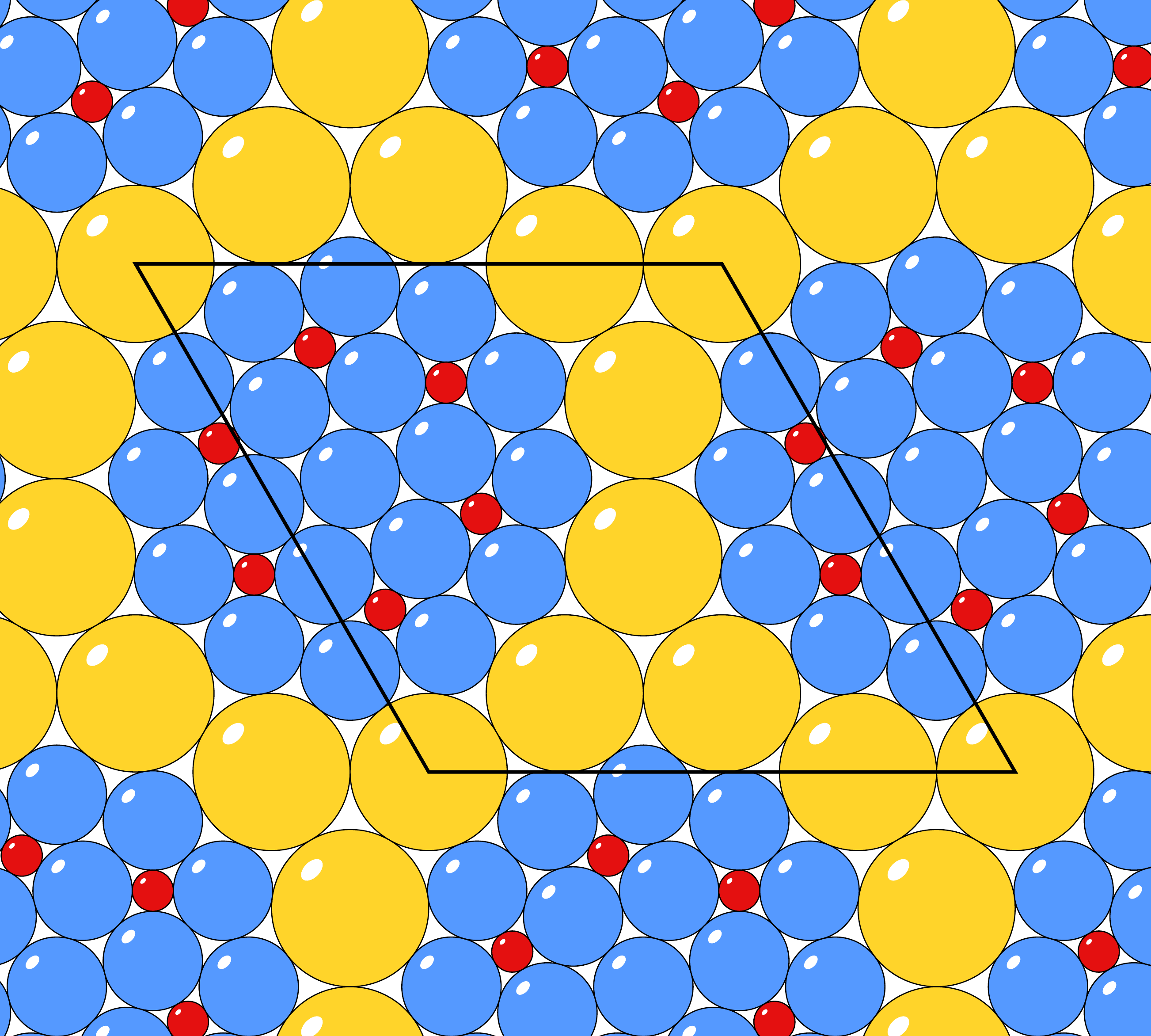}
\end{tabular}
\noindent
\begin{tabular}{lll}
  160 (E)\hfill rrrr / 11rsr & 161 (E)\hfill rrrr / 1r1rsr & 162 (E)\hfill rrrr / 1rrrsr\\
  \includegraphics[width=0.3\textwidth]{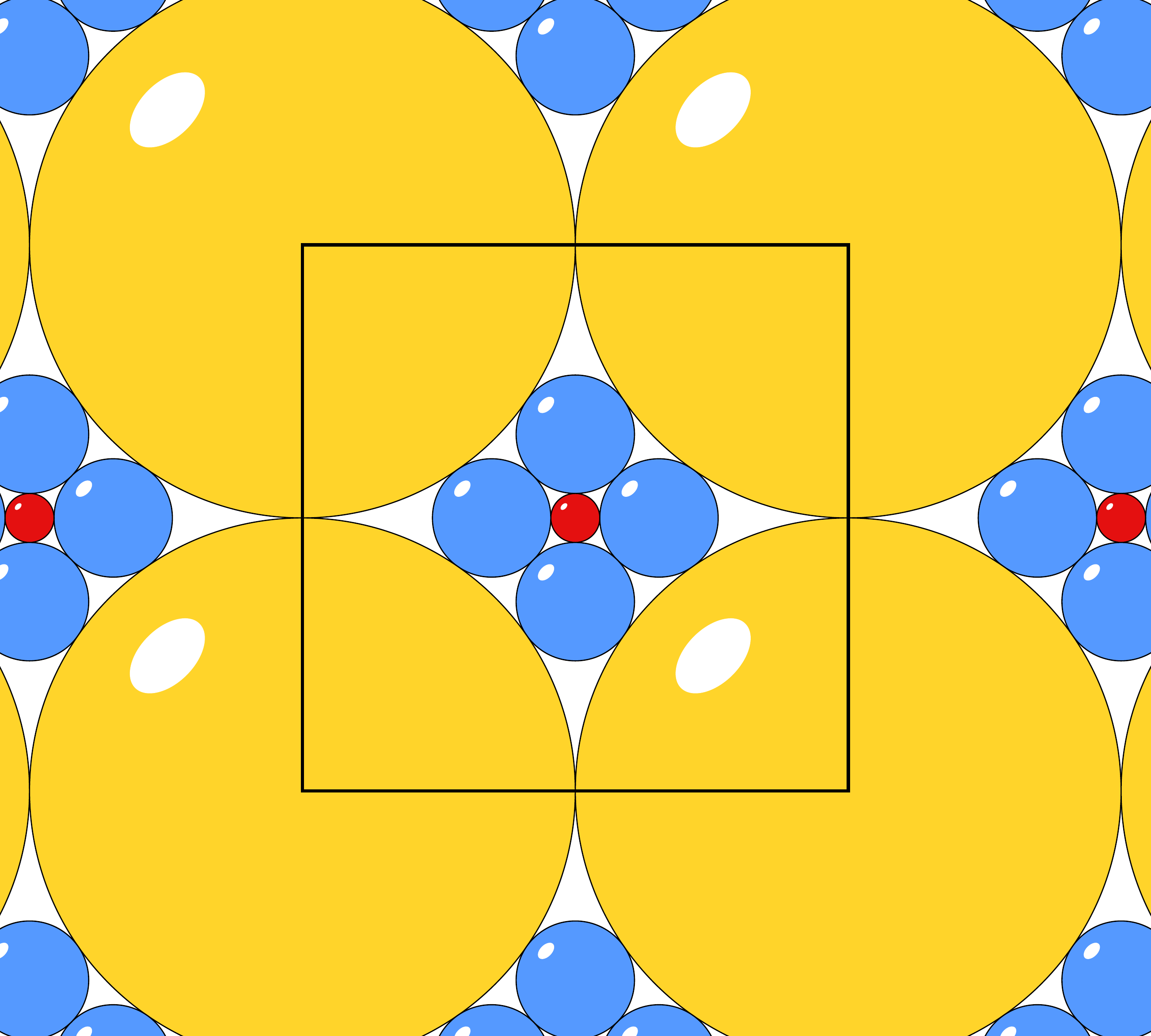} &
  \includegraphics[width=0.3\textwidth]{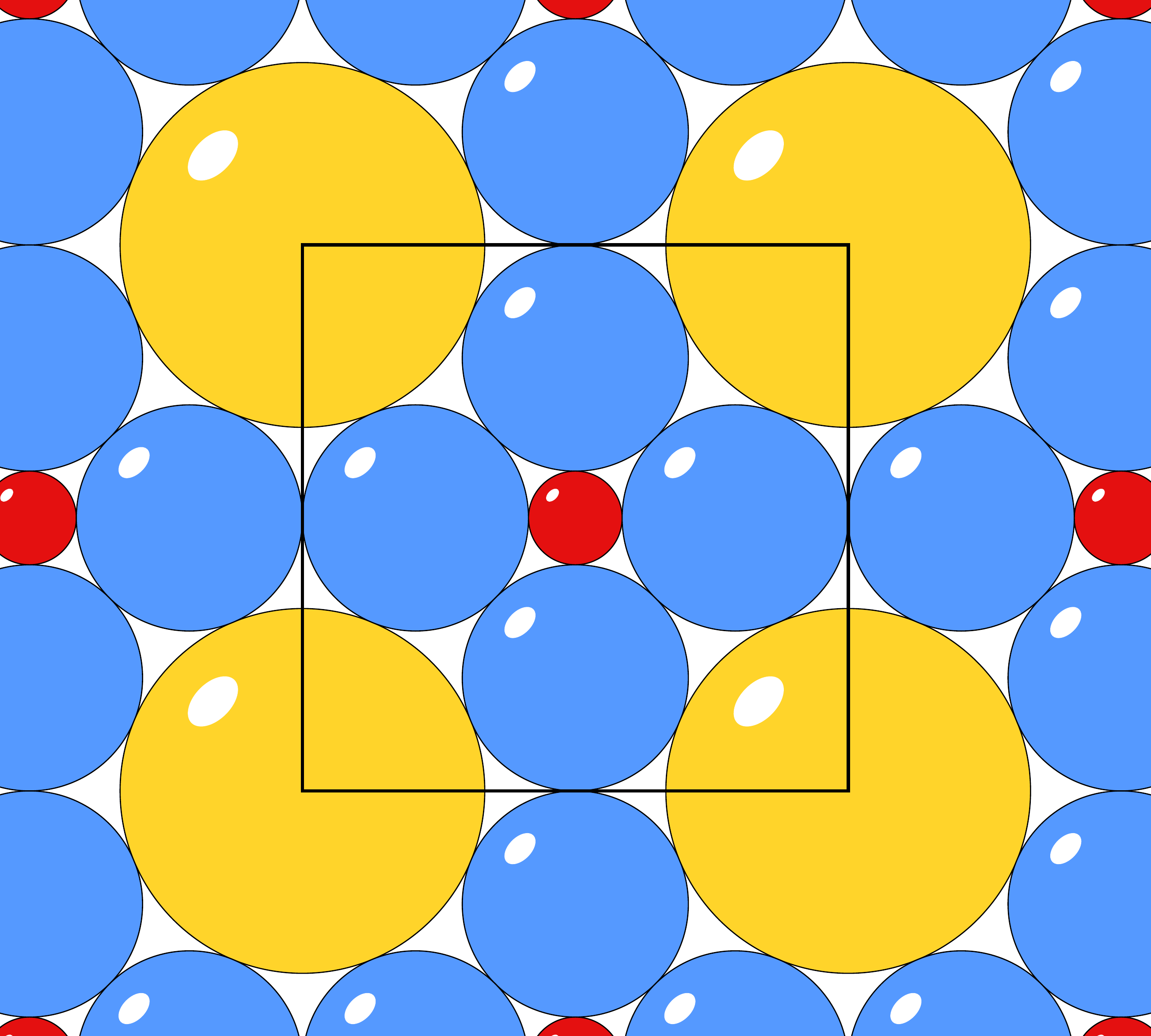} &
  \includegraphics[width=0.3\textwidth]{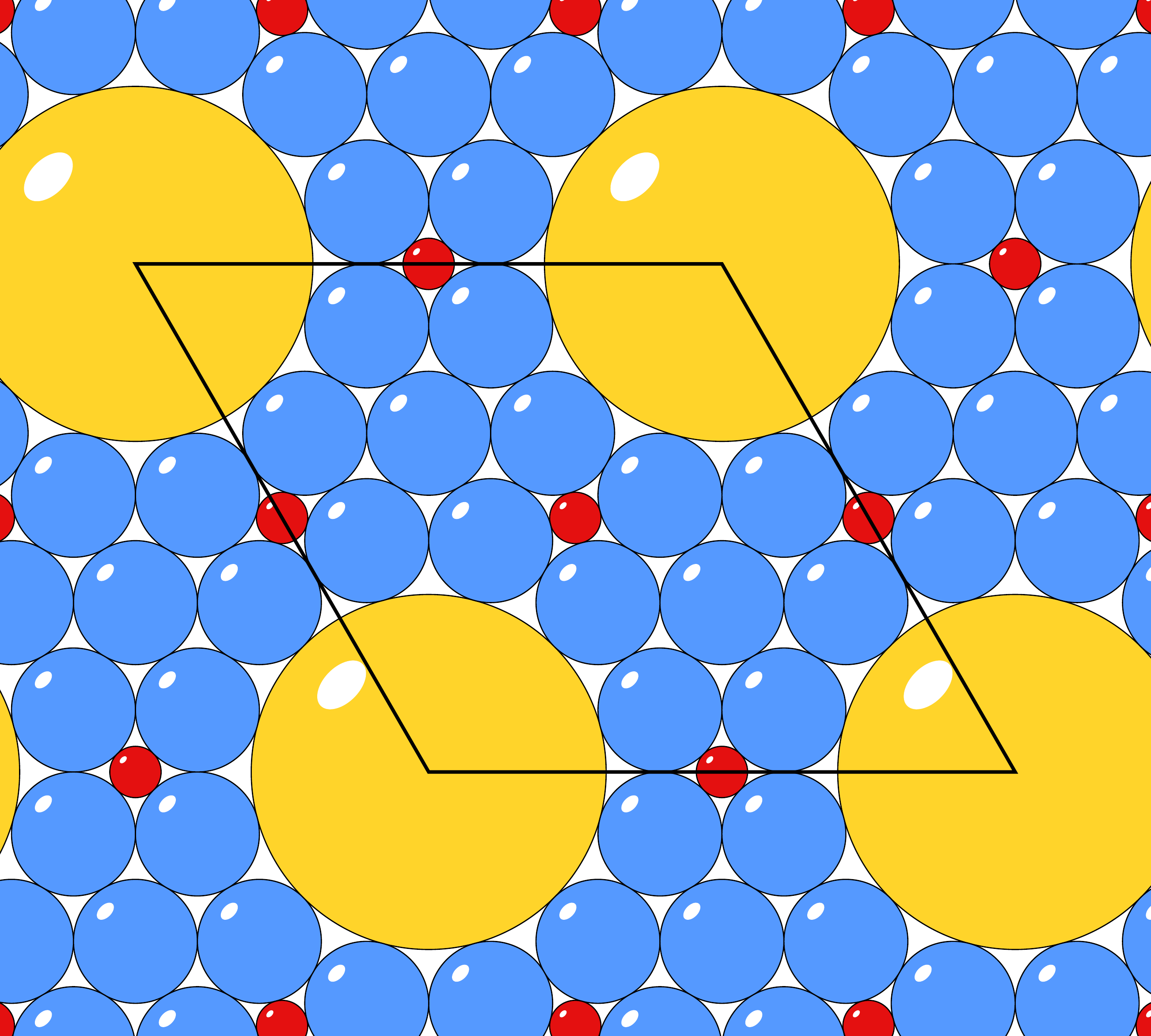}
\end{tabular}
\noindent
\begin{tabular}{lll}
  163 (L)\hfill rrsrs / 1rssssr & 164 (H)\hfill rrsss / 1rrsssr & \\
  \includegraphics[width=0.3\textwidth]{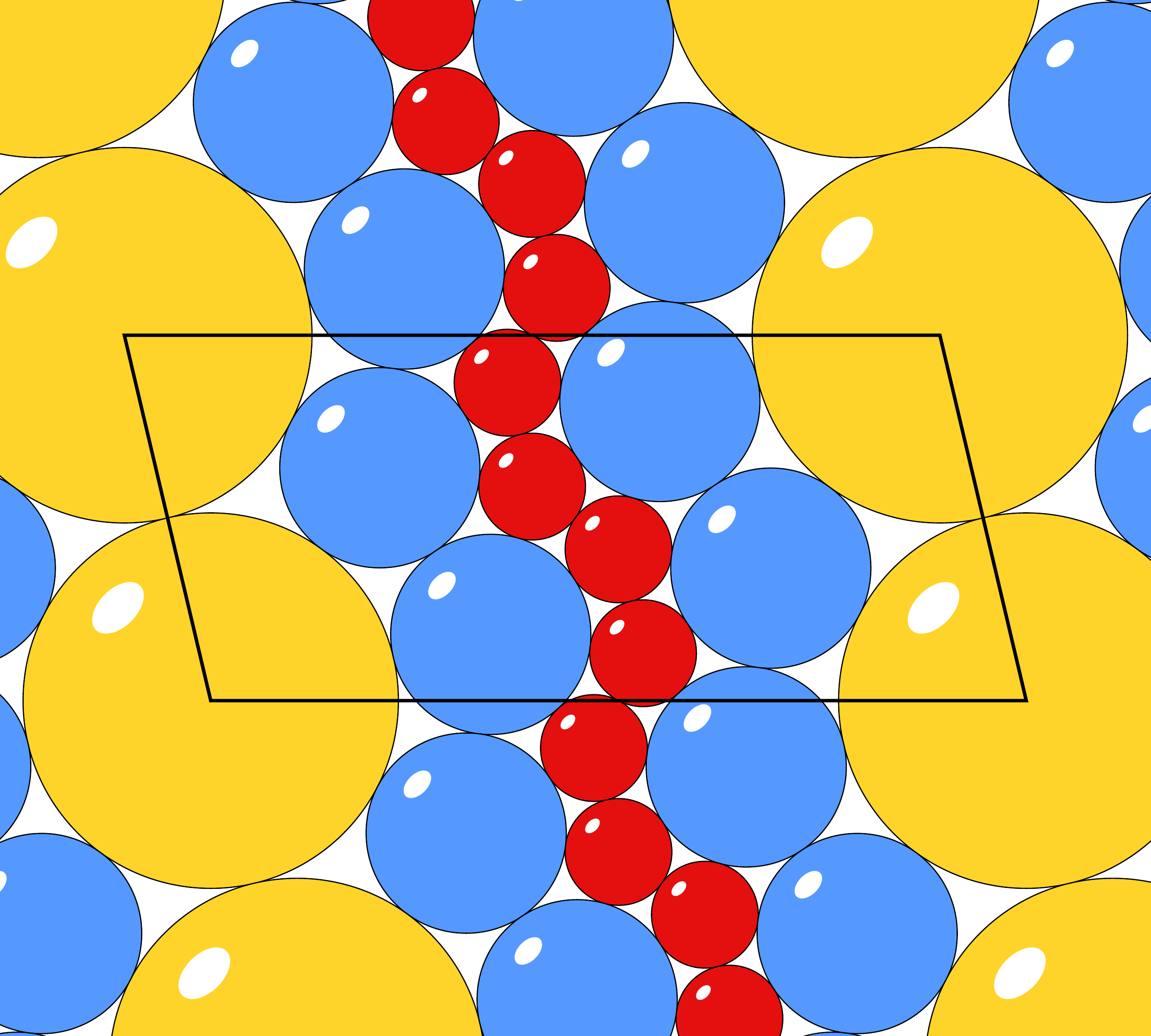} &
  \includegraphics[width=0.3\textwidth]{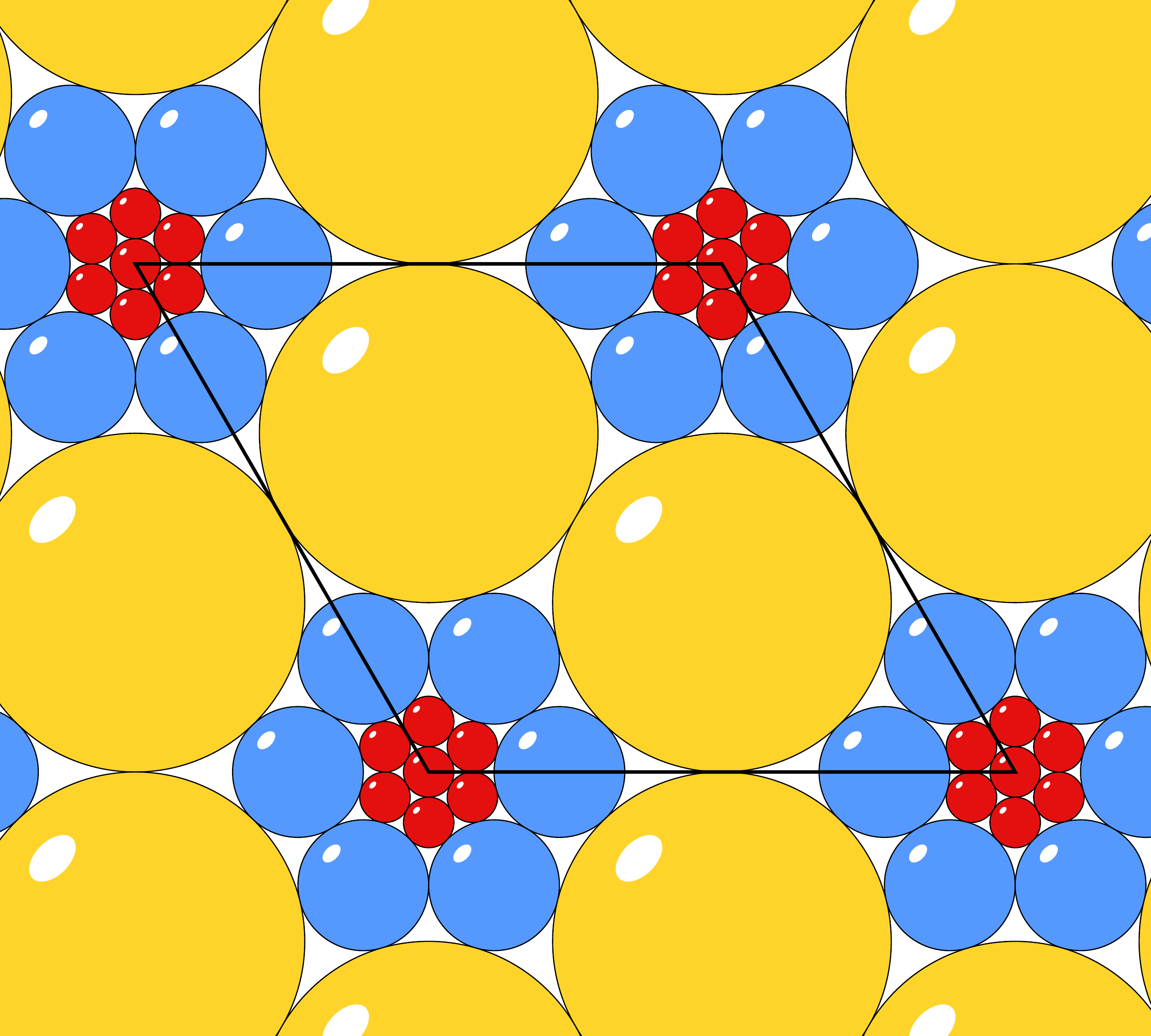} &
\end{tabular}

%%%%%%%%%%%%%%%%%%%%%%
\section{Classification}
\label{sec:classification}

Appendix~\ref{sec:examples} gives an example of compact packing for each of the $164$ pairs $(r,s)$ which allows a compact packing by discs of size $s<r<1$.
However, many pairs allow not only one but a whole set of compact packings, namely a {\em tiling space} in the terminology of \cite{Rob04} (which extends symbolic dynamical systems to tilings).
In order to give an idea of the variety of possible packings, we assign to each case a {\em type} (letter {\bf H}, {\bf L}, {\bf S} or {\bf E} in brackets in App.~\ref{sec:examples}, Fig.~\ref{fig:2packings} and Fig.~\ref{fig:3packings}).
We distinguish four basic types with possible refinements in the following sense.
A packing set $Y$ is said to be a {\em refinement} of a paking set $X$ if there is a "local recoding" (a surjective continuous map which commutes with isometries) which maps $Y$ onto $X$.
Roughly, the local recoding simply removes the flourish.
In the terminology of dynamical system \cite{Rob04}, $Y$ is said to {\em factor} on $X$ and the local recoding is called a {\em factor map}.
The case c9, for example, is a refinement of the hexagonal compact packing with one size of discs: the local recoding removes the small discs between large discs.
The same holds for c5, with the local recoding replacing the clusters of 7 small discs by a large discs.
Two sets which are mutual refinements are said to be {\em conjugated}: they are pretty much the same ({\em e.g.}, c8 and c9 or c4 and $160$).
This allows to focus more on the very structure of packing sets.

\paragraph{Periodic packings (H).}
This is the simplest type: the disc sizes allow only finitely many compact packings with two independent periodic directions.  % (usually only one with the three sizes of discs appear).
This includes the hexagonal compact packing with one size of discs (whence the letter H), the compact packing with two sizes of discs labelled c6 in Fig.~\ref{fig:2packings} and $52$ cases with three sizes of discs.
One checks that $10$ out of the $17$ wallpaper groups appear as symmetry groups of these periodic compact packings (Tab.~\ref{tab:periodic}).
Refinements include the cases c5, c8, c9 and $22$ cases with three sizes of discs.

\begin{table}[hbtp]
\centering
\begin{tabular}{|cccccccccc|}
\hline
p6m & p6 & p31m & pmg & p3m1 & cmm & p3 & p4g & p4m & pgg\\
%quantity & 18 & 14 & 6 & 5 & 3 & 2 & 1 & 1 & 1 & 1\\
78 & 49 & 113 & 47 & 115 & 116 & 112 & 93 & 108 & 53\\
\hline
\end{tabular}
\caption{
Examples of periodic compact packings with three sizes of discs (numbers refer to App.~\ref{sec:examples}) for each possible symmetry group.}
\label{tab:periodic}
\end{table}

\paragraph{Laminated packings (L).}
The disc sizes allow only compact packings with exactly one periodic direction (and maybe finitely many degenerated cases).
This includes c1 and c3, already described in \cite{Ken06}, as well as $54$ cases with three sizes of discs ($7$ of which are refinements).
Fig.~\ref{fig:laminated} gives a typical example.

\begin{figure}[hbtp]
\centering
\includegraphics[width=\textwidth]{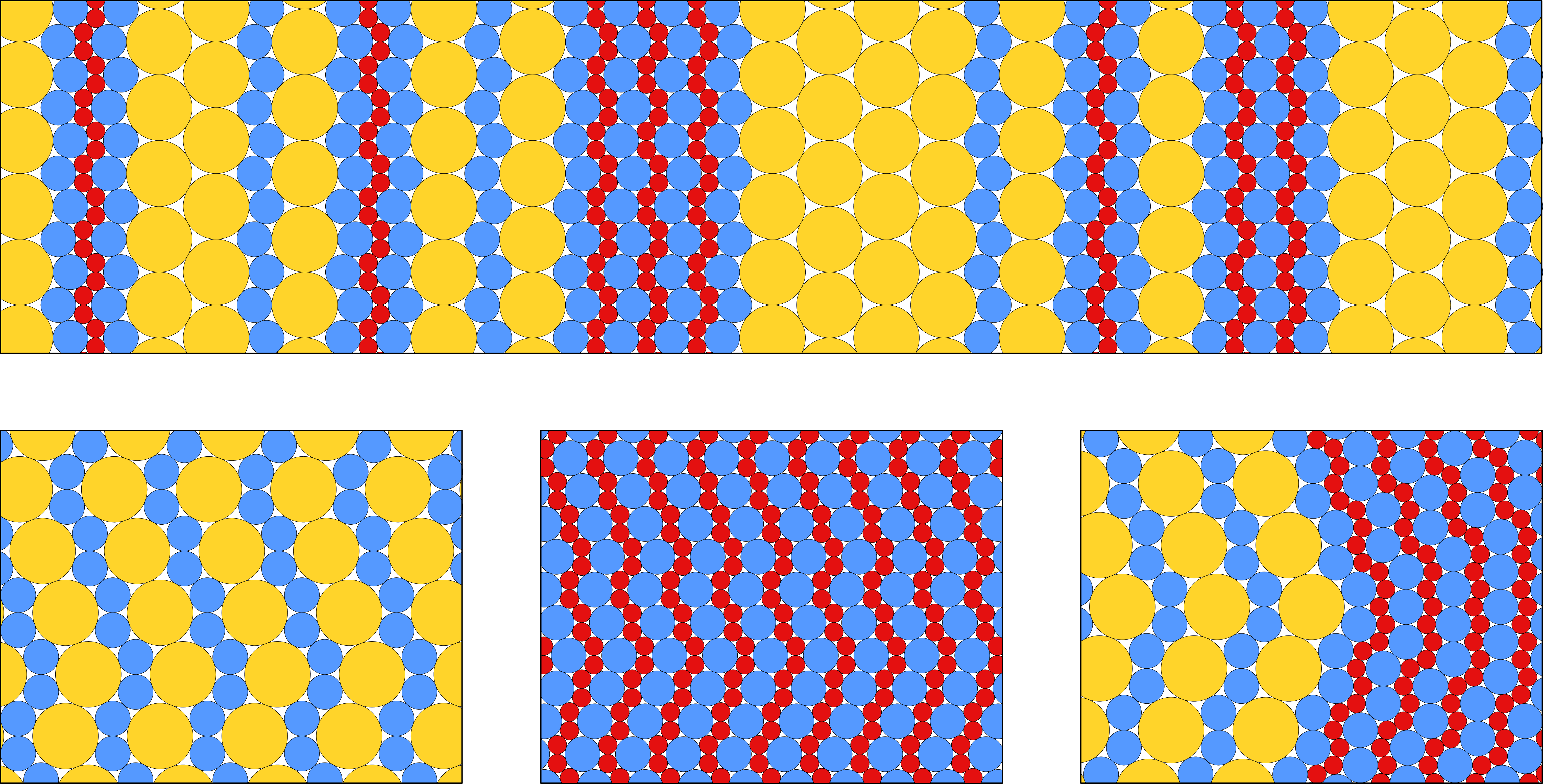}
\caption{
The typical packings of case $163$ alternate lines of large, medium and small discs, such that there is always a line of medium discs between a line of large discs and a line of small discs, and only large discs can form two consecutive lines (top).
There are also the laminated packings of case c3, where the lines of large and medium discs can be bended
 (bottom-left and bottom-center) and a single (up to isometry) degenerated case (bottom-right).
}
\label{fig:laminated}
\end{figure}

\paragraph{Shield packings (S).}
The disc sizes allow compact packings which can be seen as tilings by an equilateral triangle and a {\em shield}, that is, a convex hexagon with two different angles (one obtuse and one acute) which alternate.
This case includes c2, as well as $13$ cases with three sizes of discs ($7$ of which are refinements).
Fig.~\ref{fig:shield} describe these packings (\cite{Ken06} gives only two examples).

\begin{figure}[hbtp]
\centering
\includegraphics[width=\textwidth]{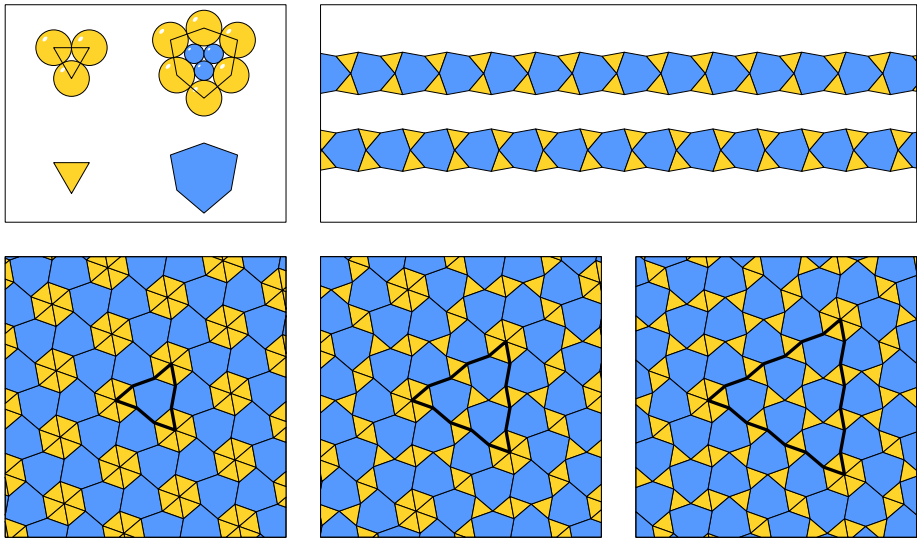}
\caption{
The compact packings which can be seen as tilings by a triangle and a shield (top-left) can form laminated packings (top-right: the two stripes can freely alternate) and a family of periodic packings looking like triangular grids of arbitrarily large size (bottom, the three first grids).
}
\label{fig:shield}
\end{figure}

\paragraph{Positive entropy packings (E).}
The set of discs of a packing which intersect a ball of radius $r$ forms what is called an {\em $r$-pattern}.
A packing set is said to have {\em zero entropy} if the number of different $r$-patterns (up to an isometry) grows subexponentially with $r^2$ (the volume of the ball).
The notion of entropy comes from dynamical systems, where it is used to measure the "complexity" of a system (in particular, to distinguish non-conjugated systems).
In pratice for our classification, zero-entropy means that the set of possible packings is rather easy to describe.
Periodic, laminated or shield packings do have zero entropy.
Not their refinements, because flourish can be added or not independently at each position, but this does not affect the very structure of packing sets which are still easy to describe.
The $23$ remaining cases, however, do not have zero entropy nor are refinements of zero entropy cases.
They are thus somehow more complicated to describe.

Actually, most of them can be seen as tilings by a square and a regular triangle, known in statistical mechanics as {\em square-triangle tilings} (Fig.~\ref{fig:square_triangle}).
This includes c4 and (up to a refinement) $8$ cases with three sizes of discs (numbered 1, 11, 16, 27, 61, 160, 161 and 162 in App.~\ref{sec:examples}).
This also includes, up to a shear of the square into a rhombus which does not modify the combinatorics, c7 and (up to a refinement) $11$ cases with three sizes of discs (numbered 2, 9, 12, 21, 30, 68, 71, 75, 92, 109, 135).

\begin{figure}[hbtp]
\centering
\includegraphics[width=\textwidth]{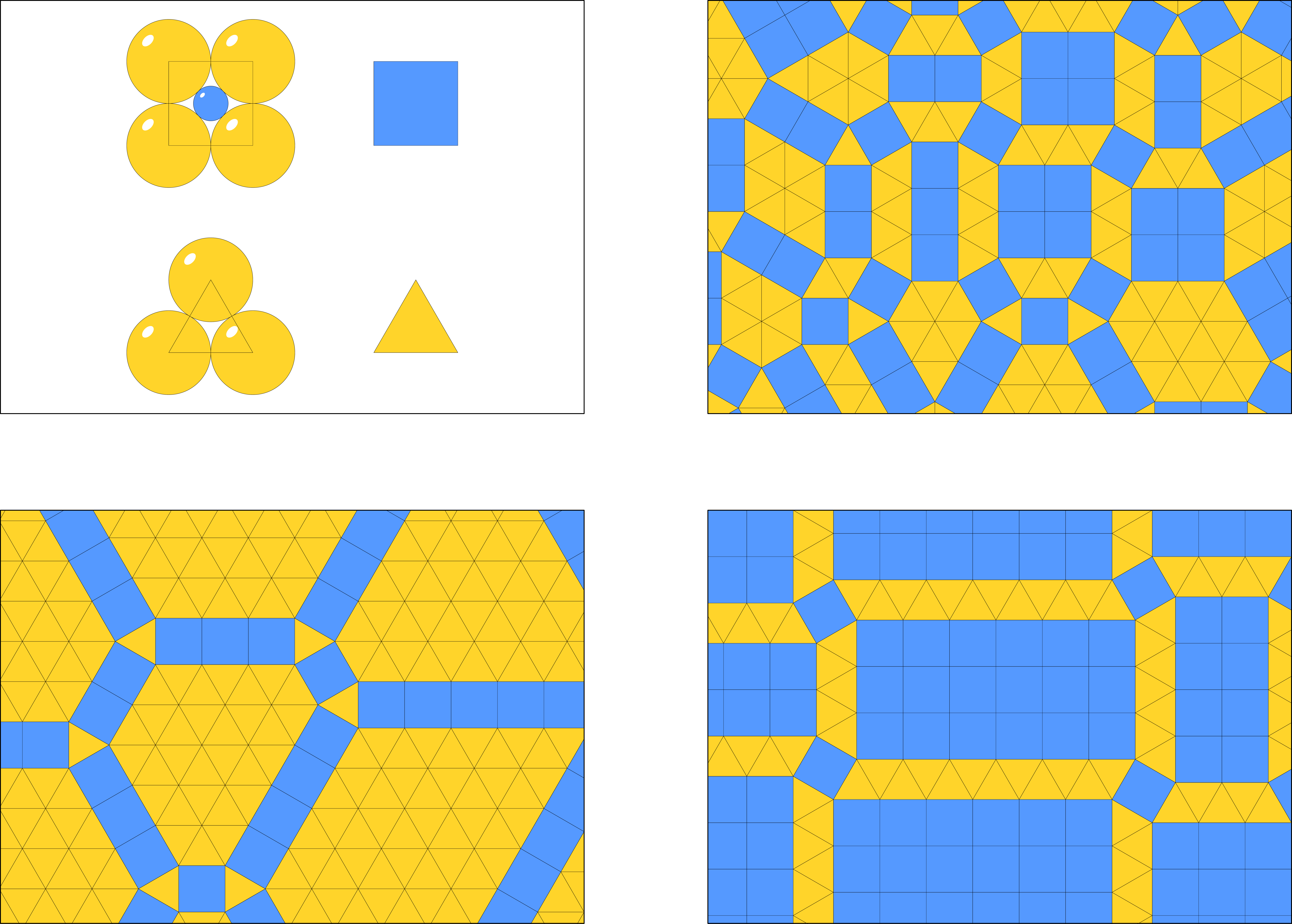}
\caption{
The compact packings which can be seen as tilings by a square and a regular triangle (top-left) can form a wide range of different packings, more or less random and with various proportions of tiles.
}
\label{fig:square_triangle}
\end{figure}

Besides square-triangle tilings, there are three cases which are a sort of mix of two different compact packings with two sizes of discs, namely those numbered 6, 7 and 8 which respectively mix c3/c4, c4/c7 and c3/c7.

Last but not least, the case $83$.
It is unique in the sense that it is neither a refinement of another case nor conversely.
The compact packings can form rather complicated curves which alternate a small and two medium discs (Fig.~\ref{fig:83}, left).
These packings can be seen as the tilings by a square, a regular triangle and an irregular one (Fig.~\ref{fig:83}, center and right).
%Both triangles can appear in the same tiling, though the possibilities seem to be rather constrained (this would deserve further study).
The edges of the irregular triangle have length $2$, $2+2s$ and $2\sqrt{1+2r}$, where $r=\sqrt{2}-1$ and $s\simeq 0.249$ is root of $X^4 + 4X - 1$.
The smallest angle is $\tfrac{\pi}{4}$ and the largest one is $\arccos(1-\tfrac{1}{\sqrt{2}})$.
%incommensurate with pi ?
%the ratio of their cosines is $\tfrac{3}{2}$).

\begin{figure}[hbtp]
\centering
\includegraphics[width=\textwidth]{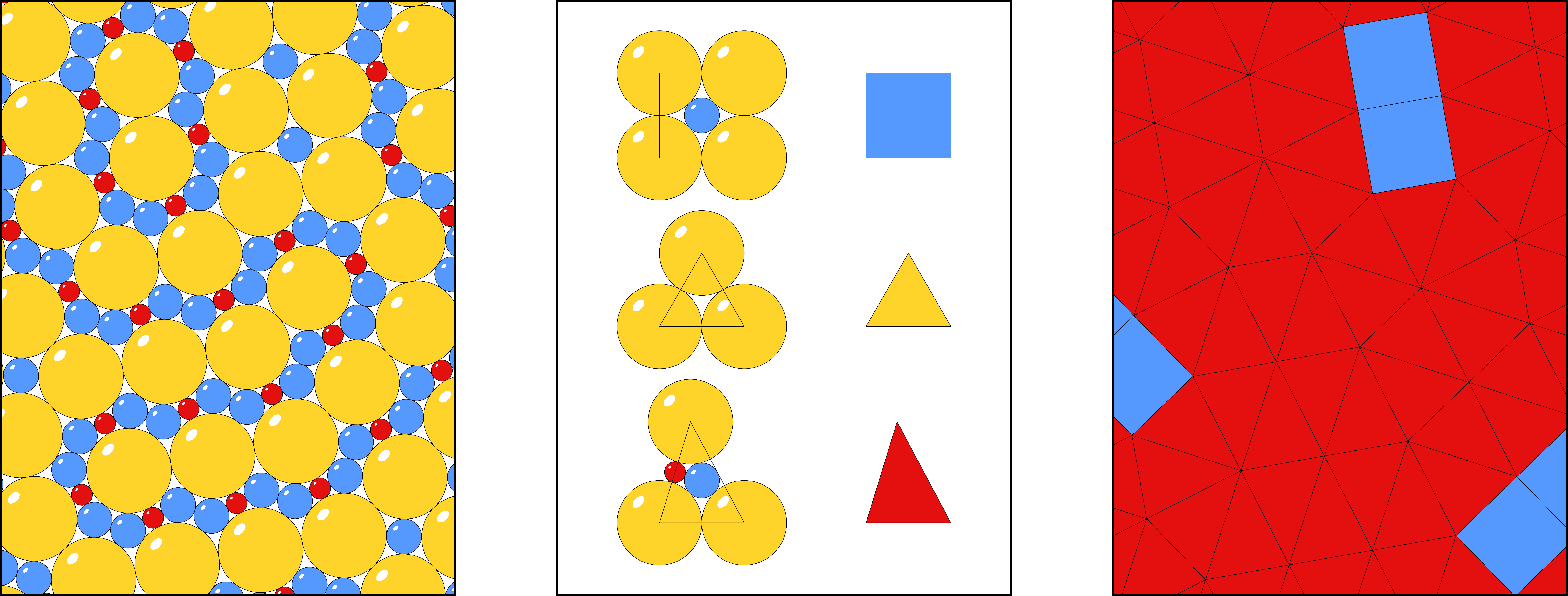}
\caption{
Case $83$: a compact packing and the corresponding tiling.
}
\label{fig:83}
\end{figure}

%%%%%%%%%%%%%%%%%%%%%%
\section{Code}
\label{sec:code}

Computations and case checking have been done with Python and SageMath.
The full commented code is provided in supplementary materials.
It is organized in five programs here briefly described (the numbers in brackets give the number of code lines, comments included):

$\bullet$ \verb+coronas.sage+ (191) contains functions to convert vector angles to sequence and conversely, to find all the possible small and medium coronas (Sections~\ref{sec:small} and \ref{sec:medium}), to find the coronas compatible with interval values of $r$ and $s$.

$\bullet$ \verb+equations.sage+ (126) contains functions to compute the polynomial associated with a corona (Section~\ref{sec:polynomial}) and to check exactly whether given algebraic values of $r$ and $s$ are compatible with a given corona.

$\bullet$ \verb+two_phases.sage+ (28) deals with the large separated packings (Section~\ref{sec:two_phases}).

$\bullet$ \verb+two_small_coronas.sage+ (127) deals with the packings with two different s-coronas (Section~\ref{sec:two_smalls}).
It implements the hidden variable method, then apply interval arithmetic and exact filtering.

$\bullet$ \verb+one_small_coronas.sage+ (254) deals with the packings with only one s-coronas (Section~\ref{sec:one_small}).
It implements the (pre)cover condition and the hidden variable method, then apply interval arithmetic and exact filtering.

%%%%%%%%%%%%%%%%%%%%%%
\paragraph{Acknowledgments.}
We thank T. Kennedy for pointing us reference \cite{Mes20}, hopefully after we completed our proof so that our approach has not been influenced.
We thank Thierry Monteil for answering various questions about SageMath, as well as Bruno Salvy for discussions on Gröbner basis.
We thank the referees of a short conference version of this paper \cite{FHS19}, as well as the referees of this long version.

%%%%%%%%%%%%%%%%%%%%%%

\bibliographystyle{alpha}
\bibliography{three_discs}

\newpage
\end{document}